\newtheorem{definition}{Definition}[section]
\newtheorem{proposition}[definition]{Proposition}
\newtheorem{lemma}[definition]{Lemma}
\newtheorem{theorem}[definition]{Theorem}
\newtheorem{corollary}[definition]{Corollary}
\newtheorem{problem}[definition]{Problem}
\newtheorem{example}[definition]{Example}
\begin{document}
\title{Combinatorial Entanglement}

\author{Joshua Lockhart}
\email{joshua.lockhart.14@ucl.ac.uk}
\affiliation{Department of Computer Science, University College London, WC1E 6BT, London, U.K.}

\author{Simone Severini}
\email{s.severini@ucl.ac.uk}
\affiliation{Department of Computer Science, University College London, WC1E 6BT, London, U.K.}
\affiliation{Institute of Natural Sciences, Shanghai Jiao Tong University, Shanghai 200240, China}

\begin{abstract}
We present new combinatorial objects, which we call \emph{grid-labelled graphs}, and show how these can be used to represent the quantum states arising in a scenario which we refer to as the \emph{faulty emitter scenario}: we have a machine designed to emit a particular quantum state on demand, but which can make an error and emit a different one. The device is able to produce a list of candidate states which can be used as a kind of debugging information for testing entanglement. By reformulating the Peres-Horodecki and matrix realignment criteria we are able to capture some characteristic features of entanglement: we construct new bound entangled states, and demonstrate the limitations of matrix realignment. We show how the notion of LOCC is related to a generalisation of the graph isomorphism problem. We give a simple proof that asymptotically almost surely, grid-labelled graphs associated to very sparse density matrices are entangled. We develop tools for enumerating grid-labelled graphs that satisfy the Peres-Horodecki criterion up to a fixed number of vertices, and propose various computational problems for these objects, whose complexity remains an open problem. The proposed mathematical framework also suggests new combinatorial and algebraic ways for describing the structure of graphs.
\end{abstract}

\maketitle

\tableofcontents

\setcounter{page}{1}

\section{Introduction}
\subsection{Preliminaries}
The central goal in the field of quantum information processing is to gain a deeper understanding of quantum mechanical systems as objects for encoding, transferring, securing, and manipulating information. In quantum mechanics, the standard mathematical tool used for combining physical systems is the well-studied notion of \emph{tensor product}. Many scenarios in quantum information processing require working with tensor product structures. As a direct consequence of the axiomatisation of quantum mechanics, the tensor product has acquired a fundamental role in describing the behaviour of various physical phenomena: \emph{quantum entanglement} is one of those. Even if entanglement is somehow a mathematical artifact, it quickly emerged as one of the defining characteristics of quantum mechanics itself. In quantum information processing, it is now clear that entanglement has ubiquitous applications, being a necessary ingredient in protocols for cryptography, a number of distributed tasks, and also computation \cite{Horodecki2009}. Indeed, we have evidence to the fact that algorithmic speedup promised by quantum computers seems to rely on the presence of a sufficiently large amount of entanglement \cite{JozsaLinden,Vidal}. However, recently, other quantities have been proposed as being responsible for quantum algorithmic speedup -- see contextuality \cite{magic}.

On a more firm ground, we can say that entanglement is responsible for the advantage exhibited by quantum strategies in a variety of multi-party games and interactive proof systems. Moreover, there is a whole set of information theoretic tasks in which entanglement provides a neat distinction between classical and quantum behaviour -- see, for example, zero-error quantum information theory \cite{DuanSeveriniWinter, BeigiShor, Leung}, superactivation of channels \cite{Cubitt,CubittSuper,Duan}, \emph{etc}. Finally, at the foundational level, quantum entanglement has been traditionally interpreted as an inherently quantum feature embodying non-classical correlations between the components of a physical system \cite{CHSH}. 

Despite the fact that entanglement manifests itself as a ``physical property'' of quantum systems, detecting its presence with the mathematical tools made available by the theory is a complex problem in general. The task of deciding if a quantum state is \emph{entangled} or, conversely, \emph{separable}, is in fact highly non-trivial: no necessary and sufficient criterion exists for detecting entanglement. As a matter of fact, the problem of testing separability of an arbitrary density matrix is NP-hard \cite{Gurvits}. Over the years a rich set of mathematical tools has been developed in relation to this problem. The toolbox includes hierarchies of positive semi-definite programs, results from operator theory and functional analysis, ideas from linear algebra and matrix theory, \emph{etc.} \cite{Ioannou, Horodecki2009}. While pure state entanglement is relatively well understood, the study of the mixed state case has revealed numerous subtleties. Additionally, there are substantial differences between the case involving only two subsystems, and the generic case. One of the difficulties lies in the emergence of properties specifically connected with the presence of multiple subsystems, and in the choice of an appropriate entanglement measure.

It turns out that in some cases, the performance of certain entanglement measures depends on the type of state under consideration. In other words, different entanglement measures lend themselves to the analysis of particular families of states. With this perspective in mind, it is useful to introduce new families of mixed states as a means for testing different measures and, more generally, to study the behaviour of multipartite entanglement.

\subsection{Results}
In this work, we focus on the problem of detecting entanglement present in mixed states, using a classical description of the state as input. We present new combinatorial objects, which we call \emph{grid-labelled graphs}, which are obtained by labelling simple graphs with the point coordinates of a given grid. We show how grid-labelled graphs can be used to represent the quantum states arising in a scenario which we refer to as the \emph{faulty emitter scenario}. Here we have a device designed to emit a particular quantum state on demand, but which occasionally in error emits a different one. While the device does not know exactly which state it just emitted, it is able to produce a short-list of candidate states. The goal is then to use these candidate states to deduce if entanglement is present -- see Section \ref{subsection:faultyemittersandcombinatorialentanglement}.

The main focus of the paper is the study of grid-labelled graphs, and how their structural properties allow us to reason about separability in the underlying quantum state. This framework is not fully new, since it is closely related to the study of combinatorial Laplacians as quantum states initiated by Braunstein \emph{et al.} \cite{Braunstein2006b}. On one hand we reconsider their ideas from the ground up, but our key innovation is the introduction of a natural vertex labelling induced by the tensor product. This labelling provides a more physical interpretation of the states. Moreover, the labelling allows us to employ a wealth of mathematical tools of a combinatorial flavour. In particular, we shall explore how two entanglement criteria, the Peres-Horodecki criterion and the matrix realignment criterion, manifest themselves in terms of graph structure and related combinatorics. 

We use this novel perspective to explore the interplay between these entanglement criteria. We are able to use it to generate an infinite family of entangled states that are not detected by the matrix realignment criterion, as well as two entangled states that are not detected by the Peres-Horodecki criterion. The latter states are said to be \emph{bound entangled} \cite{Horodecki2009}. Let us recall that there is no general algorithm to decide whether a given state is bound entangled.

The states arising in the faulty emitter scenario make up a discrete subset of all possible density matrices. Despite the apparent simplicity of the states, it seems that there is no method for testing separability that works in all cases. This fact suggests that these states are an interesting playground in which to explore the interplay between entanglement and computational complexity. A proof of NP-hardness of testing separability of these states is likely to rely on graph theoretic machinery, and could possibly be much simpler than the well known proof of Gurvits \cite{Gurvits}. Furthermore, such a hardness result would be interesting in its own right, as it would be an explicit demonstration of a family of density matrices for which separability is hard to test.

In our attempts to understand the entanglement properties of grid-labelled graphs we need to introduce a number of new mathematical concepts. While these are described in detail in the subsequent sections of this paper, in the remainder of this introduction we will attempt to give a taste of some of them. As shown by Braunstein \emph{et al.} \cite{Braunstein2006}, if the density matrix under consideration is a combinatorial Laplacian matrix then the partial transpose corresponds to an operation on the edges of the corresponding graph. Testing whether a density matrix is entangled by means of the Peres-Horodecki criterion then amounts to observing the effect of this operation on the edges of the graph. The corresponding density matrix is entangled if the vertex degrees are modified by the partial transpose. Naturally, our canonical labelling of the graph means that this \emph{degree criterion} can still be utilised in our grid-labelled graph framework. In Section \ref{section:separabilityin3x3}, we are interested in characterising grid-labelled graphs that satisfy the criterion. In order to do so, we introduce \emph{edge contribution} matrices and tables, as tools to gain insight into the effect of the partial transpose operation. Such notions allow us to gain a ``spatial'' view of the contribution of each edge to the vertices on the grid before and after the partial transpose. In Appendix \ref{appendix:contributiontables} we isolate objects and study them in their own right, formulating a single player board game. We leave the classification of the computational complexity of this entanglement inspired game as an open problem.

In the process of classifying grid-labelled graphs that satisfy the degree criterion, we come across two particular instances that have bound entangled quantum states. An advantage of using our grid-labelled graph framework is that it prompts us to see patterns in the structure of the quantum states we are considering. The two bound entangled states in question have grid-labelled graphs with a particular edge pattern. It is easy to see how this pattern can be scaled up to generate similar quantum states in higher dimensions. We leave it as an open question to determine whether or not these higher dimensional states with the particular edge pattern are bound entangled. Our initial calculations suggest that such a cross-hatch structure is indeed a signature of bound entanglement \cite{otfried}. This is encouraging evidence in the favour of using ``structural'' arguments to characterise quantum states like we do.

In quantum mechanics, if a multi-partite quantum state can be transformed into another by means of a sequence of \emph{local operations}, that is, a sequence of individual manipulations performed on each part of the system, then we say that the states are \emph{local operations and classical communication equivalent} (for short, LOCC equivalent). Naturally, such local manipulations can not be used to increase the amount of entanglement in the system. In our discussion, we are able to meaningfully express a subset of the LOCC operations as manipulations of the edges of the grid-labelled graph. Concretely, we say that two grid-labelled graphs are \emph{locally isomorphic} if they can be obtained from one another by permuting the vertices in the rows and columns of the associated grid. This idea expresses a special case of the graph isomorphism problem. In Section \ref{subsection:localisomorphisms}, not surprisingly, we prove that pairs of locally isomorphic graphs have the same entanglement properties. Of course, acting with a generic permutation on a grid-labelled graph could lead to a state that has completely different entanglement properties. As such, isomorphisms in full generality could be interpreted as including the entangling operations, while local isomorphisms correspond to LOCC. This is another example of a mathematical concept that we feel is interesting in its own right, and could potentially lead to new combinatorial questions and ideas. For example, in Section \ref{subsection:localisomorphisms} we show that testing local isomorphism of grid-labelled graphs is at least as hard as the graph isomorphism problem. 

Finally, in Section \ref{section:thematrixrealignmentcriterion}, we introduce two matrices that encode the structure of a graph, the \emph{degree structure} and \emph{adjacency structure} matrices. We show that, in some cases, the entanglement present in the quantum state of a grid-labelled graph can be detected by calculating the eigenvalues of these matrices. This suggests that properties of grid-labelled graphs are also captured by methods from algebraic graph theory, beyond a direct combinatorial analysis. Throughout the paper, we state a number of computational problems, whose complexity is left open. In particular, the problem $(k,a,b)$-\textsc{GraphSeparability} is concerned with testing separability of grid-labelled graphs with a particular number of vertices and edges. Additionally, \textsc{SubsetPPT} and \textsc{SubgraphDC} deal with constructing examples of quantum states that satisfy the Peres-Horodecki criterion and come from our analysis of the entanglement-inspired single player game discussed in Appendix \ref{appendix:contributiontables}. While intuitively there is a connection between \textsc{SubsetPPT} and the well known problem \textsc{SubsetSum}, there is additional structure to \textsc{SubsetPPT} which does not indicate an immediate reduction.

\subsection{Previous Works}
The combinatorial Laplacian matrix of a graph has been treated as a density matrix in \cite{Braunstein2006b}. In that work, it was shown that for such density matrices, the Peres-Horodecki entanglement criterion \cite{Peres, HorodeckiSep} can be expressed in terms of the structure of the corresponding graph via a criterion referred to as the \emph{degree criterion}. This idea is further explored in \cite{Braunstein2006}, where it was conjectured that this criterion is necessary and sufficient for testing separability of such graph Laplacian states. Wu \cite{Wu2006} proved that the degree criterion is necessary and sufficient for graph Laplacian states in $\mathbb{C}^2\otimes\mathbb{C}^q$ for any $q$. Shortly afterwards however, Hildebrand \emph{et al.} \cite{Hildebrand+2008} provided a counterexample to the conjecture, demonstrating a graph Laplacian state in $\mathbb{C}^3\otimes \mathbb{C}^3$ that satisfies the degree criterion, but is entangled. Variations and generalisations on the work of \cite{Braunstein2006b} has been carried out, mostly focusing on weighted graphs and multipartite correlations \cite{Hassan2007,Hassan2008,Hassan2007b,Hui2013,Adhikari2012,Li2015,Dutta2016,Xie2013}. In this work we focus on simple graphs and bipartite entanglement.

\subsection{Synopsis}
The structure of the paper is as follows. In Section \ref{section:preliminaries}, we give a brief overview of the mathematical tools involved in the standard formulation of quantum mechanics for the benefit of the reader not familiar with this topic. In Section \ref{subsection:faultyemittersandcombinatorialentanglement} we describe the faulty emitter scenario and define the main decision problem we study in the work, $(k,a,b)$-\textsc{EdgeSeparability}, before defining grid-labelled graphs in Section \ref{subsection:gridlabelledgraphs}. In Section \ref{section:thedegreecriterion} we show how the degree criterion of \cite{Braunstein2006b} can be interpreted in the context of our grid-labelled graph framework. In Section \ref{section:thedegreecriterion} we also introduce the concept of separable decompositions of a grid-labelled graph. In Section \ref{subsection:decompositions}, we demonstrate the existence of a family of quantum states for which the Peres-Horodecki criterion is necessary and sufficient, corresponding to what we refer to as \emph{stratified graphs}. In Section \ref{section:separabilityin3x3} we classify all $3\times 3$ grid-labelled graphs that satisfy the degree criterion, and prove that asymptotically almost surely, random grid-labelled graphs with a small number of edges are entangled. Finally, in Section \ref{section:thematrixrealignmentcriterion} we study the matrix realignment criterion of Chen and Wu \cite{MR} applied to grid-labelled graphs, and show that it has a combinatorial interpretation. We use it to construct an infinite family of entangled quantum states that the matrix realignment criterion does not detect.

\section{Preliminaries}
\label{section:preliminaries}
	\subsection{States and entanglement}
	\label{subsection:statesandentanglement}
In the standard mathematical axiomatization of quantum mechanics,
a \emph{ket} $|\psi \rangle $
is defined as a column unit vector in some Hilbert space. A \emph{bra} $%
\langle \psi |$ is the functional sending $|\varphi \rangle $ to the inner
product $\langle \psi |\varphi \rangle $. The state of a quantum system is by convention represented by a ket, referred to as the system's \emph{state vector}, which belongs to a Hilbert space referred to as the system's \emph{state space}.

Let $\mathcal{H}_A\cong\mathbb{C}^m$
 and $\mathcal{H}_B\cong\mathbb{C}^n$ be the state spaces of two quantum systems $A$ and $B$. The state space of their composite system is the bipartite tensor product space
$	\mathcal{H}_{AB}:=\mathcal{H}_A\otimes\mathcal{H}_B\cong \mathbb{C}^m\otimes\mathbb{C}^n.$
This notion can be used iteratively to obtain multipartite composite state spaces. In this work we will consider only bipartite systems. A state $|\psi_{AB}\rangle\in\mathcal{H}_{AB}$ is said to be \emph{separable} if there exists $|\psi_A\rangle\in\mathcal{H}_A$ and $|\psi_B\rangle\in\mathcal{H}_B$ such that $|\psi_{AB}\rangle=|\psi_A\rangle\otimes|\psi_B\rangle$.
If a state is not separable then it is called \emph{entangled}. An equivalent representation of the state of a quantum system utilizes the \emph{density operator} formalism. A set of states $\{|\psi_1\rangle,\dots,|\psi_k\rangle\}$ together with probabilities $p_1,\dots,p_k$ summing to $1$ is referred to as an \emph{ensemble}. The density operator corresponding to such an ensemble is defined by
\begin{align*}
\rho:=\sum_{i=1}^k p_i|\psi_i\rangle\langle\psi_i|.
\end{align*}
Their statistical interpretation requires density operators to be positive semi-definite and have unit trace. For a Hilbert space $\mathcal{H}$, any linear operator acting on $\mathcal{H}$, $\rho\in\mathcal{L}(\mathcal{H})$ enjoying these two properties is a valid density operator, and describes some ensemble of state vectors in $\mathcal{H}$.
A density operator $\rho $
is said to be \emph{pure} if $\rho ^{2}=\rho$, otherwise it is called \emph{mixed}.  We will often refer to density operators as simply \emph{states}. A density operator $\rho\in\mathcal{L}(\mathcal{H}_{AB})$ is separable if it can be written as
\begin{align}
\rho =\sum_{i}p_{i}\rho _{i}\otimes \sigma _{i},  \label{eq:separable}
\end{align}
where $\rho_i\in\mathcal{L}(\mathcal{H}_A)$ and $\sigma_i\in\mathcal{L}(\mathcal{H}_B)$ are density operators, and $p_i\ge 0$ such that $\sum_i p_i=1$. As before, a density operator that is not separable is called entangled.

Of course, in the study of quantum systems with finite dimensions, one can fix a basis and represent the system's state using a \emph{density matrix}. Since we will be working in a fixed basis, we refer to density matrices throughout, rather than density operators. In \cite{Gurvits}, Gurvits showed that it is NP-hard to decide if a 
 density matrix of a bipartite state is separable. Subsequent work on the computational complexity of testing separability of quantum states is covered in detail by Ioannou \cite{Ioannou}.

\subsection{Faulty emitters and combinatorial entanglement}
\label{subsection:faultyemittersandcombinatorialentanglement}
We describe here a setting which is useful for giving a physical interpretation of some of the results discussed in this paper. Consider a hypothetical device which is designed to emit a specific pure state. Perhaps this state is entangled, and the device is designed to be an entanglement source for use in a quantum key distribution protocol \emph{\`a la} Ekert \cite{Ekert}. Suppose that on occasion, the device makes an error, and emits a state that is different from the one it is designed to emit. Suppose further that a fail-safe mechanism within the device can tell when an erroneous emission is made. This mechanism is able to prepare a shortlist of pure states such that, while the device may not know what state it just emitted, it knows with certainty that the state is on the shortlist.

\begin{figure}[ht!]
\centering
\includegraphics[scale=0.35]{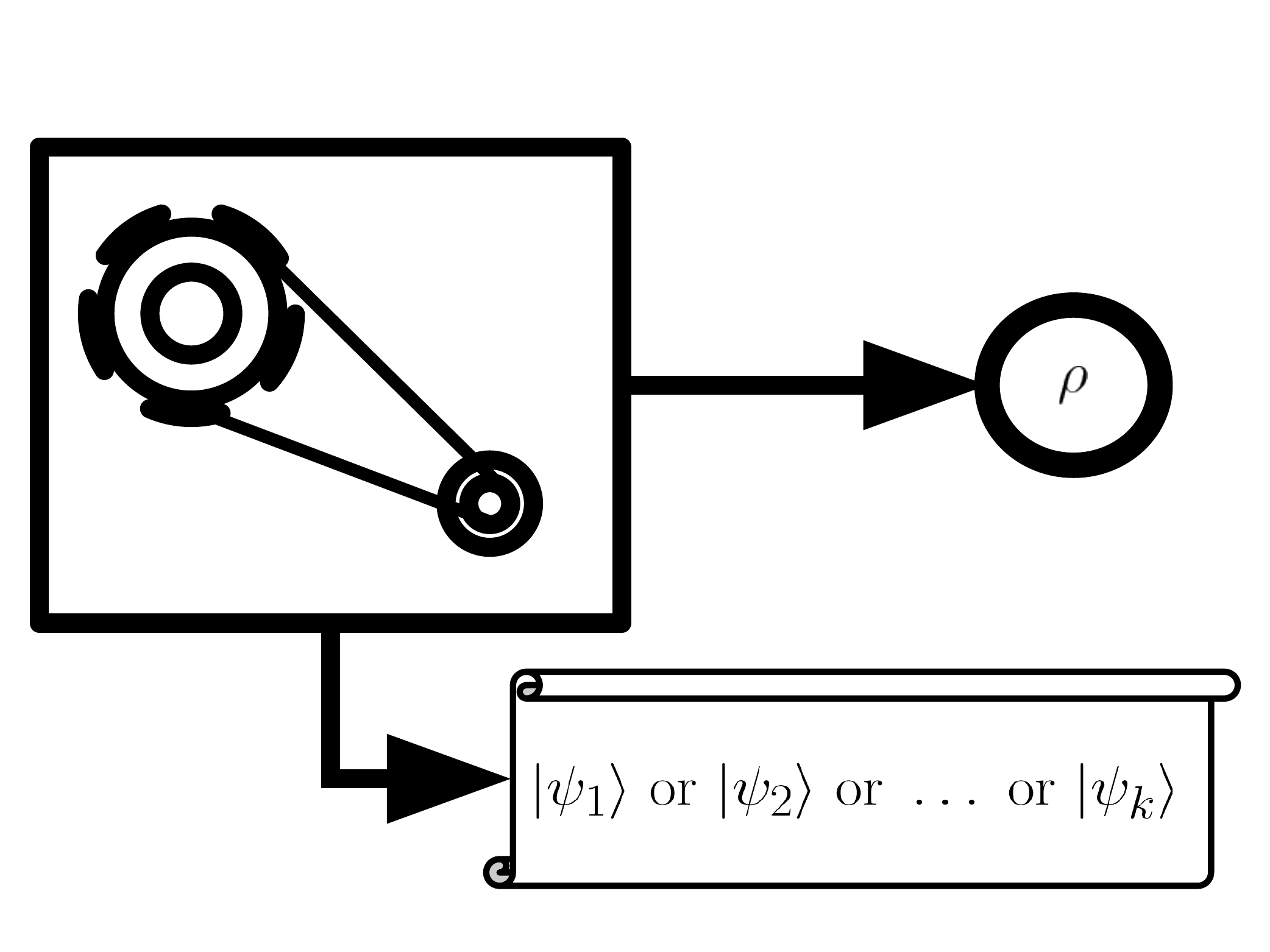}
\caption{A depiction of the faulty emitter scenario. When there is uncertainty about the state that has been emitted, this case is described by a density matrix $\rho$. In this work we will attempt to understand how the debugging information given to us by the emitter, denoted by the writing on the scroll, can help us understand the entanglement in the state $\rho$.}
\label{fig:machine}
\end{figure}

In this work, we explore how this ``debugging information'' can sometimes be used to test if the emitted state is entangled, without the recipient of the state interacting with it. In Figure \ref{fig:machine}, we depict the scenario pictorially. More formally, when the device makes an erroneous  emission of a pure state $|e\rangle$, it also emits a classical description of a list of $k$ states $E$ such that
for any state $|e'\rangle\in E$, the probability that $e=e'$,
\begin{align*}
P(e=e')=\frac{1}{k}.
\end{align*} 
The state of the experimenter's system after an erroneous emission with shortlist $E$ is therefore
\begin{align*}
\rho(E)=\frac{1}{k}\sum_{|e\rangle \in E}|e\rangle\langle e|.
\end{align*}
We have motivated the study of the following problem.

\begin{problem}
	$(k,a,b)$-\textsc{Separability}\\
		\textit{Input:} \emph{A finite set of $k$ pure states $E=\{|e\rangle\in\mathbb{C}^a\otimes\mathbb{C}^b\}$.}\\
		\textit{Question:} \emph{Is the state}
		\begin{align*}
		\rho(E)=\frac{1}{k}\sum_{|e\rangle \in E}|e\rangle\langle e|
		\end{align*}
		\emph{separable?}
\end{problem}
It is convenient at this early of analysis to discuss a refinement of this problem, the study of which will make up the majority of this work. We require the following definition.
\begin{definition}[Edge States]
	Let $\mathcal{H}_{A}\otimes\mathcal{H}_{B}\cong\mathbb{C}^m\otimes \mathbb{C}^n$ be a bipartite tensor product Hilbert space. Let $\{|i\rangle,|k\rangle\}\subset\mathcal{H}_A$ (resp. $\{|j\rangle,|l\rangle\}\subset\mathcal{H}_B$) be standard basis elements of $\mathcal{H}_A$ (resp. $\mathcal{H}_B$). Then the \emph{edge state} $|i,j;k,l\rangle\in\mathcal{H}_A\otimes\mathcal{H}_B$ is defined
	\begin{align*}
	|i,j;k,l\rangle:=\frac{1}{\sqrt{2}}\left(|i,j\rangle-|k,l\rangle\right).
	\end{align*}
	\label{def:edgeStates}
\end{definition}
The following problem is a refinement of $(k,a,b)-$\textsc{Separability}, where the states in the uniform mixture are always edge states.
\begin{problem}
		$(k,a,b)$-\textsc{EdgeSeparability}\\
		\textit{Input:} \emph{A finite set of $k$ edge states $E=\{|e\rangle\in\mathbb{C}^a\otimes\mathbb{C}^b\}$.}\\
		\textit{Question:} \emph{Is the state}
		\begin{align*}
		\rho(E)=\frac{1}{k}\sum_{|e\rangle \in E}|e\rangle\langle e|
		\end{align*}
		\emph{separable?}
\end{problem}
As we shall see, working with this  problem will require the study of a combinatorial object which we call a \emph{grid-labelled graph}.

\subsection{Grid-labelled graphs}
\label{subsection:gridlabelledgraphs}
We begin this section by defining the central notion of the present work:

\begin{definition}[Grid-labelled graph]
	\label{def:gridlabelledgraph}
	A \emph{grid-labelled graph} is a tuple $G_{l}^{a,b}=(G,l;a,b)$, where $G$ is
	a simple graph on $n=a\cdot b$ vertices $(a,b\ge 1)$, $l:V(G)\longrightarrow
	\lbrack a]\times \lbrack b]$ is a bijective labelling of the vertices of $G$, with $%
	[a]:=\{1,2,...,a\}$ and $[b]:=\{1,2,...,b\}$. When $a$ and $b$ are
	specified, we say that the grid-labelled graph $G_{l}^{a,b}$ is of \emph{type}
	$(a,b)$.
\end{definition}
When the context is clear, we refer to these objects as ``graphs'' and ``grid-labelled graphs'' interchangeably.
Given a labelling $l$, we write $l(G)$ instead of $l(V(G))$, in order to simplify the notation.
Any graph $G$ on $k\leq a\cdot b$ vertices can be associated to a
grid-labelled graph $G_{l}^{a,b}$, by choosing a labelling $l(G)$ and by
adding $a\cdot b-k$ isolated vertices. We say that $G$ is \emph{the graph}
of $G_{l}^{a,b}=(G,l;a,b)$.
\begin{figure}[ht!]
\includegraphics[scale=0.08]{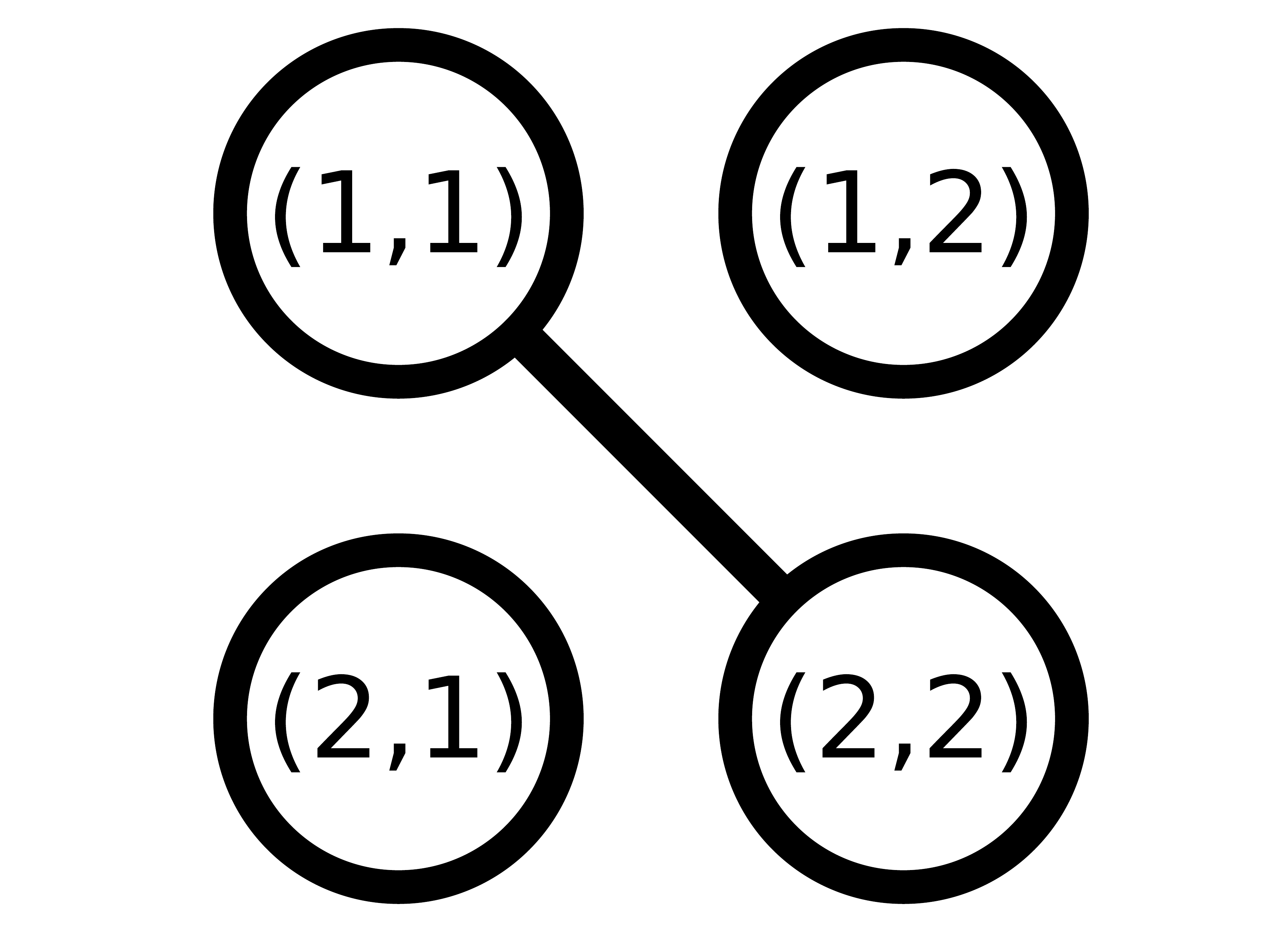}
\caption{The grid-labelled graph $G_l^{2,2}$ as described in Example \ref{example:k2}.}
\label{fig:examplek2h}
\end{figure}

\begin{example}
	\label{example:k2}\emph{Let us consider }$K_{2}$\emph{, where we denote by }$%
	K_{n}$\emph{\ the }complete graph\emph{\ on }$n$\emph{\ vertices. Let us use this to construct a grid-labelled graph }$G_{l}^{2,2}=(G,l;2,2)$\emph{. First of
		all, by Definition \ref{def:gridlabelledgraph}, we work with }$G=K_{2}\uplus K_{1}\uplus K_{1}$\emph{, in
		order to have }$|V(G)|=4=2\cdot 2$. \emph{The graph }$K_{1}$\emph{\ is said to be an \emph{isolated
	vertex}. The symbol `$\uplus$' denotes disjoint union of two graphs. Let us choose }$V(K_{2})=\{1,2\}$%
	\emph{\ and }$V(K_{1}^{(1,2)})=\{3\},\{4\}$\emph{. Then, we can define }$%
	l(1)=(1,1)$\emph{, }$l(2)=(2,2)$\emph{, }$l(3)=(1,2)$\emph{, and }$%
	l(4)=(2,1) $\emph{. The graph obtained is illustrated in Figure \ref{fig:examplek2h}.}
\end{example}

A $d$\emph{-dimensional grid} over finite sets $X_1\dots X_d\subset \mathbb{N}$ is the set 
\begin{equation*}
X_{1}\times \cdots \times X_{d}:=\{(x_{1},...,x_{d}):x_{i}\in X_{i}, \text{ for all }1\leq i\leq
d\}\subset \mathbb{N}^{d}.
\end{equation*}%
In our case, $X_{1}=[a]$ and $X_{2}=[b]$. Even if obvious, it is important
to clarify the following.

\begin{lemma}
	There is a unique way of drawing a grid-labelled graph on $n=a\cdot b$
	vertices on the grid $[a]\times \lbrack b]$, where each point corresponds to
	an ordered pair $(i,j)$ with $i\in \lbrack a]$ and $j\in \lbrack b]$.
\end{lemma}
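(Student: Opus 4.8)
The plan is to observe that this statement is essentially a reformulation of the fact that the labelling $l$ is a bijection, so the proof amounts to making the notion of ``drawing'' precise and then reading off existence and uniqueness directly from Definition \ref{def:gridlabelledgraph}. First I would fix what a drawing means: a drawing of $G_l^{a,b}=(G,l;a,b)$ on the grid $[a]\times[b]$ is an assignment of each vertex $v\in V(G)$ to a point of the grid, together with a segment joining the images of $u$ and $v$ whenever $\{u,v\}\in E(G)$, subject to the constraint that the point assigned to $v$ is precisely its label $l(v)$. With this convention the placement of vertices is not a free choice but is dictated entirely by $l$.

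The key step is then to check that this forced placement is both well defined and exhaustive. Since $l:V(G)\longrightarrow[a]\times[b]$ is injective, distinct vertices are sent to distinct grid points, so no two vertices are ever drawn at the same location. Since $|V(G)|=n=a\cdot b=|[a]\times[b]|$ and $l$ is surjective, every point $(i,j)$ of the grid is the image of exactly one vertex. Hence placing each $v$ at $l(v)$ occupies every grid point exactly once, and the segments are then determined by the adjacency relation of $G$; this establishes that a drawing exists.

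For uniqueness I would argue that any two drawings satisfying the constraint must coincide. Each must place the vertex $v$ at the point $l(v)$, so the two drawings agree on every vertex position; and since in both a segment is present between two points precisely when the corresponding vertices are adjacent in $G$, the two drawings carry exactly the same set of segments as well. Therefore they are identical.

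I do not expect any genuine obstacle here: the only substantive content is the bijectivity of $l$, which is built into the definition, so both counting (for surjectivity) and injectivity do all the work. The single point requiring a little care is purely definitional, namely pinning down ``drawing'' so that the word ``unique'' acquires a precise meaning; once a vertex is required to sit at its own label, existence and uniqueness are both immediate.
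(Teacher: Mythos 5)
Your proof is correct and matches the paper's intent: the paper states this lemma without proof, treating it as an immediate consequence of the bijectivity of $l$ in Definition \ref{def:gridlabelledgraph}, which is exactly the fact your argument isolates and uses. Making the notion of ``drawing'' precise and then reading off existence and uniqueness from injectivity, surjectivity, and the adjacency relation is the right (and essentially only) way to turn the paper's ``even if obvious'' remark into an actual proof.
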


For our purposes, it will be convenient to represent grid-labelled graphs by
their (normalised) \emph{combinatorial Laplacian} matrices. Given a labelled graph $%
G $ on $n$ vertices $\{1,2,...,n\}$ and $m$ edges, the combinatorial Laplacian of $%
G $ is the $n\times n$ matrix, indexed by the vertices of $G$, with $uv$-th
entry $[L(G)]_{u,v}=d_G(u)$ if $u=v$, $[L(G)]_{u,v}=0$ if $\{u,v\}\notin E(G)$%
, and $[L(G)]_{u,v}=-1$ if $\{u,v\}\in E(G)$. It follows that 
\begin{equation*}
L(G)=D(G)-A(G),
\end{equation*}%
where $D(G)$ is the \emph{degree matrix} of $G$ and $A(G)$ is its 
\emph{adjacency matrix}. Recall that $[D(G)]_{u,v}=\delta
_{u,v}d_{G}(i)$, where $\delta _{u,v}$ is the Kronecker symbol and $%
d_{G}(u)=|\{v:\{u,v\}\in E(G)\}|$ is the \emph{degree} of $u\in V(G)$; also, 
$[A(G)]_{u,v}=1$ if $\{u,v\}\in E(G)$, and $[A(G)]_{u,v}=0$ otherwise.
The adjacency, degree and combinatorial Laplacian matrices of a grid-labelled graph are defined in the obvious way, with matrix entries indexed by labelled vertices $(i,j)$. For a grid-labelled graph $%
G_l^{a,b}$ with $m$ edges, we can obtain a density matrix from its combinatorial Laplacian
matrix by normalisation, 
\begin{align*}
\rho(G_l^{a,b}):=L(G_l^{a,b})/2m.
\end{align*}

\begin{example}
	\emph{The density matrix associated with the grid-labelled graph $%
		G_{l}^{2,2}$ with the labelling $l$ defined as in Example \ref%
		{example:k2} is} 
	\begin{equation*}
	\rho (G_{l}^{2,2})=\frac{1}{2}%
	\begin{pmatrix}
	1 & 0 & 0 & -1 \\ 
	0 & 0 & 0 & 0 \\ 
	0 & 0 & 0 & 0 \\ 
	-1 & 0 & 0 & 1%
	\end{pmatrix}%
	.
	\end{equation*}
\end{example}

Let $\rho $ be a $a\cdot b\times a\cdot b$ density matrix. As we have mentioned in Section \ref{subsection:statesandentanglement}, $\rho $ is
said to be \emph{separable} if it can be written as
$
\rho =\sum_{i}p_{i}\rho _{i}\otimes \sigma _{i}, 
$
where $\rho _{i}$ and $\sigma _{i}$ are $a\times a$ and $b\times b$ density
matrices respectively, acting on the Hilbert spaces of the subsystems, and $p_{i}\geq 0$ such that $\sum_{i}p_{i}=1$. The set of separable density matrices is denoted by $\mathcal{S}$.

We can interpret $\rho $ as a linear operator acting on a tensor product
Hilbert space $\mathcal{H}_{AB}=\mathcal{H}_{A}\otimes \mathcal{H}_{B}$,
where $\mathcal{H}_{A}\cong \mathbb{C}_{A}^{a}$ and $\mathcal{H}_{B}\cong 
\mathbb{C}_{B}^{b}$, with $\dim (\mathcal{H}_{A})=a$ and $\dim (\mathcal{H}%
_{B})=b$. Notice that there is a correspondence between $[a]$ (resp. $[b]$) and the
elements of the standard basis of $\mathcal{H}_{A}$ (resp. $\mathcal{H}_{B}$%
). It is useful to label the entries of $\rho(G_l^{a,b})$ with the elements of an
orthonormal basis of $\mathcal{H}_{AB}$. For any grid-labelled graph $%
G_{l}^{a,b}$, we work with the standard basis elements $|i,j\rangle
:=|i\rangle \otimes |j\rangle \in \mathcal{H}_{A}\otimes \mathcal{H}_{B}$,
corresponding to the ordered pairs that label the vertices of $G_{l}^{a,b}$%
. Hence, once we have fixed the labelling $l$, we can simply write 
\begin{equation*}
V(G_{l}^{a,b})=\{l(v)=(i,j):v\in V(G \uplus_{a\cdot b-|V(G)|}K_{1})\}
\end{equation*}%
and, similarly,%
\begin{equation*}
E(G_{l}^{a,b})=\{\{l(u),l(v)\})=\{(i,j),(k,l)\}:\{u,v\}\in E(G)\}.
\end{equation*}%
With this notation, we have
\begin{align}
\rho(G_{l}^{a,b})&:=\frac{1}{2|E(G_{l}^{a,b})|}\sum_{\{(i,j),(k,l)\}\in
	E(G_{l}^{a,b})}(|i,j\rangle -|k,l\rangle )(\langle i,j|-\langle k,l|).\label{eq:basisform}
\end{align}
Clearly, $\rho(G_{l}^{a,b})$ is equal to $L(G)/2|E(G)|$, with the labelling $%
l.$
The states appearing in the sum in Equation (\ref{eq:basisform}) are associated with the edges of the graph $G_l^{a,b}$. Such states appear often in this work, indeed, these are the \emph{edge states} from Definition \ref{def:edgeStates}. Hence, we can define the density matrix $\rho(G_l^{a,b})$ in terms of edge states,
\begin{align*}
&\rho(G_l^{a,b}):=\frac{1}{|E(G_{l}^{a,b})|}\sum_{\{(i,j),(k,l)\}\in
	E(G_{l}^{a,b})}|i,j;k,l\rangle\langle i,j;k,l|.
\end{align*}%
Therefore, $(k,a,b)-$\textsc{EdgeSeparability} is exactly equivalent to the problem of testing separability of the normalised combinatorial Laplacian matrix of a grid-labelled graph of type $(a,b)$ with $k$ edges. This problem is defined formally as follows.

\begin{problem}
		$(k,a,b)$-\textsc{GraphSeparability}\\
		\textit{Input:} \emph{A grid-labelled graph $G_l^{a,b}$ with $k$ edges.}\\
		\textit{Question:} \emph{Is the state $\rho(G_l^{a,b})$
		separable?}
\end{problem}

Notice that the number of grid-labelled graphs on $n=a\cdot b$ vertices is $%
2^{n(n-1)/2}$. In fact, the number of labelled graphs on $n$ vertices
is $2^{n(n-1)/2}$ and there is a simple bijection between labelled graphs and
grid-labelled graphs. However, a grid-labelled graph can be seen as a labelled graph with an ordered pair labelling each vertex, a fact with a number of consequences.

\subsection{Local isomorphism}
\label{subsection:localisomorphisms}
A complex matrix $U$ is \emph{unitary} if $UU^{\dagger }=U^{\dagger }U=I$,
where $U^{\dagger }$ is the Hermitian conjugate of $U$ and $I$ is the identity matrix. The following result
is well-known (see, \emph{e.g.}, \cite{Horodecki2009}):

\begin{proposition}
	\label{proposition:localiso}Let $\rho $ be a density matrix acting on $%
	\mathcal{H}_{AB}$. Let $U_{A}$ and $U_{B}$ be unitary matrices acting on $%
	\mathcal{H}_{A}$ and $\mathcal{H}_{B}$, respectively. Let $\rho ^{\prime
	}=(U_{A}\otimes U_{B})\rho (U_{A}^{\dagger }\otimes U_{B}^{\dagger })$. Then, $%
	\rho ^{\prime }\in \mathcal{S}$ if and only if $\rho \in \mathcal{S}$.
\end{proposition}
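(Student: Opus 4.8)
The plan is to exploit the fact that conjugation by a \emph{product} unitary respects the tensor product structure, and therefore maps the set of product density matrices into itself. Since separability is defined in Equation (\ref{eq:separable}) as membership in the convex hull of product density matrices, it will then follow immediately that $\mathcal{S}$ is invariant under such conjugation. Because the statement is an ``if and only if'', I would first prove one implication and then recover the other by symmetry, observing that $U_A^\dagger$ and $U_B^\dagger$ are themselves unitary and that $\rho=(U_A^\dagger\otimes U_B^\dagger)\rho'(U_A\otimes U_B)$.

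For the forward direction, suppose $\rho\in\mathcal{S}$, so that $\rho=\sum_i p_i\,\rho_i\otimes\sigma_i$ with $\rho_i,\sigma_i$ density matrices and $p_i\ge 0$, $\sum_i p_i=1$. I would substitute this into the definition of $\rho'$ and push the conjugation through the sum by linearity, reducing the problem to a single term $(U_A\otimes U_B)(\rho_i\otimes\sigma_i)(U_A^\dagger\otimes U_B^\dagger)$. The key algebraic step is the mixed-product property of the tensor product, $(A\otimes B)(C\otimes D)=(AC)\otimes(BD)$, applied twice, which rewrites this term as $(U_A\rho_i U_A^\dagger)\otimes(U_B\sigma_i U_B^\dagger)$. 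Thus
\begin{align*}
\rho'=\sum_i p_i\,(U_A\rho_i U_A^\dagger)\otimes(U_B\sigma_i U_B^\dagger).
\end{align*}

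It then remains to check that each factor is again a legitimate density matrix, i.e.\ positive semi-definite with unit trace. Unitary conjugation preserves the spectrum, so $U_A\rho_i U_A^\dagger$ has the same (non-negative) eigenvalues as $\rho_i$ and is therefore positive semi-definite; cyclicity of the trace together with $U_A^\dagger U_A=I$ gives $\mathrm{tr}(U_A\rho_i U_A^\dagger)=\mathrm{tr}(\rho_i)=1$, and identically for the $B$-side. Hence $\rho'$ is a convex combination of product density matrices and lies in $\mathcal{S}$. Applying this very argument with the roles of $\rho$ and $\rho'$ interchanged, and with $U_A,U_B$ replaced by $U_A^\dagger,U_B^\dagger$, yields the converse. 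I do not anticipate a genuine obstacle here: the only point requiring a little care is making explicit that conjugation by a product unitary closes on product states via the mixed-product identity, rather than simply asserting that ``local unitaries preserve separability''.
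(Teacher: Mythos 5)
Your proof is correct and complete: the paper states Proposition \ref{proposition:localiso} without proof, merely citing it as well known from \cite{Horodecki2009}, and your argument (mixed-product property to push the local unitary conjugation onto each tensor factor, preservation of positivity and trace under unitary conjugation, and the converse by conjugating with $U_A^\dagger\otimes U_B^\dagger$) is exactly the standard justification that the cited reference relies on. Nothing is missing.
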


A \emph{permutation} $\pi $ of length $n$ is a bijection $\pi
:[n]\rightarrow \lbrack n]$, where $[n]=\{1,2,...,n\}$. The \emph{%
	permutation matrix} of $\pi $, denoted by $P_{\pi }$, has $ij$-th entry $%
[P_{\pi }]_{i,j}=1$ if $\pi (i)=j$ and $[P_{\pi }]_{i,j}=0$ otherwise. We
write $P$ when we do not need to specify $\pi $. We denote by $M^{T}$ the
transpose of a real matrix $M$.

\begin{definition}[Locally isomorphic grid-labelled graphs]
	Let $G_{l}^{a,b}$ and $H_{m}^{a,b}$ be two grid-labelled graphs. These graphs are said to be 
	\emph{locally isomorphic} if there are permutations $\pi$ and $%
	\sigma$ on $[a]$ and $[b]$, respectively, such that \begin{align*}
	(P_{\pi }\otimes
	P_{\sigma })A(G_{l}^{a,b})(P_{\pi }^{T}\otimes P_{\sigma }^{T})=A(H_{m}^{a,b}).
	\end{align*} 
	When $G_{l}^{a,b}$ and $H_{m}^{a,b}$ are locally isomorphic, we write $%
	G_{l}^{a,b}\cong _{a,b}H_{m}^{a,b}$. The ordered pair of permutations $(\pi
	,\sigma )$ inducing the permutation matrices $P_{\pi }$ and $P_{\sigma }$ is
	said to be a \emph{local isomorphism}.
\end{definition}
It is useful to observe that locally isomorphic graphs can be obtained from
one another by permuting the rows and the columns of the Cartesian grid $%
[a]\times \lbrack b]$. Indeed, the permutation matrices $P_\pi$ and $P_\sigma$ act on the rows and columns of the grid respectively.
\begin{figure}[tbp]
	\centering
	\includegraphics[scale=0.2]{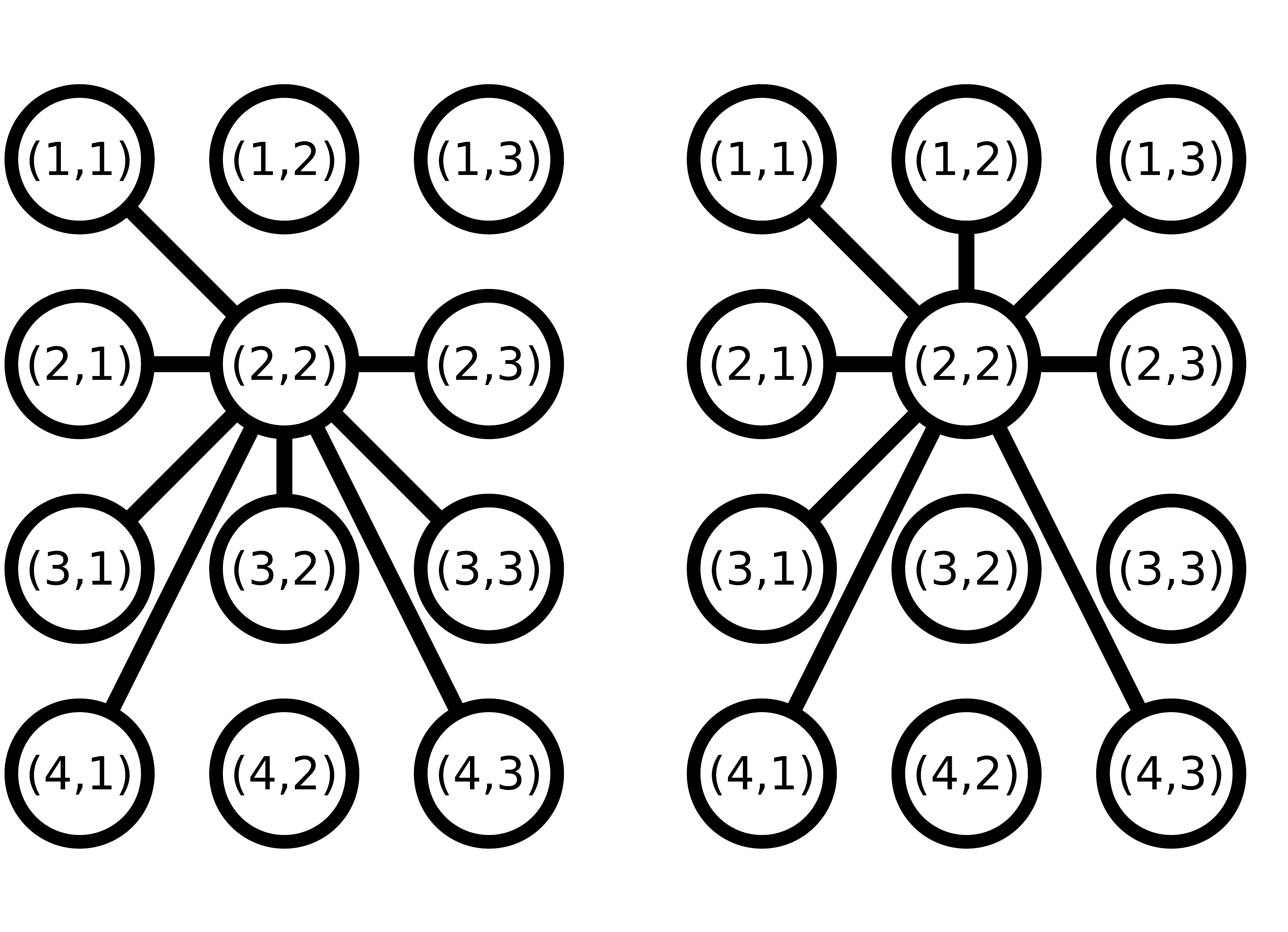}
	\caption{Two locally isomorphic grid-labelled graphs of type $(4,3)$}
	\label{fig:locallyisomorphic}
\end{figure}

\begin{example}
\emph{The grid-labelled graphs in Figure \ref{fig:locallyisomorphic} are locally isomorphic.
	Note how they can be obtained from one another by swapping the first and third rows of vertices.}
\end{example}

In standard graph-theoretic terminology, two graphs $G$ and $H$ are said to be \emph{isomorphic} if there is a
permutation matrix $P$ such that $PA(G)P^{T}=A(H)$. When $G$ and $H$ are
isomorphic, we write $G\cong H$.

\begin{proposition}
	Two isomorphic grid-labelled graphs are not necessarily locally
	isomorphic.
\end{proposition}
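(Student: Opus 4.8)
The plan is to produce an explicit pair of grid-labelled graphs that are isomorphic as abstract graphs but admit no local isomorphism. The crucial observation is that a local isomorphism $(\pi,\sigma)$ permutes the rows and columns of the grid independently: as noted above, $P_\pi$ acts on the rows and $P_\sigma$ on the columns, so conjugation by $P_\pi\otimes P_\sigma$ sends the grid point $(i,j)$ to $(\pi(i),\sigma(j))$, and hence sends an edge $\{(i,j),(k,l)\}$ to $\{(\pi(i),\sigma(j)),(\pi(k),\sigma(l))\}$. Since $\pi$ and $\sigma$ are bijections, $i=k$ if and only if $\pi(i)=\pi(k)$, and $j=l$ if and only if $\sigma(j)=\sigma(l)$. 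I would record this as a small auxiliary fact: whether an edge has both endpoints in a common row, both in a common column, or in neither, is invariant under local isomorphism; in particular the numbers of same-row, same-column, and ``diagonal'' edges are each preserved.

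With this invariant in hand, the counterexample is immediate. Take $a=b=2$ and let $G_{l}^{2,2}$ be the grid-labelled graph whose only edge is $\{(1,1),(1,2)\}$, and let $H_{m}^{2,2}$ be the grid-labelled graph whose only edge is $\{(1,1),(2,2)\}$, each padded with two isolated vertices so that $|V|=4$. As abstract simple graphs both are $K_2\uplus K_1\uplus K_1$, so there is certainly a permutation matrix $P$ with $PA(G)P^{T}=A(H)$; that is, the two grid-labelled graphs are isomorphic. However, the single edge of $G_{l}^{2,2}$ is a same-row edge, whereas the single edge of $H_{m}^{2,2}$ is diagonal, so by the invariant above no choice of $(\pi,\sigma)$ can carry one onto the other, and the two graphs are therefore not locally isomorphic.

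There is essentially no obstacle here beyond bookkeeping: the only point requiring care is the verification that the row/column membership of an edge really is preserved, which follows directly from the product form $P_\pi\otimes P_\sigma$ together with the bijectivity of $\pi$ and $\sigma$. One could instead argue by brute force, enumerating the $2!\cdot 2!=4$ candidate local isomorphisms of a $2\times 2$ grid and checking that none maps $E(G_{l}^{2,2})$ onto $E(H_{m}^{2,2})$; but the invariant formulation is preferable, since it makes transparent why the same phenomenon persists on every larger grid.
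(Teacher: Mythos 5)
Your proof is correct, but it takes a different and in fact more complete route than the paper. The paper disposes of this proposition in one line: it observes that not every permutation matrix of size $a\cdot b$ factors as $P\otimes Q$ for permutation matrices $P$ and $Q$, so the local isomorphisms form a strictly smaller class than the isomorphisms. That observation explains \emph{why} the proposition should be true, but on its own it does not exhibit a pair of isomorphic grid-labelled graphs for which \emph{no} local isomorphism exists --- a priori some factorizable permutation might still happen to work for any given pair. Your argument closes that gap: you isolate the invariant that conjugation by $P_\pi\otimes P_\sigma$ sends $(i,j)$ to $(\pi(i),\sigma(j))$ and hence preserves the horizontal/vertical/diagonal classification of each edge (this is the same fact the paper itself uses informally when noting that the partial transpose only moves diagonal edges), and then you produce the explicit $2\times 2$ counterexample of a single horizontal edge versus a single diagonal edge, both of which are $K_2\uplus K_1\uplus K_1$ as abstract graphs. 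What the paper's approach buys is brevity; what yours buys is an actual witness to the claim together with an invariant that, as you note, immediately propagates the counterexample to every grid of type $(a,b)$ with $a,b\geq 2$. Both are acceptable, but yours is the more rigorous of the two.
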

\begin{proof}
	Not every permutation matrix of
	size $a\cdot b$ can be written as $P\otimes Q$, for permutation matrices $P$
	and $Q$.
\end{proof}

For the sake of brevity, instead of writing $\rho(G_{l}^{a,b})\in \mathcal{S}$ we write $%
G_{l}^{a,b}\in \mathcal{S}$. Proposition \ref{proposition:localiso} can be
easily adapted to our context:
\begin{proposition}
	\label{proposition:localisomorphism}Let $G_{l}^{a,b}$ and $H_{m}^{a,b}$ be
	grid-labelled graphs. Let $G_{l}^{a,b}\cong _{a,b}H_{m}^{a,b}$. Then, $%
	G_{l}^{a,b}\in \mathcal{S}$ if and only if $H_{m}^{a,b}\in \mathcal{S}$.
\end{proposition}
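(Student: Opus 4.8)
The plan is to reduce the statement to Proposition \ref{proposition:localiso} by showing that the local isomorphism, which is defined purely in terms of the adjacency matrices, lifts to a unitary conjugation of the full density matrices. First I would observe that $P_\pi$ and $P_\sigma$ are real orthogonal, hence unitary, and that their Kronecker product $P_\pi \otimes P_\sigma$ is itself a permutation matrix $P_\tau$ acting on the $a\cdot b$ labelled vertices. The defining relation $(P_\pi \otimes P_\sigma) A(G_{l}^{a,b})(P_\pi^T \otimes P_\sigma^T) = A(H_{m}^{a,b})$ then says precisely that $\tau$ is a graph isomorphism of the restricted, grid-respecting form captured by the Kronecker product.

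The key step is to upgrade this relation from the adjacency matrix to the combinatorial Laplacian $L = D - A$, and hence to $\rho = L/2m$. Since conjugation by $P_\tau$ is linear, it suffices to control the degree matrix. I would compute $[P_\tau A(G_{l}^{a,b}) P_\tau^T]_{u,v} = [A(G_{l}^{a,b})]_{\tau(u),\tau(v)}$, so that summing over $v$ and using that $\tau$ is a bijection yields $d_{H}(u) = d_{G}(\tau(u))$ for every vertex $u$, where $d_G$ and $d_H$ denote the degree sequences of the two graphs. This is exactly the statement $D(H_{m}^{a,b}) = P_\tau D(G_{l}^{a,b}) P_\tau^T$, because conjugating a diagonal matrix by a permutation matrix simply permutes its diagonal entries. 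Summing $d_H(u) = d_G(\tau(u))$ over all $u$ also gives $|E(H_{m}^{a,b})| = |E(G_{l}^{a,b})| = m$, so the two density matrices carry the same normalisation constant $1/2m$.

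Combining the two transformation laws, linearity of conjugation gives $(P_\pi \otimes P_\sigma)\,\rho(G_{l}^{a,b})\,(P_\pi^T \otimes P_\sigma^T) = \rho(H_{m}^{a,b})$. Setting $U_A = P_\pi$ and $U_B = P_\sigma$ in Proposition \ref{proposition:localiso}, and using $P_\pi^T = P_\pi^\dagger$ and $P_\sigma^T = P_\sigma^\dagger$, I would conclude that $\rho(H_{m}^{a,b}) \in \mathcal{S}$ if and only if $\rho(G_{l}^{a,b}) \in \mathcal{S}$, which is the claim. I expect the only genuine obstacle to be the bookkeeping for the degree matrix: one must verify carefully that the degree sequence is carried along by $\tau$ in the same manner as the adjacency entries, so that $D$ and $A$ are conjugated by the same permutation and $L$, and therefore $\rho$, transforms cleanly. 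Everything else follows routinely once one notes that $P_\pi \otimes P_\sigma$ is a unitary permutation matrix.
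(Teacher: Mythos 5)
Your proposal is correct and follows essentially the same route as the paper: the paper's proof is a one-line appeal to Proposition \ref{proposition:localiso} together with the observation that permutation matrices are real-orthogonal, and your argument is simply the fully detailed version of that same reduction (verifying that the conjugation of the adjacency matrix carries over to the degree matrix, the Laplacian, and hence the normalised density matrix). The only cosmetic issue is your reuse of the symbol $m$ both for the edge count and for the labelling of $H_{m}^{a,b}$; otherwise everything checks out.
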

\begin{proof}
	The statement follows from Proposition \ref{proposition:localiso} and from
	the fact that a permutation matrix $P$ is unitary (in fact,
	real-orthogonal), since $PP^{T}=P^{T}P=I$.
\end{proof}
Let us show that testing if two grid-labelled graphs are locally isomorphic is at least as hard as testing if two graphs are isomorphic. The graph isomorphism problem is as follows.
\begin{problem}
\textsc{GraphIsomorphism}\\
\textit{Input: } \emph{Two graphs, $G$, and $H$.}\\
\textit{Question: } \emph{Is $G$ isomorphic to $H$?}
\end{problem}
On the other hand, the local isomorphism problem is defined like so.
\begin{problem}
\textsc{LocalIsomorphism}\\
\textit{Input: } \emph{Two grid-labelled graphs, $G_l^{a,b}$, and $H_m^{a,b}$}.\\
\textit{Question: } \emph{Is $G_l^{a,b}$ locally isomorphic to $H_m^{a,b}$?}\\
\end{problem}
The following theorem states that deciding \textsc{LocalIsomorphism} is at least as hard as \textsc{GraphIsomorphism}.
\begin{theorem}
There exists a polynomial time Karp reduction from \textsc{GraphIsomorphism} to \textsc{LocalIsomorphism}.
\end{theorem}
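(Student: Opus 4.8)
The plan is to exploit the degenerate case in which one of the grid dimensions equals $1$. When $b=1$ the set $[b]$ is a singleton, so the only permutation $\sigma$ on $[b]$ is the identity and $P_\sigma$ is the $1\times 1$ identity matrix. Consequently $P_\pi\otimes P_\sigma=P_\pi$ ranges over \emph{all} permutation matrices on $[a]$ as $\pi$ ranges over the symmetric group $S_a$. In other words, local isomorphism of type-$(a,1)$ grid-labelled graphs is nothing other than ordinary graph isomorphism of their underlying graphs. This single observation is the entire engine of the reduction.

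Concretely, given an instance $(G,H)$ of \textsc{GraphIsomorphism}, I would first dispose of the trivial case: if $|V(G)|\neq|V(H)|$ then $G$ and $H$ cannot be isomorphic, so I map the instance to any fixed pair of grid-labelled graphs that are not locally isomorphic (for example, two graphs of the same type with different numbers of edges, which cannot be related by a permutation). Otherwise set $a:=|V(G)|=|V(H)|$ and $b:=1$, and form the type-$(a,1)$ grid-labelled graphs $G_l^{a,1}$ and $H_m^{a,1}$ by equipping $G$ and $H$ with the canonical labelling $v\mapsto(v,1)$. Under this labelling the adjacency matrix $A(G_l^{a,1})$ is exactly $A(G)$, and likewise $A(H_m^{a,1})=A(H)$. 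This construction merely copies the adjacency matrices and attaches a trivial labelling, so it is computable in time linear in the input size and is certainly polynomial.

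It then remains to verify correctness. By definition $G_l^{a,1}\cong_{a,1}H_m^{a,1}$ holds iff there exist $\pi\in S_a$ and $\sigma\in S_1$ with $(P_\pi\otimes P_\sigma)A(G)(P_\pi^T\otimes P_\sigma^T)=A(H)$. Since $S_1$ is trivial we have $P_\sigma=(1)$ and hence $P_\pi\otimes P_\sigma=P_\pi$, so the condition collapses to $P_\pi A(G)P_\pi^T=A(H)$, which is precisely the statement that $\pi$ is a graph isomorphism from $G$ to $H$. Thus $G_l^{a,1}\cong_{a,1}H_m^{a,1}$ if and only if $G\cong H$, establishing both directions of the reduction simultaneously.

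I do not expect any genuine obstacle here; the content of the theorem is the lone structural fact that the tensor-product constraint on the admissible permutation matrices evaporates when $b=1$, leaving the full symmetric group. The only points requiring care are bookkeeping ones: confirming that type-$(a,1)$ grid-labelled graphs are in faithful correspondence with simple graphs on $a$ vertices (so that no information is lost in the encoding, and the labelling is irrelevant to the equivalence class), and handling the size-mismatch case so that the reduction map is total and well-defined on all inputs. Everything else follows immediately from the definition of local isomorphism.
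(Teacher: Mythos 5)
Your reduction is correct and is essentially the paper's own argument: both collapse one grid dimension to $1$ so that the tensor-product constraint on the permutation matrices disappears and local isomorphism of the resulting grid-labelled graphs coincides with ordinary isomorphism of the underlying graphs (you use type $(a,1)$ where the paper uses type $(1,n)$, an immaterial difference). Your explicit handling of the size-mismatch case by mapping to a fixed non-locally-isomorphic pair is a slightly more careful version of the paper's ``w.l.o.g.'' remark, but the substance is identical.
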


\begin{proof}
In order to prove this, we must show that for every pair of simple undirected graphs $G$ and $H$ there exists a corresponding pair of grid-labelled graphs $G_l^{a,b}$ and $H_m^{a,b}$ such that $G_l^{a,b}\cong_{a,b} H_m^{a,b}$ if and only if $G\cong H$. If $G$ and $H$ do not have the same number of vertices, they can not be isomorphic, so w.l.o.g. we can assume they do.

Every graph $G=(V,E)$ with $n=|V|$ vertices labelled according to some bijection $l:V\rightarrow [n]$ has a unique grid-labelled graph $G_q^{1,n}$, where the labelling $q$ is defined such that $q(1,j)=l(j)$. It is obvious that these corresponding grid-labelled graphs can be constructed in time polynomial in the number of vertices in the original graph. The set of possible permutations of the vertex labels of $G$ and $G_q^{1,n}$ are identical. Then, by definition of local isomorphism, two graphs $G$ and $H$ are isomorphic if and only if their corresponding grid-labelled graphs defined in this way are locally isomorphic. 
\end{proof}
Hence, \textsc{LocalIsomorphism} is a generalisation of \textsc{GraphIsomorphism}. We will attempt no further classification of the problem in this paper, however, it is obvious that it is in the class NP.

\subsection{Local operations and classical communication}
\label{subsection:localoperationsandclassicalcommunication}
\emph{Local operations and classical communication (LOCC)} capture the set
of transformations of density matrices $\rho \in \mathcal{H}_{A}\otimes 
\mathcal{H}_{B}$ that do not increase the amount of entanglement \cite{Horodecki2009}. Local
operations can be interpreted as the set of transformations achievable by
acting locally on $\mathcal{H}_{A}$ and $\mathcal{H}_{B}$, respectively.
Consider the parts of $\rho $ corresponding to $\mathcal{H}_{A}$ and $%
\mathcal{H}_{B}$ being given to two parties (Alice and Bob) respectively. LOCC captures
everything that can be done to $\rho$ by Alice and Bob when acting
independently on their parts (local operations), and coordinating with one
another by sending bits back and forth (classical communication). As part of
a formal description of LOCC, Definition \ref{def:localunitary} is important:

\begin{definition}[Local unitary]
	\label{def:localunitary}A unitary matrix $U$ acting on a bipartite Hilbert
	space $\mathcal{H}_{AB}$ is \emph{local} if it can be expressed as $%
	U=U_{A}\otimes U_{B}$, for unitary matrices $U_{A}$ acting on $\mathcal{H}%
	_{A}$ and $U_{B}$ acting on $\mathcal{H}_{B}$.
\end{definition}

Below, we will work with LOCC and grid-labelled graphs. The next definition
is useful for our purpose:

\begin{definition}[Horizontal, vertical, diagonal edge]
	Let $G_{l}^{a,b}$ be a grid-labelled graph. Let $\{(i,j),(k,l)\}\in
	E(G_{l}^{a,b})$. An edge $\{(i,j),(k,l)\}\in E(G_l^{a,b})$ is said to be
	
	\begin{itemize}
		\item \emph{Horizontal} if $i=k$
		and $j\neq l$;
		
		\item \emph{Vertical} if $j=l$ and $%
		i\neq k$;
		
		\item \emph{Diagonal} if $i\neq k$
		and $j\neq l$.
	\end{itemize}
\end{definition}

For certain grid-labelled graphs, it is possible to construct their
corresponding density matrices purely via LOCC. Indeed, Theorem \ref%
{theorem:loccgraphs} below gives an infinite family of grid-labelled graphs
that can be constructed in such a way. The theorem requires a technical
lemma: 

\begin{lemma}
	\label{lemma:itsunitary}The unitary matrix $U_{(i,j),(k,l)}$ acting on a bipartite Hilbert space $\mathcal{H}_{AB}\cong \mathbb{C}^a\otimes \mathbb{C}^b$ for $a,b\ge 2$, for $(i,j)\neq(k,l)$ defined such that
	\begin{equation*}
	U_{(i,j),(k,l)}|1\rangle _{A}|1\rangle _{B}\mapsto \frac{1}{\sqrt{2}%
	}(|i\rangle _{A}|j\rangle _{B}-|k\rangle _{A}|l\rangle _{B})
	\end{equation*}%
	and which acts as identity elsewhere, is a local unitary if
	
	\begin{itemize}
		\item $i=k$ and $j\neq l$, or
		
		\item if $i\neq k$ and $j=l$.
	\end{itemize}
\end{lemma}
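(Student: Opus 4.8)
The plan is to exploit the fact that, precisely in the two cases listed, the target edge state is a \emph{product} state, so that the map sending $|1\rangle_A|1\rangle_B$ to it can be realised by a tensor product of single-system unitaries. First I would record the factorisations explicitly. In the horizontal case $i=k$, $j\neq l$ we have
\begin{align*}
\frac{1}{\sqrt{2}}\left(|i\rangle_A|j\rangle_B-|k\rangle_A|l\rangle_B\right)=|i\rangle_A\otimes\frac{1}{\sqrt{2}}\left(|j\rangle_B-|l\rangle_B\right),
\end{align*}
while in the vertical case $i\neq k$, $j=l$,
\begin{align*}
\frac{1}{\sqrt{2}}\left(|i\rangle_A|j\rangle_B-|k\rangle_A|l\rangle_B\right)=\frac{1}{\sqrt{2}}\left(|i\rangle_A-|k\rangle_A\right)\otimes|j\rangle_B.
\end{align*}
In each case the right-hand side is a unit vector of $\mathcal{H}_A$ tensored with a unit vector of $\mathcal{H}_B$, which is the whole point: the restriction to horizontal and vertical edges is exactly the condition under which the edge state fails to be entangled.

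Next I would build the local unitary. In the horizontal case let $|\beta\rangle_B:=\frac{1}{\sqrt{2}}(|j\rangle_B-|l\rangle_B)$, which is a unit vector since $j\neq l$. Any unit vector can be taken as the first column of a unitary matrix (complete it to an orthonormal basis by Gram--Schmidt), so there is a unitary $U_B$ on $\mathcal{H}_B$ with $U_B|1\rangle_B=|\beta\rangle_B$, and likewise a unitary $U_A$ on $\mathcal{H}_A$ with $U_A|1\rangle_A=|i\rangle_A$ (for instance the permutation exchanging the first and $i$-th basis vectors). Then $U_A\otimes U_B$ is local in the sense of Definition \ref{def:localunitary}, and
\begin{align*}
(U_A\otimes U_B)|1\rangle_A|1\rangle_B=|i\rangle_A\otimes|\beta\rangle_B=\frac{1}{\sqrt{2}}\left(|i\rangle_A|j\rangle_B-|k\rangle_A|l\rangle_B\right),
\end{align*}
as required. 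The vertical case is identical with the roles of the two systems exchanged: take $U_A$ with $U_A|1\rangle_A=\frac{1}{\sqrt{2}}(|i\rangle_A-|k\rangle_A)$ and $U_B$ with $U_B|1\rangle_B=|j\rangle_B$.

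The only real subtlety is the clause ``acts as identity elsewhere''. Since the single-system vectors appearing above are arbitrary unit vectors, each admits many unitary completions, and this freedom does not affect locality: the product $U_A\otimes U_B$ is local no matter how the remaining columns of $U_A$ and $U_B$ are chosen. I would therefore phrase the statement as the \emph{existence} of a local completion rather than a single rigidly specified matrix. The main obstacle is thus conceptual rather than computational, namely recognising the product factorisation; once it is in hand, locality is immediate from Definition \ref{def:localunitary}. As a sanity check on why the hypotheses cannot be dropped, I would note that in the diagonal case $i\neq k$, $j\neq l$ no such factorisation exists (the edge state is maximally entangled), so no local unitary can map the product state $|1\rangle_A|1\rangle_B$ to it, which explains why the lemma is restricted to the two cases stated.
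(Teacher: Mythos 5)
Your proposal is correct and follows essentially the same route as the paper: both proofs rest on the observation that a horizontal or vertical edge state factorises as a single-system unit vector on one factor tensored with a unit vector on the other, so the required map can be realised as $U_A\otimes U_B$. The paper instantiates this with the explicit block $I_A\otimes(H_B\oplus I)$ for the canonical edge $\{(1,1),(1,2)\}$ followed by local isomorphisms (permutations), whereas you complete the target vectors to unitaries directly by Gram--Schmidt; your observation that the clause ``acts as identity elsewhere'' must be read as the existence of \emph{some} local unitary completion (since the literally stated map would not be unitary) is a point the paper's own proof glosses over.
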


\begin{proof}
	Starting with the separable pure state $|1\rangle _{A}|1\rangle _{B}$, we
	construct the grid-labelled graph $(K_{2})_{l}^{a,b}$, for $a,b\ge 2$, with 
	with a single (horizontal) edge $\{(1,1),(1,2)\}$. To do that, we use the
	matrix 
	\begin{equation*}
	V=I_{A}\otimes (H_{B}\oplus I),
	\end{equation*}%
	where $I_{A}$ is the identity matrix of size $a\times a$ acting on $\mathcal{H}_{A}$, 
	\begin{equation*}
	H_{B}=\frac{1}{\sqrt{2}}\left( 
	\begin{array}{rr}
	1 & 1 \\ 
	-1 & 1%
	\end{array}%
	\right) ,
	\end{equation*}%
	and $I$ is an identity matrix of size $(b-2)\times (b-2)$. Here,
	\textquotedblleft $\oplus $\textquotedblright\ denotes matrix direct sum.
	The matrix $V$ is real-orthogonal, because $H_B$ is a Hadamard matrix. The grid-labelled graph $(K_{2})_{l}^{a,b}$, with
	density matrix 
	\begin{equation*}
	\rho((K_{2})_{l}^{a,b})=\frac{1}{2}(|1\rangle _{A}|1\rangle _{B}-|1\rangle
	_{A}|2\rangle _{B})(\langle 1|_{A}\langle 1|_{B}-\langle 1|_{A}\langle
	2|_{B}),
	\end{equation*}%
	is locally isomorphic to any other $(K_{2})_{l^{\prime }}^{a,b}$ with a
	different labelling $l^{\prime }$ such that $i=k$ and $j\neq l$. The same
	holds for the case $i\neq k,j=l$ by similar reasoning. The matrix $%
	U_{(i,j),(k,l)}$ is obtained by the application of $I_{A}\otimes
	(H_{B}\oplus I)$ followed by local isomorphisms. Thus, $U_{(i,j),(k,l)}$ is
	unitary.
\end{proof}

With the use of this lemma, we can now prove the theorem.
\begin{theorem}
	\label{theorem:loccgraphs}Let $G_{l}^{a,b}$ be a grid-labelled graph with $m$
	edges. Suppose that $G_{l}^{a,b}$ does not have diagonal edges. Then $%
	\rho(G_{l}^{a,b})$ can be obtained from the separable pure density matrix $%
	\rho _{0}=|1\rangle _{A}|1\rangle _{B}\langle 1|_{A}\langle 1|_{B}$ by application of local unitaries and $\log m$ bits of classical communication.
\end{theorem}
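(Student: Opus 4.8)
The plan is to exhibit $\rho(G_l^{a,b})$ as a uniform classical mixture whose every branch is prepared from $\rho_0$ by a single local unitary, and then to charge the cost of coordinating the branches to classical communication. First I would start from the edge-state form of the density matrix already recorded above,
\[
\rho(G_l^{a,b})=\frac{1}{m}\sum_{\{(i,j),(k,l)\}\in E(G_l^{a,b})}|i,j;k,l\rangle\langle i,j;k,l|,
\]
so that $\rho(G_l^{a,b})$ is the equal-weight mixture of the $m$ pure edge states indexed by the edges of the graph.

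Next I would invoke Lemma~\ref{lemma:itsunitary}. Since $G_l^{a,b}$ has no diagonal edges by hypothesis, every edge $\{(i,j),(k,l)\}$ is either horizontal ($i=k$, $j\neq l$) or vertical ($i\neq k$, $j=l$), and in both cases the lemma guarantees that the unitary $U_{(i,j),(k,l)}$ carrying $|1\rangle_A|1\rangle_B$ to the edge state $\frac{1}{\sqrt{2}}(|i\rangle_A|j\rangle_B-|k\rangle_A|l\rangle_B)$ is local, i.e. of the form $U_A\otimes U_B$. Hence each branch of the mixture is reachable from the separable seed $\rho_0$ by operations acting independently on $\mathcal{H}_A$ and $\mathcal{H}_B$.

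With this in hand, the protocol is immediate. Alice generates $\log m$ uniformly random classical bits, reads them as the label of one of the $m$ edges $e$, and transmits them to Bob; this is the only use of classical communication and accounts for exactly the $\log m$ bits claimed. Conditioned on the shared label $e$, Alice and Bob each apply their half of the corresponding local unitary $U_e=U_A^{(e)}\otimes U_B^{(e)}$ to $\rho_0$, producing the pure edge state $U_e\rho_0 U_e^{\dagger}=|e\rangle\langle e|$. Averaging over the random bits (equivalently, not recording the label) leaves the joint system in $\frac{1}{m}\sum_e|e\rangle\langle e|$, which is precisely $\rho(G_l^{a,b})$.

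The substance of the argument lies entirely in Lemma~\ref{lemma:itsunitary}; the remainder is bookkeeping. The one point I would state with care is why a nontrivial mixture is reachable at all by these means: it is the classical communication that correlates Alice's and Bob's local unitaries on the same branch, and naming one of $m$ edges costs exactly $\log m$ bits. I anticipate no genuine obstacle beyond phrasing this coordination step cleanly and confirming that discarding the shared classical label is itself a legitimate LOCC operation.
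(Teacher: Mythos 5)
Your proposal is correct and follows essentially the same route as the paper: both rest on Lemma \ref{lemma:itsunitary} to make each $U_{(i,j),(k,l)}$ local for horizontal and vertical edges, and both use the protocol in which Alice draws a uniformly random edge label, communicates it with $\log m$ bits, the two parties apply their halves of the corresponding local unitary, and the label is then discarded to yield the uniform mixture $\frac{1}{m}\sum_e U_e\rho_0U_e^{\dagger}=\rho(G_l^{a,b})$. No gap.
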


\begin{proof}
	We prove the statement by demonstrating a protocol by which two parties can turn the
	density matrix $\rho _{0}$ into $\rho(G_{l}^{a,b})$, for grid-labelled graphs $G_l^{a,b}$
	with no diagonal edges. The protocol uses only classical communication and
	local unitaries acting on $\rho _{0}$.
	
	With every edge $\{(i,j),(k,l)\}$, we can associate the unitary matrix $%
	U_{(i,j),(k,l)}$. If all edges are horizontal or vertical, then by Lemma \ref%
	{lemma:itsunitary} all $m$ of these matrices are implementable by applying local unitaries on $\rho _{0}$.
	Both Alice and Bob know what each of these matrices are, and which local
	unitaries they need to perform on their respective Hilbert space. Each
	unitary $U_{(i,j),(k,l)}$ is labelled by an integer $1,2,...,m$, such that the label assignments are known to
	both Alice and Bob. The protocol is as follows:
	\begin{itemize}
		\item[\emph{Step 1.}] Alice selects a uniformly random integer $x\in \lbrack m]$.
		
		\item[\emph{Step 2.}] Alice selects the unitary $U_{(i,j),(k,l)}$ labelled by $x$ and performs her half on $|1\rangle _{A}$. She sends $x$ to Bob.
		
		\item[\emph{Step 3.}] Bob performs his half of $U_{(i,j),(k,l)}$,
		with label $x$ on $|1\rangle _{B}.$
		
		\item[\emph{Step 4.}] Alice and Bob both erase the classical memory that stored
		their copy of $x$.
	\end{itemize}
	The protocol uses only local unitaries. At the end of the protocol, the
	final state is equal to 
	\begin{align*}
	\rho & =\frac{1}{m}\sum_{\{(i,j),(k,l)\}\in
		E(G_{l}^{a,b})}U_{(i,j),(k,l)}\rho _{0}{U_{(i,j),(k,l)}}^{\dagger } \\
	& =\frac{1}{2m}\sum_{\{(i,j),(k,l)\}\in E(G_{l}^{a,b})}(|i\rangle
	_{A}|j\rangle _{B}-|k\rangle _{A}|l\rangle _{B})(\langle i|_{A}\langle
	j|_{B}-\langle k|_{A}\langle l|_{B}) \\
	& =\rho(G_{l}^{a,b}),
	\end{align*}%
	via the edge state form in Equation (\ref{eq:basisform}). Alice needs to communicate $\log m$ bits to communicate the value of $x\in[m]$ to Bob.
\end{proof}
Note that local isomorphisms do not by any means capture all of LOCC. We can conclude this section with the following corollary, which shows how the adjacency structure of a grid-labelled graph influences physical properties of the corresponding quantum state. In a way, this will be the spirit of the present paper.
\begin{corollary}
	\label{corollary:horizontalandverticaledgesonly} If a grid-labelled graph $%
	G_{l}^{a,b}$ has only horizontal and vertical edges then $G_{l}^{a,b}\in 
	\mathcal{S}$.
\end{corollary}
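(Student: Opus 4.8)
The plan is to derive this as an immediate consequence of Theorem \ref{theorem:loccgraphs} together with the basic principle that local operations and classical communication cannot produce entanglement from a separable state. First I would observe that a grid-labelled graph whose edges are all horizontal or vertical has, by definition, no diagonal edges, so the hypothesis of Theorem \ref{theorem:loccgraphs} is satisfied. That theorem then guarantees that $\rho(G_l^{a,b})$ can be obtained from the product pure state $\rho_0 = |1\rangle_A|1\rangle_B\langle 1|_A\langle 1|_B$ by applying local unitaries and $\log m$ bits of classical communication. Since $\rho_0$ is manifestly separable (it is a single product state) and LOCC maps $\mathcal{S}$ into itself, we conclude $\rho(G_l^{a,b}) \in \mathcal{S}$, i.e.\ $G_l^{a,b}\in\mathcal{S}$.

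To make the argument self-contained rather than relying on the abstract ``LOCC preserves separability'' statement, I would read off separability directly from the explicit form of the output state produced in the proof of Theorem \ref{theorem:loccgraphs}. There the final state is written as the uniform mixture
\begin{equation*}
\rho(G_l^{a,b}) = \frac{1}{m}\sum_{\{(i,j),(k,l)\}\in E(G_l^{a,b})} U_{(i,j),(k,l)}\,\rho_0\,U_{(i,j),(k,l)}^{\dagger}.
\end{equation*}
By Lemma \ref{lemma:itsunitary} each $U_{(i,j),(k,l)}$ associated with a horizontal or vertical edge is a local unitary $U_A\otimes U_B$, so each summand $U_{(i,j),(k,l)}\,\rho_0\,U_{(i,j),(k,l)}^{\dagger}$ has the factored form $(U_A|1\rangle\langle 1|U_A^{\dagger})\otimes(U_B|1\rangle\langle 1|U_B^{\dagger})$, a product of density matrices. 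A uniform convex combination of product density matrices matches the defining form of Equation (\ref{eq:separable}) with $p_i = 1/m$, and is therefore separable.

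The only point requiring care --- and the step I expect to be the main (though minor) obstacle --- is confirming that the edge unitaries for horizontal and vertical edges really are local, so that each term in the mixture factorises. This is precisely the content of Lemma \ref{lemma:itsunitary}, whose two cases ($i=k,\,j\neq l$ and $i\neq k,\,j=l$) are exactly the horizontal and vertical edges, with the diagonal case deliberately excluded. Once this factorisation is in hand the separable decomposition is immediate, and no further estimates or case analysis are needed.
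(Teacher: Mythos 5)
Your proposal is correct and follows essentially the same route as the paper: the paper's proof of Corollary \ref{corollary:horizontalandverticaledgesonly} likewise just invokes Theorem \ref{theorem:loccgraphs} together with the fact that LOCC cannot create entanglement from the separable state $\rho_0$. Your second paragraph, which reads off the separable decomposition $\rho(G_l^{a,b})=\frac{1}{m}\sum (U_A|1\rangle\langle 1|U_A^{\dagger})\otimes(U_B|1\rangle\langle 1|U_B^{\dagger})$ directly from the locality of the edge unitaries (Lemma \ref{lemma:itsunitary}), is a welcome strengthening that makes explicit the matching with Equation (\ref{eq:separable}) which the paper's terse proof leaves implicit.
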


\begin{proof}
	This follows from Theorem \ref{theorem:loccgraphs} and from Proposition \ref%
	{proposition:localisomorphism}. If LOCC is sufficient to obtain $\rho(G_l^{a,b})$ from the separable state $\rho_0$, provided the edges of $G_{l}^{a,b}$ are
	either horizontal or vertical, then for any such grid-labelled graph, $G_{l}^{a,b}\in 
	\mathcal{S}$.
\end{proof}

In the next section, we discuss entanglement criteria for grid-labelled graphs.
\section{The Degree Criterion}
\label{section:thedegreecriterion}
\subsection{The Peres-Horodecki and degree criteria}
A well known necessary condition for separability of a density matrix is based on the \emph{partial transpose}.

\begin{definition}[Partial transpose \cite{Horodecki2009}]
Let $\rho_{AB}\in\mathcal{L}(\mathcal{H}_{AB})$
be a bipartite density operator, and let $|i,j\rangle\subset\mathcal{H}_{AB}$ denote a fixed product basis of $\mathcal{H}_{AB}$. Then the \emph{partial transpose} with respect to subsystem $A$ of $\rho_{AB}$ is the density operator $\rho_{AB}^{\Gamma_A}$ defined such that
\begin{align*}
	\langle i,j|\rho_{AB}^{\Gamma_A}|k,l\rangle:=\langle k,j|\rho_{AB}|i,l\rangle.
\end{align*}
Respectively, the partial transpose with respect to subsystem $B$ of $\rho_{AB}$ is defined
\begin{align*}
\langle i,j|\rho_{AB}^{\Gamma_B}|k,l\rangle:=\langle i,l|\rho_{AB}|k,j\rangle.
\end{align*}
\end{definition}

Armed with this definition, we can state the Peres-Horodecki criterion.
\begin{theorem}[Peres-Horodecki Criterion \cite{Peres,HorodeckiSep}]
\label{theorem:pereshorodecki}
	Let $\rho_{AB}$ be a bipartite density operator. If $\rho_{AB}$ is separable, then $\rho_{AB}^{\Gamma_A}$ (resp. $\rho_{AB}^{\Gamma_B}$) is positive.
\end{theorem}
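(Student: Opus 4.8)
The plan is to exploit the separable decomposition directly, together with the linearity of the partial transpose and the invariance of the positive semi-definite cone under the operations involved. First I would assume $\rho_{AB}$ is separable, so that by Equation (\ref{eq:separable}) we may write $\rho_{AB} = \sum_i p_i\, \rho_i \otimes \sigma_i$ with each $\rho_i$ and $\sigma_i$ a density operator and $p_i \ge 0$, $\sum_i p_i = 1$. Since $\Gamma_A$ is a linear map on operators, applying it to the sum term-by-term should give $\rho_{AB}^{\Gamma_A} = \sum_i p_i\, \rho_i^{T} \otimes \sigma_i$. To justify this I would check the defining identity on a single product term $\rho_i \otimes \sigma_i$: the matrix element $\langle i,j|\rho_i\otimes\sigma_i|k,l\rangle$ factorises as $\langle i|\rho_i|k\rangle\langle j|\sigma_i|l\rangle$, and the index swap $\langle k,j|\cdot|i,l\rangle$ prescribed by the definition of $\Gamma_A$ swaps precisely the $A$-indices, which is exactly ordinary transposition of the first factor while leaving $\sigma_i$ untouched.

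The heart of the argument is then to show that each summand $\rho_i^{T}\otimes\sigma_i$ is positive. I would invoke two standard facts. First, the transpose of a positive semi-definite matrix is again positive semi-definite: for a Hermitian $M \ge 0$ one has $M^{T} = \overline{M}$, and complex conjugation carries an eigenvector of $M$ with (real, nonnegative) eigenvalue to an eigenvector of $\overline{M}$ with the same eigenvalue, so the spectrum is preserved; hence $\rho_i^{T}$ is itself a density matrix. Second, the tensor product of two positive semi-definite matrices is positive semi-definite, since the eigenvalues of $\rho_i^{T}\otimes\sigma_i$ are the pairwise products of the nonnegative eigenvalues of the factors. Combining these, each $\rho_i^{T}\otimes\sigma_i \ge 0$, and $\rho_{AB}^{\Gamma_A}$ is a convex combination of these operators with weights $p_i \ge 0$; as the positive semi-definite cone is closed under nonnegative linear combinations, it follows that $\rho_{AB}^{\Gamma_A} \ge 0$. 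The claim for $\rho_{AB}^{\Gamma_B}$ is obtained identically, transposing the $B$-factor instead.

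I do not expect a genuine obstacle here — the proof is essentially bookkeeping within the positive semi-definite cone. The one step that deserves care is confirming, at the level of matrix entries, that $\Gamma_A$ commutes with the sum and acts as plain transposition on the $A$-factor of a product operator while fixing the $B$-factor; once that is in place, everything reduces to the stability of positivity under transposition, tensor products, and nonnegative combinations. It is worth emphasising that the converse fails in general, so this gives only a necessary condition for separability, which is exactly what the statement asserts.
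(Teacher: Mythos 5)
Your proof is correct: the term-by-term computation showing $(\rho_i\otimes\sigma_i)^{\Gamma_A}=\rho_i^{T}\otimes\sigma_i$, the preservation of the spectrum under transposition of a Hermitian matrix, and the closure of the positive semi-definite cone under tensor products and nonnegative combinations together give exactly the standard argument of Peres. The paper itself states this theorem without proof, citing \cite{Peres,HorodeckiSep}; your write-up is the canonical proof from those references, so there is nothing to reconcile.
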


Indeed, Horodecki \emph{et al.} \cite{HorodeckiSep} subsequently prove that this criterion is necessary and sufficient for separability for density operators acting on bipartite Hilbert spaces of dimension $d\le 6$. 

\begin{theorem}[Horodecki \emph{et al.} \cite{HorodeckiSep}]
	Let $\rho_{AB}$ be a density operator acting on $\mathbb{C}^2\otimes \mathbb{C}^2$ or $\mathbb{C}^2\otimes \mathbb{C}^3$. Then $\rho_{AB}$ is separable if and only if $\rho_{AB}^{\Gamma_A}$ (resp. $\rho_{AB}^{\Gamma_B}$) is positive.
\end{theorem}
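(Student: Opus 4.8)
The plan is to prove the two directions separately. The forward direction (separability $\Rightarrow$ positivity of the partial transpose) is already the content of the Peres-Horodecki criterion, Theorem \ref{theorem:pereshorodecki}: writing $\rho_{AB}=\sum_i p_i\,\rho_i\otimes\sigma_i$, one checks that $\rho_{AB}^{\Gamma_B}=\sum_i p_i\,\rho_i\otimes\sigma_i^{T}$, and since each $\sigma_i^{T}$ is again a density matrix this is a convex combination of positive operators, hence positive. So the real work is the converse, valid only in the low-dimensional cases $\mathbb{C}^2\otimes\mathbb{C}^2$ and $\mathbb{C}^2\otimes\mathbb{C}^3$: positivity of the partial transpose must force separability.

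My approach to the converse is via the theory of positive maps. First I would establish the abstract characterisation of separability: a state $\rho_{AB}$ lies in $\mathcal{S}$ if and only if for every positive (not necessarily completely positive) linear map $\Lambda:\mathcal{L}(\mathcal{H}_B)\to\mathcal{L}(\mathcal{H}_A)$ one has $(\mathrm{id}_A\otimes\Lambda)(\rho_{AB})\ge 0$. The ``only if'' part is immediate from the definition of a positive map applied term-by-term to a separable decomposition. The ``if'' part is the substantial content: one shows that the cone of separable operators is closed and convex, and that its dual cone (under the Hilbert-Schmidt inner product) consists exactly of the block-positive operators, which by the Jamio{\l}kowski-Choi isomorphism correspond to the positive maps. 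A non-separable $\rho_{AB}$ can then be strictly separated from the separable cone by a Hahn-Banach hyperplane, yielding a block-positive \emph{entanglement witness} $W$ with $\mathrm{tr}(W\rho_{AB})<0$; translating $W$ back through the isomorphism gives a positive map $\Lambda$ for which $(\mathrm{id}_A\otimes\Lambda)(\rho_{AB})$ fails to be positive.

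The decisive ingredient, and the step I expect to be the main obstacle, is the structure theorem for positive maps in low dimensions: every positive map $\Lambda:\mathcal{L}(\mathbb{C}^2)\to\mathcal{L}(\mathbb{C}^2)$, and every positive map between $\mathcal{L}(\mathbb{C}^2)$ and $\mathcal{L}(\mathbb{C}^3)$, is \emph{decomposable}, i.e. can be written as $\Lambda=\Lambda_1+\Lambda_2\circ T$ with $\Lambda_1,\Lambda_2$ completely positive and $T$ the transposition map. This is the theorem of St{\o}rmer and Woronowicz, a genuinely nontrivial result in operator algebra, and it is precisely what fails in higher dimensions: there non-decomposable positive maps exist, and correspondingly so do bound entangled PPT states. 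I would treat it as a black box, but it is the heart of the argument.

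Finally I would assemble the converse. Suppose $\rho_{AB}$ acts on $\mathbb{C}^2\otimes\mathbb{C}^2$ or $\mathbb{C}^2\otimes\mathbb{C}^3$ and is PPT, so that $\rho_{AB}^{\Gamma_B}=(\mathrm{id}_A\otimes T)(\rho_{AB})\ge 0$. Let $\Lambda$ be any positive map; by decomposability write $\Lambda=\Lambda_1+\Lambda_2\circ T$. Then
\begin{align*}
(\mathrm{id}_A\otimes\Lambda)(\rho_{AB})=(\mathrm{id}_A\otimes\Lambda_1)(\rho_{AB})+(\mathrm{id}_A\otimes\Lambda_2)\big(\rho_{AB}^{\Gamma_B}\big).
\end{align*}
Both summands are positive: the first because $\Lambda_1$ is completely positive and $\rho_{AB}\ge 0$, the second because $\Lambda_2$ is completely positive and $\rho_{AB}^{\Gamma_B}\ge 0$ by hypothesis. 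Hence $(\mathrm{id}_A\otimes\Lambda)(\rho_{AB})\ge 0$ for every positive map $\Lambda$, and the positive-map characterisation established above yields $\rho_{AB}\in\mathcal{S}$. Combining the two directions completes the proof.
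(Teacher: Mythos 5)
The paper states this result without proof, importing it directly from \cite{HorodeckiSep}; your proposal reconstructs the original argument of that reference essentially verbatim. The outline is correct: the forward direction is the Peres criterion, and the converse correctly combines the positive-map (entanglement-witness) characterisation of separability with the St{\o}rmer--Woronowicz theorem that every positive map $\Lambda:\mathcal{L}(\mathbb{C}^n)\to\mathcal{L}(\mathbb{C}^2)$ with $n\in\{2,3\}$ is decomposable, so that the final assembly $(\mathrm{id}_A\otimes\Lambda)(\rho_{AB})=(\mathrm{id}_A\otimes\Lambda_1)(\rho_{AB})+(\mathrm{id}_A\otimes\Lambda_2)\bigl(\rho_{AB}^{\Gamma_B}\bigr)\ge 0$ is sound, modulo the two standard results you explicitly treat as black boxes.
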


In our context, the Peres-Horodecki criterion for testing separability of
density matrices translates into a combinatorial statement called the \emph{%
	degree criterion}, as it was proved in \cite{Braunstein2006, Wu2006}. Let us
begin by defining the notion of partial transpose for grid-labelled graphs:

\begin{definition}[Partial transpose of a graph]
	Given a grid-labelled graph $G_{l}^{a,b}$, its \emph{partial transpose} is
	the grid-labelled graph $\Gamma (G_{l}^{a,b})=(G,l^{\Gamma };a,b)$, where for
	every edge $\{u,v\}\in E(G)$, for $l(u)=(i,j)$ and $l(v)=(k,l)$, the
	labelling $l^{\Gamma }$ is such that $l^{\Gamma }(u)=(k,j)$ and $l^{\Gamma
	}(v)=(i,l)$.
\end{definition}

\begin{figure}[h!]
	\centering
	\includegraphics[scale=0.3]{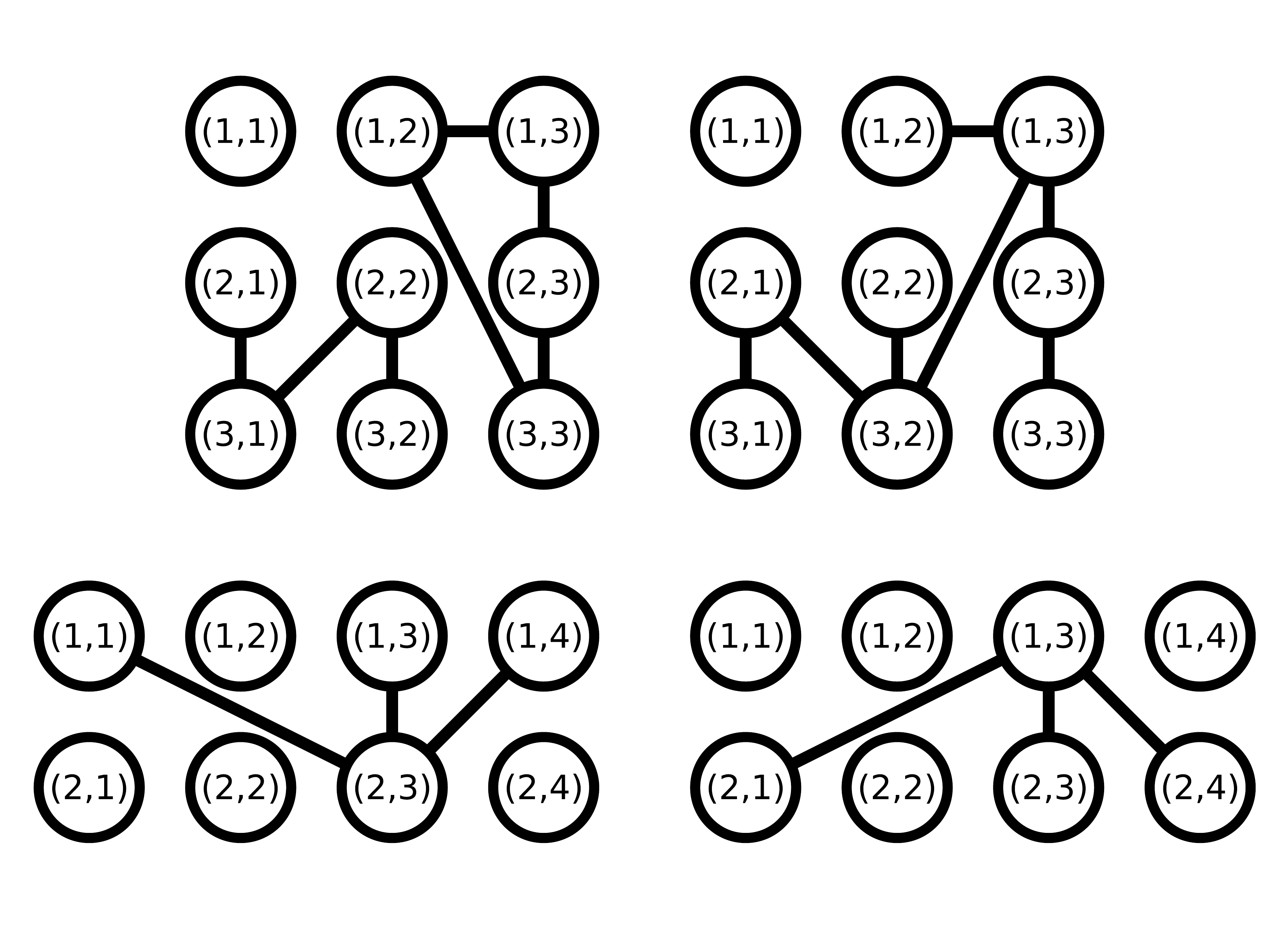}
	\caption{Two grid-labelled graphs side by side with their partial transpositions.}
	\label{fig:partialtransposeexamples}
\end{figure}

\begin{example}
	\emph{In Figure \ref{fig:partialtransposeexamples}, we show two grid-labelled
		graphs alongside their partial transpositions. Note that the partial transpose
		operation does not affect horizontal and vertical edges, but transposes each
		diagonal edge.}
\end{example}
Let us now state the degree criterion. The proof of the following theorem follows the proof of Theorem 2 in \cite{Braunstein2006}.
\begin{theorem}[Degree criterion]
	\label{theorem:degreecriterion}If $G_{l}^{a,b}\in \mathcal{S}$ then 
	\begin{equation*}
	D (G_{l}^{a,b})=D (\Gamma (G_{l}^{a,b})).
	\end{equation*}
\end{theorem}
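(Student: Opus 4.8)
The plan is to combine the Peres--Horodecki criterion (Theorem \ref{theorem:pereshorodecki}) with an explicit structural description of how the partial transpose acts on the combinatorial Laplacian. Since $G_{l}^{a,b}\in\mathcal{S}$ means that $\rho(G_{l}^{a,b})=L(G_{l}^{a,b})/2m$ is separable, where $m=|E(G_{l}^{a,b})|$, Theorem \ref{theorem:pereshorodecki} guarantees that the partially transposed matrix $\rho(G_{l}^{a,b})^{\Gamma_{A}}$ is positive, and hence so is $L(G_{l}^{a,b})^{\Gamma_{A}}$. I would then show that this positive semi-definite matrix has the all-ones vector $\mathbf{1}$ in its kernel, and read off the desired degree identity from the resulting row-sum equations.

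The key computation is to pin down $L(G_{l}^{a,b})^{\Gamma_{A}}$ explicitly. Because the partial transpose is linear, I would decompose the Laplacian into its edge contributions and transpose each one separately. A single edge $\{(i,j),(k,l)\}$ contributes the rank-one block $(|i,j\rangle-|k,l\rangle)(\langle i,j|-\langle k,l|)$, whose only nonzero entries are the diagonal entries $+1$ at $(i,j)$ and $(k,l)$ together with the off-diagonal entries $-1$ on the pair $\{(i,j),(k,l)\}$. Applying the rule $\langle a,b|M^{\Gamma_{A}}|c,d\rangle=\langle c,b|M|a,d\rangle$ entry by entry, the two diagonal entries are left fixed, whereas the two off-diagonal $-1$'s are relocated onto the pair $\{(k,j),(i,l)\}$---which is precisely the edge of the graph partial transpose $\Gamma(G_{l}^{a,b})$. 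Summing over all edges, I expect to obtain
\begin{equation*}
L(G_{l}^{a,b})^{\Gamma_{A}}=D(G_{l}^{a,b})-A(\Gamma(G_{l}^{a,b})).
\end{equation*}
Thus the diagonal (degree) part is inherited from the \emph{original} graph, while the off-diagonal (adjacency) part is that of its partial transpose, and in general this matrix is not itself a Laplacian.

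With this identity in hand I would finish by a kernel argument. Both $D(G_{l}^{a,b})$ and $A(\Gamma(G_{l}^{a,b}))$ have total entry sum $2m$, since the sum of degrees is twice the number of edges and the partial transpose preserves the edge count; hence $\mathbf{1}^{T}L(G_{l}^{a,b})^{\Gamma_{A}}\mathbf{1}=2m-2m=0$. As $L(G_{l}^{a,b})^{\Gamma_{A}}$ is real symmetric and positive semi-definite, a vanishing quadratic form forces $\mathbf{1}$ into its kernel, so $L(G_{l}^{a,b})^{\Gamma_{A}}\mathbf{1}=0$. Reading off the $(i,j)$-th coordinate of this identity yields $d_{G_{l}^{a,b}}(i,j)-d_{\Gamma(G_{l}^{a,b})}(i,j)=0$ at every grid point, which is exactly the claimed equality $D(G_{l}^{a,b})=D(\Gamma(G_{l}^{a,b}))$.

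The main obstacle is the structural computation in the second step: one must verify carefully that $\Gamma_{A}$ pins each diagonal contribution in place while sending each off-diagonal $-1$ to the partial-transposed edge position, so that the diagonal of $L(G_{l}^{a,b})^{\Gamma_{A}}$ really is $D(G_{l}^{a,b})$ and not $D(\Gamma(G_{l}^{a,b}))$. This mismatch between where the degrees sit and where the edges now point is the entire content of the criterion; once it is established, the positivity argument via the all-ones vector is routine.
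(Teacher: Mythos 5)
Your proposal is correct, and it shares the paper's essential first step --- the structural identity $L(G_{l}^{a,b})^{\Gamma_{A}}=D(G_{l}^{a,b})-A(\Gamma(G_{l}^{a,b}))$, which you justify in more detail than the paper does (the paper simply asserts it as ``clearly'') --- but your finishing argument is genuinely different and cleaner. The paper rewrites the partially transposed Laplacian as $L(\Gamma(G_{l}^{a,b}))+\Delta$ with $\Delta=D(G_{l}^{a,b})-D(\Gamma(G_{l}^{a,b}))$, observes that $\Delta$ is traceless and diagonal, and then, assuming $\Delta\neq 0$, hand-builds a witness vector $|\chi\rangle=|\psi_{0}\rangle+k|i\rangle$ (all-ones plus a small multiple of a standard basis vector aimed at a negative entry of $\Delta$) whose quadratic form evaluates to $k^{2}(d(i)+b_{i})+2b_{i}k$, which is negative for small $k>0$; this contradicts Peres--Horodecki. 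You instead note that $\mathbf{1}^{T}L(G_{l}^{a,b})^{\Gamma_{A}}\mathbf{1}=2m-2m=0$ (using that the partial transpose is an involution on vertex pairs, hence preserves the edge count), invoke the standard fact that a positive semi-definite matrix with vanishing quadratic form at a vector annihilates that vector, and read the degree identity off the row sums. This replaces the paper's explicit perturbation and sign analysis with a one-line kernel argument, at the cost of needing the (easy) observation that $|E(\Gamma(G_{l}^{a,b}))|=|E(G_{l}^{a,b})|$; both routes are sound, and yours arguably isolates the combinatorial content (row sums of $D(G)-A(\Gamma(G))$ are exactly the degree differences) more transparently.
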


\begin{proof}
	Let $G_{l}^{a,b}$ be a grid-labelled graph. Let $%
	A(G_{l}^{a,b})$ be the adjacency matrix of $G_{l}^{a,b}$, and let $%
	L(G_{l}^{a,b})$ be the Laplacian matrix of $G_{l}^{a,b}$. For a $a\cdot b\times a\cdot b$ matrix $M$, let the matrix $\Gamma (M):=M^{\Gamma_A}$ be its partial
	transpose. Without loss of generality, we consider only partial transpositions with respect to the $A$ subsystem, and do not consider normalisation factors. Clearly, 
	\begin{align*}
	\Gamma (L(G_{l}^{a,b}))& =D (G_{l}^{a,b})-A(\Gamma (G_{l}^{a,b})) \\
	& =D(G_{l}^{a,b})-D(\Gamma (G_{l}^{a,b}))+L(\Gamma
	(G_{l}^{a,b})).
	\end{align*}%
	Thus, 
	\begin{equation*}
	\Gamma (L(G_{l}^{a,b}))=L(\Gamma (G_{l}^{a,b}))+\Delta,
	\end{equation*}%
	where%
	\begin{equation*}
	\Delta=D (G_{l}^{a,b})-D (\Gamma (G_{l}^{a,b})).
	\end{equation*}%
	It is clear
	that $\Delta$ is an $a\cdot b\times a\cdot b$ real diagonal matrix with trace tr$(\Delta)=0$, since
	\begin{equation*}
	\text{tr}(D)=\text{tr}(D (G_{l}^{a,b}))-\text{tr}(D (\Gamma
	(G_{l}^{a,b})))=0.
	\end{equation*}%
	If $\Delta$ is not equal to the zero matrix, then it follows that there exists one or more negative entries
	on its diagonal. Let $\Delta_{i,i}=b_{i}$ be one such non-zero entry. Let $|{\psi
		_{0}}\rangle =\sum_{j=1}^{a\cdot b}|j\rangle $ be the all-ones vector, and $|{%
		\phi }\rangle := k|i\rangle$ for $k\in \mathbb{R}$.
	
	Then, for $|\chi \rangle =|\psi _{0}\rangle +|\phi \rangle $, 
	\begin{align*}
	\langle \chi |(L(G_{l}^{a,b})+\Delta)|\chi \rangle & =\langle \chi
	|L(G_{l}^{a,b})|\chi \rangle +\langle \chi |\Delta|\chi \rangle \\
	& =\langle \psi _{0}|L(G_{l}^{a,b})|\psi _{0}\rangle +\langle \psi
	_{0}|L(G_{l}^{a,b})|\phi \rangle +\langle \phi |L(G_{l}^{a,b})|\psi
	_{0}\rangle +\langle \phi |L(G_{l}^{a,b})|\phi \rangle \\
	& +\langle \psi _{0}|\Delta|\psi _{0}\rangle +\langle \psi _{0}|\Delta|\phi \rangle
	+\langle \phi |\Delta|\psi _{0}\rangle +\langle \phi |\Delta|\phi \rangle .
	\end{align*}%
	From the fact that $|\psi _{0}\rangle $ is a right eigenvector of $L(G_{l}^{a,b})$
	with eigenvalue $0$, and that $\langle \psi _{0}|\Delta|\psi _{0}\rangle =\text{tr%
	}(\Delta)=0,$ it follows that 
	\begin{align*}
	\langle \chi |(L(G_{l}^{a,b})+\Delta)|\chi \rangle & =\langle \psi
	_{0}|L(G_{l}^{a,b})|\phi \rangle +\langle \phi |L(G_{l}^{a,b})|\phi \rangle
	\\
	& +\langle \psi _{0}|\Delta|\phi \rangle +\langle \phi |\Delta|\psi _{0}\rangle
	+\langle \phi |\Delta|\phi \rangle .
	\end{align*}%
	Finally, we have 
	\begin{align*}
	\langle \psi _{0}|L(G_{l}^{a,b})|\phi \rangle &=\langle \phi
	|L(G_{l}^{a,b})^{T}|\psi _{0}\rangle =\langle \phi |L(G_{l}^{a,b})|\psi
	_{0}\rangle =0;\\
	\langle \phi |L(G_{l}^{a,b})|\phi \rangle &=k^{2}[L(G_{l}^{a,b})]_{i,i}=k^2 d(i),
	\end{align*}
	where $d(i)$ is the degree of vertex $i$,
	\begin{align*}
	\langle \phi |\Delta|\phi \rangle &=b_{i}k^{2},
	\end{align*}
	and
	\begin{align*}
	\langle \psi _{0}|\Delta|\phi \rangle &=\langle \phi |\Delta|\psi _{0}\rangle =b_{i}k.
	\end{align*}%
	Hence,
	\begin{align}
	\label{eq:negative}
	\langle \chi |(L(G_{l}^{a,b})+\Delta)|\chi \rangle =k^{2}(d(i)+b_{i})+2b_{i}k.
	\end{align}%
	Since $b_i<0$ by definition, a positive $k$ can be chosen small enough to make Equation \ref{eq:negative} negative. We
	therefore have that 
	\begin{equation*}
	L(G_l^{a,b})+\Delta\not\geq 0.
	\end{equation*}%
	If $G_{l}^{a,b}\in \mathcal{S}$ then by Theorem \ref{theorem:pereshorodecki}, $\Gamma (L(G_{l}^{a,b}))\geq 0$.
	Since $\Gamma (L(G_{l}^{a,b}))=L(\Gamma (G_{l}^{a,b}))+\Delta$, then this can
	only be true if $\Delta=0$, which means that $D (G_{l}^{a,b})=D (\Gamma
	(G_{l}^{a,b}))$.
\end{proof}

The degree criterion turns out to be necessary and sufficient in the
following case:

\begin{theorem}
	\label{theorem:degrees}For a grid-labelled graph $G_{l}^{2,b}$ with $b\geq 2$, 
	\begin{equation*}
	\Delta (G_{l}^{a,b})=\Delta (\Gamma (G_{l}^{a,b}))
	\end{equation*}%
	if and only if $G_{l}^{a,b}\in \mathcal{S}$.
\end{theorem}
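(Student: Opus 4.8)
The forward direction is already established: by Theorem~\ref{theorem:degreecriterion}, $G_l^{2,b}\in\mathcal{S}$ implies $D(G_l^{2,b})=D(\Gamma(G_l^{2,b}))$. The entire content is therefore the converse, so the plan is to assume the degree criterion $D(G_l^{2,b})=D(\Gamma(G_l^{2,b}))$ and build an explicit separable decomposition of $\rho(G_l^{2,b})$. First I would use that the Laplacian is additive over edges, $L(G_l^{2,b})=\sum_e L_e$ with $L_e=(|u\rangle-|v\rangle)(\langle u|-\langle v|)$, and split the edge set into horizontal, vertical and diagonal edges. A single horizontal edge $\{(1,j),(1,l)\}$ contributes $|1\rangle\langle1|\otimes(|j\rangle-|l\rangle)(\langle j|-\langle l|)$ and a single vertical edge $\{(1,j),(2,j)\}$ contributes $(|1\rangle-|2\rangle)(\langle1|-\langle2|)\otimes|j\rangle\langle j|$; both are product operators and hence separable (this is the mechanism behind Corollary~\ref{corollary:horizontalandverticaledgesonly}). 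Since the separable operators form a convex cone, it suffices to prove that the sum of the diagonal-edge contributions is separable.

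Working in $2\times 2$ block form over the two-dimensional row system $\mathcal{H}_A\cong\mathbb{C}^2$, I would encode the diagonal edges by the $b\times b$, zero-diagonal, $0$--$1$ matrix $M$ with $M_{pq}=1$ exactly when $\{(1,p),(2,q)\}$ is an edge. The diagonal-edge part of $L$ is then $\begin{pmatrix} D^{\mathrm{out}} & -M \\ -M^{T} & D^{\mathrm{in}}\end{pmatrix}$, where $D^{\mathrm{out}}$ and $D^{\mathrm{in}}$ are the diagonal matrices of the row and column sums of $M$. The crucial reformulation is that $\Gamma$ fixes the horizontal and vertical edges and replaces $M$ by $M^{T}$, so the degree criterion $D(G_l^{2,b})=D(\Gamma(G_l^{2,b}))$ is precisely the balance condition $M\mathbf{1}=M^{T}\mathbf{1}$: the digraph whose arc set is $M$ has in-degree equal to out-degree at every vertex, and $D^{\mathrm{out}}=D^{\mathrm{in}}=:D^{\ast}$.

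A balanced digraph admits a decomposition into edge-disjoint simple directed cycles, $M=\sum_c P_c$, and since each vertex of out-degree $d$ lies on exactly $d$ of these cycles I also obtain $D^{\ast}=\sum_c D_c$, where $P_c$ is the cyclic permutation on the vertices of cycle $c$ and $D_c$ is the diagonal indicator of that vertex set. This splits the diagonal-edge operator as $\sum_c \begin{pmatrix} D_c & -P_c \\ -P_c^{T} & D_c\end{pmatrix}$, reducing the problem to a single cycle. Restricted to the $r$ vertices of one cycle, the summand is $\begin{pmatrix} I_r & -C \\ -C^{T} & I_r\end{pmatrix}$ with $C$ the $r\times r$ cyclic shift. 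Applying the discrete Fourier transform as the local unitary $I_2\otimes F$ on the column system --- which preserves separability by Proposition~\ref{proposition:localiso} --- diagonalizes $C$ and block-diagonalizes the operator in the Fourier basis $\{|f_j\rangle\}$ of $\mathcal{H}_B$ into $\sum_{j=0}^{r-1}\begin{pmatrix} 1 & -\omega^{j} \\ -\omega^{-j} & 1\end{pmatrix}\otimes|f_j\rangle\langle f_j|$, where $\omega=e^{2\pi i/r}$. Each $2\times 2$ factor has eigenvalues $0$ and $2$ and is thus positive semidefinite, so every summand is a product of positive operators. Hence each cycle is separable; summing over cycles and restoring the horizontal and vertical contributions yields a separable decomposition of $\rho(G_l^{2,b})$.

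The hard part, and the precise point at which the hypothesis $a=2$ is used, is the final positivity check: the $2\times2$ Fourier blocks $\begin{pmatrix} 1 & -\omega^{j} \\ -\omega^{-j} & 1\end{pmatrix}$ are positive semidefinite exactly because $|\omega^{j}|=1$ (as $C$ is a permutation) and the block acts on a two-dimensional space. With three or more rows the analogous blocks are larger and need no longer be positive, which is consistent with the bound-entangled $3\times3$ examples discussed later. A secondary technical point deserving care is matching the cycle decomposition of the off-diagonal matrix $M$ with the diagonal degree matrix $D^{\ast}$, that is, verifying $D^{\ast}=\sum_c D_c$; this is what guarantees that each single-cycle summand is itself a valid positive operator rather than merely a formal piece of the decomposition.
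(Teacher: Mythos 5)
Your argument is correct, but it takes a genuinely different route from the paper. The paper also reduces to the diagonal part of the HVD decomposition (Lemma \ref{lemma:degreecriterioniff}) and observes that the degree criterion forces the two diagonal blocks $\Delta^{(1)},\Delta^{(2)}$ of $\rho(D_l^{2,b})$ to coincide, but it then finishes in one line by invoking Ando's Theorem 4.9 (Theorem \ref{theorem:Ando}): every positive semidefinite matrix of the form $\left(\begin{smallmatrix}C & A\\ A^{\dagger} & C\end{smallmatrix}\right)$ is separable in $\mathbb{C}^2\otimes\mathbb{C}^b$. You instead prove the special case you need from scratch: the balance condition $M\mathbf{1}=M^{T}\mathbf{1}$ is exactly the statement that the digraph of diagonal edges is Eulerian-balanced, so it splits into arc-disjoint simple directed cycles, and each cycle's block $\left(\begin{smallmatrix}I_r & -C\\ -C^{T} & I_r\end{smallmatrix}\right)$ is diagonalized by a local Fourier unitary into manifestly separable rank-considerations. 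All the steps check out: the identification of the degree criterion with line-sum symmetry of $M$ is right, the bookkeeping $D^{\ast}=\sum_c D_c$ holds because the cycles are simple and arc-disjoint, and the $2\times2$ Fourier blocks are PSD since $|\omega^j|=1$. What the paper's route buys is brevity and generality (Ando's theorem handles an arbitrary off-diagonal block $A$, not just permutations arising from 0--1 matrices); what yours buys is a self-contained, constructive separable decomposition that makes explicit where $a=2$ enters, recovers the counterpart-pair decomposition of pair-symmetric graphs as the length-two cycles, and is essentially the line-sum-symmetric argument of Wu \cite{Wu2006} that the paper only alludes to after Theorem \ref{theorem:stratified}. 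One presentational nit: you should state explicitly that separability of the restriction to the $2r$-dimensional support subspace implies separability of the full (zero-padded) summand, though this is immediate since the Fourier vectors $|f_j\rangle$ live inside the corresponding $r$-dimensional subspace of $\mathcal{H}_B$.
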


To prove this we require some technical lemmas. Recall that a \emph{decomposition} of a
graph $G$ is a set of subgraphs $\{H_{1},H_{2},...,H_{k}\}$ that partition the
edges of $G$: $\bigcup_{i=1}^{k}H_{i}=G$ and for all $i\neq j$, $E(H_{i})\cap
E(H_{j})=\emptyset $. Notice that isolated vertices do not contribute to a
decomposition and so each $H_{i}$ can always be seen as a spanning subgraph
(this is a subgraph that contains all the vertices). 
It will be useful to redefine some of these graph theoretic
concepts for grid-labelled graphs.
\begin{definition}
	A grid-labelled graph $H_{l}^{a,b}$ is a \emph{subgraph} of a grid-labelled
	graph $G_{l}^{a,b}$, denoted $H_{l}^{a,b}\subseteq G_{l}^{a,b}$, if every
	edge of $H_{l}^{a,b}$ is an edge of $G_{l}^{a,b}$.
\end{definition}

\begin{definition}
	Two grid-labelled graphs $G_{l}^{a,b}$ and $H_{l}^{a,b}$ are \emph{edge
		disjoint} if they have no edges in common, \begin{align*}E(G_{l}^{a,b})\cap E(H_{l}^{a,b})=\emptyset.\end{align*}
\end{definition}

\begin{definition}
	The union of two edge disjoint grid-labelled graphs, $G_{l}^{a,b}$ and $%
	H_{l}^{a,b}$, is the grid-labelled
	graph $G_{l}^{a,b}\cup H_{l}^{a,b}$ with edge set $E(G_{l}^{a,b})\cup E(H_{l}^{a,b})$.
\end{definition}

\begin{definition}
	A set of edge disjoint subgraphs of $G_{l}^{a,b}$, $X=\{(H_{1})_{l}^{a,b},%
	\dots ,(H_{n})_{l}^{a,b}\}$, is called a \emph{decomposition} of $%
	G_{l}^{a,b} $ if 
	$
	\bigcup_{i=1}^{n}(H_{i})_{l}^{a,b}=G_{l}^{a,b}.
	$
\end{definition}

Lemma \ref{lemma:edgedecomposition} is easy to prove, but valuable:

\begin{lemma}[HVD decomposition]
	\label{lemma:edgedecomposition}Given a grid-labelled graph $G_{l}^{a,b}$ with 
	$m$ edges, its density matrix can be written in the form 
	\begin{equation*}
	\rho (G_{l}^{a,b})=\frac{\left\vert E(H_{l}^{a,b})\right\vert }{m}\rho
	(H_{l}^{a,b})+\frac{\left\vert E(V_{l}^{a,b})\right\vert }{m}\rho
	(V_{l}^{a,b})+\frac{\left\vert E(D_{l}^{a,b})\right\vert }{m}\rho
	(D_{l}^{a,b}),
	\end{equation*}%
	where the graphs $H_{l}^{a,b},V_{l}^{a,b},D_{l}^{a,b}$ contain all horizontal, vertical, and diagonal edges of $G_l^{a,b}$ respectively. These graphs form a unique
	decomposition of $G_{l}^{a,b}$, which we will call the \emph{HVD decomposition}.
\end{lemma}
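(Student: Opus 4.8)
The plan is to prove the statement in two parts: first that the three subgraphs form a unique decomposition of $G_{l}^{a,b}$, and then that the density matrix splits as the stated convex combination.

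For the decomposition, I would start from the observation that every edge $\{(i,j),(k,l)\}\in E(G_{l}^{a,b})$ joins two distinct labelled vertices, so $(i,j)\neq (k,l)$; this rules out having $i=k$ and $j=l$ simultaneously. Hence exactly one of the three mutually exclusive conditions holds: $i=k$ and $j\neq l$ (horizontal), $i\neq k$ and $j=l$ (vertical), or $i\neq k$ and $j\neq l$ (diagonal). The horizontal, vertical, and diagonal edges therefore partition $E(G_{l}^{a,b})$, so $H_{l}^{a,b}$, $V_{l}^{a,b}$, $D_{l}^{a,b}$ are pairwise edge disjoint and their union is $G_{l}^{a,b}$; they form a decomposition in the sense of the preceding definition. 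Uniqueness is immediate, since the class of each edge is dictated solely by its endpoint labels.

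For the density matrix formula, I would invoke the edge-state form of Equation (\ref{eq:basisform}) and write, with $m=|E(G_{l}^{a,b})|$,
\begin{equation*}
\rho(G_{l}^{a,b})=\frac{1}{m}\sum_{\{(i,j),(k,l)\}\in E(G_{l}^{a,b})}|i,j;k,l\rangle\langle i,j;k,l|.
\end{equation*}
Using the partition just established, I would break this single sum into three partial sums ranging over the edges of $H_{l}^{a,b}$, $V_{l}^{a,b}$, and $D_{l}^{a,b}$ respectively. For each class, say the horizontal one, the partial sum equals $|E(H_{l}^{a,b})|\,\rho(H_{l}^{a,b})$ by definition of the density matrix of a grid-labelled graph. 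Pulling the factors $|E(H_{l}^{a,b})|/m$, $|E(V_{l}^{a,b})|/m$, and $|E(D_{l}^{a,b})|/m$ out of the respective partial sums then yields exactly the claimed expression.

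Since the argument is essentially bookkeeping on a partition of the edge set, I do not anticipate any genuine obstacle. The only point needing a word of care is the degenerate case in which one of the classes is empty (for instance a graph with no diagonal edges): there the corresponding $\rho(\cdot)$ is undefined, but its coefficient $|E(\cdot)|/m$ vanishes, so the term drops out harmlessly and the identity still holds.
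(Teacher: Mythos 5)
Your proposal is correct and follows essentially the same route as the paper: the paper writes $\rho(G_l^{a,b})=\tfrac{1}{2m}L(G_l^{a,b})$ and uses $L(G_l^{a,b})=L(H_l^{a,b})+L(V_l^{a,b})+L(D_l^{a,b})$ together with $L(X)=2|E(X)|\rho(X)$, which is exactly your splitting of the edge-state sum over the horizontal/vertical/diagonal partition. Your added remarks on the exclusivity of the three edge classes and on the empty-class degenerate case are sensible bookkeeping the paper leaves implicit.
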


\begin{proof}
	For any grid-labelled graph $G_{l}^{a,b}$ with $m$ edges, 
	\begin{align*}
	\rho (G_{l}^{a,b})& =\frac{1}{2m}L(G_{l}^{a,b}) \\
	& =\frac{1}{2m}(L(H_{l}^{a,b})+L(V_{l}^{a,b})+L(D_{l}^{a,b})) \\
	& =\frac{1}{2m}(2\left\vert E(H_{l}^{a,b})\right\vert \rho
	(H_{l}^{a,b})+2\left\vert E(V_{l}^{a,b})\right\vert \rho
	(V_{l}^{a,b})+2\left\vert E(D_{l}^{a,b})\right\vert \rho (D_{l}^{a,b}))\\
	&=\frac{|E(H_l^{a,b}|}{m}\rho(H_l^{a,b})+\frac{|E(V_l^{a,b}|}{m}\rho(V_l^{a,b})+\frac{|E(D_l^{a,b}|}{m}\rho(D_l^{a,b}).
	\end{align*}%
\end{proof}

\begin{lemma}
	\label{lemma:degreecriterioniff} Let $G_l^{a,b}$ be a grid-labelled graph.
	Let $H_l^{a,b}, V_l^{a,b}$ and $D_l^{a,b}$ be the components of the
	HVD-decomposition of $G_l^{a,b}$. Then, $G_l^{a,b}$ satisfies the degree
	criterion if and only if $D_l^{a,b}$ satisfies the degree criterion.
\end{lemma}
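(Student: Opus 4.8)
The plan is to read off both sides of the degree criterion directly from the HVD decomposition. Recall that, by Theorem \ref{theorem:degreecriterion}, a grid-labelled graph $X_l^{a,b}$ is said to satisfy the degree criterion when $D(X_l^{a,b}) = D(\Gamma(X_l^{a,b}))$. First I would record that, since the diagonal of the degree matrix counts the number of edges incident to each labelled vertex, $D$ is additive over edge-disjoint unions of grid-labelled graphs. Applying this to the HVD decomposition of Lemma \ref{lemma:edgedecomposition} gives
\begin{equation*}
D(G_l^{a,b}) = D(H_l^{a,b}) + D(V_l^{a,b}) + D(D_l^{a,b}).
\end{equation*}

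Next I would invoke the key observation, already noted in the example following the definition of the partial transpose of a graph: $\Gamma$ fixes every horizontal and every vertical edge, and it sends each diagonal edge $\{(i,j),(k,l)\}$ with $i \neq k$ and $j \neq l$ to $\{(k,j),(i,l)\}$, which again has distinct first coordinates and distinct second coordinates and is therefore still diagonal. Consequently the horizontal and vertical parts of $\Gamma(G_l^{a,b})$ are precisely $H_l^{a,b}$ and $V_l^{a,b}$, while its diagonal part is exactly $\Gamma(D_l^{a,b})$. Additivity of $D$ then yields
\begin{equation*}
D(\Gamma(G_l^{a,b})) = D(H_l^{a,b}) + D(V_l^{a,b}) + D(\Gamma(D_l^{a,b})).
\end{equation*}

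Subtracting the two displays would cancel the horizontal and vertical contributions and leave
\begin{equation*}
D(G_l^{a,b}) - D(\Gamma(G_l^{a,b})) = D(D_l^{a,b}) - D(\Gamma(D_l^{a,b})).
\end{equation*}
The left-hand side therefore vanishes if and only if the right-hand side does, which is exactly the assertion that $G_l^{a,b}$ satisfies the degree criterion if and only if $D_l^{a,b}$ does.

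Since the argument is pure bookkeeping, I expect no genuine obstacle. The one step that demands care — and the sole place where the combinatorics of the partial transpose actually enters — is verifying the stability of the three edge classes under $\Gamma$: that horizontal and vertical edges are fixed and that diagonal edges are mapped to diagonal edges. It is precisely this stability that allows the HVD decomposition to be compared term by term before and after partial transposition, so the bulk of the write-up would be to state that observation cleanly rather than to perform any real computation.
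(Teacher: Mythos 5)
Your proof is correct and follows essentially the same route as the paper's, which simply observes that horizontal and vertical edges are invariant under the partial transpose and can therefore be disregarded for the degree criterion. Your version merely makes explicit the bookkeeping (additivity of the degree matrix over the edge-disjoint HVD parts and the fact that $\Gamma$ sends diagonal edges to diagonal edges) that the paper's two-sentence proof leaves implicit.
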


\begin{proof}
	By definition $H_l^{a,b}$ and $V_l^{a,b}$ contain only horizontal and
	vertical edges. These edges remain invariant under the partial transpose
	operation, and hence can be disregarded when considering the degree
	criterion.
\end{proof}

The last component needed to prove Theorem \ref{theorem:degrees} is the following theorem of Ando \cite{Ando}, which we state without proof.

\begin{theorem}[Ando \cite{Ando}:
	 Theorem 4.9]
	 \label{theorem:Ando}
Any positive semi-definite matrix of the form
\begin{align*}
\begin{pmatrix}
C&A\\
A^\dagger&C
\end{pmatrix},
\end{align*}
where $A$ and $C$ are $b\times b$ complex matrices, is separable in $\mathbb{C}^2\otimes \mathbb{C}^b$.
\end{theorem}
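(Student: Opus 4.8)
The plan is to prove separability \emph{constructively}, by starting from a spectral decomposition of the positive semi-definite matrix $M=\begin{pmatrix} C & A \\ A^\dagger & C\end{pmatrix}$ and massaging it into a manifestly separable sum of tensor products of rank-one positive matrices. Writing $M=\sum_{k=1}^{r}|v_k\rangle\langle v_k|$ with $r=\mathrm{rank}(M)$, I would split each eigen-contribution along the $\mathbb{C}^2$ tensor factor as $|v_k\rangle=|0\rangle\otimes|x_k\rangle+|1\rangle\otimes|y_k\rangle$, with $|x_k\rangle,|y_k\rangle\in\mathbb{C}^b$. Reading off the four $b\times b$ blocks of $M$ from this decomposition, the two diagonal blocks become $\sum_k|x_k\rangle\langle x_k|$ and $\sum_k|y_k\rangle\langle y_k|$. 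The crucial structural input of the theorem is that these two diagonal blocks \emph{coincide} (both equal $C$), which yields the identity
\begin{align*}
\sum_{k=1}^r |x_k\rangle\langle x_k| \;=\; \sum_{k=1}^r |y_k\rangle\langle y_k| \;=\; C.
\end{align*}
In other words, $\{|x_k\rangle\}$ and $\{|y_k\rangle\}$ are two ensembles realising the \emph{same} unnormalised state $C$ on $\mathbb{C}^b$, which is precisely the setting of the \emph{unitary freedom in the ensemble of a density matrix} (Hughston--Jozsa--Wootters).

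The next step is to exploit that freedom. By the ensemble-freedom lemma there is an $r\times r$ unitary $U$ such that $|y_k\rangle=\sum_j U_{jk}|x_j\rangle$; equivalently, writing $X$ and $Y$ for the $b\times r$ matrices with columns $|x_k\rangle$ and $|y_k\rangle$, we have $Y=XU$. At the same time, the decomposition $M=\sum_k|v_k\rangle\langle v_k|$ is itself only determined up to an $r\times r$ unitary $W$ acting by $|v_k\rangle\mapsto\sum_l W_{lk}|v_l\rangle$, which replaces $X\mapsto XW$ and $Y\mapsto YW$, and hence $U\mapsto W^\dagger U W$, while leaving $M$ untouched. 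I would choose $W$ to diagonalise the unitary $U$ (possible since $U$ is normal). In the resulting ensemble $|y_k\rangle=e^{i\theta_k}|x_k\rangle$, so that each eigenvector factorises:
\begin{align*}
|v_k\rangle=\bigl(|0\rangle+e^{i\theta_k}|1\rangle\bigr)\otimes|x_k\rangle .
\end{align*}
This is a \emph{product} vector, so setting $|\phi_k\rangle=|0\rangle+e^{i\theta_k}|1\rangle$ gives $M=\sum_k|\phi_k\rangle\langle\phi_k|\otimes|x_k\rangle\langle x_k|$, a sum of tensor products of positive semi-definite $2\times2$ and $b\times b$ matrices, hence separable in $\mathbb{C}^2\otimes\mathbb{C}^b$ directly from Equation (\ref{eq:separable}).

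The main obstacle is the careful handling of the ensemble-freedom lemma when $C$ is rank-deficient: the lemma requires the two ensembles to have equally many terms (one pads the shorter with zero vectors), and $U$ is then genuinely unitary only on the support of $C$, so I must check that one can still extract an honest $r\times r$ unitary and diagonalise it, and that the diagonalising change of ensemble acts \emph{simultaneously} and consistently on both blocks $X$ and $Y$. I expect the bookkeeping here to be the one genuinely delicate point; the rest is routine linear algebra. I would emphasise that this argument never invokes the partial transpose or the Peres--Horodecki criterion, which is essential: for $b\ge 4$ a positive partial transpose does \emph{not} imply separability, so it is the equality of the diagonal blocks, and not mere positivity of $\Gamma(M)$, that does the work. (As a consistency check one may note that $M$ is automatically PPT, since conjugation by the block-swap unitary $\begin{pmatrix}0&I_b\\I_b&0\end{pmatrix}$ realises its partial transpose, but this fact is not used in the proof.)
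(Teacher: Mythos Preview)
The paper does not prove this theorem at all: it is quoted from Ando \cite{Ando} and explicitly ``state[d] without proof'' as a black box used in the proof of Theorem~\ref{theorem:degrees}. There is therefore nothing in the paper to compare against.

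Your argument is correct and is in fact a clean, self-contained proof of the result. The key steps --- writing $M=\sum_k|v_k\rangle\langle v_k|$, reading off $\sum_k|x_k\rangle\langle x_k|=\sum_k|y_k\rangle\langle y_k|=C$ from the equality of the diagonal blocks, invoking the unitary freedom in ensemble decompositions to obtain $Y=XU$, and then exploiting the gauge freedom $V\mapsto VW$ to diagonalise $U$ --- are all sound. The worry you flag about rank-deficient $C$ is not actually an obstacle: from $XX^\dagger=YY^\dagger$ one gets (via the compact SVD) that $X=P\Sigma Q^\dagger$ and $Y=P\Sigma Q'^\dagger$ with $Q,Q'$ both $r\times s$ partial isometries, and any $r\times r$ unitary extending $Q\mapsto Q'$ furnishes the required $U$; its subsequent diagonalisation by a unitary $W$ is unproblematic since $U$ is genuinely unitary on all of $\mathbb{C}^r$. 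Your closing remark that the argument never uses PPT is well taken and worth keeping, since it explains why the result is nontrivial for $b\ge 4$.
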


\begin{proof}[Proof of Theorem \protect\ref{theorem:degrees}]
	It suffices to prove that if a grid-labelled graph $G_{l}^{2,b}$ satisfies
	the degree criterion then $G_{l}^{2,b}\in \mathcal{S}$.
	We know from Lemma \ref{lemma:degreecriterioniff} that $G_l^{2,b}$ satisfies the degree criterion if and only if $D_{l}^{2,b}$ satisfies the degree criterion, where $D_{l}^{2,b}$ is from the HVD decomposition of $G_l^{2,b}$ and contains all diagonal edges of $G_l^{2,b}$. Hence, we need to show that if the degree criterion holds then $\rho (D_{l}^{2,b})\in\mathcal{S}$. 
	Up to normalisation, the structure of this density matrix is
	\begin{equation*}
	\rho (D_{l}^{2,b})=%
	\begin{pmatrix}
	\Delta ^{(1)} & A \\ 
	A^{T} & \Delta ^{(2)}%
	\end{pmatrix}%
	,
	\end{equation*}%
	where $\Delta ^{(i)}$, with $i=1,2$, are diagonal matrices encoding the
	degrees of the vertices in row $i$, 
	\begin{equation*}
	\Delta _{j,j}^{(i)}=d((i,j)),
	\end{equation*}%
	and the matrix $A$ encodes the diagonal edges of $G_{l}^{2,b}$. This can be seen from the fact that
	the adjacency matrix of any grid-labelled graph $G_{l}^{2,b}$ with diagonal
	edges \emph{only} is a $2\times 2$ symmetric block matrix with block diagonals
	equal to the zero matrix -- diagonal edges are incident to vertices in different rows by definition. The density matrix of the partial transpose of $D_{l}^{a,b}$
	is then 
	\begin{equation*}
	\rho (\Gamma (D_{l}^{a,b}))=%
	\begin{pmatrix}
	\Delta ^{(2)} & A^{T}\\ 
	A & \Delta ^{(1)}%
	\end{pmatrix}%
=	\begin{pmatrix}
	\Delta ^{(2)} & A \\ 
	A^T & \Delta ^{(1)}%
	\end{pmatrix}%
	.
	\end{equation*}%
	The degree criterion holds for a general grid-labelled graph $G_{l}^{a,b}$ if and only if $\text{diag}%
	(\rho (D_{l}^{a,b}))=\text{diag}(\rho (\Gamma (D_{l}^{a,b})))$, hence if the
	degree criterion holds for $D_l^{2,b}$ then $\Delta ^{(1)}=\Delta ^{(2)}$. Since density matrices are by definition positive semi-definite, by Theorem \ref{theorem:Ando} $\rho(D_l^{2,b})$ is separable.
\end{proof}

In proving this theorem we have uncovered the following interesting corollary, following from the requirement that the matrices $\Delta^{(1)}$ and $\Delta^{(2)}$ are equal for grid-labelled graphs of type $(2,b)$ that satisfy the degree criterion.

\begin{corollary}
	\label{corollary:rowdegrees}
Let $G_l^{2,b}$ be a grid-labelled graph. Then $G_l^{2,b}\in \mathcal{S}$ if and only if \begin{align*}d((1,j))=d((2,j))\end{align*} for $1\le j\le b$.
\end{corollary}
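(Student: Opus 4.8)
The plan is to read the corollary off Theorem~\ref{theorem:degrees}. For type $(2,b)$ that theorem already gives separability if and only if the degree criterion $D(G_l^{2,b})=D(\Gamma(G_l^{2,b}))$ holds, so the entire task is to rewrite this equality of degree matrices as the per-column statement $d((1,j))=d((2,j))$. I would first invoke Lemma~\ref{lemma:degreecriterioniff} to pass to the diagonal part $D_l^{2,b}$ of the HVD decomposition, because horizontal and vertical edges are fixed by $\Gamma$ and hence contribute identically to $D(G_l^{2,b})$ and to $D(\Gamma(G_l^{2,b}))$.

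Next I would describe exactly how $\Gamma$ moves the diagonal edges. With $a=2$ every diagonal edge has the form $\{(1,p),(2,q)\}$ with $p\neq q$, and by definition of $\Gamma$ it is relabelled to $\{(2,p),(1,q)\}$, leaving its two columns fixed but exchanging the rows of its endpoints. Consequently, in each column $j$ the partial transpose turns every diagonal edge incident to $(1,j)$ into one incident to $(2,j)$ and conversely, so it interchanges the diagonal-edge degree of $(1,j)$ with that of $(2,j)$. In the block form used in the proof of Theorem~\ref{theorem:degrees} this is precisely the swap
\begin{equation*}
\rho(D_l^{2,b})=\begin{pmatrix}\Delta^{(1)}&A\\ A^{T}&\Delta^{(2)}\end{pmatrix}\;\longmapsto\; \rho(\Gamma(D_l^{2,b}))=\begin{pmatrix}\Delta^{(2)}&A\\ A^{T}&\Delta^{(1)}\end{pmatrix},
\end{equation*}
with $\Delta^{(i)}_{j,j}=d((i,j))$ recording the degrees of the row-$i$ vertices.

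Comparing main diagonals, $D(G_l^{2,b})=D(\Gamma(G_l^{2,b}))$ holds if and only if $\Delta^{(1)}=\Delta^{(2)}$, i.e. if and only if $d((1,j))=d((2,j))$ for every $1\le j\le b$; substituting this into Theorem~\ref{theorem:degrees} yields the claimed equivalence. The single point demanding care, and the main (if modest) obstacle, is fixing the meaning of $d((i,j))$: the quantity the criterion equalises is the number of \emph{diagonal} edges at $(i,j)$, i.e. the entry $\Delta^{(i)}_{j,j}$, and not the total degree. One must therefore make explicit that horizontal and vertical edges, being invariant under $\Gamma$, cancel out of the comparison—vertical edges in any case contribute the single shared edge $\{(1,j),(2,j)\}$ to both rows of column $j$—so that it is exactly the diagonal contribution that the condition constrains.
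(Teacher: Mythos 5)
Your proposal is correct and is essentially the paper's own argument: the corollary is stated immediately after the proof of Theorem~\ref{theorem:degrees} and is read off from the block form $\rho(D_l^{2,b})=\left(\begin{smallmatrix}\Delta^{(1)}&A\\A^{T}&\Delta^{(2)}\end{smallmatrix}\right)$, where the degree criterion reduces to $\Delta^{(1)}=\Delta^{(2)}$, exactly as you describe. Your closing caveat is in fact necessary and not merely pedantic: with $d$ read as the total degree the stated equivalence fails in both directions (e.g.\ a single horizontal edge gives a separable graph with unequal column degrees, and unmatched horizontal edges can equalise total degrees without the diagonal part satisfying the degree criterion), so the corollary must be read, as you do, with $d((i,j))$ denoting the degree in the diagonal part $D_l^{2,b}$ of the HVD decomposition.
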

This means that grid-labelled graphs of type $(2,b)$ for $b\ge 2$ are separable if and only if the vertex degrees are equal for vertices in the same column.
\subsection{Extensions and second order local isomorphism}
We conclude this section by introducing a generalisation of local isomorphism, \emph{second order local isomorphism}, that can be used to generate infinite families of separable or entangled grid-labelled graphs. 
\begin{figure}[tbp]
	\centering
	\includegraphics[scale=0.2]{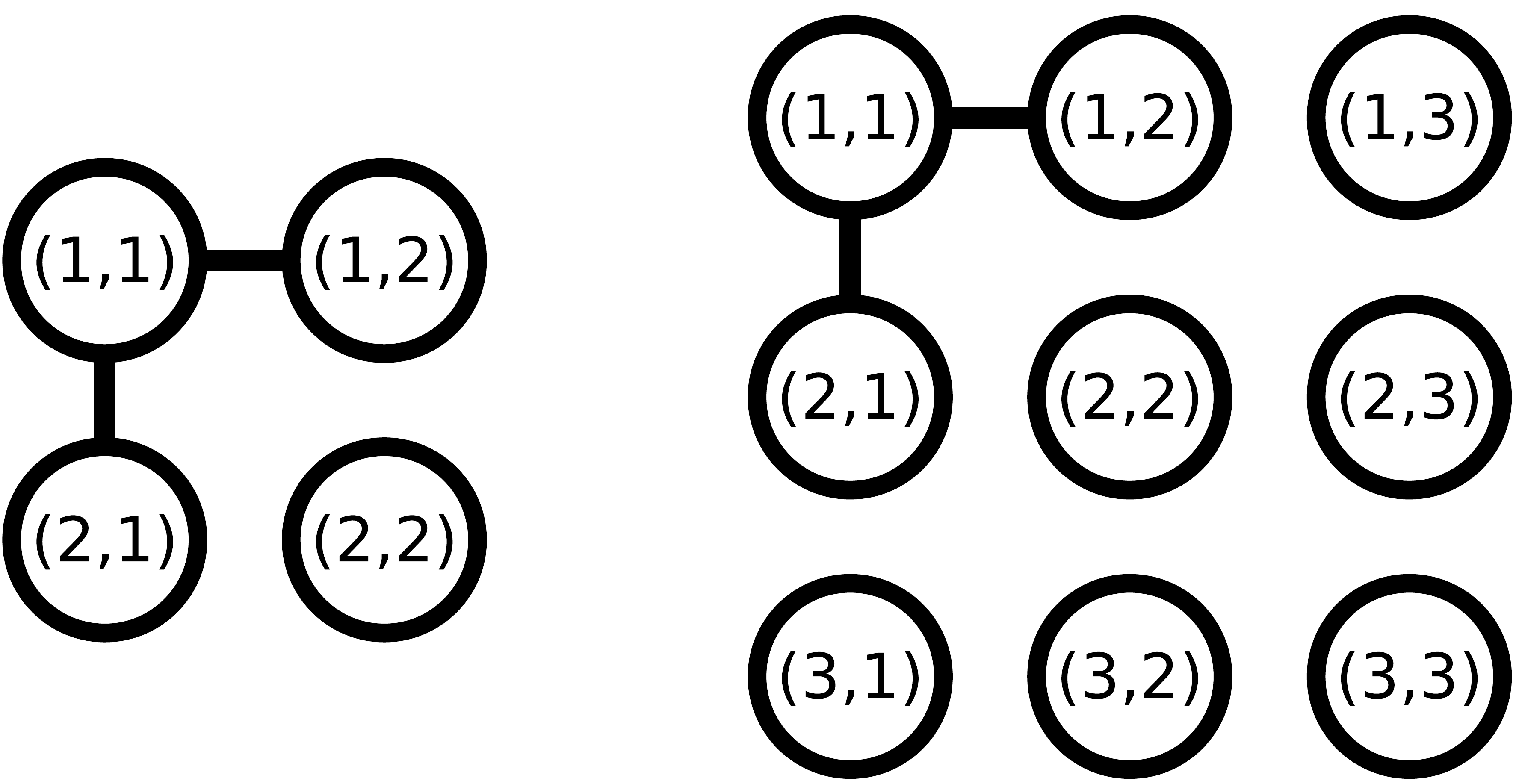}
	\caption{The grid-labelled graph on the right is an extension of that on the
		left.}
	\label{fig:extension}
\end{figure}

\begin{definition}[Grid-labelled graph extension]
	Let $G_{l}^{a,b}$ be a grid-labelled graph. A grid-labelled graph $H_{m}^{c,d}$
	is said to be an \emph{extension} of $G_{l}^{a,b}$ if $c\geq a$, $d\geq b$, $%
	E(G_{l}^{a,b})=E(H_{m}^{c,d})$, $V(H_{m}^{c,d})=V(G_{l}^{a,b}\uplus
	_{k}V(K_{1}))$, for $k=c\cdot d-a\cdot b$, and for all $v\in V(G)$, $%
	l(v)=m(v) $.
\end{definition}

\begin{example}
	\emph{In Figure \ref{fig:extension}, the graph on the right is an extension
		of the graph on the left, because it can be obtained by increasing the
		dimensions of the grid. }
\end{example}

\begin{definition}[Second order local isomorphism]
	Two grid-labelled graphs, $G_{l}^{a,b}$ and $H_{m}^{c,d}$, with $a\le c,b\le d$,
	are said to be \emph{second order locally isomorphic} if $G_{l}^{a,b}$ has an extension ${G}_{l^{\prime }}^{c,d}$, such that ${G}%
	_{l^{\prime }}^{c,d}\cong _{c,d}H_{m}^{c,d}$. This is denoted by $%
	G_{l}^{a,b}\cong _{c,d}H_{m}^{c,d}$.
\end{definition}

Note that second order local isomorphism is reflexive.
\begin{figure}[tbp]
	\centering
	\includegraphics[scale=0.2]{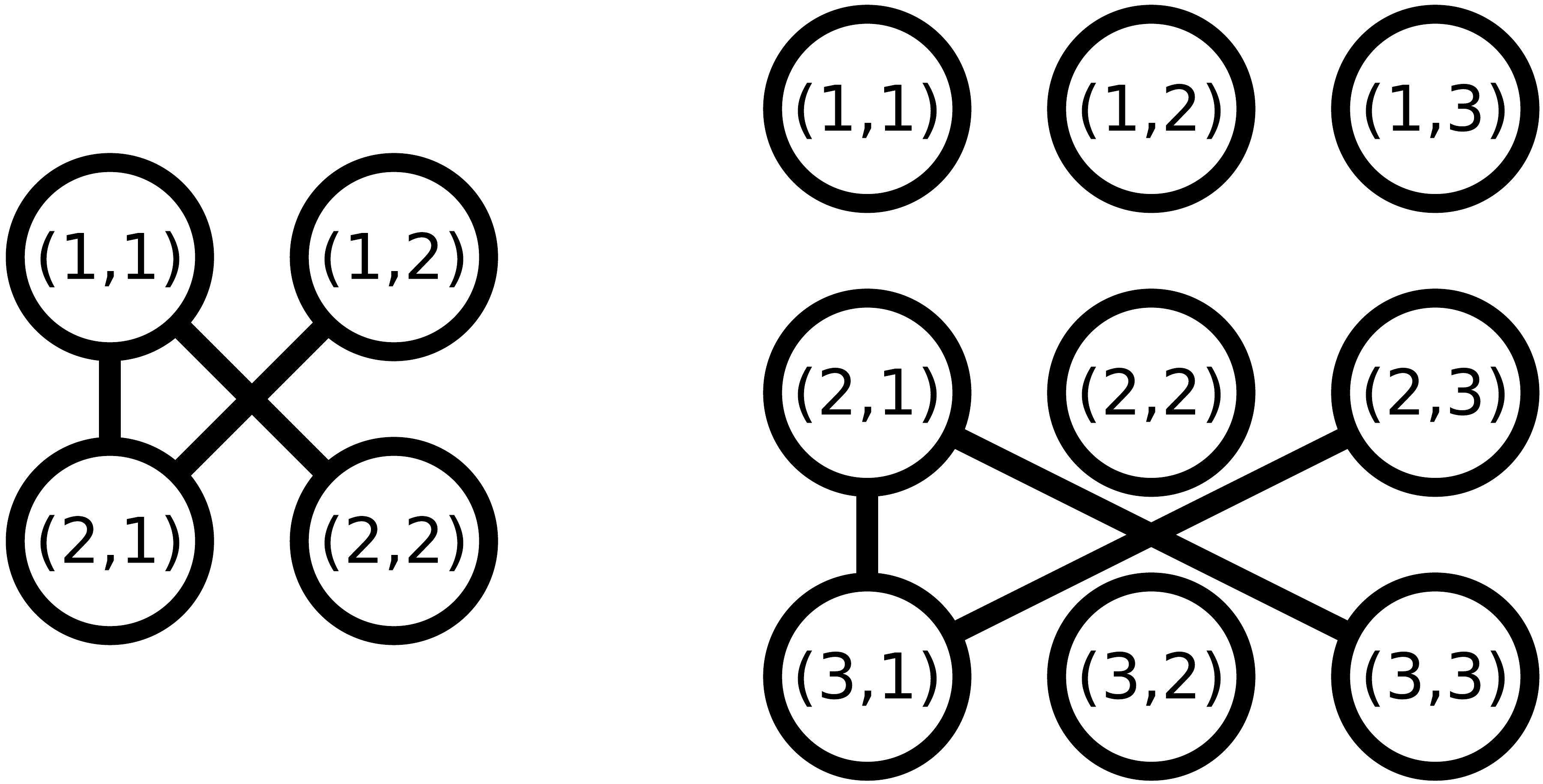}
	\caption{Two grid-labelled graphs that are second order locally isomorphic.}
	\label{fig:secondorderlocallyisomorphic}
\end{figure}

\begin{example}
	\emph{The two grid-labelled graphs in Figure \ref%
		{fig:secondorderlocallyisomorphic} are second order locally isomorphic. They
		can be obtained from one another by modifying the grid dimensions and
		permuting the vertex rows and columns.}
\end{example}

\begin{proposition}
	If two grid-labelled graphs are locally isomorphic then they are second order
	locally isomorphic. The converse is not necessarily true.
\end{proposition}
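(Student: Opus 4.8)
\section*{Proof proposal}

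The plan is to settle the forward implication by a one-line appeal to the definition of second order local isomorphism with the trivial, dimension-preserving extension, and then to defeat the converse with an explicit counterexample that exploits the freedom of second order local isomorphism to enlarge the grid.

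For the forward direction I would argue as follows. Suppose $G_{l}^{a,b}\cong_{a,b}H_{m}^{a,b}$ in the sense of ordinary local isomorphism; in particular both graphs are of the same type $(a,b)$. I then take $c=a$ and $d=b$ in the definition of second order local isomorphism. The extension of $G_{l}^{a,b}$ to a grid of type $(c,d)=(a,b)$ adds $c\cdot d-a\cdot b=0$ isolated vertices and leaves the labelling untouched, so the only such extension is $G_{l}^{a,b}$ itself, say $G_{l'}^{a,b}=G_{l}^{a,b}$. The hypothesis now reads $G_{l'}^{a,b}\cong_{a,b}H_{m}^{a,b}$, which is exactly the ordinary local isomorphism required by the definition of second order local isomorphism for these values of $c$ and $d$. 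Hence $G_{l}^{a,b}\cong_{a,b}H_{m}^{a,b}$ holds in the second order sense as well, and the forward implication is established. Informally, this step records that when $c=a$ and $d=b$ the notion of second order local isomorphism collapses back onto ordinary local isomorphism.

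For the converse I would produce two grid-labelled graphs of \emph{different} types, using the observation that ordinary local isomorphism is defined only between graphs of a common type $(a,b)$, whereas second order local isomorphism explicitly allows $a\le c$ and $b\le d$. Concretely, let $G_{l}^{1,2}$ be the grid-labelled graph on the $1\times 2$ grid carrying the single edge $\{(1,1),(1,2)\}$, and let $H_{m}^{2,2}$ be its extension to type $(2,2)$, which keeps the edge $\{(1,1),(1,2)\}$ and introduces the two isolated vertices $(2,1)$ and $(2,2)$. Taking the extension appearing in the definition of second order local isomorphism to be $H_{m}^{2,2}$ itself, the required ordinary local isomorphism $H_{m}^{2,2}\cong_{2,2}H_{m}^{2,2}$ holds trivially via the identity permutations, so $G_{l}^{1,2}\cong_{2,2}H_{m}^{2,2}$ and the two graphs are second order locally isomorphic. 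They are nonetheless of the distinct types $(1,2)$ and $(2,2)$, so they cannot be locally isomorphic, which shows that the converse fails.

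I do not expect a genuine obstacle in either part, since both follow directly from unwinding the definitions. The only point warranting care is the converse: one must make explicit that local isomorphism, as encoded by $(P_{\pi}\otimes P_{\sigma})A(G_{l}^{a,b})(P_{\pi}^{T}\otimes P_{\sigma}^{T})=A(H_{m}^{a,b})$, presupposes that the two graphs share a common type, so that a mismatch of types alone rules out local isomorphism while leaving room for second order local isomorphism.
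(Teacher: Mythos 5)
Your proof is correct and follows essentially the same route as the paper: the forward direction collapses second order local isomorphism to ordinary local isomorphism via the trivial extension with $c=a$, $d=b$, and the converse fails because local isomorphism forces equal grid types while second order local isomorphism does not. Your only addition is an explicit witness ($G_l^{1,2}$ and its extension to type $(2,2)$) for the failure of the converse, which the paper merely asserts in general terms; that is a harmless and indeed slightly more complete presentation.
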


\begin{proof}
	If two grid-labelled graphs $G_{l}^{a,b}$ and $H_{m}^{c,d}$ are locally
	isomorphic, then $a=c$ and $b=d$. Trivially, the grid-labelled graphs are
	extensions of one another, and are hence second order locally isomorphic.
	
	The converse is not true precisely because $G_l^{a,b}\cong_{c,d}H_m^{c,d}$
	does not imply $a=c,b=d,$ which is required by the definition of local
	isomorphism.
\end{proof}

\begin{proposition}
	\label{proposition:extensionseparability} For a grid-labelled graph $%
	G_{l}^{a,b}$ with an extension  $H_{m}^{c,d}$, $G_{l}^{a,b}\in \mathcal{S}$ if
	and only if $H_{m}^{c,d}\in \mathcal{S}$.
\end{proposition}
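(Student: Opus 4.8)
The plan is to recognise that passing to an extension merely re-embeds the same density matrix into a larger tensor product space without altering any of its nonzero entries, and that separability is stable under such local embeddings in both directions. Concretely, an extension preserves the edge set, $E(G_l^{a,b})=E(H_m^{c,d})$, and preserves the labelling on the original vertices, so by Equation (\ref{eq:basisform}) the two density matrices are built from exactly the same edge states; moreover they have the same number of edges, so the normalisation factor $2m$ is identical. Introducing the inclusion isometries $V_A\colon\mathbb{C}^a\hookrightarrow\mathbb{C}^c$ and $V_B\colon\mathbb{C}^b\hookrightarrow\mathbb{C}^d$ defined on standard basis vectors by $V_A|i\rangle=|i\rangle$ and $V_B|j\rangle=|j\rangle$, the first step is to verify the embedding identity
\begin{equation*}
\rho(H_m^{c,d})=(V_A\otimes V_B)\,\rho(G_l^{a,b})\,(V_A\otimes V_B)^{\dagger},
\end{equation*}
which holds because each edge state $|i,j;k,l\rangle$ carries identical coordinates in both graphs. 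Equivalently, $\rho(H_m^{c,d})$ is just $\rho(G_l^{a,b})$ padded with zero rows and columns indexed by the grid points lying outside $[a]\times[b]$.

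For the forward direction, suppose $G_l^{a,b}\in\mathcal{S}$, so that $\rho(G_l^{a,b})=\sum_i p_i\,\rho_i\otimes\sigma_i$. Conjugating by $V_A\otimes V_B$ and using the embedding identity gives $\rho(H_m^{c,d})=\sum_i p_i\,(V_A\rho_i V_A^{\dagger})\otimes(V_B\sigma_i V_B^{\dagger})$. Since an isometry maps a density matrix to a density matrix, preserving positive semi-definiteness and, because $V_A^{\dagger}V_A=I_a$, trace, each factor $V_A\rho_i V_A^{\dagger}$ and $V_B\sigma_i V_B^{\dagger}$ is again a valid density matrix on $\mathbb{C}^c$ and $\mathbb{C}^d$ respectively. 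Hence $\rho(H_m^{c,d})$ is separable.

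For the reverse direction I would pull the decomposition back using the adjoint isometries. From the embedding identity together with $V_A^{\dagger}V_A=I_a$ and $V_B^{\dagger}V_B=I_b$, a direct computation yields
\begin{equation*}
(V_A^{\dagger}\otimes V_B^{\dagger})\,\rho(H_m^{c,d})\,(V_A\otimes V_B)=\rho(G_l^{a,b}),
\end{equation*}
reflecting the fact that $\rho(H_m^{c,d})$ is supported on the range of $V_A\otimes V_B$. Thus if $\rho(H_m^{c,d})=\sum_i p_i\,\rho_i\otimes\sigma_i$ is separable, conjugating by the adjoints gives $\rho(G_l^{a,b})=\sum_i p_i\,(V_A^{\dagger}\rho_i V_A)\otimes(V_B^{\dagger}\sigma_i V_B)$. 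Each compressed factor $V_A^{\dagger}\rho_i V_A$ is positive semi-definite (a principal submatrix of a positive semi-definite matrix) but may fail to have unit trace, so the remaining work is purely bookkeeping: discard the terms whose compressions have zero trace, renormalise the surviving factors into genuine density matrices, and absorb the trace factors into new nonnegative weights. These weights necessarily sum to $1$ because $\rho(G_l^{a,b})$ has unit trace. This renormalisation is the one place demanding care; everything else is an immediate consequence of the embedding identity, and the two directions together establish the equivalence.
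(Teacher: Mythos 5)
Your proof is correct and formalises exactly the fact the paper's one-line proof takes for granted, namely that adjoining isolated vertices (equivalently, conjugating by the local isometry $V_A\otimes V_B$) does not affect separability in either direction; in particular your handling of the reverse direction, where the compressed factors $V_A^{\dagger}\rho_i V_A$ must be renormalised and the zero-trace terms discarded, supplies the only step that genuinely requires care. Since the paper simply asserts the claim is clear, your argument is the same approach carried out in full detail rather than a different route.
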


\begin{proof}
	It is clear that adding isolated vertices will not affect separability of a
	grid-labelled graph.
\end{proof}

\begin{proposition}
	Let $G_{l}^{a,b}$ and $H_{m}^{c,d}$ be grid-labelled graphs. Let $%
	G_{l}^{a,b}\cong _{c,d}H_{m}^{c,d}$. Then, $G_{l}^{a,b}\in \mathcal{S}$
	if and only if $H_{m}^{c,d}\in \mathcal{S}$.
\end{proposition}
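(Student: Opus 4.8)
The plan is to prove the biconditional by chaining together the two separability-preservation results already established, since second order local isomorphism is by construction nothing more than an extension step followed by an ordinary local isomorphism step. First I would unpack the hypothesis $G_{l}^{a,b}\cong_{c,d}H_{m}^{c,d}$: by the definition of second order local isomorphism, there exists an extension $G_{l'}^{c,d}$ of $G_{l}^{a,b}$ such that $G_{l'}^{c,d}\cong_{c,d}H_{m}^{c,d}$, where this latter relation is ordinary local isomorphism at the level of the $(c,d)$-grid. The whole argument then reduces to transporting the property of lying in $\mathcal{S}$ along these two intermediate relations.

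For the first leg, I would invoke Proposition \ref{proposition:extensionseparability} applied to the pair $(G_{l}^{a,b},G_{l'}^{c,d})$. Since $G_{l'}^{c,d}$ is an extension of $G_{l}^{a,b}$ (it differs only by the addition of $c\cdot d-a\cdot b$ isolated vertices, with the original labels preserved), that proposition yields
\begin{align*}
G_{l}^{a,b}\in\mathcal{S}\iff G_{l'}^{c,d}\in\mathcal{S}.
\end{align*}
For the second leg, I would invoke Proposition \ref{proposition:localisomorphism} applied to the pair $(G_{l'}^{c,d},H_{m}^{c,d})$. Because $G_{l'}^{c,d}\cong_{c,d}H_{m}^{c,d}$ in the sense of local isomorphism, and because the permutation matrices $P_\pi\otimes P_\sigma$ witnessing a local isomorphism are real-orthogonal (hence local unitaries), that proposition yields
\begin{align*}
G_{l'}^{c,d}\in\mathcal{S}\iff H_{m}^{c,d}\in\mathcal{S}.
\end{align*}

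Composing the two biconditionals gives $G_{l}^{a,b}\in\mathcal{S}\iff H_{m}^{c,d}\in\mathcal{S}$, which is the claim. There is really no hard step here: the substance of the argument has already been carried out in the two cited propositions, and the only thing to verify carefully is that the object $G_{l'}^{c,d}$ supplied by the definition simultaneously serves as a legitimate extension of $G_{l}^{a,b}$ (so Proposition \ref{proposition:extensionseparability} applies) and as a legitimate locally isomorphic partner of $H_{m}^{c,d}$ (so Proposition \ref{proposition:localisomorphism} applies). Both facts are immediate from the definition of second order local isomorphism, so the main work is simply assembling the chain in the correct order.
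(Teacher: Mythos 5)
Your proof is correct and follows essentially the same route as the paper's: both decompose a second order local isomorphism into an extension step (handled by Proposition \ref{proposition:extensionseparability}) followed by an ordinary local isomorphism step (handled by Proposition \ref{proposition:localisomorphism}) and chain the two biconditionals. If anything, your write-up makes the intermediate object $G_{l'}^{c,d}$ and the order of the two legs more explicit than the paper does.
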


\begin{proof}
	If $a=c$ and $b=d$ then $G_{l}^{a,b}$ and $H_{m}^{c,d}$ are locally
	isomorphic, and the result follows from Proposition \ref%
	{proposition:localisomorphism}. Therefore it suffices to prove that
	extension will not affect separability, which we have proved already in
	Proposition \ref{proposition:extensionseparability}.
\end{proof}

In the next section we will consider decompositions of grid-labelled graphs in greater depth.
\subsection{Decompositions}
\label{subsection:decompositions}
A further sufficient condition for separability is based on decompositions:

\begin{theorem}[Separable Decompositions]
	\label{theorem:decomposition}If there exists a decomposition $X$ of $%
	G_{l}^{a,b}$ such that for all $H_{l}^{a,b}\in X$ we have $%
	H_{l}^{a,b}\in \mathcal{S}$, then $G_{l}^{a,b}\in \mathcal{S}$.
\end{theorem}

\begin{proof}
	We know that 
	\begin{align*}
	\rho(G_l^{a,b})&=\frac{1}{2|E(G_l^{a,b})|}L(G_l^{a,b})\\
	&=\frac{1}{2|E(G_l^{a,b})|}\sum_{H_l^{a,b}\in X}L(H_l^{a,b})\\
	&=\frac{1}{2|E(G_l^{a,b}))}\sum_{H_l^{a,b}\in X}2|E(H_l^{a,b})|\cdot\rho(H_l^{a,b})\\
	&=\sum_{H\in X}\frac{|E(H_l^{a,b})|}{|E(G_l^{a,b})|}\rho(H_l^{a,b}).
	\end{align*}
	If for all $H_l^{a,b}\in X$, $H_{l}^{a,b}\in \mathcal{S}$ then
	a convex combination of the form given by Equation (\ref{eq:separable}) can be formed by setting $p_{i}=|E(H)|/|E(G)|$. Therefore, $G_{l}^{a,b}\in \mathcal{S}$.
\end{proof}

\begin{definition}[Pair-symmetric grid-labelled graphs]
	A grid-labelled graph is said to be \emph{pair-symmetric} if each of its
	diagonal edges $\{(i,j),(k,l)\}$ have a \emph{counterpart} edge $\{(k,j),(i,l)\}$. An edge and its counterpart are referred to as a 
	\emph{counterpart pair}.
\end{definition}

The grid-labelled graphs in Figures \ref{fig:secondorderlocallyisomorphic}, \ref{fig:sep} and \ref{fig:sepeasy} are pair-symmetric.

\begin{proposition}
	Every pair-symmetric grid-labelled graph is separable.
\end{proposition}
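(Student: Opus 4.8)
The plan is to exhibit an explicit decomposition of a pair-symmetric grid-labelled graph into separable pieces and then invoke Theorem \ref{theorem:decomposition}. First I would apply the HVD decomposition of Lemma \ref{lemma:edgedecomposition} to split $G_l^{a,b}$ into its horizontal part $H_l^{a,b}$, vertical part $V_l^{a,b}$, and diagonal part $D_l^{a,b}$. The horizontal and vertical parts are immediately separable by Corollary \ref{corollary:horizontalandverticaledgesonly}, so the whole problem reduces to decomposing the diagonal part $D_l^{a,b}$ into separable subgraphs.

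Next I would show that pair-symmetry forces the diagonal edges to partition cleanly into counterpart pairs. The key observation is that the counterpart operation $\{(i,j),(k,l)\}\mapsto\{(k,j),(i,l)\}$ is an involution on diagonal edges: applying it twice returns the original edge, and it has no fixed points, since a diagonal edge $\{(i,j),(k,l)\}$ with $i\neq k$ and $j\neq l$ can never equal its counterpart. Hence the diagonal edges of a pair-symmetric graph split into disjoint two-element orbits, each consisting of an edge and its counterpart. This yields a decomposition of $D_l^{a,b}$ into subgraphs $P_l^{a,b}$, each carrying exactly the two edges $\{(i,j),(k,l)\}$ and $\{(k,j),(i,l)\}$.

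The heart of the argument, and the step I expect to require the most care, is verifying that each counterpart pair $P_l^{a,b}$ is separable. Both of its edges are incident only to vertices in rows $i$ and $k$, so after a local isomorphism sending these to rows $1$ and $2$ (which preserves separability by Proposition \ref{proposition:localisomorphism}) and after discarding the now-empty remaining rows via Proposition \ref{proposition:extensionseparability}, I may regard $P_l^{a,b}$ as a grid-labelled graph of type $(2,b)$. In this picture each of the four occupied vertices has degree one, and in every occupied column the two vertices in rows $1$ and $2$ have equal degree; by Corollary \ref{corollary:rowdegrees} the pair is therefore separable.

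Finally I would assemble the pieces: the collection consisting of $H_l^{a,b}$, $V_l^{a,b}$, and all the counterpart-pair subgraphs is a decomposition of $G_l^{a,b}$ into edge-disjoint subgraphs, each of which is separable. Applying Theorem \ref{theorem:decomposition} then gives $G_l^{a,b}\in\mathcal{S}$, as required. The only genuinely delicate points are checking that the counterpart relation really does induce a partition (rather than permitting overlaps or fixed points) and confirming the equal-column-degree condition for a single counterpart pair; both are short verifications once the involution structure is made explicit.
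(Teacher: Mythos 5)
Your proposal is correct and follows essentially the same route as the paper: decompose the graph into its counterpart pairs plus the horizontal/vertical remainder, show each piece is separable, and conclude via Theorem \ref{theorem:decomposition}. The only cosmetic difference is that you certify each counterpart pair by reducing it to a type $(2,b)$ graph and invoking Corollary \ref{corollary:rowdegrees}, whereas the paper reduces it by local isomorphism and extension to the $2\times 2$ cross $S_l^{2,2}$ and uses sufficiency of the degree criterion there; your explicit check that the counterpart map is a fixed-point-free involution is a welcome detail the paper leaves implicit.
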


\begin{proof}
	Let $G_{l}^{a,b}$ be a pair-symmetric grid-labelled graph with $k$
	counterpart pairs. Let $(H_{i})_{l}^{a,b}$ for $1\le i\le k$ be the
	subgraph of $G_{l}^{a,b}$ containing only the edges of the $i^{\text{th}}$
	counterpart pair. Let $(H_{k+1})_{l}^{a,b}$ be the subgraph of $G_{l}^{a,b}$
	containing the remaining edges of $G_{l}^{a,b}$, that is, those edges not part of the list of $k$ counterpart pairs.
\begin{figure}[ht!]
\centering
\includegraphics[scale=0.2]{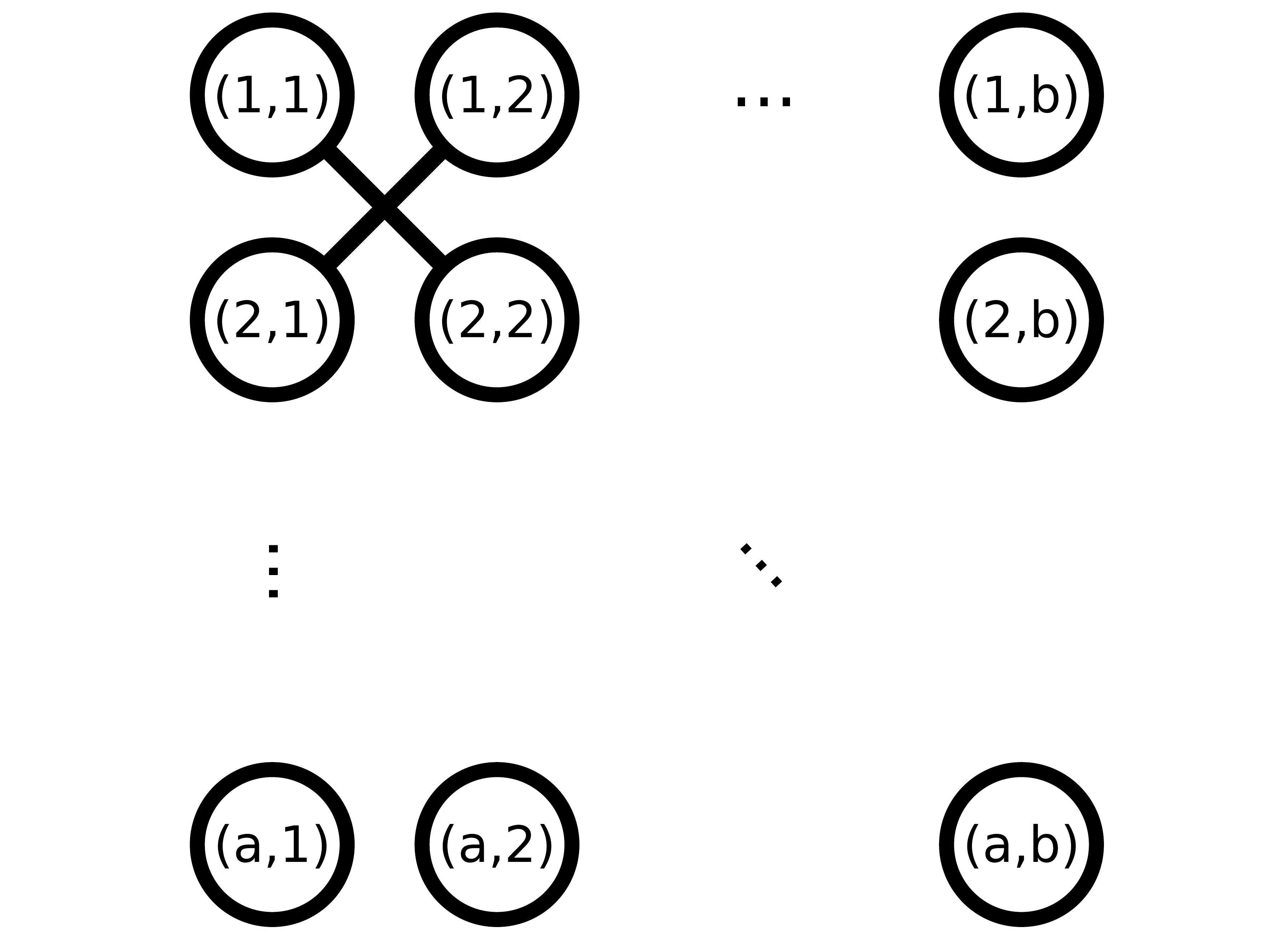}
\caption{A $a\times b$ graph with a cross.}
\label{fig:sep}
\end{figure}
\begin{figure}[ht!]
\centering
\includegraphics[scale=0.085]{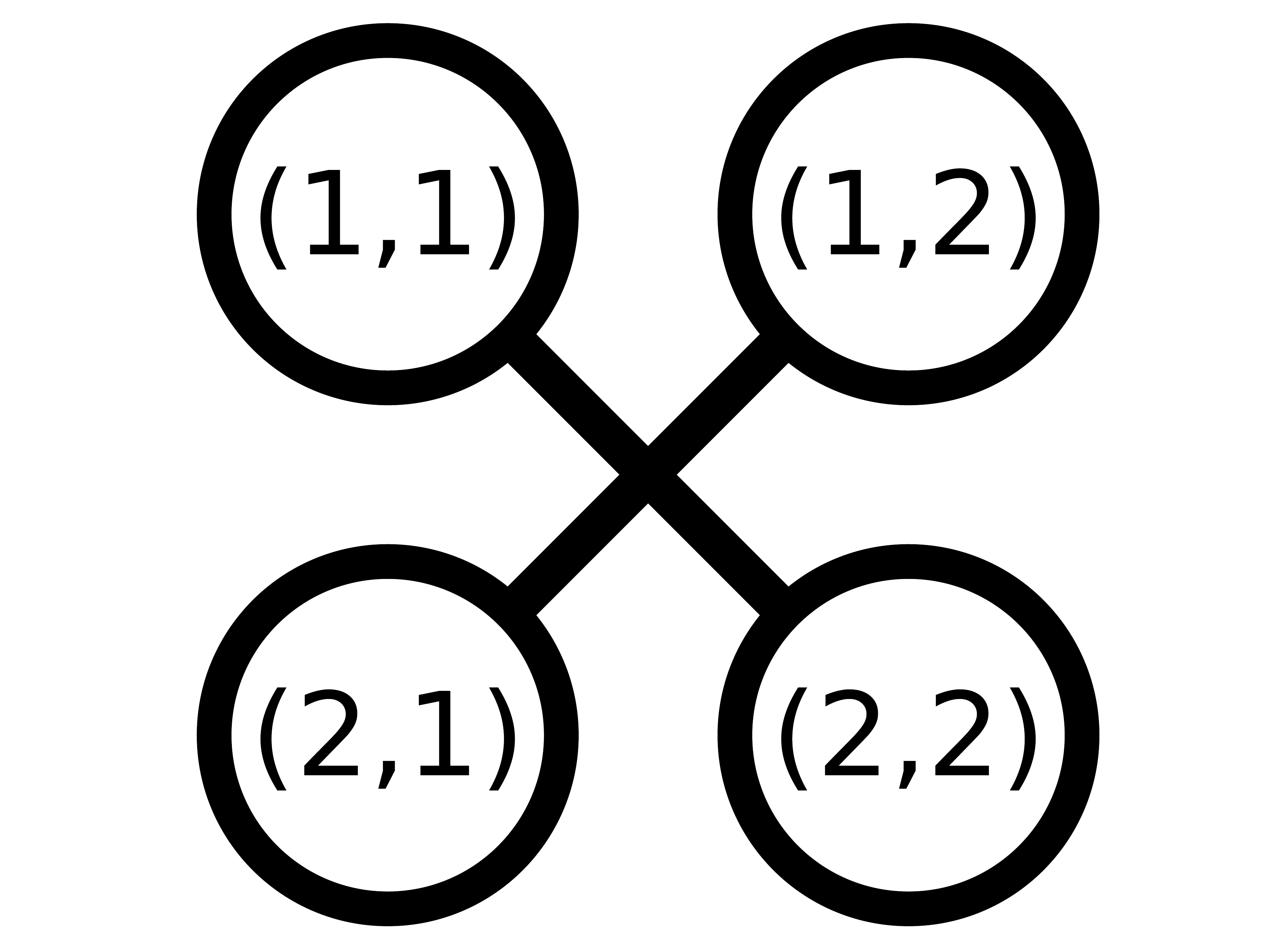}
\caption{A $2\times 2$ graph with a cross.}
\label{fig:sepeasy}
\end{figure}
	 Clearly, the set $%
	X=\{H_{1},H_{2},...,H_{k+1}\}$ forms a decomposition of $G_{l}^{a,b}$. To
	prove the proposition it suffices to show that for each $H_{l}^{a,b}\in
	X,H_{l}^{a,b}\in \mathcal{S}$. It is obvious that for $1 \le i\leq k$, $%
	(H_{i})_{l}^{a,b}\cong _{a,b}S_{l}^{a,b}$, where $S_{l}^{a,b}$ is shown in
	Figure \ref{fig:sep}. This graph is an extension of $S_{l}^{2,2}$, shown in
	Figure \ref{fig:sepeasy}. The graph $S_{l}^{2,2}$ is invariant under partial transpose, and so satisfies the degree criterion. Since the degree criterion is necessary and sufficient for separability in this case ($a=b=2$), we know that $S_{l}^{2,2}\in\mathcal{S}$.
	Hence, for $1\le i\leq k,S_{l}^{2,2}\cong _{a,b}(H_{i})_{l}^{a,b}$, and so $%
	(H_{i})_{l}^{a,b}\in \mathcal{S}$.
	
	The last component of the decomposition $X$, $(H_{k+1})_{l}^{a,b}$, has only
	horizontal and vertical edges (since by definition, all diagonal edges in a pair-symmetric graph are involved in a counterpart pair). Hence from Corollary \ref{corollary:horizontalandverticaledgesonly}, $(H_{k+1})_{l}^{a,b}\in \mathcal{S}$, and so $%
	X$ is a separable decomposition of $G_{l}^{a,b}$. The proposition thus
	follows via Theorem \ref{theorem:decomposition}.
\end{proof}

The pair-symmetric grid-labelled graphs belong to a larger family of grid-labelled graphs, which we call \emph{stratified}. Satisfaction of the degree criterion is a necessary and sufficient condition for separability of stratified grid-labelled graphs. Furthermore, such grid-labelled graphs exist in all bipartite dimensions.

\begin{definition}[Stratified grid-labelled graphs]
	A grid-labelled graph is called \emph{row (resp. column) stratified} if for all of its diagonal edges $%
	\{(i,j),(k,l)\}$, $|i-k|=1$ (resp. $|j-l|=1$).
\end{definition}
\begin{figure}[ht!]
	\centering
	\includegraphics[scale=0.22]{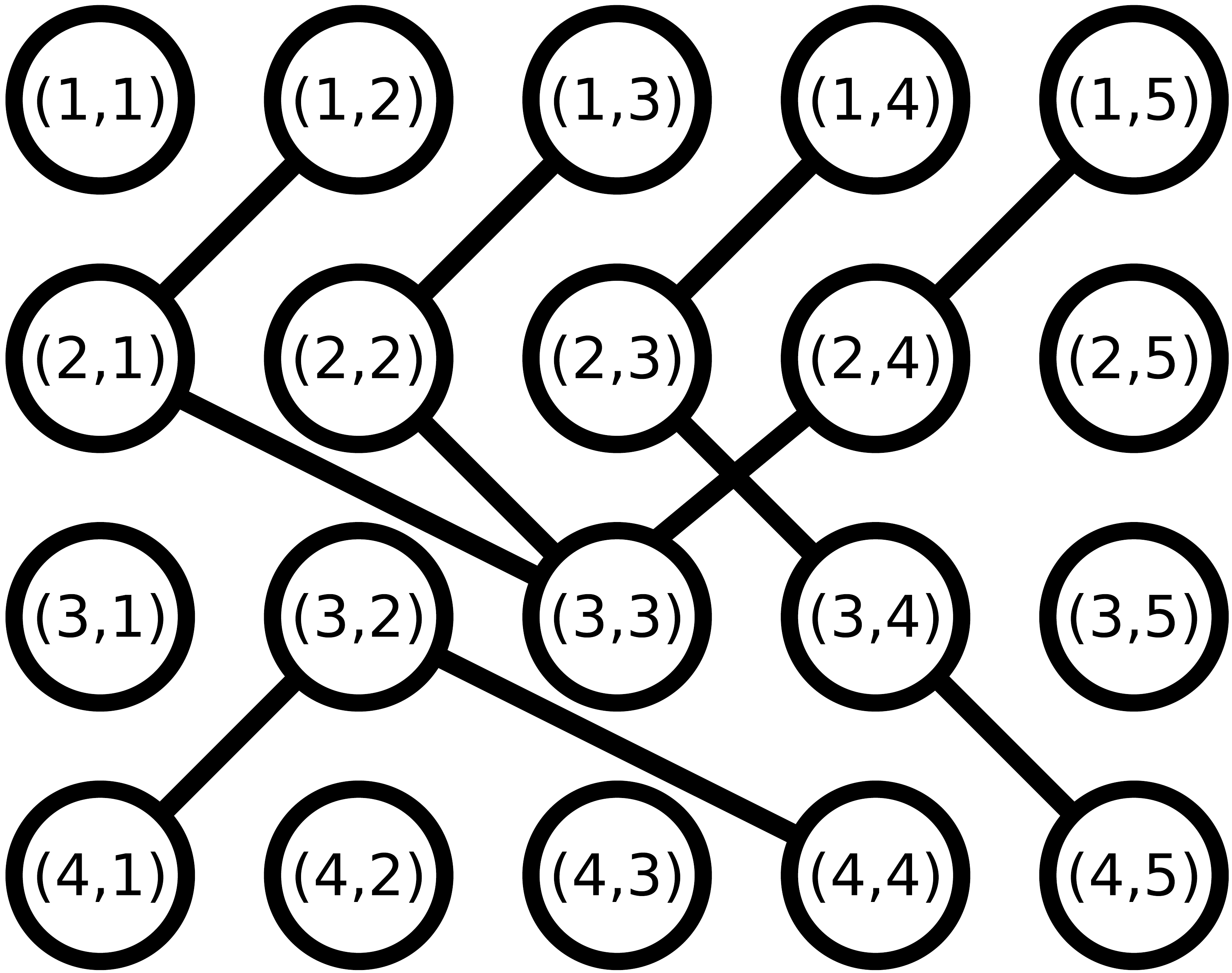}
	\caption{A row stratified graph}
	\label{fig:stratified}
\end{figure}

\begin{example}
	\emph{Figure \ref{fig:stratified} illustrates a row stratified grid-labelled
		graph. The endpoints of all diagonal edges are restricted to nearest
		neighbour rows of vertices, hence the name ``stratified''.}
\end{example}

The following lemma is obvious.

\begin{lemma}[Strata decomposition]
	\label{lemma:stratadecomposition} Let $G_l^{a,b}$ be a row (\emph{resp.} column) stratified
	grid-labelled graph. Let $D_l^{a,b}$ be the diagonal part of the HVD
	decomposition of $G_l^{a,b}$. Then, $D_l^{a,b}$ has a decomposition $%
	S=\{(S_i)_l^{a,b}\}$ for $1\le i<b$, where $(S_i)_l^{a,b}$ is the
	subgraph of $D_l^{a,b}$ containing the edges $\{(p,q),(r,s)\}\in E(D_l^{a,b})$
	with $p=i$ and $r=i+1$ (\emph{resp.} $q=i$ and $s=i+1$). We call $S$ the \emph{strata decomposition} of $G_l^{a,b}$.
\end{lemma}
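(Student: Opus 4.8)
The plan is to verify the two claims that constitute the lemma: first, that the subgraphs $(S_i)_l^{a,b}$ defined in the statement are pairwise edge-disjoint, and second, that their union is exactly $D_l^{a,b}$. Since the lemma is flagged as obvious, the proof should be a short bookkeeping argument showing that the partition of the edges of $D_l^{a,b}$ by the row-index of the ``lower'' endpoint is well-defined and exhaustive. I would treat only the row-stratified case explicitly, noting that the column-stratified case follows by the symmetric argument (swapping the roles of the first and second coordinates).

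First I would recall the defining constraint: because $G_l^{a,b}$ is row stratified, every diagonal edge $\{(p,q),(r,s)\}$ satisfies $|p-r|=1$, and since $D_l^{a,b}$ consists precisely of the diagonal edges of $G_l^{a,b}$, every edge of $D_l^{a,b}$ joins two vertices in adjacent rows. Adopting the convention $p<r$ so that $r=p+1$, each such edge is assigned the unique index $i=p\in\{1,2,\dots,b-1\}$ determined by its lower endpoint. This makes the assignment of an edge to a subgraph $(S_i)_l^{a,b}$ a total function on $E(D_l^{a,b})$.

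Next I would establish edge-disjointness. If an edge $\{(p,q),(r,s)\}$ belonged to both $(S_i)_l^{a,b}$ and $(S_{i'})_l^{a,b}$ with $i\neq i'$, then we would need $p=i$ and $p=i'$ simultaneously, a contradiction. Hence $E((S_i)_l^{a,b})\cap E((S_{i'})_l^{a,b})=\emptyset$ for $i\neq i'$, so the subgraphs are edge disjoint. For the union, every edge of $D_l^{a,b}$ has a well-defined lower-endpoint row index $p\in\{1,\dots,b-1\}$ by the stratification constraint, and therefore lies in $(S_p)_l^{a,b}$; conversely each $(S_i)_l^{a,b}$ is by definition a subgraph of $D_l^{a,b}$. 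Thus $\bigcup_{i=1}^{b-1}(S_i)_l^{a,b}=D_l^{a,b}$, and by the definition of decomposition for grid-labelled graphs the set $S=\{(S_i)_l^{a,b}\}$ is a decomposition of $D_l^{a,b}$, as claimed.

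The only point requiring a moment of care — and the closest thing to an obstacle — is the indexing convention: one must fix an orientation of each edge (say, lower endpoint first) so that the index $i$ is unambiguously assigned, and confirm that the range $1\le i<b$ is correct. Here the relevant grid coordinate runs over the $b$ values of the second index for column stratification and over the rows for the row-stratified case, so the top index should be read against the appropriate dimension; this is purely a matter of matching the convention in the statement and involves no real difficulty.
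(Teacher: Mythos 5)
Your proof is correct. The paper offers no proof at all --- it simply declares the lemma ``obvious'' --- and your partition-by-lower-endpoint argument (totality of the assignment, pairwise edge-disjointness, exhaustiveness of the union) is exactly the bookkeeping the authors are implicitly relying on. You are also right to flag the index range: since rows are indexed by the first coordinate, which ranges over $[a]$, the row-stratified case should read $1\le i<a$ rather than $1\le i<b$; this is a typo in the paper's statement, not a gap in your argument.
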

This can be used to prove the following.
\begin{lemma}
	\label{lemma:stratadegree} Let $G_l^{a,b}$ be a stratified grid-labelled
	graph. If $G_l^{a,b}$ satisfies the degree criterion then each element of
	its strata decomposition satisfies the degree criterion.
\end{lemma}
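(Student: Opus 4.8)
The plan is to reduce everything to the diagonal part and then exploit the fact that, for a stratified graph, the partial transpose cannot move an edge out of its stratum. By Lemma \ref{lemma:degreecriterioniff}, $G_l^{a,b}$ satisfies the degree criterion if and only if its diagonal part $D_l^{a,b}$ does, so I would work entirely with $D_l^{a,b}$ and its strata decomposition $S=\{(S_i)_l^{a,b}\}$ from Lemma \ref{lemma:stratadecomposition}. First I would record the key structural observation, taking the row-stratified case (the column case being symmetric): a diagonal edge $\{(i,j),(i+1,l)\}$ of stratum $(S_i)_l^{a,b}$ is sent by the partial transpose to $\{(i+1,j),(i,l)\}$, which again joins rows $i$ and $i+1$. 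Hence $\Gamma$ maps each stratum into itself, $\Gamma(D_l^{a,b})=\bigcup_i \Gamma((S_i)_l^{a,b})$ is again an edge-disjoint decomposition, and since degree is additive over such a decomposition the diagonal degree-difference matrix $\Delta=D(D_l^{a,b})-D(\Gamma(D_l^{a,b}))$ splits as a sum $\Delta=\sum_i \Delta^{(i)}$ of the per-stratum contributions $\Delta^{(i)}=D((S_i)_l^{a,b})-D(\Gamma((S_i)_l^{a,b}))$.

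Next I would analyse a single stratum. Viewing $(S_i)_l^{a,b}$ as a type-$(2,b)$ diagonal graph on rows $i$ and $i+1$, the block description of the partial transpose worked out in the proof of Theorem \ref{theorem:degrees} shows that $\Gamma$ merely interchanges the two diagonal degree blocks, i.e. swaps the degree profiles of the two rows. Writing $\delta_i(q):=d_{S_i}((i,q))-d_{S_i}((i+1,q))$, this gives $\Delta^{(i)}_{(i,q)}=\delta_i(q)$ and $\Delta^{(i)}_{(i+1,q)}=-\delta_i(q)$, and the stratum $(S_i)_l^{a,b}$ satisfies the degree criterion precisely when $\delta_i(q)=0$ for every $q$ (this is also the content of Corollary \ref{corollary:rowdegrees}). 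So the lemma reduces to showing that the global hypothesis $\Delta=0$ forces every $\delta_i$ to vanish.

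Now I would assemble the contributions row by row. The only strata whose edges ever touch row $p$ are $(S_{p-1})_l^{a,b}$, which meets row $p$ as its lower row and so contributes $-\delta_{p-1}(q)$ there, and $(S_p)_l^{a,b}$, which meets it as its upper row and contributes $\delta_p(q)$. Thus at each column $q$ the hypothesis $\Delta_{(p,q)}=0$ reads $\delta_1(q)=0$ on the top row $p=1$, then $\delta_p(q)-\delta_{p-1}(q)=0$ on each interior row $2\le p\le a-1$, and finally $-\delta_{a-1}(q)=0$ on the bottom row $p=a$. Starting from the boundary equation $\delta_1(q)=0$ and propagating through the interior recurrences $\delta_p(q)=\delta_{p-1}(q)$ yields $\delta_i(q)=0$ for all $i$ and $q$, with the bottom-row equation serving only as a redundant consistency check. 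By the previous paragraph this is exactly the assertion that every element of the strata decomposition satisfies the degree criterion.

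The main obstacle to watch is that the strata are edge-disjoint but \emph{not} vertex-disjoint: consecutive strata $(S_{p-1})_l^{a,b}$ and $(S_p)_l^{a,b}$ share the whole of row $p$, so the per-vertex degree constraints do not simply decouple stratum by stratum, and one cannot read off $\delta_i=0$ locally. The point of the argument is that this coupling is only nearest-neighbour and first order, so it is pinned down completely by the two boundary rows; the telescoping from $\delta_1=0$ upward is precisely what converts the single global condition $\Delta=0$ into the separate vanishing of each $\delta_i$.
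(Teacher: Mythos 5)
Your proof is correct and follows essentially the same route as the paper's: reduce to the diagonal part via Lemma \ref{lemma:degreecriterioniff}, observe that the partial transpose of a stratum swaps the degree profiles of its two rows (the content of Corollary \ref{corollary:rowdegrees} in the $(2,b)$ picture), and argue that degree changes from non-DC strata can only cancel on shared rows, which propagates to an impossible cancellation at a boundary row. Your explicit telescoping recurrence $\delta_p(q)=\delta_{p-1}(q)$ with boundary condition $\delta_1(q)=0$ is simply a cleaner formalization of the paper's informal claim that no further non-DC strata can be chosen on either side to cancel out the degree changes.
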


\begin{proof}
	We prove the lemma for row stratified graphs. The same argument holds for column stratified graphs if all references to ``rows'' are replaced by ``columns''. In what follows, we refer to graphs that satisfy the degree criterion as ``DC'', and those that do not as ``non-DC''.
	
	By definition, the elements of a strata decomposition of a graph $G_l^{a,b}$ are second order locally isomorphic to graphs of type $(2,b)$ -- that is, we are only interested in the vertices in the $2$ ``occupied rows'' of the strata. By Corollary \ref{corollary:rowdegrees} if a grid-labelled graph of type $(2,b)$ is non-DC, then the degrees of the vertices in the upper row do not match those of the lower row. Therefore, if a single element $(S_i)_l^{a,b}$ of the strata decomposition of $G_l^{a,b}$ (in this case, corresponding to the subgraph induced by vertices in row $i$ and $i+1$ of $G_l^{a,b}$) is non-DC, then the degree of at least $1$ vertex in these rows will change after partial transpose. If this happens, then of course $G_l^{a,b}$ is non-DC.
	
	To finish the proof of this lemma, we must show that if more than $1$ element of the strata decomposition of $G_l^{a,b}$ is non-DC then $G_l^{a,b}$ is non-DC. This must be considered because conceivably the degree changes in $2$ or more non-DC strata could cancel out in some way, leaving $D(G_l^{a,b})$ equal to $D(\Gamma(G_l^{a,b}))$.
	
	The only way two strata degree changes could cancel out would be if two non-DC strata shared a row of $G_l^{a,b}$. That is, if the strata $(S_i)_l^{a,b}$ and $(S_{i+1})_l^{a,b}$ were non-DC. However, if this were the case then rows $i$ and $i+2$ would have a partial transpose degree change. By the same argument, no additional non-DC strata can be selected either side of these to cancel out the degree changes. The lemma follows.
\end{proof}

\begin{theorem}
\label{theorem:stratified}
	A stratified grid-labelled graph $G_l^{a,b}$ is separable if and only if it
	satisfies the degree criterion.
\end{theorem}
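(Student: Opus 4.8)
The plan is to prove the theorem by reducing the separability question for a stratified grid-labelled graph to the separability of the individual elements of its strata decomposition, each of which lives effectively in a $\mathbb{C}^2\otimes\mathbb{C}^b$ setting where the degree criterion is already known to be necessary and sufficient by Theorem~\ref{theorem:degrees}. One direction is free: the degree criterion is a necessary condition for separability by Theorem~\ref{theorem:degreecriterion}, so if $G_l^{a,b}\in\mathcal{S}$ then it satisfies the degree criterion. All the work is in the converse.

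For the converse, suppose $G_l^{a,b}$ satisfies the degree criterion. By Lemma~\ref{lemma:degreecriterioniff} the horizontal and vertical parts $H_l^{a,b}$ and $V_l^{a,b}$ of the HVD decomposition (Lemma~\ref{lemma:edgedecomposition}) are irrelevant to the criterion and, by Corollary~\ref{corollary:horizontalandverticaledgesonly}, are separable on their own; so it suffices to show that the diagonal part $D_l^{a,b}$ is separable. First I would invoke Lemma~\ref{lemma:stratadegree}: since $G_l^{a,b}$ satisfies the degree criterion, each stratum $(S_i)_l^{a,b}$ of the strata decomposition $S$ (Lemma~\ref{lemma:stratadecomposition}) also satisfies the degree criterion. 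Each such stratum only has diagonal edges between two adjacent rows (resp.\ columns), so it is second order locally isomorphic to a grid-labelled graph of type $(2,b)$; by Proposition stating that second order local isomorphism preserves separability, and by Theorem~\ref{theorem:degrees}, each stratum satisfying the degree criterion is therefore separable, i.e.\ $(S_i)_l^{a,b}\in\mathcal{S}$.

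The final step is to assemble these pieces. The strata decomposition together with $H_l^{a,b}$ and $V_l^{a,b}$ forms a decomposition $X$ of the whole graph $G_l^{a,b}$ into edge-disjoint subgraphs, every one of which we have just shown lies in $\mathcal{S}$. Applying Theorem~\ref{theorem:decomposition} (Separable Decompositions) then yields $G_l^{a,b}\in\mathcal{S}$, completing the converse and hence the equivalence.

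The main obstacle I anticipate is the cancellation issue that Lemma~\ref{lemma:stratadegree} is designed to handle: a priori, the degree changes induced by the partial transpose in two different non-separable strata could conceivably cancel, so that $G_l^{a,b}$ satisfies the global degree criterion even though some stratum does not. The genuine content of the argument is therefore already absorbed into Lemma~\ref{lemma:stratadegree}, whose proof rules this out using the observation that overlapping non-DC strata on a shared row would force a net degree change on a neighbouring row. Given that lemma, the theorem itself is a clean bookkeeping assembly of the HVD decomposition, the strata decomposition, the $(2,b)$ characterisation, and Theorem~\ref{theorem:decomposition}, so I would present the proof compactly and let the earlier lemmas carry the weight.
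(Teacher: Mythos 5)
Your proposal is correct and follows essentially the same route as the paper's own proof: necessity from Theorem \ref{theorem:degreecriterion}, then sufficiency via the strata decomposition, Lemma \ref{lemma:stratadegree}, the $(2,b)$ sufficiency of Theorem \ref{theorem:degrees}, and assembly through Theorem \ref{theorem:decomposition}. If anything, you are slightly more careful than the paper in explicitly folding the horizontal and vertical parts of the HVD decomposition into the final separable decomposition via Corollary \ref{corollary:horizontalandverticaledgesonly}, a step the paper glosses over.
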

\begin{proof}
	The degree criterion has been proved to be necessary for separability, hence
	we must only prove sufficiency.
	
	From Lemma \ref{lemma:stratadecomposition}, for any stratified grid-labelled
	graph $G_l^{a,b}$ there exists a decomposition $S=\{(S_i)_l^{a,b}\}$ for $%
	1\le i<b$. Each of the $(S_i)_l^{a,b}$ are second order
	locally isomorphic to a graph $(S_i)_l^{2,b}$ -- informally, we can discard each isolated vertex. If the degree criterion holds
	for $G_l^{a,b}$, then by Lemma \ref{lemma:stratadegree} it holds for each $%
	(S_i)_l^{2,b}$. Hence, by sufficiency of the degree criterion for separability
	for grid-labelled graphs with $a=2$, each $(S_i)_l^{2,b}$ is separable, and
	the strata decomposition $S$ is a separable decomposition of $G_l^{a,b}$.
	Separability of $G_l^{a,b}$ follows from Theorem \ref{theorem:decomposition}.
\end{proof}
Note that the work of Wu \cite{Wu2006} explores separability of matrices with line-sum symmetric blocks, and proves results of the same flavour to what we do in this section. Perhaps there is a link between grid-labelled graphs that are locally isomorphic to stratified grid-labelled graphs and combinatorial Laplacian matrices with line-sum symmetric blocks.

In this section we have explored separability criteria for grid-labelled graphs in detail. In the next section, we use these techniques to classify all grid-labelled graphs of type $(3,3)$ that satisfy the degree criterion. In some cases we are able to go further and identify certain families of separable grid-labelled graphs.
\section{Separability in $3\times 3$}
\label{section:separabilityin3x3}
Let us investigate separability for grid-labelled graphs of type $(3,3)$. We will attempt to go as far as possible using only the degree criterion. As previously seen, only the diagonal edges of a graph are relevant for satisfying the degree criterion (see Lemma \ref{lemma:degreecriterioniff}). To keep things simple, we will not consider any graphs with horizontal or vertical edges. We will need to further classify diagonal edges into \emph{uphill} and \emph{downhill} edges. Uphill edges are those which travel from bottom-left vertices to top-right vertices, and downhill edges are those which travel from top-left to bottom-right. The grid-labelled graph in Figure \ref{fig:uphilldownhill} has $3$ downhill edges and $2$ uphill edges. Here is a formal definition.
\begin{definition}[Uphill and downhill edges]
	Let $G_l^{a,b}$ be a grid-labelled graph. Let $\{(i,j),(k,l)\}\in E(G_l^{a,b})$ be a diagonal edge belonging to $G_l^{a,b}$. We say that the edge is \emph{uphill} if $\text{sgn}(i-k)\neq\text{sgn}(j-l)$, and \emph{downhill} if $\text{sgn}(i-k)=\text{sgn}(j-l)$.
\end{definition}
\begin{figure}[ht!]
\centering
\includegraphics[scale=0.15]{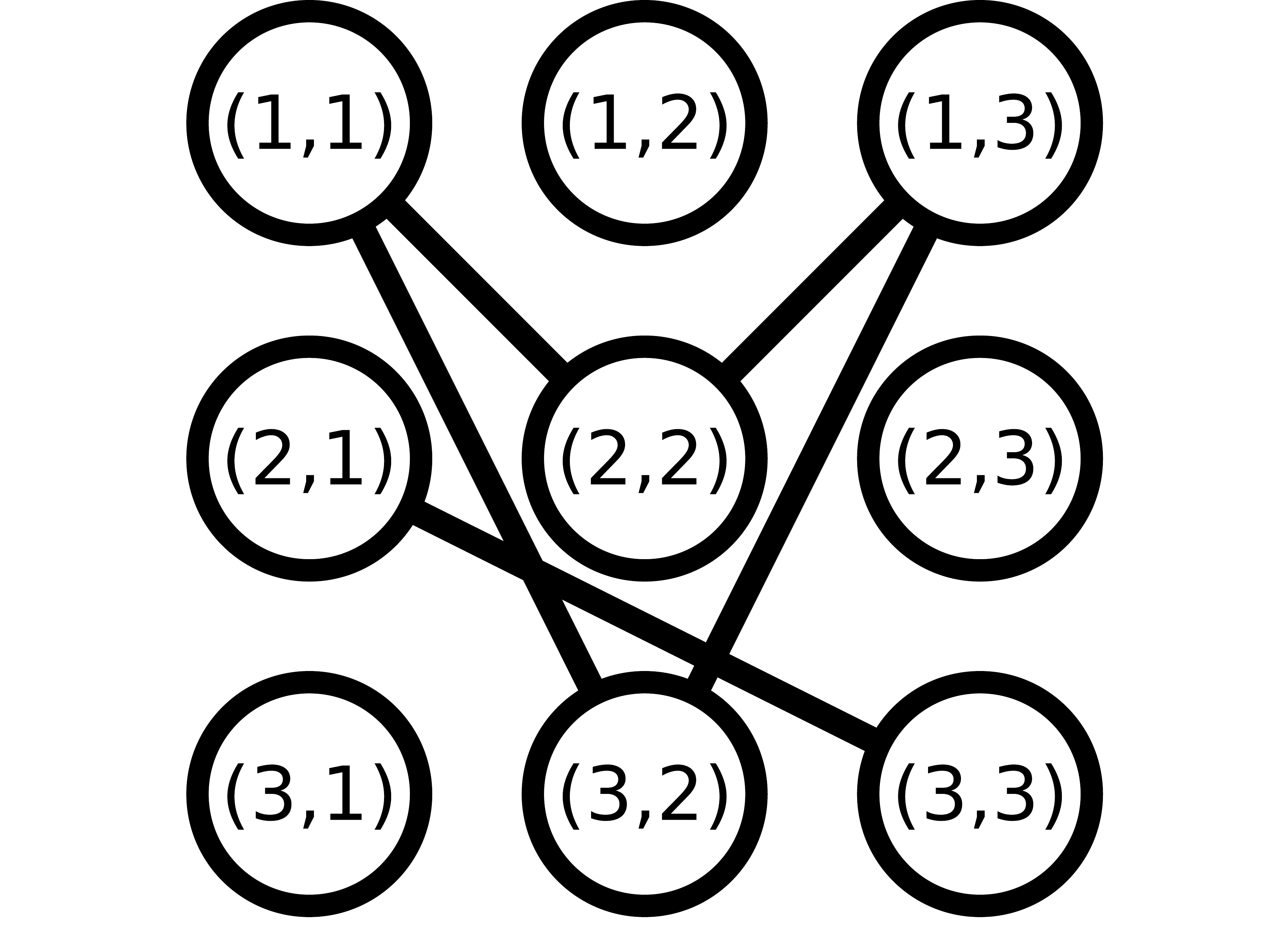}
\caption{A grid-labelled graph with $2$ uphill and $3$ downhill diagonal edges.}
\label{fig:uphilldownhill}
\end{figure}
In order for a grid-labelled graph to satisfy the degree criterion, the degrees of its vertices must not change after the partial transpose operation. Each edge of the graph contributes $1$ to the degrees of two vertices, $v_1$ and $v_2$. If the edge is diagonal, then after the partial transpose operation, this edge will contribute $1$ to the degrees of two different vertices, $w_1\neq v_1$ and $w_2\neq v_2$. 

Let $G_l^{3,3}$ be a grid-labelled graph with a single diagonal edge, $\{(1,1),(3,3)\}$. This graph does not satisfy the degree criterion. Indeed, there is no grid-labelled graph with a single diagonal edge that satisfies the degree criterion. Suppose we want to make $G_l^{3,3}$ satisfy the degree criterion by adding a single diagonal edge. In order for this to be the case, the new edge must be placed such that after partial transpose, the vertices $(1,1)$ and $(3,3)$ have degree $1$. Also, the vertices to which the edge moves to after partial transpose, $(1,3)$ and $(3,1)$, must have degree $1$ \emph{before} partial transpose. The only edge that can be added such that the graph will have these two properties is easily seen to be the edge $\{(1,3),(3,1)\}$.

This reasoning about pre and post-transpose degree properties can be generalised to grid-labelled graphs with more edges via the concept of \emph{edge contributions}.
\subsection{Edge contributions}
\begin{definition}[Edge contribution matrix]
	Let $G_l^{a,b}$ be a grid-labelled graph with edge set $E$. For each diagonal edge $\{(i,j),(k,l)\}\in E,$ define its $a\times b$ \emph{edge contribution matrix} $A_{(i,j),(k,l)}$ such that
	\begin{align*}
	[A_{(i,j),(k,l)}]_{pq}:=\begin{cases}+1~&\text{ if }p=i,q=j\text{ or } p=k,q=l \text{;}\\-1~&\text{ if } p=i,q=l \text{ or } p=k,q=j;\\0 ~&\text{ otherwise.}\end{cases}
	\end{align*}
\end{definition} 

\begin{definition}[Graph contribution matrix]
	Let $G_l^{a,b}$ be a grid-labelled graph with diagonal edge set $D\subseteq E(G_l^{a,b})$. The \emph{contribution matrix} of $G_l^{a,b}$ is defined
	\begin{align}
	\label{eq:contributionSum}
	C(G_l^{a,b})=\sum_{\{(i,j),(k,l)\}\in D}A_{(i,j),(k,l)}.
	\end{align}
\end{definition}
The contribution matrix of a grid-labelled graph encodes the pre and post-transpose degree contributions of all its edges. If the partial transpose changes the degree of a vertex, then this means that there is some edge whose pre and post-transpose degree contribution do not match and cancel out. Hence there is a non-zero component somewhere in the contribution matrix of the graph. The following lemma is easily seen from this line of argument.

\begin{lemma}
	\label{lemma:zerocontribution}
	Let $G_l^{a,b}$ be a grid-labelled graph with contribution matrix $C(G_l^{a,b})$. Then $G_l^{a,b}$ satisfies the degree criterion if and only if $C(G_l^{a,b})$ is equal to the $a\times b$ matrix with all entries equal to $0$.
\end{lemma}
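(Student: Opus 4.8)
The plan is to unpack the definitions of the contribution matrices and show directly that the entry $[C(G_l^{a,b})]_{pq}$ measures exactly the net change in the degree of the vertex labelled $(p,q)$ under the partial transpose. First I would recall that for a single diagonal edge $\{(i,j),(k,l)\}$, the partial transpose sends it to the edge $\{(k,j),(i,l)\}$, by the definition of $\Gamma$. Thus before transposition this edge contributes $+1$ to the degrees of the vertices $(i,j)$ and $(k,l)$, while after transposition it contributes $+1$ to the degrees of $(k,j)$ and $(i,l)$ instead. The \emph{change} in degree attributable to this single edge is therefore $+1$ at $(i,j)$ and $(k,l)$, and $-1$ at $(i,l)$ and $(k,j)$, and $0$ everywhere else --- which is precisely the definition of the edge contribution matrix $A_{(i,j),(k,l)}$.

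Next I would sum this over all diagonal edges. Since degrees add linearly over edges, and since horizontal and vertical edges are invariant under partial transpose (a point already recorded in the discussion preceding the statement and formalised in the proof of Lemma~\ref{lemma:degreecriterioniff}), only the diagonal edge set $D$ contributes any change. Hence for every grid point $(p,q)$,
\begin{align*}
[C(G_l^{a,b})]_{pq} = d_{G_l^{a,b}}((p,q)) - d_{\Gamma(G_l^{a,b})}((p,q)),
\end{align*}
where the right-hand side is the $(p,q)$ diagonal entry of the matrix $\Delta = D(G_l^{a,b}) - D(\Gamma(G_l^{a,b}))$ appearing in the proof of Theorem~\ref{theorem:degreecriterion}. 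In other words, $C(G_l^{a,b})$ is nothing more than $\Delta$ rearranged from a diagonal matrix indexed by single vertices into an $a\times b$ array indexed by grid coordinates.

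With this identification in hand the lemma is immediate. By definition $G_l^{a,b}$ satisfies the degree criterion exactly when $D(G_l^{a,b}) = D(\Gamma(G_l^{a,b}))$, i.e.\ when $\Delta = 0$, i.e.\ when every entry $[C(G_l^{a,b})]_{pq}$ vanishes, which is the same as saying $C(G_l^{a,b})$ is the all-zeros $a\times b$ matrix. This gives both directions of the ``if and only if'' simultaneously.

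The only step requiring genuine care --- and the place I would expect a careless reader to stumble --- is verifying that the four-way $\pm 1$ pattern in the edge contribution matrix really does match the before-minus-after degree accounting once one is careful about coincidences: if the edge is diagonal then $i\neq k$ and $j\neq l$, so the four grid points $(i,j),(k,l),(i,l),(k,j)$ are genuinely distinct and no cancellation happens within a single edge's contribution. Establishing this non-degeneracy is what makes the entrywise definition of $A_{(i,j),(k,l)}$ well-posed and guarantees that the contributions do not accidentally collapse; everything else is linearity of degree over the edge set.
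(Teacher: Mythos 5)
Your proof is correct and follows essentially the same route as the paper's: both arguments establish that the contribution matrix records, entry by entry, the difference between vertex degrees before and after the partial transpose, from which the equivalence with the degree criterion is immediate. Your explicit identification of $C(G_l^{a,b})$ with the diagonal-degree-difference matrix $\Delta = D(G_l^{a,b}) - D(\Gamma(G_l^{a,b}))$ rearranged onto the grid is a slightly cleaner packaging of the paper's per-vertex counting of $+1$ and $-1$ entries, but the underlying idea is identical.
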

\begin{proof}
	Let $G_l^{a,b}$ be a grid-labelled graph. Without loss of generality we may assume that it contains only diagonal edges. From the definition of the contribution matrix of a graph it is clear that for each vertex $(i,j)\in V(G_l^{a,b})$ with degree $d$, there will be $d$ edge contribution matrices in the sum in Equation (\ref{eq:contributionSum}) that have $+1$ for their $ij^{\text{th}}$ element.
	
	If $G_l^{a,b}$ satisfies the degree criterion then for each vertex $(i,j)\in V(G_l^{a,b})$ with degree $d$ there will be $d$ edges $\{(i,l),(k,j)\}\in E(G_l^{a,b})$ such that after partial transpose, an endpoint of each edge will be the vertex $(i,j)$. By definition, the edge contribution matrices of these edges will have $-1$ in their $ij^{\text{th}}$ entry. Hence, the $ij^{\text{th}}$ entry of $C(G_l^{a,b})$ is zero. This is true for all entries of $C(G_l^{a,b})$.
	
	For the other direction, if $C(G_l^{a,b})$ is equal to the zero matrix then for each element of the edge contribution matrices in the sum in Equation (\ref{eq:contributionSum}) there will be an equal number of positive and negative entries. This means that the degrees of the vertices of the graph are the same before and after the partial transpose, by definition of the graph contribution matrix.
\end{proof}
\begin{figure}[h!]
	\centering
	\includegraphics[scale=0.15]{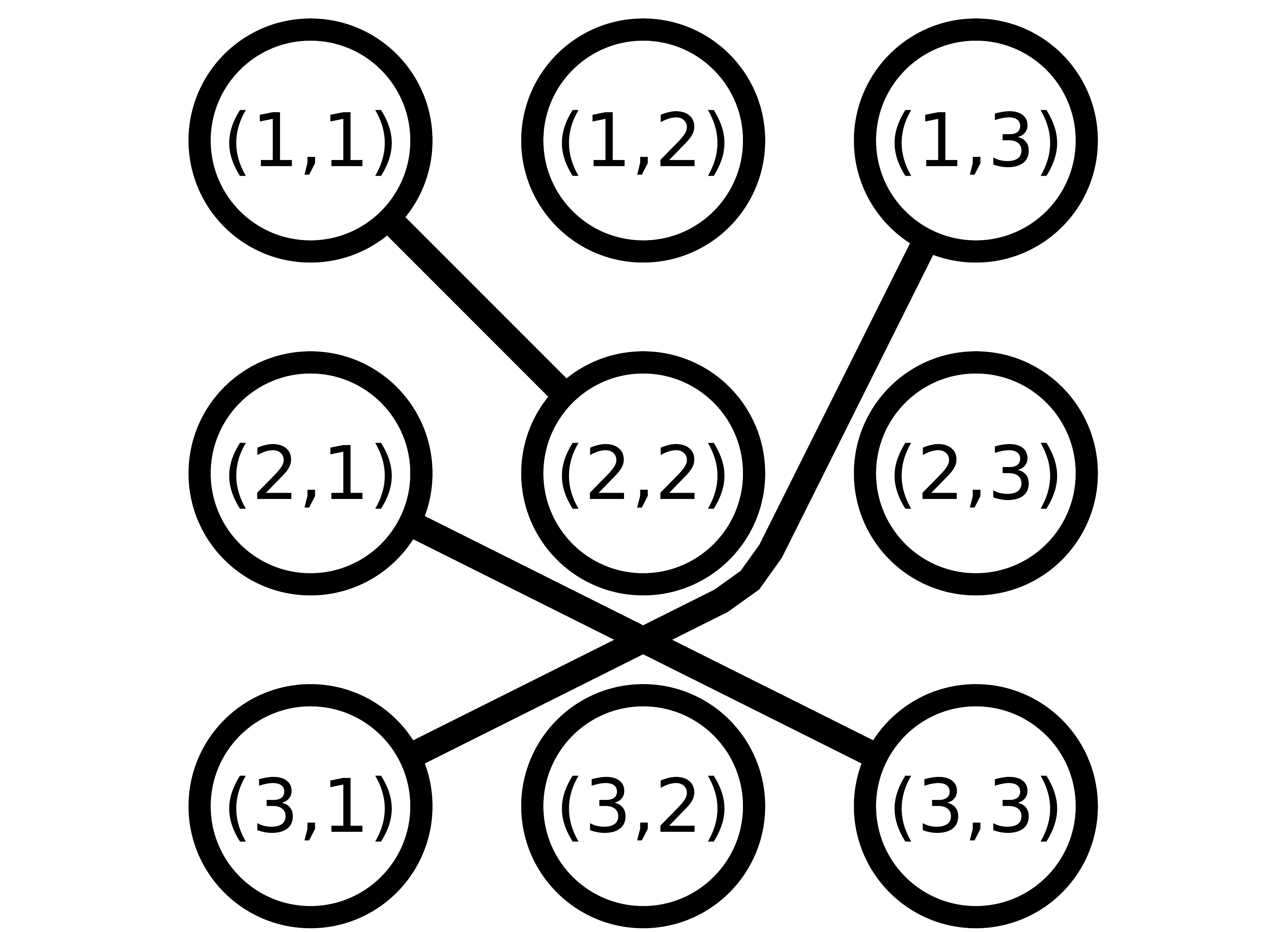}
	\caption{A grid-labelled graph that does not satisfy the degree criterion. Note that its contribution matrix is not equal to the zero matrix.}
	\label{fig:exampleBoardGraph}
\end{figure}
\begin{example}
\label{example:nicetableexample}
	\emph{The contribution matrix of a graph $G_l^{3,3}$ with three edges, $\{(1,1),(2,2)\}, \{(3,1),(1,3)\}$ and $\{(2,1),(3,3)\}$, as illustrated in Figure \ref{fig:exampleBoardGraph}, is} 
	\textbf{\begin{align*}
		C(G_l^{a,b})&=A_{(1,1),(2,2)}+A_{(3,1),(1,3)}+A_{(2,1),(3,3)}\\
		&=
		\begin{pmatrix}
		+1&-1&0\\
		-1&+1&0\\
		0&0&0\\
		\end{pmatrix}
		+
		\begin{pmatrix}
		-1&0&+1\\
		0&0&0\\
		+1&0&-1
		\end{pmatrix}
		+
		\begin{pmatrix}
		0&0&0\\
		+1&0&-1\\
		-1&0&+1
		\end{pmatrix}\\
		&=
		\begin{pmatrix}
		0&-1&+1\\
		0&+1&-1\\
		0&0&0
		\end{pmatrix}.
		\end{align*}}
	\emph{It is clear from the fact that the contribution matrix $C(G_l^{3,3})$ is non-zero that the graph $G_l^{3,3}$ does not satisfy the degree criterion. However, it is easily checked that adding the edge $\{(1,3),(2,2)\}$ causes the contribution matrix to become equal to the zero matrix.}
\end{example}

Let us now discuss a method of pictorially representing the contribution matrix of a graph: \emph{contribution tables}. 
\begin{definition}[Contribution table]
	Let $C$ be an $a\times b$ edge contribution matrix of dimension $a \times b$. The \emph{contribution table} of $C$ is an $a\times b$ grid, whose cells are populated with diagonal dashes running from top left to bottom right (\emph{down} dashes), or bottom left to top right (\emph{up} dashes).
	
	The dash placement on the table corresponding to $C$ is as follows.
	If $C_{ij}=+1$ then place a down dash in the corresponding grid square.
	If $C_{ij}=-1$ then place an up dash in the corresponding grid square.
\end{definition}
We call dashes in the same grid square but in different directions \emph{complementary}.
Two complementary dashes are called a \emph{cross} (because the drawing looks like an `X'). If a cell contains a dash that can not be uniquely paired with a complementary dash, then that dash is called \emph{unmatched}. The \emph{addition} of two contribution tables is the contribution table with all dashes from both tables. The contribution table of a graph is the addition of the contribution tables of each of its edges.

\newcommand{\boardscale}{0.2}
\begin{example}
	\emph{The contribution table of the grid-labelled graph $G_l^{3,3}$ defined in Example \ref{example:nicetableexample} and illustrated in Figure \ref{fig:exampleBoardGraph} is found to be equal to}
	\begin{align*}
	\vcenter{\hbox{\includegraphics[scale=\boardscale]{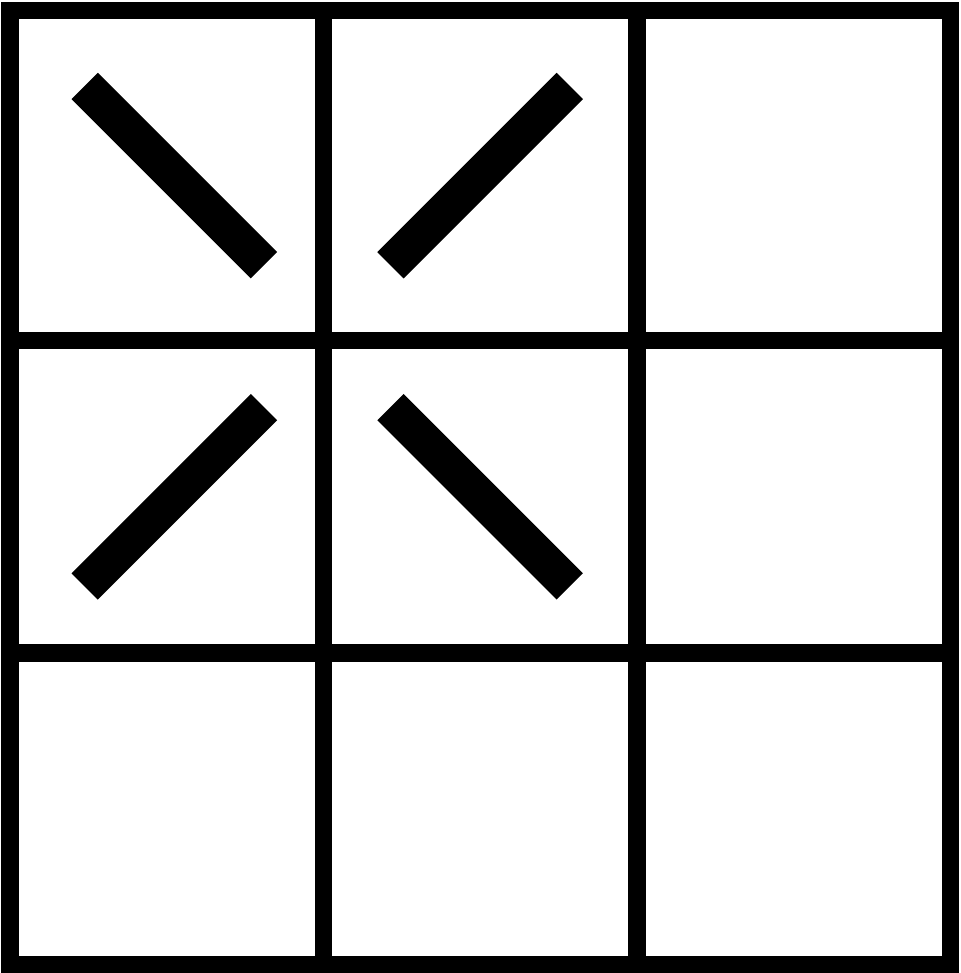}}}+\vcenter{\hbox{\includegraphics[scale=\boardscale]{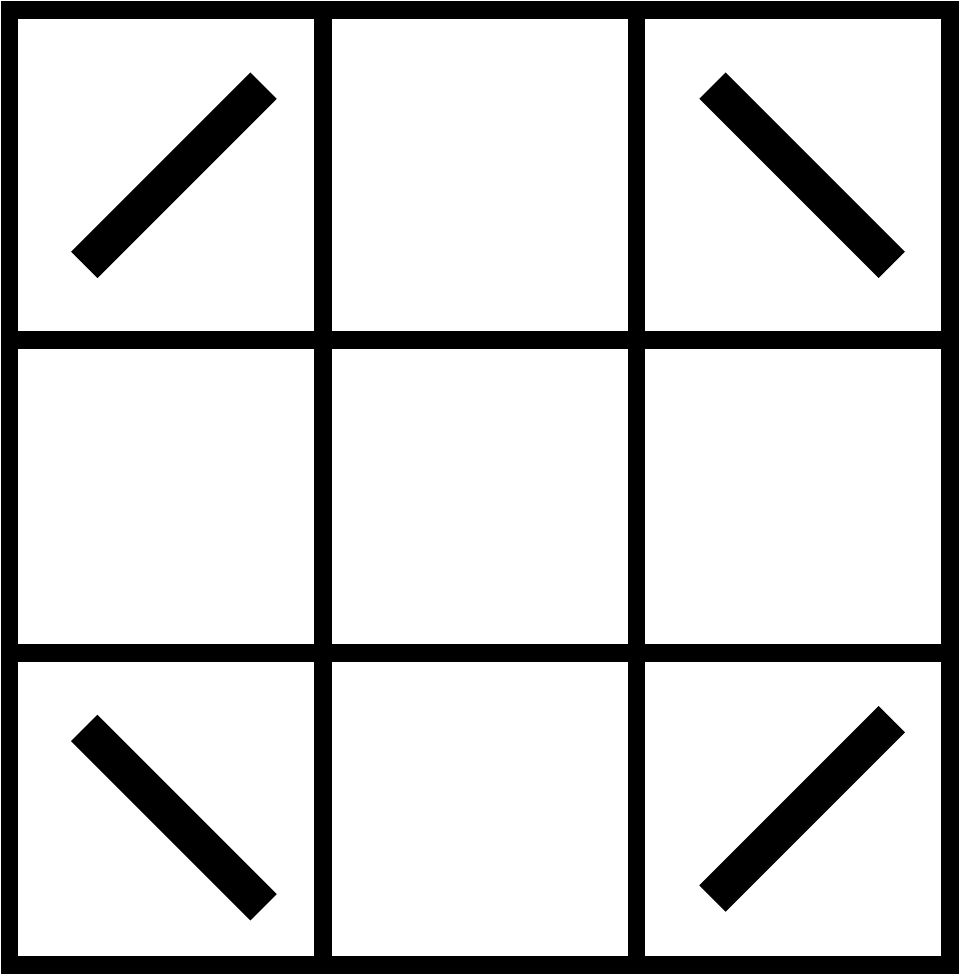}}}+\vcenter{\hbox{\includegraphics[scale=\boardscale]{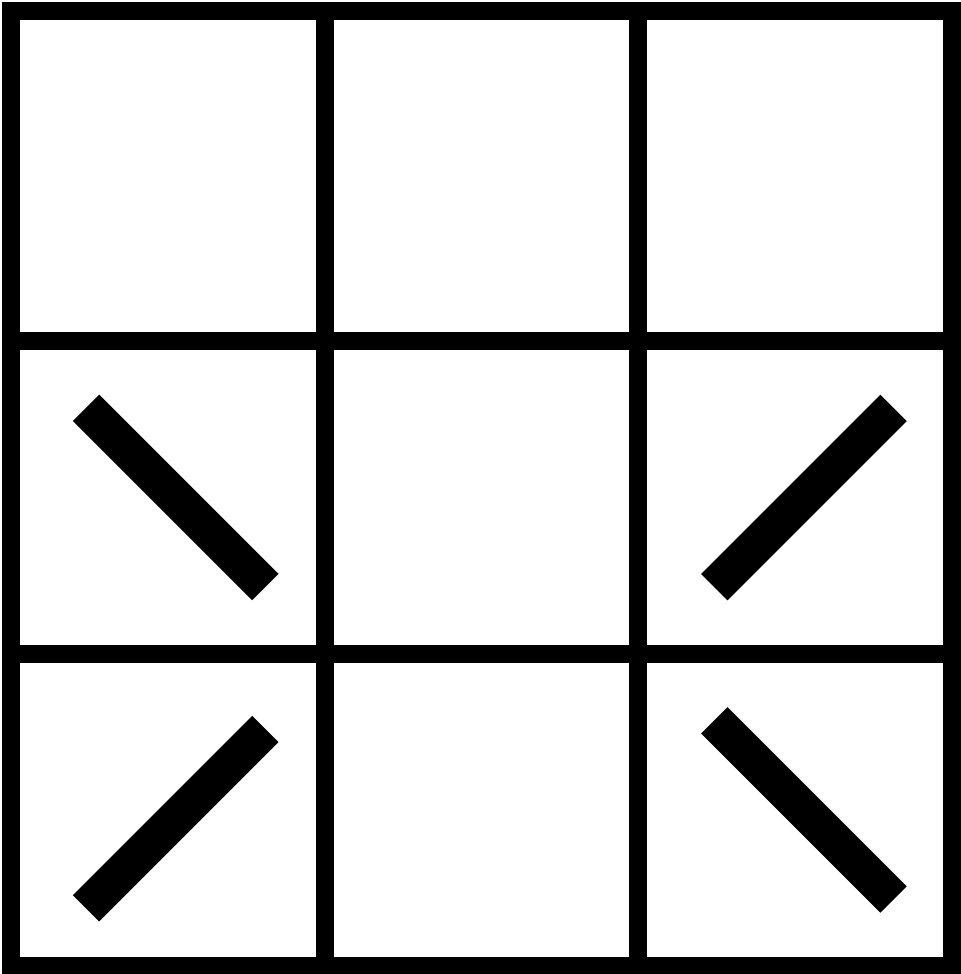}}}=\vcenter{\hbox{\includegraphics[scale=\boardscale]{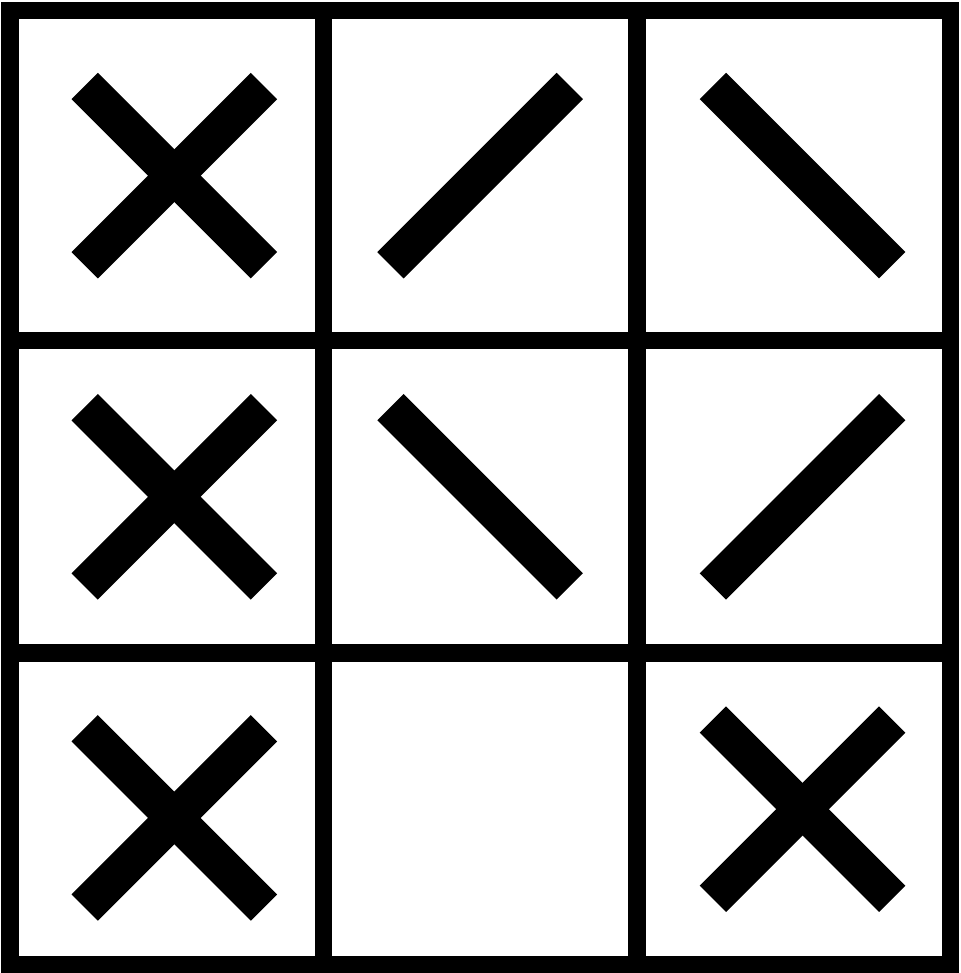}}}.
	\end{align*}
	\emph{The contribution table of $G_l^{3,3}$ has four crosses and four unmatched dashes.}
\end{example}

The next lemma follows directly from Lemma \ref{lemma:zerocontribution}.
\begin{lemma}
	Let $G_l^{a,b}$ be a grid-labelled graph. Then $G_l^{a,b}$ satisfies the degree criterion if and only if its contribution table contains no unmatched dashes.
\end{lemma}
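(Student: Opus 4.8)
The plan is to show that the contribution table is merely a graphical bookkeeping device for the contribution matrix $C(G_l^{a,b})$, so that the statement reduces immediately to Lemma \ref{lemma:zerocontribution}. First I would recall how the table is built: for each diagonal edge $\{(i,j),(k,l)\}$ the associated edge contribution matrix $A_{(i,j),(k,l)}$ places a $+1$ at cells $(i,j)$ and $(k,l)$ and a $-1$ at cells $(i,l)$ and $(k,j)$, which in the table corresponds to two down dashes and two up dashes respectively. Since the contribution table of the graph is by definition the superposition of the tables of its edges, each down dash in a cell records a $+1$ and each up dash records a $-1$ contributed to that cell.

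Next I would fix an arbitrary cell $(i,j)$ and let $p$ denote the number of down dashes and $q$ the number of up dashes deposited there by all edges. Summing the per-edge contributions, the corresponding entry of the contribution matrix is exactly
\begin{align*}
[C(G_l^{a,b})]_{i,j}=p-q.
\end{align*}
Pairing dashes into crosses uses up $\min(p,q)$ complementary pairs, and the dashes that survive without a partner are precisely the $|p-q|$ excess dashes of whichever orientation is in the majority. Hence the number of unmatched dashes in cell $(i,j)$ equals $|p-q|=\bigl|[C(G_l^{a,b})]_{i,j}\bigr|$.

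From this the equivalence is immediate: the contribution table contains no unmatched dashes if and only if $|p-q|=0$, i.e. $[C(G_l^{a,b})]_{i,j}=0$, in every cell, which is to say $C(G_l^{a,b})$ is the all-zero $a\times b$ matrix. Combining this with Lemma \ref{lemma:zerocontribution}, which states that $G_l^{a,b}$ satisfies the degree criterion exactly when $C(G_l^{a,b})$ vanishes, yields the claim. There is no real obstacle here; the only point requiring care is checking that ``unmatched'' is interpreted consistently with the cell-wise excess $|p-q|$ rather than some more global pairing, but since a cross is formed only between complementary dashes in a single cell, the local count $\min(p,q)$ of matchable pairs is forced, and the remaining $|p-q|$ dashes are exactly the unmatched ones.
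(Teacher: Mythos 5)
Your proof is correct and follows the same route as the paper, which simply observes that the lemma follows directly from Lemma \ref{lemma:zerocontribution}; you have just made explicit the bookkeeping (down dashes record $+1$, up dashes record $-1$, so the unmatched count in a cell is $\lvert [C(G_l^{a,b})]_{i,j}\rvert$) that the paper leaves implicit.
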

\subsection{Building-block grid-labelled graphs}
We have introduced edge contribution matrices and tables in order to find $3\times 3$ grid-labelled graphs that satisfy the degree criterion. We shall see that such grid-labelled graphs can be obtained from a small set of grid-labelled graphs which we call \emph{building-blocks}.
\begin{figure}[ht!]
	\centering
	\includegraphics[scale=0.3]{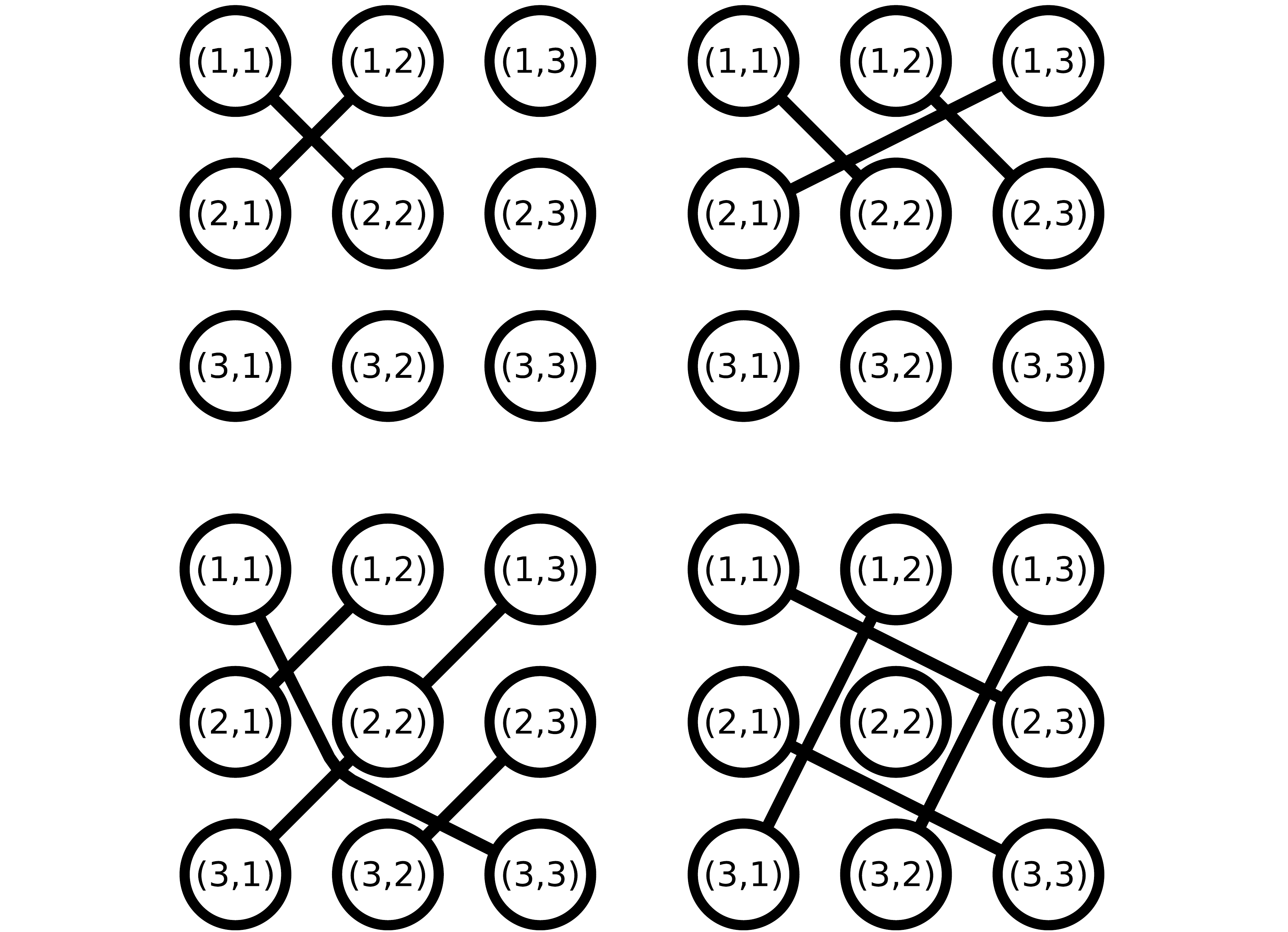}
	\caption{Clockwise from top left, the graphs $(B_2)_l^{3,3}, (B_3)_l^{3,3}, (B_4)_l^{3,3},$ and $ (B_5)_l^{3,3}$.}
	\label{fig:b2graph}
\end{figure}
\begin{definition}[Building-blocks]
	The following grid-labelled graphs are illustrated in Figure \ref{fig:b2graph}.
	\begin{itemize}
		\item The \emph{criss-cross} $(B_2)_l^{3,3}$ has two edges: $%
		\{(1,1),(2,2)\}$ and $\{(1,2),(2,1)\}$.
		
		\item The \emph{tally} $(B_3)_l^{3,3}$ has three edges  $\{(1,1),(2,2)\}$, $\{(1,2),(2,3)\}$ and $\{(2,1),(1,3)\}$.
		\item The \emph{cross-hatch} $(B_4)_l^{3,3}$ has four edges: $%
		\{(1,1),(2,3)\}$, $\{(2,1),(3,3)\}$, $\{(1,2),(3,1)\}$ and $\{(1,3),(3,2)\}$.
		\item The \emph{skew-mesh} $(B_5)_l^{3,3}$ has five edges: $%
		\{(1,1),(3,3)\}$, $\{(1,2),(2,1)\}$ and $\{(1,3),(2,2)\}$, $\{(2,2),(3,1)\}$%
		, $\{(2,3),(3,2)\}$.
	\end{itemize}
\end{definition} 

From our exploration so far of the contribution table framework, we can observe that the only type $(3,3)$ grid-labelled graphs with $2$ diagonal edges satisfying the degree criterion are locally isomorphic to the graph $(B_2)_l^{3,3}$, \emph{i.e.}, the $3\times 3$ criss-cross. In the next section we will show that with the use of an additional lemma, this can be proved formally.

\subsection{$3\times 3$ degree criterion with $2$ diagonal edges}
\begin{lemma}
	\label{lemma:matchedBoards}
	Let $G_l^{a,b}$ be a grid-labelled graph with $m$ edges. If $G_l^{a,b}$ satisfies the degree criterion, then its contribution table will contain exactly $2m$ crosses. 
\end{lemma}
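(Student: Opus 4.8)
The plan is a straightforward dash-counting argument built on the two preceding lemmas. Since this section restricts attention to grid-labelled graphs with only diagonal edges, each of the $m$ edges $\{(i,j),(k,l)\}$ is diagonal, so $i\neq k$ and $j\neq l$, and the four grid positions $(i,j),(k,l),(i,l),(k,j)$ appearing in the edge contribution matrix $A_{(i,j),(k,l)}$ are pairwise distinct. First I would record that this matrix has exactly two $+1$ entries (at $(i,j)$ and $(k,l)$) and exactly two $-1$ entries (at $(i,l)$ and $(k,j)$); translating into the contribution table, each edge therefore places exactly two \emph{down} dashes and exactly two \emph{up} dashes, in four distinct cells.

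Summing over all $m$ edges, the full contribution table of $G_l^{a,b}$ contains exactly $2m$ down dashes and $2m$ up dashes, hence $4m$ dashes in total. I would stress that this count holds for every grid-labelled graph with $m$ diagonal edges, regardless of whether the degree criterion is satisfied; it is purely a bookkeeping fact about the definition of the edge contribution matrices.

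The degree criterion enters only at the last step. By the lemma immediately preceding this one, $G_l^{a,b}$ satisfies the degree criterion if and only if its contribution table has no unmatched dashes. Assuming the criterion holds, every one of the $4m$ dashes is therefore matched, i.e.\ belongs to a cross. Since a cross consists of exactly one up dash paired with one complementary down dash, the $4m$ dashes partition into $4m/2=2m$ crosses, as claimed.

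The only point requiring care --- and the main (albeit minor) obstacle --- is verifying that the pairing into crosses is unambiguous and exhausts all dashes cell by cell. Within a single cell with $p$ down dashes and $n$ up dashes there are $\min(p,n)$ crosses and $|p-n|$ unmatched dashes, while the contribution matrix entry there equals $p-n$. By Lemma \ref{lemma:zerocontribution} the degree criterion forces $C(G_l^{a,b})=0$, i.e.\ $p=n$ in every cell, so no dashes are left unmatched and the number of crosses in that cell is exactly $p$. Summing over all cells recovers the total number of down dashes, namely $2m$, confirming the count and making precise the equivalence between ``no unmatched dashes'' and the balanced cancellation demanded by the degree criterion.
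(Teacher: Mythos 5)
Your proof is correct and follows essentially the same route as the paper's: count $4m$ dashes from the $m$ edges, then use the no-unmatched-dashes characterisation of the degree criterion to pair them into $4m/2 = 2m$ crosses. Your cell-by-cell verification via Lemma \ref{lemma:zerocontribution} is a more careful elaboration of the same bookkeeping, not a different argument.
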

\begin{proof}
	Each edge in a grid-labelled graph contributes $4$ dashes to the contribution table, so there are $4m$ dashes on the table. If a grid-labelled graph satisfies the degree criterion, then each dash must be matched, meaning there are $4m/2=2m$ crosses.
\end{proof}

\begin{lemma}
	\label{lemma:twodiagonaledges}
	Let $G_l^{3,3}$ be a grid-labelled graph with $2$ diagonal edges. Then $G_l^{3,3}
	$ satisfies the degree criterion if and only if it is locally isomorphic to $(B_2)_l^{3,3}$.
\end{lemma}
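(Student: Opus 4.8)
The plan is to reduce the degree criterion to the vanishing of the contribution matrix via Lemma~\ref{lemma:zerocontribution}, and then to analyse exactly how two diagonal edges can cancel each other's contributions. Write the two diagonal edges as $e_1=\{(i_1,j_1),(k_1,l_1)\}$ and $e_2=\{(i_2,j_2),(k_2,l_2)\}$, let $V(e)$ denote the two-element endpoint set of an edge $e$, and for a diagonal edge $e=\{(i,j),(k,l)\}$ let $\tau(e):=\{(i,l),(k,j)\}$ be its partial transpose, equivalently its \emph{counterpart}. In the contribution table, $e$ places down dashes at the cells of $V(e)$ and up dashes at the cells of $V(\tau(e))$, and the degree criterion holds precisely when every dash is matched into a cross.

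The first thing I would record is that a single diagonal edge never forms a cross on its own: for $e=\{(i,j),(k,l)\}$ with $i\neq k$ and $j\neq l$, no endpoint of $e$ equals an endpoint of $\tau(e)$, since $j\neq l$ rules out $(i,j)=(i,l)$ and $i\neq k$ rules out $(i,j)=(k,j)$, and similarly for the remaining pairs; thus $V(e)\cap V(\tau(e))=\emptyset$. Consequently the two down dashes of $e_1$ cannot be matched against $e_1$'s own up dashes and must be matched against the up dashes of $e_2$. Carrying this out at the level of multisets — each down dash of $e_1$ lies outside $V(\tau(e_1))$, so on a vanishing contribution table all its copies must be supplied by $V(\tau(e_2))$ — forces the two distinct cells of $V(e_1)$ to be exactly the two distinct cells of $V(\tau(e_2))$. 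Hence $e_1=\tau(e_2)$, equivalently (since $\tau$ is an involution) $e_2=\tau(e_1)$: the two edges form a counterpart pair. Lemma~\ref{lemma:matchedBoards} corroborates the bookkeeping, since it guarantees exactly $2m=4$ crosses, i.e.\ all eight dashes matched.

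Once the edges are known to be a counterpart pair $e_1=\{(i,j),(k,l)\}$, $e_2=\{(i,l),(k,j)\}$, they occupy only the rows $\{i,k\}$ and the columns $\{j,l\}$, so I would exhibit an explicit local isomorphism: choose row and column permutations $\pi,\sigma$ of $[3]$ with $\pi(i)=1,\pi(k)=2$ and $\sigma(j)=1,\sigma(l)=2$. Under $(\pi,\sigma)$ the edges map to $\{(1,1),(2,2)\}$ and $\{(1,2),(2,1)\}$, which are exactly the edges of the criss-cross $(B_2)_l^{3,3}$, giving $G_l^{3,3}\cong_{3,3}(B_2)_l^{3,3}$. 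For the converse, $(B_2)_l^{3,3}$ is itself a counterpart pair and hence pair-symmetric, so it is separable and therefore satisfies the degree criterion by necessity (Theorem~\ref{theorem:degreecriterion}); because a local isomorphism merely permutes the rows and columns of the contribution table it can neither create nor destroy unmatched dashes, so any $G_l^{3,3}\cong_{3,3}(B_2)_l^{3,3}$ also satisfies the degree criterion. Equivalently, the converse follows from preservation of separability under local isomorphism (Proposition~\ref{proposition:localisomorphism}) combined with necessity of the degree criterion.

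The main obstacle is the forward direction's matching argument: the real content is showing that two diagonal edges leave a vanishing contribution matrix \emph{only} when one is the partial transpose of the other, and some care is needed in the multiset bookkeeping to rule out degenerate configurations (for instance where the two edges might share a vertex, or where a cell might receive two like dashes). Everything after that conclusion — the explicit permutations and the converse — is routine.
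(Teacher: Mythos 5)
Your proof is correct, and it takes a noticeably tighter route than the paper's. The paper also works inside the contribution framework, but its argument is by inspection: it invokes Lemma~\ref{lemma:matchedBoards} to say the table must carry exactly four crosses, asserts that the only \emph{valid} four-cross tables are those with the crosses in a rectangular placement, and then declares it clear that only criss-crosses produce such tables. You replace both ``verify by inspection'' steps with a deductive argument: the observation that a diagonal edge $e$ satisfies $V(e)\cap V(\tau(e))=\emptyset$ forces the positive cells of $e_1$ to be cancelled entirely by the negative cells of $e_2$, whence $e_1=\tau(e_2)$ by comparing two-element sets, and the multiset bookkeeping simultaneously disposes of the degenerate cases (shared vertices, doubled dashes) that the paper never explicitly addresses. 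Your explicit row/column permutations and the converse via invariance of $(B_2)_l^{3,3}$ under partial transpose (or via pair-symmetry plus Theorem~\ref{theorem:degreecriterion} and Proposition~\ref{proposition:localisomorphism}) are routine and match the paper's intent. What your version buys is rigour and generality: nothing in your forward direction uses the grid being $3\times 3$, so it directly yields the statement for arbitrary type $(a,b)$ that the paper only asserts later (in Theorem~\ref{theorem:2or3sep}) by saying the reasoning is ``independent of grid dimension.'' What the paper's version buys is a template (enumerate valid tables with $2m$ crosses) that it then reuses for the $3$-, $4$- and $5$-edge cases, where a clean structural characterisation like yours is harder to come by.
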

\begin{proof}
	From Lemma \ref{lemma:matchedBoards}, it follows that any graph with $2$ edges that satisfies the degree criterion must have a table with $4$ crosses, for example,
	\begin{align*}
	\vcenter{\hbox{\includegraphics[scale=\boardscale]{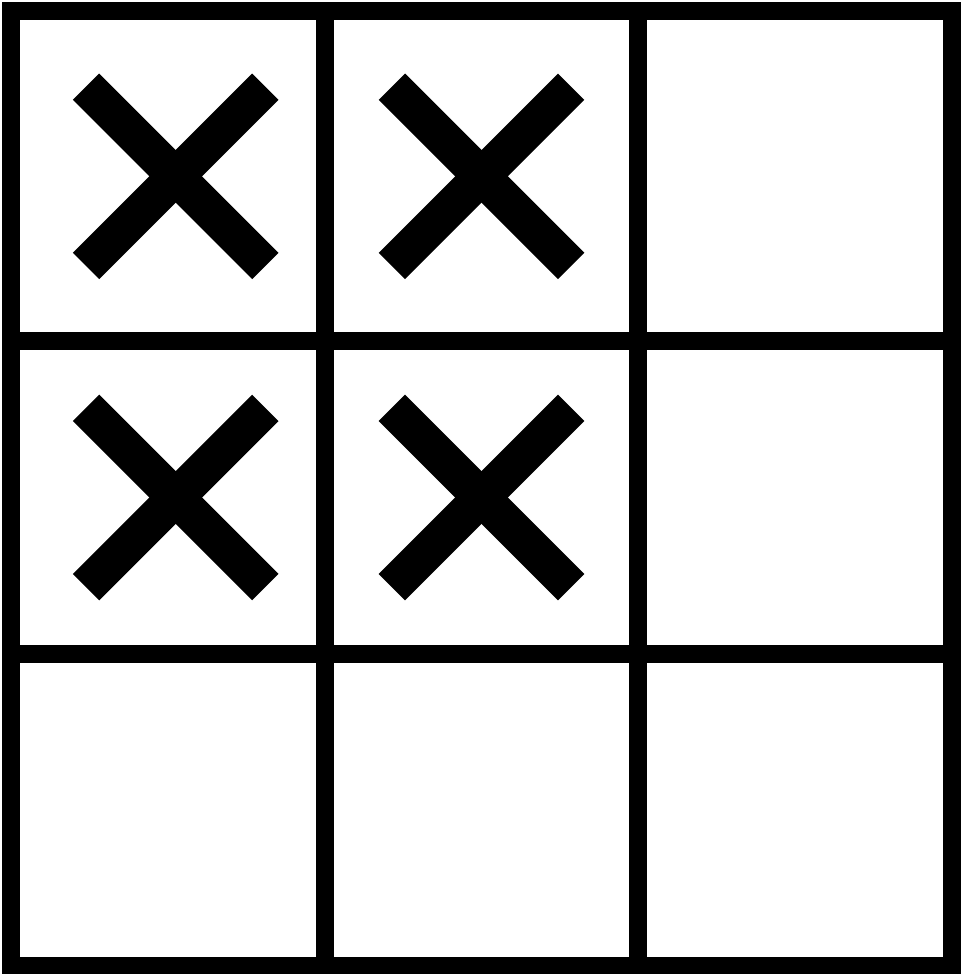}}}~,\text{and}~~
	\vcenter{\hbox{\includegraphics[scale=\boardscale]{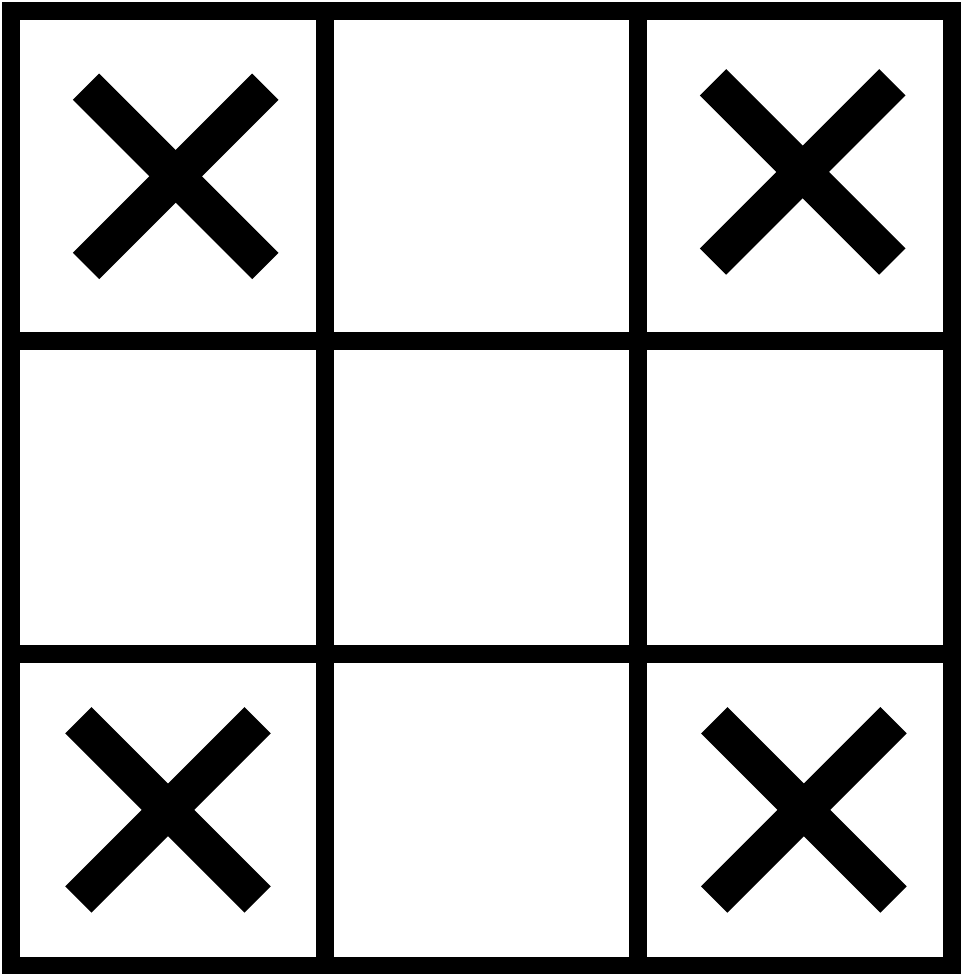}}}, \text{\emph{etc.}}
	\end{align*}
	There are of course several additional tables that have $4$ crosses. However, not all of these are associated with a valid graph. For example, it is clear that there is no $2$ edge graph that will have the table
	\begin{align*}
	\vcenter{\hbox{\includegraphics[scale=\boardscale]{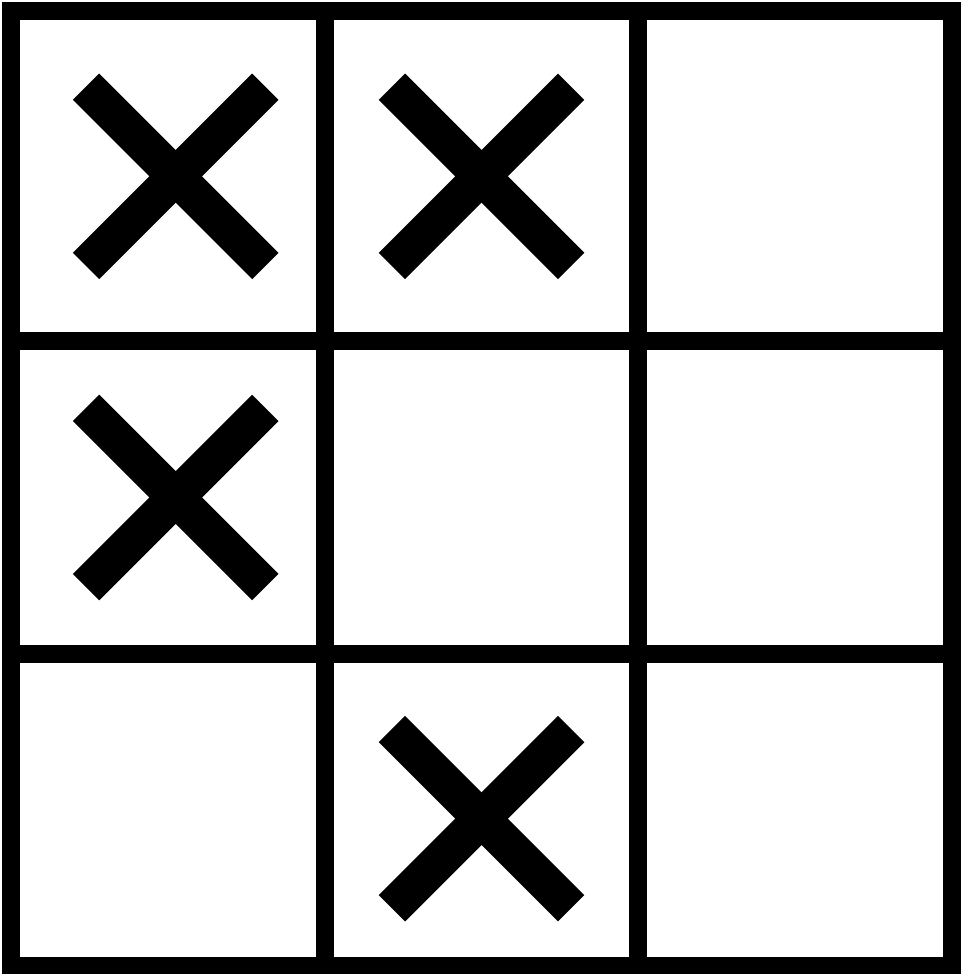}}}.
	\end{align*}
	The only valid tables are those with four crosses in a rectangular placement, as illustrated in the first example.
	It is clear that the only grid-labelled graphs which lead to such tables are locally isomorphic to $(B_2)^{3,3}_l$.
\end{proof}

In proving Lemma \ref{lemma:twodiagonaledges} we have seen that there exist contribution tables that do not correspond to graphs, for example, the table
\begin{align*}
\vcenter{\hbox{\includegraphics[scale=\boardscale]{invalidBoard-cropped}}}.
\end{align*}
If a contribution table does not correspond to a graph then we call it \emph{invalid}. The following lemma is useful.

\begin{lemma}
	\label{lemma:invalid}
	If a contribution table has a column or row with only $1$ non-empty cell then it is invalid.
\end{lemma}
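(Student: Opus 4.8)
The plan is to exploit the rigid local structure that each edge imposes on the contribution table, combined with the fact that dashes \emph{accumulate} rather than cancel. Recall that only diagonal edges contribute dashes, and a diagonal edge $\{(i,j),(k,l)\}$ satisfies $i\neq k$ and $j\neq l$; it places down dashes at $(i,j),(k,l)$ and up dashes at $(i,l),(k,j)$. The central observation I would make first is that, restricted to any single row it meets, such an edge occupies \emph{two distinct cells}: in row $i$ it places a down dash at column $j$ and an up dash at column $l$, and since $j\neq l$ these are two different cells $(i,j)$ and $(i,l)$; symmetrically row $k$ receives dashes at the two distinct cells $(k,l)$ and $(k,j)$. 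The same holds column-wise, with the roles of $i,k$ and $j,l$ interchanged.

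Next I would invoke the accumulation property of the contribution table, namely that it is the superposition of all edge tables and that no dash ever disappears (complementary dashes merely form a cross, which still occupies its cell). Consequently, once a cell has received any dash from any edge it is non-empty, independently of what other edges do. Combining this with the previous paragraph, if even a single edge touches a given row $r$ (i.e. has an endpoint in row $r$), then that edge alone already renders at least two cells of row $r$ non-empty, and adding further edges can only increase, never decrease, the count of non-empty cells. Hence every row is either entirely empty (no edge touches it) or contains at least two non-empty cells.

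The lemma then follows by contraposition: a contribution table arising from an actual grid-labelled graph can never contain a row with exactly one non-empty cell, so any table exhibiting such a row cannot be the contribution table of any graph and is therefore invalid. The identical argument applies to columns, either by the column-wise version of the distinct-cell observation or simply by transposing the grid. I do not expect a genuine obstacle here, as the statement is elementary; the only point requiring care is to make explicit that (i) every contributing edge is diagonal, so that $j\neq l$ and $i\neq k$ genuinely force two distinct occupied cells per row and per column, and (ii) ``non-empty'' refers to the presence of at least one dash (crosses included), so that cancellation in the underlying contribution \emph{matrix} is irrelevant to the cell counts used here.
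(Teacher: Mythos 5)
Your proof is correct and takes essentially the same route as the paper's: the paper's proof is a one-line assertion that a sum of edge contribution matrices can never have a row or column with only one non-zero entry, and your observation that each diagonal edge occupies two distinct cells in every row and column it meets, together with the fact that dashes in the table only accumulate, is precisely the justification the paper leaves implicit. There is no gap.
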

\begin{proof}
	By definition, the sum of any number of edge contribution matrices will never produce a matrix with a column or row with only $1$ non-zero entry. Hence, no contribution table will have a column or row with only $1$ non-empty cell.
\end{proof}

Another useful lemma that comes directly from our experience in proving the $2$ edge case is the following, regarding the contribution tables (equivalently, contribution matrices) of two locally isomorphic grid-labelled graphs.

\begin{lemma}
	Two grid-labelled graphs are locally isomorphic if and only if their contribution tables can be obtained from one another by permuting rows and columns.
\end{lemma}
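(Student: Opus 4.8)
The plan is to prove the two directions separately, with the forward direction resting on an \emph{equivariance} property of the contribution-table construction and the converse amounting to a reconstruction argument.

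First I would establish the forward direction. Suppose $G_l^{a,b}\cong_{a,b}H_m^{a,b}$ via a local isomorphism $(\pi,\sigma)$, so that the relabelling sends each vertex $(i,j)$ to $(\pi(i),\sigma(j))$. Since $\pi$ and $\sigma$ are bijections, a diagonal edge $\{(i,j),(k,l)\}$ of $G_l^{a,b}$ is carried to the edge $\{(\pi(i),\sigma(j)),(\pi(k),\sigma(l))\}$ of $H_m^{a,b}$, which is again diagonal. The key computation is to compare the edge contribution matrices of an edge and its image: directly from the definition one checks that $[A_{(\pi(i),\sigma(j)),(\pi(k),\sigma(l))}]_{pq}=[A_{(i,j),(k,l)}]_{\pi^{-1}(p),\sigma^{-1}(q)}$, i.e. the contribution matrix of the image edge is obtained from that of the original edge by permuting rows according to $\pi$ and columns according to $\sigma$. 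Since permuting rows and columns is linear, summing over all diagonal edges via Equation (\ref{eq:contributionSum}) shows that $C(H_m^{a,b})$ is exactly $C(G_l^{a,b})$ with its rows permuted by $\pi$ and columns by $\sigma$. At the level of dashes this is transparent: $(\pi,\sigma)$ simply relocates the four dashes of each edge, and because the contribution table of a graph is the additive superposition of its edge tables, the whole table of $G_l^{a,b}$ is relocated to that of $H_m^{a,b}$ by the same row/column permutation.

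For the converse I would run this construction backwards. Suppose the contribution table of $H_m^{a,b}$ is obtained from that of $G_l^{a,b}$ by permuting rows via $\pi$ and columns via $\sigma$. Applying $(\pi,\sigma)$ to $G_l^{a,b}$ yields a grid-labelled graph $G'$ with $G'\cong_{a,b}G_l^{a,b}$, whose contribution table, by the forward direction, equals that of $H_m^{a,b}$. It then suffices to show that $G'$ and $H_m^{a,b}$, having identical contribution tables, are the same grid-labelled graph, i.e. that the contribution table determines the diagonal edge set. Combined with Lemma \ref{lemma:zerocontribution} and the validity constraints on tables from Lemma \ref{lemma:invalid}, one attempts to reconstruct the edge set cell by cell: a down dash at $(p,q)$ records that $(p,q)$ is an endpoint of a diagonal edge, an up dash records that it is an endpoint after partial transpose, and the rectangle geometry of each edge's contribution is used to pair these incidences back into edges.

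The main obstacle is precisely this reconstruction step in the converse. The contribution table only records, cell by cell, how many edges meet a vertex before and after partial transpose -- essentially the degree functions of the graph and of its partial transpose on the grid -- and degree data alone need not pin down a graph up to local isomorphism. I expect the argument to go through cleanly in the regimes exploited elsewhere in the paper, namely small numbers of diagonal edges and the $3\times 3$ classification, where the rectangle structure of edge contributions forces the pairing of dashes into edges to be unique up to the shared row/column permutation, exactly as in the two-edge analysis of Lemma \ref{lemma:twodiagonaledges}. Making this faithfulness precise in full generality is the delicate point that a complete proof must confront.
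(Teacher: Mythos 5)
Your forward direction is correct, and it is essentially all that the paper's own one-line proof contains: the edge contribution matrix of the image of an edge under $(\pi,\sigma)$ is the row/column permutation of the original edge's contribution matrix, and summing over edges transports the whole table. Your identity $[A_{(\pi(i),\sigma(j)),(\pi(k),\sigma(l))}]_{pq}=[A_{(i,j),(k,l)}]_{\pi^{-1}(p),\sigma^{-1}(q)}$ is exactly the equivariance the paper gestures at with ``the one to one mapping between edges and edge contribution matrices.''

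Your unease about the converse is justified, and the reconstruction step you flag cannot in fact be carried out: the ``if'' direction of the lemma is false as stated. The contribution table records only the pair of diagonal-degree functions of $G_l^{a,b}$ and of $\Gamma(G_l^{a,b})$, and this data does not determine the graph up to local isomorphism. Concretely, on the $2\times 4$ grid let $G$ have edges $\{(1,1),(2,2)\}$, $\{(1,2),(2,1)\}$, $\{(1,3),(2,4)\}$, $\{(1,4),(2,3)\}$ (two criss-crosses) and let $H$ have edges $\{(1,1),(2,2)\}$, $\{(1,2),(2,3)\}$, $\{(1,3),(2,4)\}$, $\{(1,4),(2,1)\}$ (the matching of the $4$-cycle derangement). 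Every vertex of either graph has degree $1$ both before and after partial transpose, so both contribution tables consist of exactly one cross in each of the eight cells; they are literally identical. Yet $G$ contains a subgraph locally isomorphic to the criss-cross $(B_2)_l^{3,3}$ while $H$ contains none (a criss-cross inside a perfect matching corresponds to a $2$-cycle of the underlying derangement, and a $4$-cycle has none), and local isomorphisms preserve criss-cross subgraphs, so $G\not\cong_{2,4}H$. More generally, all derangement matchings of a $2\times n$ grid share one table, while their local isomorphism classes are indexed by cycle type. The paper's proof does not engage with this direction at all; fortunately the later classification arguments use only the true implication (locally isomorphic graphs have permutation-related tables) to enumerate tables up to permutation, and then argue separately about which graphs realise each table, so nothing downstream depends on the converse. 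Your write-up should therefore keep the forward direction and either drop the ``if'' direction or add a hypothesis strong enough to make the dash-pairing unique.
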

\begin{proof}
	This follows from the one to one mapping between edges and edge contribution matrices and the definition of local isomorphism.
\end{proof}
Analogously to the case for grid-labelled graphs, we say that two contribution tables that can be obtained from one another by permuting rows and columns are locally isomorphic. We are now in a position to proceed in our characterisation of all $3\times 3$ grid-labelled graphs that satisfy the degree criterion. In what follows we will consider all valid contribution tables for a set number of edges. It does not make sense to consider pairs of contribution tables which can be obtained from one another by rotation since we are only interested in the ``topological'' properties of the graphs that satisfy the degree criterion. In Appendix \ref{appendix:contributiontables} we present some computational problems about contribution tables in an attempt to understand the complexity of finding graphs that satisfy the degree criterion. Let us now continue with our classification of the $3\times 3$ grid-labelled graphs that do so.
\subsection{$3\times 3$ degree criterion with $3$ diagonal edges}
\begin{lemma}
	\label{lemma:threediagonaledges}
	Let $G_l^{3,3}$ be a grid-labelled graph with $3$ diagonal edges. Then $G_l^{3,3}
	$ satisfies the degree criterion if and only if it is locally isomorphic to $(B_3)_l^{3,3}$ or a rotation of $(B_3)_l^{3,3}$.
\end{lemma}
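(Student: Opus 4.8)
The plan is to work with the contribution matrix and exploit a rank-one factorisation of each edge's contribution. Recall from Lemma~\ref{lemma:zerocontribution} that $G_l^{3,3}$ satisfies the degree criterion exactly when $C(G_l^{3,3})=0$. The starting observation is that, directly from its definition, each edge contribution matrix factors as a rank-one matrix $A_{(i,j),(k,l)}=(e_i-e_k)(e_j-e_l)^T$, where $e_i$ and $e_j$ denote the standard basis vectors of $\mathbb{R}^3$ indexing rows and columns respectively. Writing the three diagonal edges as $\{(i_t,j_t),(k_t,l_t)\}$ for $t=1,2,3$ and setting $u_t:=e_{i_t}-e_{k_t}$ and $v_t:=e_{j_t}-e_{l_t}$, the degree criterion becomes the single matrix equation $u_1v_1^T+u_2v_2^T+u_3v_3^T=0$. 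Each $u_t$ and each $v_t$ is a nonzero difference of two distinct basis vectors, hence a nonzero element of the two-dimensional ``sum-zero'' subspace of $\mathbb{R}^3$, and any two non-parallel such vectors are linearly independent.

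The key structural step, which I expect to be the main obstacle, is the following dichotomy: \emph{either all three $u_t$ are parallel (all three edges share a common pair of rows) or all three $v_t$ are parallel (all three edges share a common pair of columns).} I would prove this by linear independence: if the $u_t$ are not all parallel, then two of them, say $u_1,u_2$, are independent and span the sum-zero subspace, so $u_3=\alpha u_1+\beta u_2$ with $\alpha,\beta\neq 0$. Substituting into the vanishing sum and collecting terms gives $u_1(v_1+\alpha v_3)^T+u_2(v_2+\beta v_3)^T=0$, and independence of $u_1,u_2$ forces $v_1=-\alpha v_3$ and $v_2=-\beta v_3$; since the edges are diagonal each $v_t$ is nonzero, so all three $v_t$ are parallel. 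This collapses the whole problem onto a one-dimensional balance condition.

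In the first case, after a local isomorphism I may assume every edge joins rows $1$ and $2$, so the graph is determined by three column-pairs $(p_t,q_t)$ with $p_t\neq q_t$, and the balance condition reads $\sum_t(e_{p_t}-e_{q_t})=0$. Reading each edge as a directed arc $p_t\to q_t$ on the column set $\{1,2,3\}$, this says in-degree equals out-degree at every column. A short degree count, using that the three arcs are distinct and loopless (and Lemma~\ref{lemma:invalid} to exclude a column touched only once), rules out the unbalanced and two-column configurations and forces the arcs to form a single directed $3$-cycle; the two such $3$-cycles differ by swapping two columns, so both graphs are locally isomorphic to $(B_3)_l^{3,3}$. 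The second case is identical after interchanging the roles of rows and columns, and yields the column-analogue of the tally, namely a rotation of $(B_3)_l^{3,3}$. These two classes are genuinely distinct, since one occupies two rows and three columns while the other occupies three rows and two columns, a distinction preserved under row and column permutations; this is precisely why the statement reads ``$(B_3)_l^{3,3}$ or a rotation of $(B_3)_l^{3,3}$.'' The converse direction is the direct verification that the tally has vanishing contribution matrix (implicit in its construction) together with the fact that rotation preserves the degree criterion.
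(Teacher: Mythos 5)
Your proof is correct, but it takes a genuinely different route from the paper's. The paper's argument is an enumeration: Lemma \ref{lemma:matchedBoards} forces the contribution table to carry exactly $6$ crosses, and the proof then asserts, ``by direct inspection,'' that every valid $6$-cross table is locally isomorphic to a rotation of the tally's table, leaving the actual case check to the reader. You replace that inspection with linear algebra: the rank-one factorisation $A_{(i,j),(k,l)}=(e_i-e_k)(e_j-e_l)^T$ turns the degree criterion into $\sum_t u_tv_t^T=0$; the two-dimensionality of the sum-zero subspace of $\mathbb{R}^3$ yields the row/column dichotomy; and the residual balance condition is an in-degree-equals-out-degree count on a three-vertex loopless digraph with three distinct arcs, whose only solutions are the two directed $3$-cycles. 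This is more self-contained than the paper's proof, and it explains structurally why the conclusion splits into ``$(B_3)_l^{3,3}$ or a rotation of it'' (two occupied rows versus two occupied columns, an invariant of local isomorphism). One presentational point: you assert $\alpha,\beta\neq 0$ before it is justified --- if, say, $u_3$ happened to be parallel to $u_1$ then $\beta=0$ a priori. As written, independence of $u_1,u_2$ gives $v_1=-\alpha v_3$ and $v_2=-\beta v_3$ for whatever $\alpha,\beta$ are, and it is the nonvanishing of $v_1,v_2$ (the edges being diagonal) that then forces $\alpha,\beta\neq 0$; reordering those two sentences closes the loop. Note also that the dichotomy step is special to three rows and columns: in a larger grid the sum-zero subspace has dimension greater than $2$, and the case of three linearly independent $u_t$ (which forces all $v_t=0$, a contradiction) must be added. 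That extension is immediate, but worth flagging since the paper later reuses this lemma for arbitrary grids in Theorem \ref{theorem:2or3sep}.
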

\begin{proof}
	From Lemma \ref{lemma:matchedBoards} we know that any grid-labelled graph $G_l^{3,3}$ with $3$ diagonal edges which satisfies the degree criterion has a contribution table with $6$ crosses. It can be verified by direct inspection that all valid $6$ cross contribution tables are locally isomorphic to rotations of
	\begin{align*} \vcenter{\hbox{\includegraphics[scale=\boardscale]{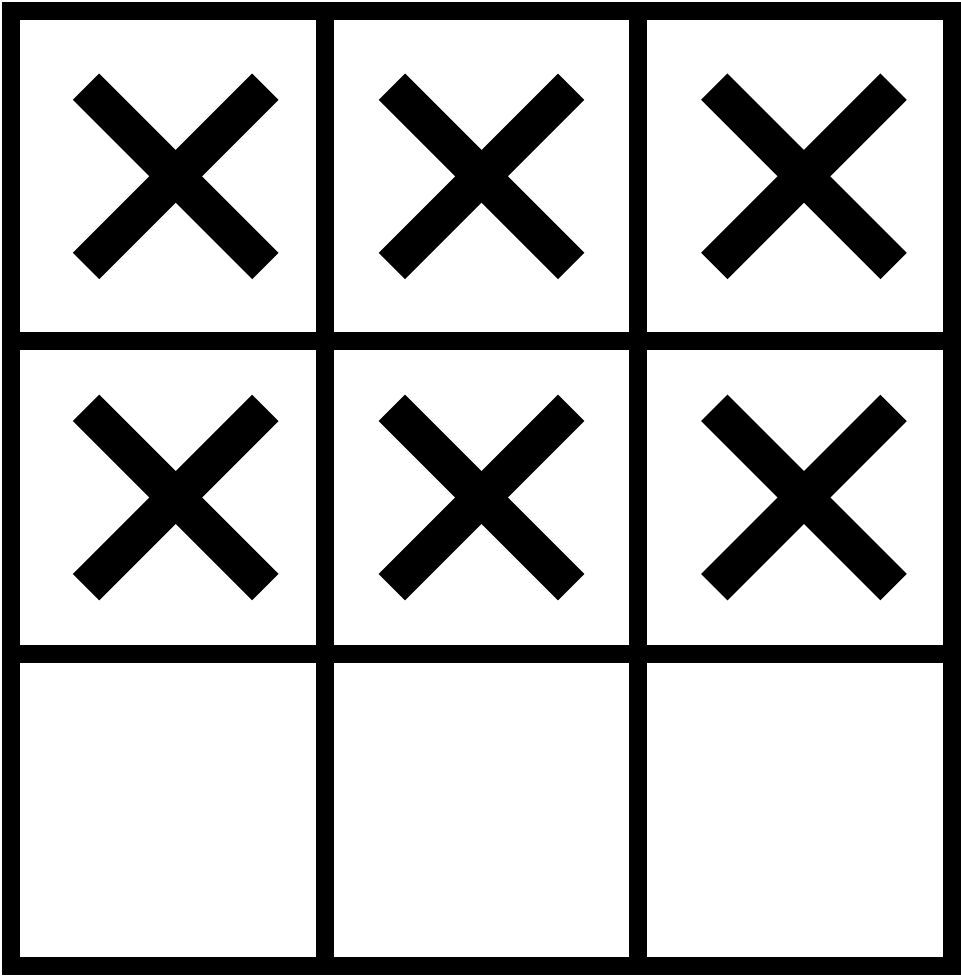}}},
	\end{align*}
	which can only be the contribution table of a grid-labelled graph locally isomorphic to $(B_3)_l^{3,3}$.
\end{proof}

Notice that our reasoning so far about grid-labelled graphs with $2$ or $3$ edges that satisfy the degree criterion is valid for grid-labelled graphs of arbitrary type $(m,n)$, not just $(3,3)$. Indeed, this observation along with 
Lemmas \ref{lemma:twodiagonaledges} and \ref{lemma:threediagonaledges} can be used to prove the following. 
\begin{theorem}
\label{theorem:2or3sep}
Let $G_l^{m,n}$ for any $m,n\ge 2$ be a grid-labelled graph with $2$ (\emph{resp.} $3$) diagonal edges. Then $G_l^{m,n}\in\mathcal{S}$ if and only if $B_2^{3,3}\cong_{m,n}G_l^{m,n}$ (resp. $B_3^{3,3}\cong_{m,n}G_l^{m,n}$).
\end{theorem}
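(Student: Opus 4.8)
The plan is to prove both implications, handling the two-edge and three-edge cases in parallel. The backward direction is short: I would show that the building blocks are themselves separable and then transport separability along the second order local isomorphism. The forward direction carries the real content: separability forces the degree criterion (Theorem \ref{theorem:degreecriterion}), and I would then argue that the contribution-table classification obtained for type $(3,3)$ in Lemmas \ref{lemma:twodiagonaledges} and \ref{lemma:threediagonaledges} is in fact insensitive to the ambient grid dimensions.

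For the backward direction, observe that $(B_2)_l^{3,3}$ and $(B_3)_l^{3,3}$ (together with the $90^\circ$ rotation of $(B_3)_l^{3,3}$ allowed by Lemma \ref{lemma:threediagonaledges}) are stratified: all their diagonal edges join nearest-neighbour rows, or, for the rotated tally, nearest-neighbour columns. Each also satisfies the degree criterion, being the canonical DC configurations identified in Lemmas \ref{lemma:twodiagonaledges} and \ref{lemma:threediagonaledges}. Hence by Theorem \ref{theorem:stratified} all of these building blocks lie in $\mathcal{S}$. Since a second order local isomorphism is an extension followed by a local isomorphism, separability is preserved by combining Proposition \ref{proposition:localisomorphism} with Proposition \ref{proposition:extensionseparability}; thus $B_2^{3,3}\cong_{m,n}G_l^{m,n}$ (resp. $B_3^{3,3}\cong_{m,n}G_l^{m,n}$) yields $G_l^{m,n}\in\mathcal{S}$.

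For the forward direction, suppose $G_l^{m,n}\in\mathcal{S}$. Theorem \ref{theorem:degreecriterion} gives that $G_l^{m,n}$ satisfies the degree criterion, so by Lemma \ref{lemma:zerocontribution} its contribution matrix vanishes. The key point is that with only two (resp. three) diagonal edges the contribution matrix has at most eight (resp. twelve) nonzero contributions, and vanishing forces these to cancel in matched $+1/-1$ pairs, i.e. into four (resp. six) crosses by Lemma \ref{lemma:matchedBoards}. I would then rerun the cell-by-cell matching argument of Lemmas \ref{lemma:twodiagonaledges} and \ref{lemma:threediagonaledges}: a $+1$ contributed by one edge at $(i,j)$ can only be cancelled by a $-1$ contributed by a \emph{different} edge, since an edge's own $\pm 1$ entries sit in distinct rows and columns, and chasing these forced coincidences confines the two (resp. three) edges to a $2\times 2$ (resp. $2\times 3$ or $3\times 2$) subgrid in exactly the criss-cross (resp. tally) pattern, with Lemma \ref{lemma:invalid} excluding the stray configurations. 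Applying a local isomorphism to move these occupied rows and columns into the top-left corner then exhibits $G_l^{m,n}$ as an extension of $(B_2)_l^{2,2}$ (resp. $(B_3)_l^{2,3}$ or its row–column transpose), which in turn extends to $B_2^{3,3}$ (resp. $B_3^{3,3}$ or its rotation); hence $G_l^{m,n}$ is second order locally isomorphic to $B_2^{3,3}$ (resp. $B_3^{3,3}$ or its rotation), as required.

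The step I expect to be the main obstacle is making rigorous the claim, asserted informally just before the theorem, that the $3\times 3$ contribution-table analysis is genuinely dimension-independent. One must verify that the four (resp. six) crosses cannot be arranged across a large grid in any configuration not already seen inside a $3\times 3$ window: the bound on the number of edges caps the number of occupied rows and columns once the zero-sum and cross-matching constraints are imposed, so no new configurations can arise, but this should be spelled out rather than merely invoked. Secondary care is needed in the bookkeeping of isolated vertices through the extension relation, and, in the three-edge case, in tracking the row-stratified versus column-stratified (``rotation'') alternative so that it matches Lemma \ref{lemma:threediagonaledges}. Finally I would note the minor point that $\cong_{m,n}$ should here be read as the symmetric second order local isomorphism relation, extending both graphs to a common grid, so that the boundary cases $m=2$ or $n=2$ are covered.
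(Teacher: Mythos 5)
Your proposal is correct and follows essentially the same route as the paper: the degree criterion as the necessary condition, the $(3,3)$ classification of Lemmas \ref{lemma:twodiagonaledges} and \ref{lemma:threediagonaledges} lifted to arbitrary grid dimensions, and Theorem \ref{theorem:stratified} supplying sufficiency for the resulting stratified configurations. The dimension-independence step you flag as the main obstacle is precisely what the paper asserts in a single sentence without further justification, so your plan to spell it out (together with your remark about the rotation of $(B_3)_l^{3,3}$, which the theorem statement omits but the paper's proof explicitly includes) only makes the argument more careful than the original.
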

\begin{proof}
We know from Lemmas \ref{lemma:twodiagonaledges} and \ref{lemma:threediagonaledges} that a grid-labelled graph of type $(3,3)$ with $2$ (\emph{resp.} $3$) diagonal edges satisfies the degree criterion if and only if it is locally isomorphic to $(B_2)_l^{3,3}$ (\emph{resp.} a rotation of $(B_3)_l^{3,3}$). The reasoning in the proofs of these lemmas is independent of grid dimension. Due to the structure of $(B_2)_l^{3,3}$ and $(B_3)_l^{3,3}$, all grid-labelled graphs second order locally isomorphic to rotations of these grid-labelled graphs are row or column stratified. From Theorem \ref{theorem:stratified}, such grid-labelled graphs are separable.
\end{proof}

\subsection{$3\times 3$ degree criterion with $4$ and $5$ diagonal edges}

\begin{lemma}
	\label{lemma:fourdiagonaledges}
	Let $G_l^{3,3}$ be a grid-labelled graph with $4$ diagonal edges. Then $G_l^{3,3}
	$ satisfies the degree criterion if and only if it is locally isomorphic to $(B_4)_l^{3,3}$, or is equal to the union of two grid-labelled graphs locally isomorphic to $(B_2)_l^{3,3}$.
\end{lemma}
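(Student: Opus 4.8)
The plan is to prove both implications with the contribution-matrix machinery, reducing the reverse direction to a finite analysis organised by whether the four edges split into cancelling pairs. For the forward direction, using the rank-one form $A_{(i,j),(k,l)}=(e_i-e_k)(f_j-f_l)^{T}$ one checks by direct computation that the cross-hatch $(B_4)_l^{3,3}$ has $C(G_l^{3,3})=0$, and that any edge-disjoint union of two criss-crosses does too (each criss-cross already contributing zero); by Lemma \ref{lemma:zerocontribution} both satisfy the degree criterion. For the reverse direction, let $G_l^{3,3}$ have diagonal edges $e_1,\dots,e_4$ and satisfy the degree criterion, so $A_{e_1}+\dots+A_{e_4}=0$ and, by Lemma \ref{lemma:matchedBoards}, its contribution table carries exactly eight crosses. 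The identity I would record first is $-A_{(i,j),(k,l)}=A_{(i,l),(k,j)}$: two edges have cancelling contribution matrices exactly when they are the two diagonals of one rectangle, that is, when together they form a criss-cross locally isomorphic to $(B_2)_l^{3,3}$. Hence if any pair cancels, say $A_{e_1}+A_{e_2}=0$, then $\{e_1,e_2\}$ is a criss-cross and, the total being zero, $A_{e_3}+A_{e_4}=0$ forces $\{e_3,e_4\}$ to be a criss-cross as well; then $G_l^{3,3}$ is a union of two grid-labelled graphs locally isomorphic to $(B_2)_l^{3,3}$.

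The substantive case is that no pair of contribution matrices cancels. First I would show that the eight crosses then occupy eight distinct cells. Since each edge contributes at most one dash to a given cell and the net contribution vanishes, a cell can have multiplicity at most two, and multiplicity two would require two edges to place a down-dash there and two others an up-dash---four distinct edges meeting at one cell. Examining the four rectangles through a common cell, I expect to show that they must either contain a cancelling pair (against our assumption) or produce a row or column with a single non-empty cell, which is impossible by Lemma \ref{lemma:invalid}. This leaves every cross in its own cell, so exactly one cell of the grid is empty.

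Because local isomorphism acts transitively on the nine cells of the grid, I may take the empty cell to be the centre $(2,2)$. Every edge is then a rectangle avoiding the centre, which forces either its two occupied rows or its two occupied columns to equal $\{1,3\}$; this pins each edge to one of a handful of rectangles. Enumerating the four-edge selections from these whose contributions sum to zero---equivalently, pairing the eight down-dashes into four diagonal pairs whose completing corners reproduce the eight up-dashes---I expect to find that the centre-empty table is realised, up to local isomorphism, only by the cross-hatch $(B_4)_l^{3,3}$. One verifies in passing that the rotations of $(B_4)_l^{3,3}$ are locally isomorphic to it, so no separate ``rotation'' clause is needed, in contrast with the three-edge case.

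The main obstacle is precisely this last case. Ruling out a multiplicity-two cell without assuming the conclusion is the first delicate point; the second, and harder, is the reconstruction: a contribution table records only net dash data, so one must show that the matching of the eight down-dashes into four valid diagonal pairs---consistent with the eight up-dashes---is forced up to local isomorphism. This is the analogue, with more cases, of the direct inspection used in Lemmas \ref{lemma:twodiagonaledges} and \ref{lemma:threediagonaledges}, and it is where the real work lies.
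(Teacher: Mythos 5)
Your proposal is correct, and it reaches the same classification as the paper but organises the case analysis differently. The paper's proof first enumerates, by inspection, the valid $8$-cross contribution tables up to local isomorphism (finding two types: the centre-empty table and the two-criss-cross table), and then reconstructs the compatible graphs from each table by peeling off candidate edges and pruning dead-ends with Lemma \ref{lemma:invalid}. You instead split on the algebraic structure of the edge set: your identity $-A_{(i,j),(k,l)}=A_{(i,l),(k,j)}$, i.e.\ that two edges have cancelling contribution matrices exactly when they are the two diagonals of one rectangle, immediately dispatches the case where some pair cancels (the other pair must then cancel too, giving two criss-crosses), and in the remaining case the same identity shows that a cell of multiplicity two would force four edges through one cell whose row and column constraints produce a counterpart pair --- so all eight crosses sit in distinct cells and the empty cell can be moved to the centre by local isomorphism. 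What your route buys is a clean, checkable reason why the two-criss-cross case and the double-cross cells behave as they do, where the paper declares these ``self-evident'' or leaves them to the reader; what it costs is nothing essential, since your final step (one edge from each of the four off-centre rectangles, signs forced by cancellation at the shared cells, yielding $(B_4)_l^{3,3}$ up to local isomorphism) is a finite verification of exactly the same size as the paper's. Do note that your write-up still hedges (``I expect to show'') at the two places where the real work lies --- the multiplicity-two exclusion and the final enumeration --- and both should be carried out explicitly in a final version; as it happens, both go through, and the multiplicity-two argument is in fact cleaner than you anticipate (it yields a counterpart pair directly, without needing Lemma \ref{lemma:invalid}).
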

\begin{proof}
	From Lemma \ref{lemma:matchedBoards}, we know we must only consider tables with $8$ crosses. We leave to the reader to verify that the only ways of placing $8$ crosses on a table such that the table is valid correspond to placements that are locally isomorphic to
	\begin{align}
	\vcenter{\hbox{\includegraphics[scale=\boardscale]{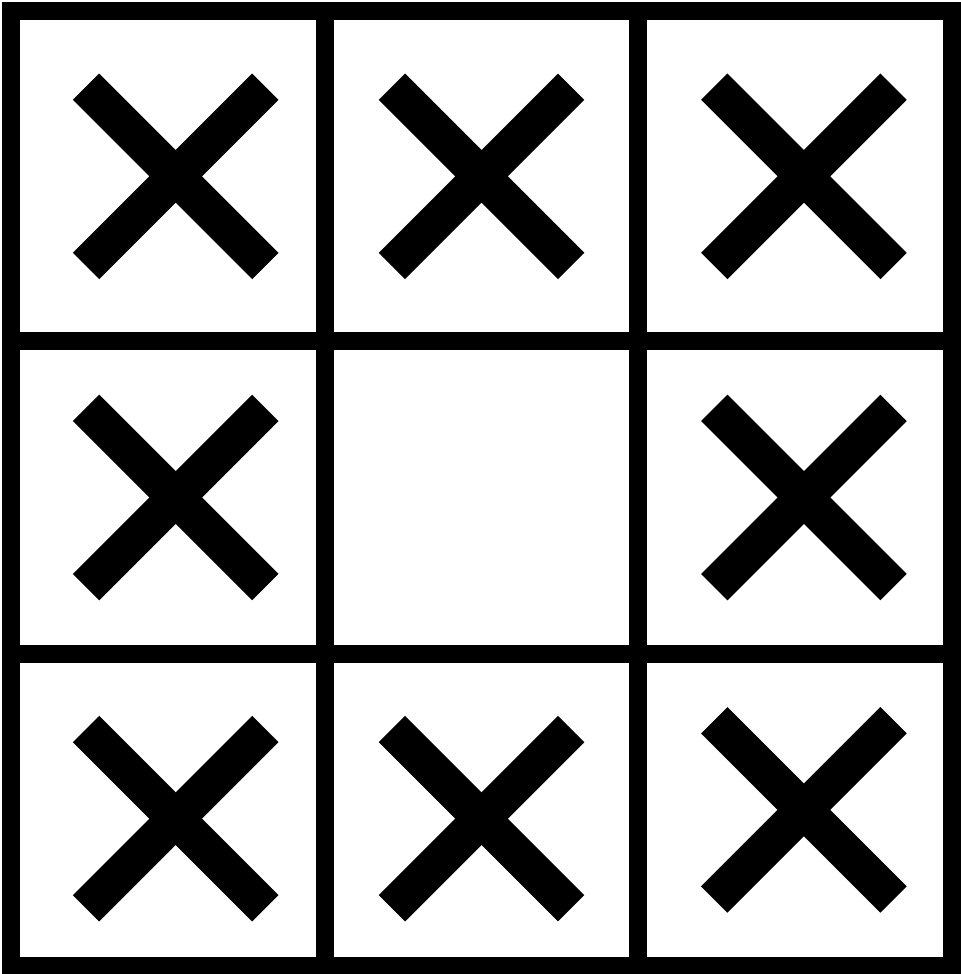}}},\label{eq:table1}
	\end{align}
	or locally isomorphic to rotations of 
	\begin{align}
	\vcenter{\hbox{\includegraphics[scale=\boardscale]{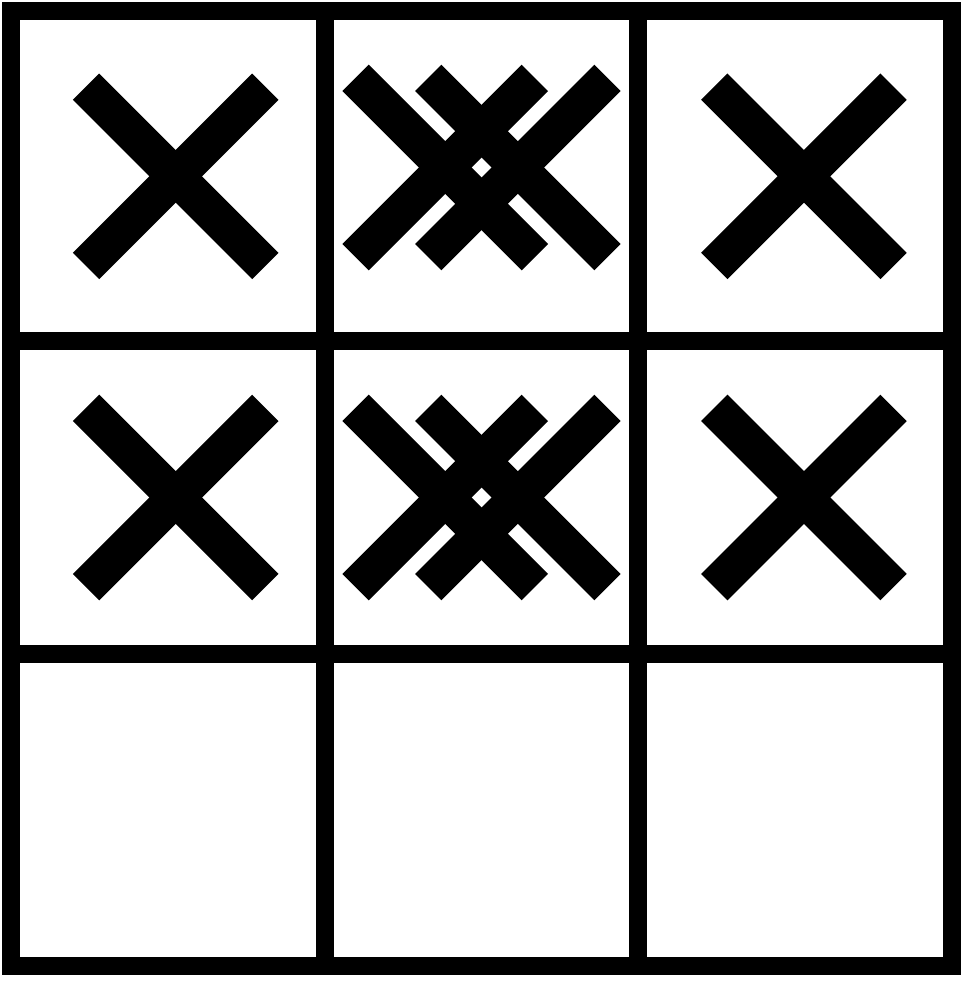}}}.
	\label{eq:table2}
	\end{align}
	Let us now reason about which edges a grid-labelled graph must have to attain the tables (\ref{eq:table1}) and (\ref{eq:table2}), starting with (\ref{eq:table1}). 

	Without loss of generality we may pick any edge supported by the table as a starting point. 
	However, selecting some edges will lead us to a dead-end, unable to proceed further. For an example of an edge leading to a dead-end, let us choose $\{(1,1),(3,3)\}$, removing its associated dashes from the table. This leaves us with 
	\begin{align*}
	\vcenter{\hbox{\includegraphics[scale=\boardscale]{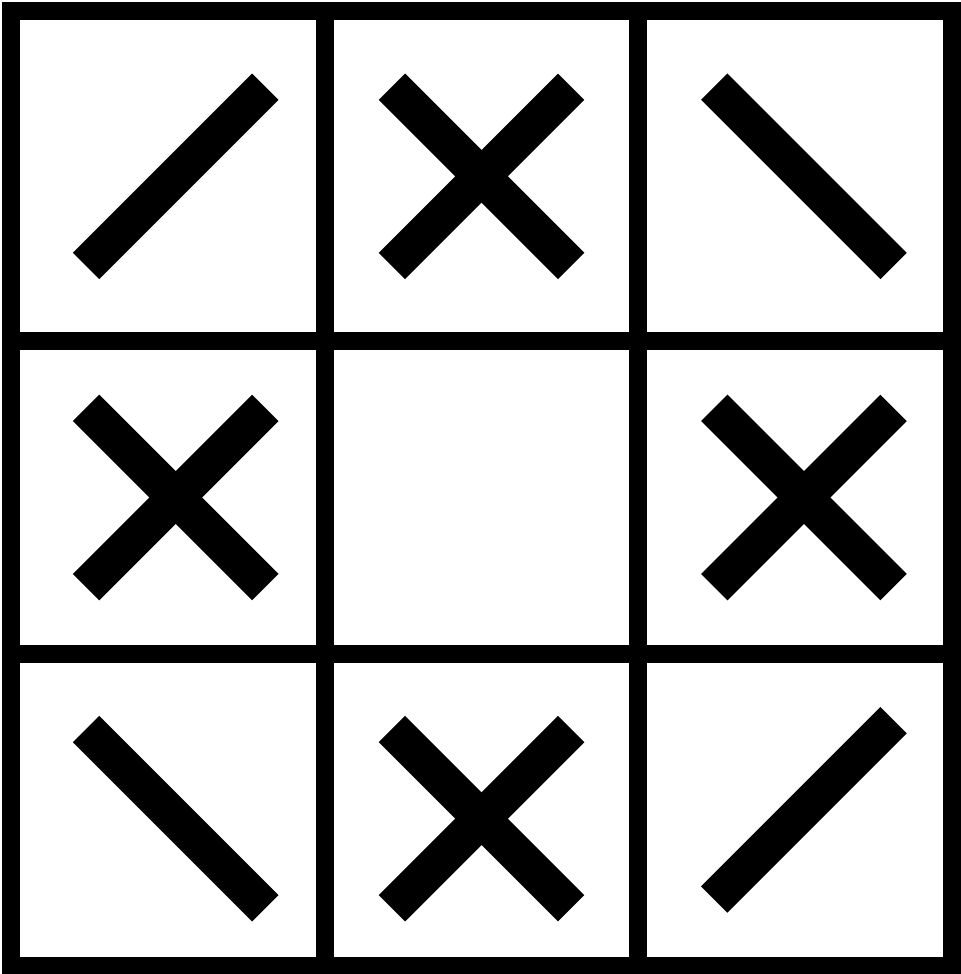}}}.
	\end{align*}
	\begin{figure}
		\centering
		\includegraphics[scale=0.15]{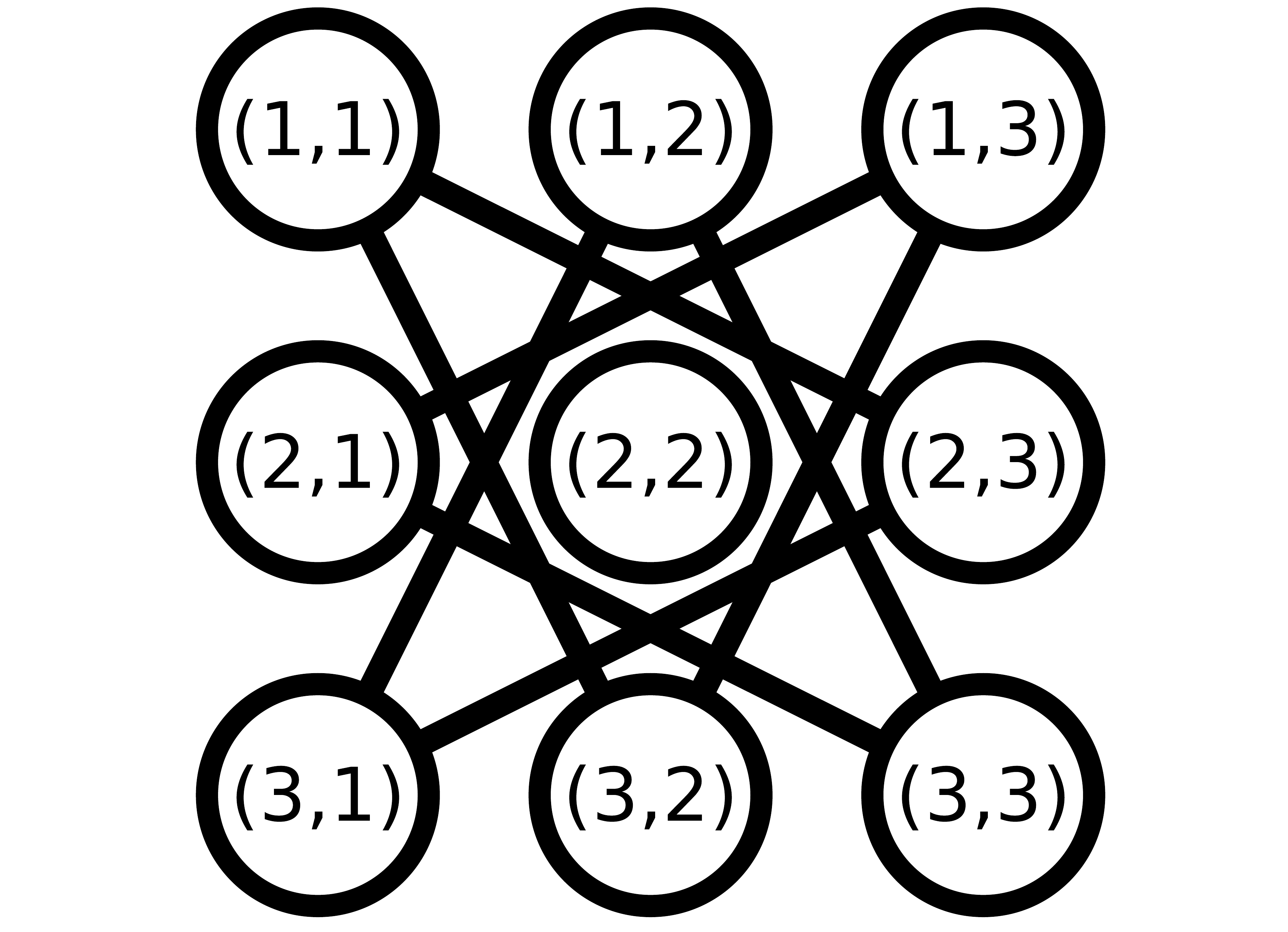}
		\caption{Edges that can be removed from the contribution table in Equation (\ref{eq:table1}) that lead to a valid contribution table.}
		\label{fig:possibleEdges}
	\end{figure}

	Removing any edge from this table will cause it to have at least $1$ row or column with an isolated cross. From Lemma \ref{lemma:invalid}, we know that this will mean the table is invalid.
	
	Indeed, the only single diagonal edges that can be removed from the table in (\ref{eq:table1}) that do not lead to a dead-end are those belonging to the graph illustrated in Figure \ref{fig:possibleEdges}. Removal of any one of these $8$ edges results in a table locally isomorphic to a rotation of 
	\begin{align*}
	\vcenter{\hbox{\includegraphics[scale=\boardscale]{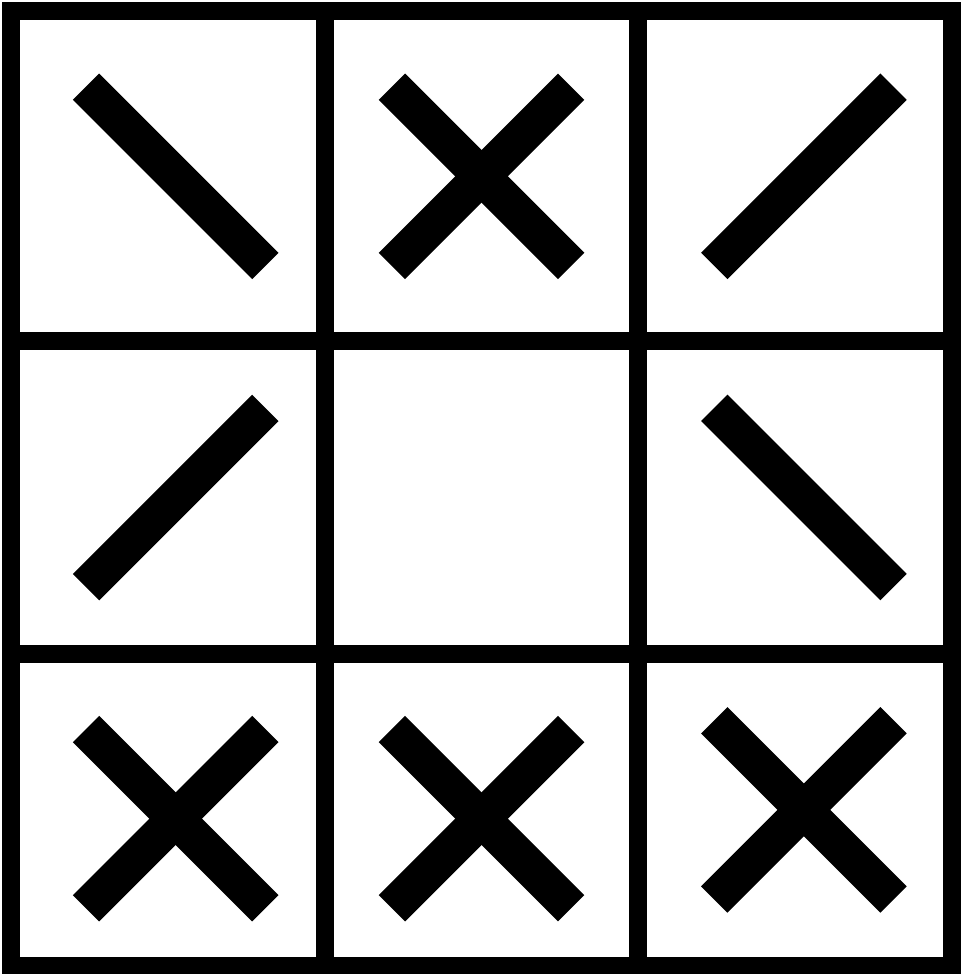}}}.
	\end{align*}
	We leave to the reader to check that this board corresponds to the graph $(B_4)_l^{3,3}$. Each rotation corresponds to a graph that is locally isomorphic to $(B_4)_l^{3,3}$.
	
	All there is left to prove is that the graphs compatible with the table (\ref{eq:table2}) are exclusively made up of the union of two graphs locally isomorphic to $(B_2)_l^{3,3}$. Indeed, this is self-evident. The only compatible grid-labelled graphs are those with their edges arranged into two criss-crosses.
\end{proof}

Before continuing to the $5$ edge case, it will be useful to have the following technical lemmas. The first concerns adding single edges to a graph that satisfies the degree criterion.

\begin{lemma}
	\label{lemma:singleEdgeAddition}
	Let $G_l^{a,b}$ be a grid-labelled graph that satisfies the degree criterion. Let $H_l^{a,b}$ be a grid-labelled graph obtained by adding a single diagonal edge to $G_l^{a,b}$. Then $H_l^{a,b}$ does not satisfy the degree criterion.
\end{lemma}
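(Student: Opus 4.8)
The plan is to reduce the statement immediately to Lemma~\ref{lemma:zerocontribution}, which characterises satisfaction of the degree criterion by vanishing of the graph contribution matrix. The crucial structural fact I would exploit is that the contribution matrix is \emph{additive} over the diagonal edges: by the definition in Equation~(\ref{eq:contributionSum}), $C(G_l^{a,b})=\sum_{e\in D}A_e$, where $D$ is the set of diagonal edges. Consequently, if $H_l^{a,b}$ is obtained from $G_l^{a,b}$ by adding a single diagonal edge $e=\{(i,j),(k,l)\}$ (necessarily a new edge, since the underlying graph is simple), then the diagonal edge set of $H_l^{a,b}$ is exactly that of $G_l^{a,b}$ together with $e$, and hence $C(H_l^{a,b})=C(G_l^{a,b})+A_{(i,j),(k,l)}$.

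First I would invoke the hypothesis that $G_l^{a,b}$ satisfies the degree criterion: by Lemma~\ref{lemma:zerocontribution} this means $C(G_l^{a,b})$ is the zero matrix, so the identity above collapses to $C(H_l^{a,b})=A_{(i,j),(k,l)}$. It therefore suffices to show that the edge contribution matrix of a single diagonal edge is never the zero matrix. This is where I would use the hypothesis that $e$ is \emph{diagonal}: by definition this means $i\neq k$ and $j\neq l$, so the four cells $(i,j),(k,l),(i,l),(k,j)$ are pairwise distinct. The matrix $A_{(i,j),(k,l)}$ then carries $+1$ in the two cells $(i,j),(k,l)$ and $-1$ in the two cells $(i,l),(k,j)$, and so has exactly four nonzero entries; in particular $A_{(i,j),(k,l)}\neq 0$.

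Combining these, $C(H_l^{a,b})=A_{(i,j),(k,l)}\neq 0$, so by the converse direction of Lemma~\ref{lemma:zerocontribution} the graph $H_l^{a,b}$ does not satisfy the degree criterion, as required. There is essentially no hard step here: the only point needing care is the verification that the four affected cells are genuinely distinct (which is exactly the content of the edge being diagonal rather than horizontal or vertical), guaranteeing that no cancellation can occur inside $A_e$ itself. In the contribution-table language this is simply the remark that a single diagonal edge always deposits unmatched dashes onto an otherwise fully matched table.
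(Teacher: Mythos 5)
Your proposal is correct and follows essentially the same route as the paper: the paper's proof argues informally that since $G_l^{a,b}$ already satisfies the degree criterion, adding one edge is ``logically equivalent'' to adding it to the empty graph, and a single diagonal edge never satisfies the criterion. Your version simply makes this precise via the additivity $C(H_l^{a,b})=C(G_l^{a,b})+A_{(i,j),(k,l)}$ and Lemma~\ref{lemma:zerocontribution}, including the (correct) check that the four affected cells are distinct, which is a welcome tightening but not a different argument.
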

\begin{proof}
	The graph $G_l^{a,b}$ satisfies the degree criterion. Hence, when considering which edges can be added so that this property is maintained, it is logically equivalent to consider adding edges to the empty graph. As we have seen previously, a graph with a single diagonal edge never satisfies the degree criterion.
\end{proof}

The next lemma will make the $5$ edge case easier to prove.
\begin{lemma}
	\label{lemma:fivediagonaledgestwocrosses}
	Let $G_l^{3,3}$ be a grid-labelled graph with $5$ edges that satisfies the degree criterion. If $G_l^{3,3}$ has $6$ vertices with degree $1$, and $2$ vertices with degree $2$, then $G_l^{3,3}\cong_{3,3} (B_2)_l^{3,3}\cup(B_3)_l^{3,3}$.
\end{lemma}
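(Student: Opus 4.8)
The plan is to reduce to the three-edge classification already obtained in Lemma \ref{lemma:threediagonaledges} by peeling a criss-cross off of $G_l^{3,3}$. The engine is the additivity of the graph contribution matrix over edges, Equation (\ref{eq:contributionSum}): if $(B_2)_l^{3,3}$ occurs as an edge-disjoint subgraph of $G_l^{3,3}$, then both pieces satisfy the degree criterion — $G_l^{3,3}$ by hypothesis and the criss-cross as in Lemma \ref{lemma:twodiagonaledges} — so by Lemma \ref{lemma:zerocontribution} we have $C(G_l^{3,3})=0$ and $C((B_2)_l^{3,3})=0$. Hence the complementary three-edge subgraph $H_l^{3,3}$ obtained by deleting the criss-cross also has $C(H_l^{3,3})=0$ and satisfies the degree criterion, and by Lemma \ref{lemma:threediagonaledges} we get $H_l^{3,3}\cong_{3,3}(B_3)_l^{3,3}$, whence $G_l^{3,3}\cong_{3,3}(B_2)_l^{3,3}\cup(B_3)_l^{3,3}$. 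Thus the whole statement reduces to one structural claim: a grid-labelled graph of this degree sequence satisfying the degree criterion must contain a criss-cross.

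To locate that criss-cross I would first read off the coarse shape of $G_l^{3,3}$. The degrees sum to $2m=10$, so the prescribed six vertices of degree $1$ and two of degree $2$ leave exactly one isolated vertex among the nine. Since every edge is diagonal and no degree exceeds $2$, the non-isolated part is a disjoint union of paths — a cycle is excluded because it would require at least three vertices of degree $2$. Counting path endpoints gives exactly three nontrivial paths, and their interior vertices account for the two degree-$2$ vertices, so the component multiset is either $\{P_2,P_3,P_3\}$ or $\{P_2,P_2,P_4\}$.

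Next I would exploit the vanishing of the contribution matrix at the two degree-$2$ vertices. At such a vertex $(i,j)$ its two incident edges deposit $+2$ there, which can only be cancelled by two further edges whose partial-transpose image lands on $(i,j)$, each having an endpoint in row $i$ and an endpoint in column $j$. Imposing this balancing at both degree-$2$ vertices, together with the cross-count of Lemma \ref{lemma:matchedBoards} (the ten crosses must sit as two doubly-crossed cells at the degree-$2$ vertices, six singly-crossed cells at the degree-$1$ vertices, and none at the isolated cell) and the criterion of Lemma \ref{lemma:invalid} forbidding a row or column with a single occupied cell, a finite inspection shows that two of the five edges — one incident to each degree-$2$ vertex — must be the two diagonals of a common $2\times 2$ sub-box, i.e. a criss-cross. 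This supplies the subgraph needed in the first paragraph and, via that reduction, also eliminates the $\{P_2,P_2,P_4\}$ alternative: after peeling the criss-cross the remaining three edges form a tally, which fills an entire $2\times 3$ or $3\times 2$ block; such a block containing both (mutually diagonal) endpoints of one criss-cross edge must also contain the other criss-cross edge, forcing four shared vertices and contradicting the degree hypothesis, so only $\{P_2,P_3,P_3\}$ survives.

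The main obstacle is the finite case analysis embedded in the previous paragraph — checking, across the admissible placements of the two degree-$2$ vertices, that a criss-cross is always forced. Performed directly on the $9\times 9$ density matrix this would be unpleasant, but the contribution-table calculus of Lemmas \ref{lemma:zerocontribution}, \ref{lemma:matchedBoards} and \ref{lemma:invalid} reduces it to tracking matched and unmatched dashes on a $3\times 3$ board, which keeps the enumeration short. Everything after the criss-cross is located is then immediate from additivity of the contribution matrix and the three-edge classification of Lemma \ref{lemma:threediagonaledges}.
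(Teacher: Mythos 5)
Your architecture matches the paper's: both arguments use the contribution-table bookkeeping (ten crosses forced by Lemma \ref{lemma:matchedBoards} to sit as two double-cross cells at the degree-$2$ vertices, six single-cross cells at the degree-$1$ vertices, and one empty cell) to force a criss-cross subgraph, and then invoke Lemma \ref{lemma:threediagonaledges} to identify the complementary three edges as a tally. Your peeling step is in fact cleaner than the paper's: additivity of the contribution matrix gives $C(H_l^{3,3})=C(G_l^{3,3})-C((B_2)_l^{3,3})=0$ directly, where the paper argues more loosely that any remaining edges ``would need to be from a grid-labelled graph that itself satisfied'' the degree criterion. The path-decomposition observation ($\{P_2,P_3,P_3\}$ versus $\{P_2,P_2,P_4\}$) is a nice framing absent from the paper.

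The gap is that the load-bearing claim --- that the degree criterion forces two of the five edges, one incident to each degree-$2$ vertex, to be the two diagonals of a common $2\times2$ box --- is precisely where the paper does its work, and you dispose of it with ``a finite inspection shows.'' The paper's inspection splits on whether a double-cross cell shares a row with the empty cell: if it does, the four edges needed to clear that double cross are forced to form two criss-crosses, and Lemma \ref{lemma:singleEdgeAddition} then rules out adding a fifth edge, so those tables are invalid; if it does not, any choice of four clearing edges necessarily contains a criss-cross. Your proof needs this case analysis (or your equivalent via the balancing condition at the degree-$2$ vertices and Lemma \ref{lemma:invalid}) written out; as it stands the pivotal combinatorial fact is asserted, not proved. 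A second, smaller hole: your elimination of $\{P_2,P_2,P_4\}$ only treats the sub-case in which the peeled criss-cross contains the middle edge of the $P_4$ (both endpoints of that edge would then lie in the tally's $2\times3$ block). The sub-case in which the criss-cross consists of the two outer edges of the $P_4$ is not covered by that argument; it is excluded for a different reason, namely that the two diagonals of a $2\times2$ box can only be joined by a horizontal or vertical middle edge, contradicting the all-diagonal hypothesis. This should be stated explicitly.
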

\begin{proof}
We will consider a number of possible edge contribution tables. Each will have $6$ cells containing a single cross (corresponding to the $6$ vertices with degree $1$), $2$ cells containing a double cross (corresponding to the $2$ vertices with degree $2$), and a single empty cell. As before, we will start with a particular edge contribution table, and attempt to clear it by adding edges to an empty $3\times 3$ grid-labelled graph. If we can not clear the table by adding edges, we know that the table is invalid. In order to clear a single cross from the table, $2$ edges must be added. In order to clear a double cross from the table, it is clear that $4$ edges must be added. A double cross can be in the same row (resp. column) as the empty cell, or on a different row (resp. column). 

Let us reason about both cases now. Without loss of generality, assume $1$ of the double cross cells is on the same row as the empty cell. Then the table is locally isomorphic to the following,
	\begin{align*}
  	\vcenter{\hbox{\includegraphics[scale=\boardscale]{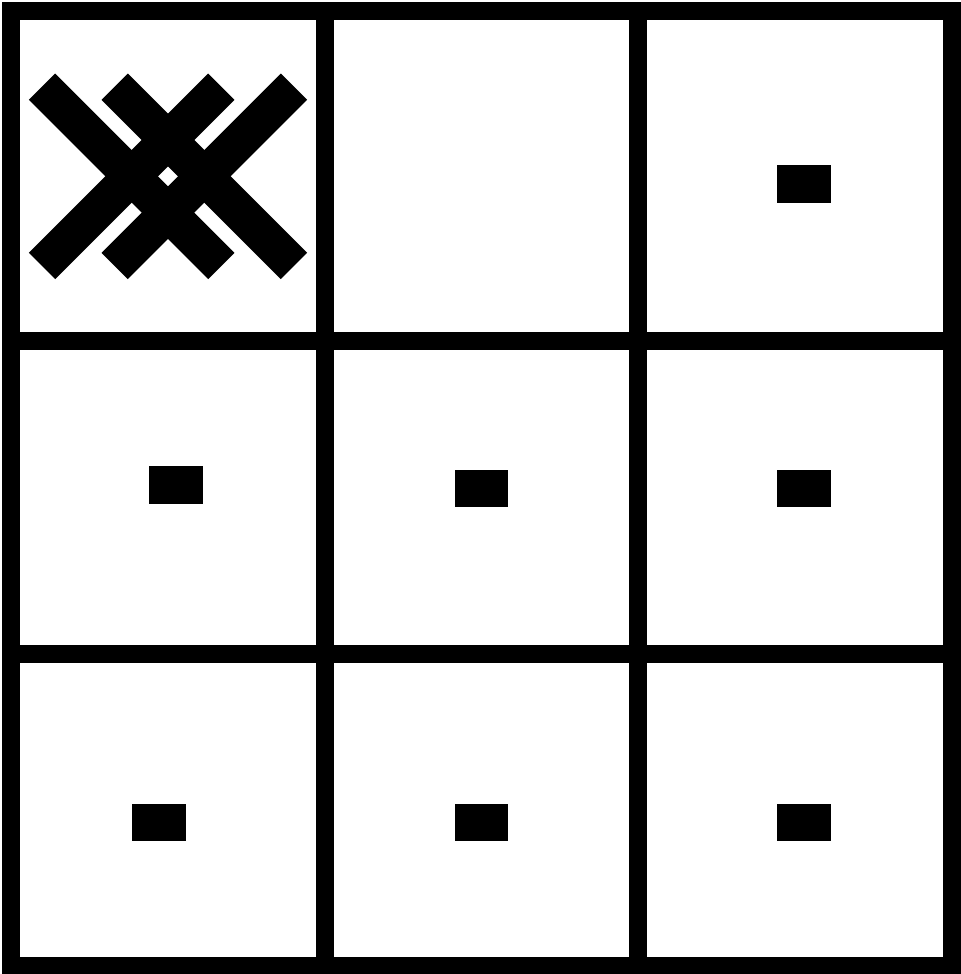}}},
  	\end{align*}
where the black dots denote cells whose contents are irrelevant to the proof. It is obvious that the empty cell restricts the choice of edges that can be selected to remove the double cell. To remove each dash from the double cell under consideration, the four edges in the following graph must be selected,
\begin{align*}
  	  \vcenter{\hbox{\includegraphics[scale=0.15]{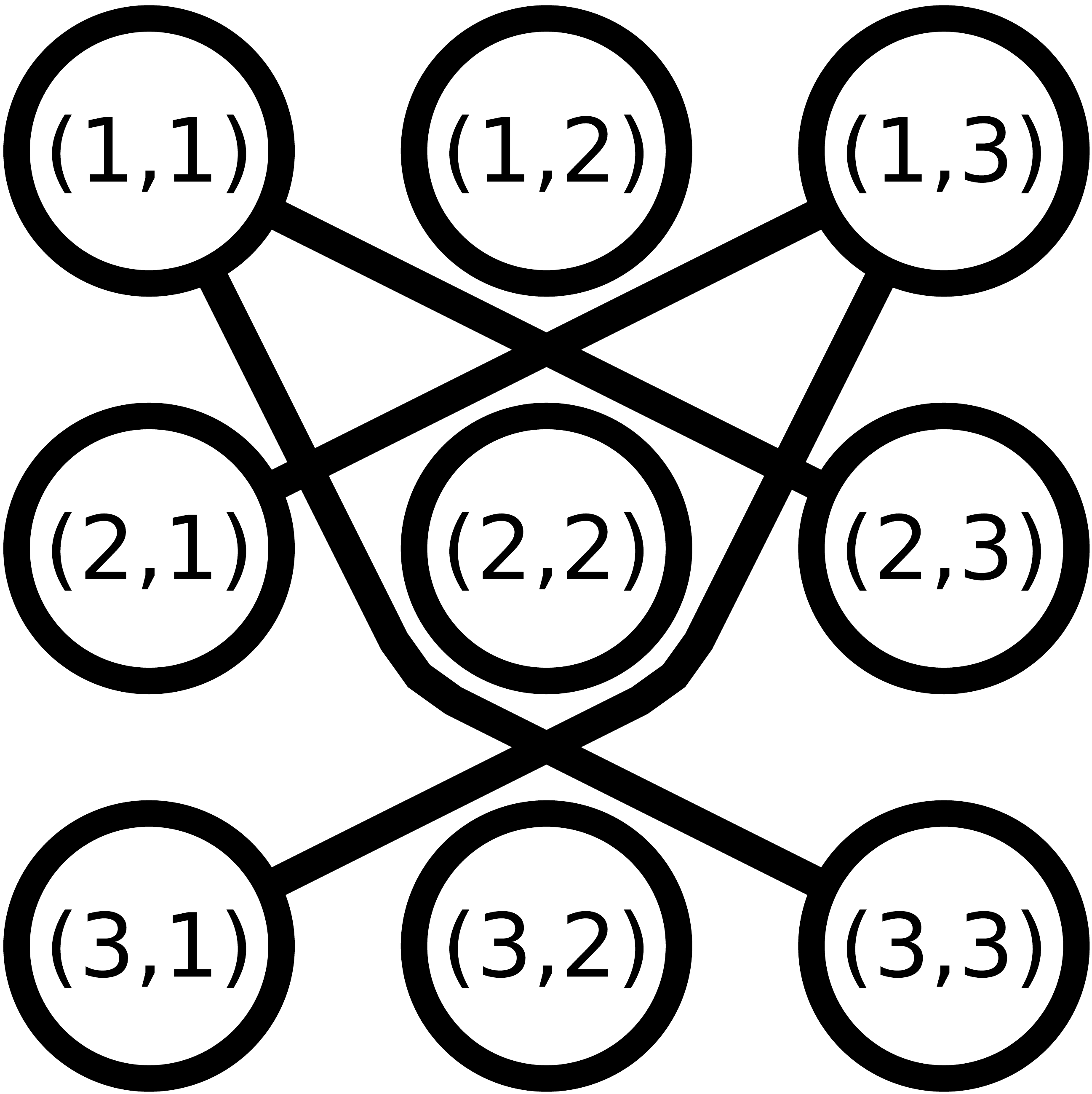}}},
\end{align*}
because these are the only edges that are compatible with such a contribution table.
Selecting these four edges will produce a grid-labelled graph that is locally isomorphic to the union of $2$ grid-labelled graphs locally isomorphic to $(B_2)_l^{3,3}$. We know by Lemma \ref{lemma:singleEdgeAddition} that a single edge cannot be added to such a graph to produce a grid-labelled graph that satisfies the degree criterion. Such tables are therefore invalid.

When a double cross cell is not on the same row as the empty cell then the table is locally isomorphic to the following:
\begin{align*}
  	\vcenter{\hbox{\includegraphics[scale=\boardscale]{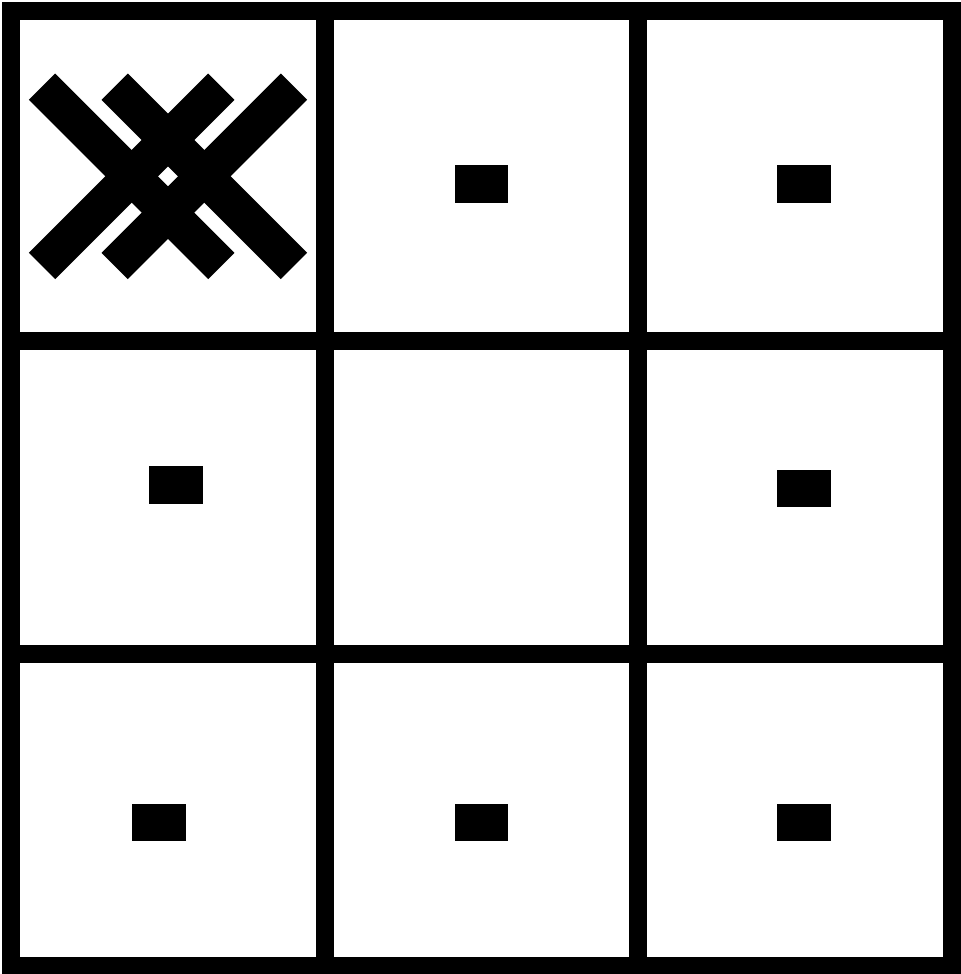}}}.
  	\end{align*}
To remove all the dashes from the double cross cell, $4$ of the $6$ edges from the following grid-labelled graph must be selected:
\begin{align*}
  	  \vcenter{\hbox{\includegraphics[scale=0.22]{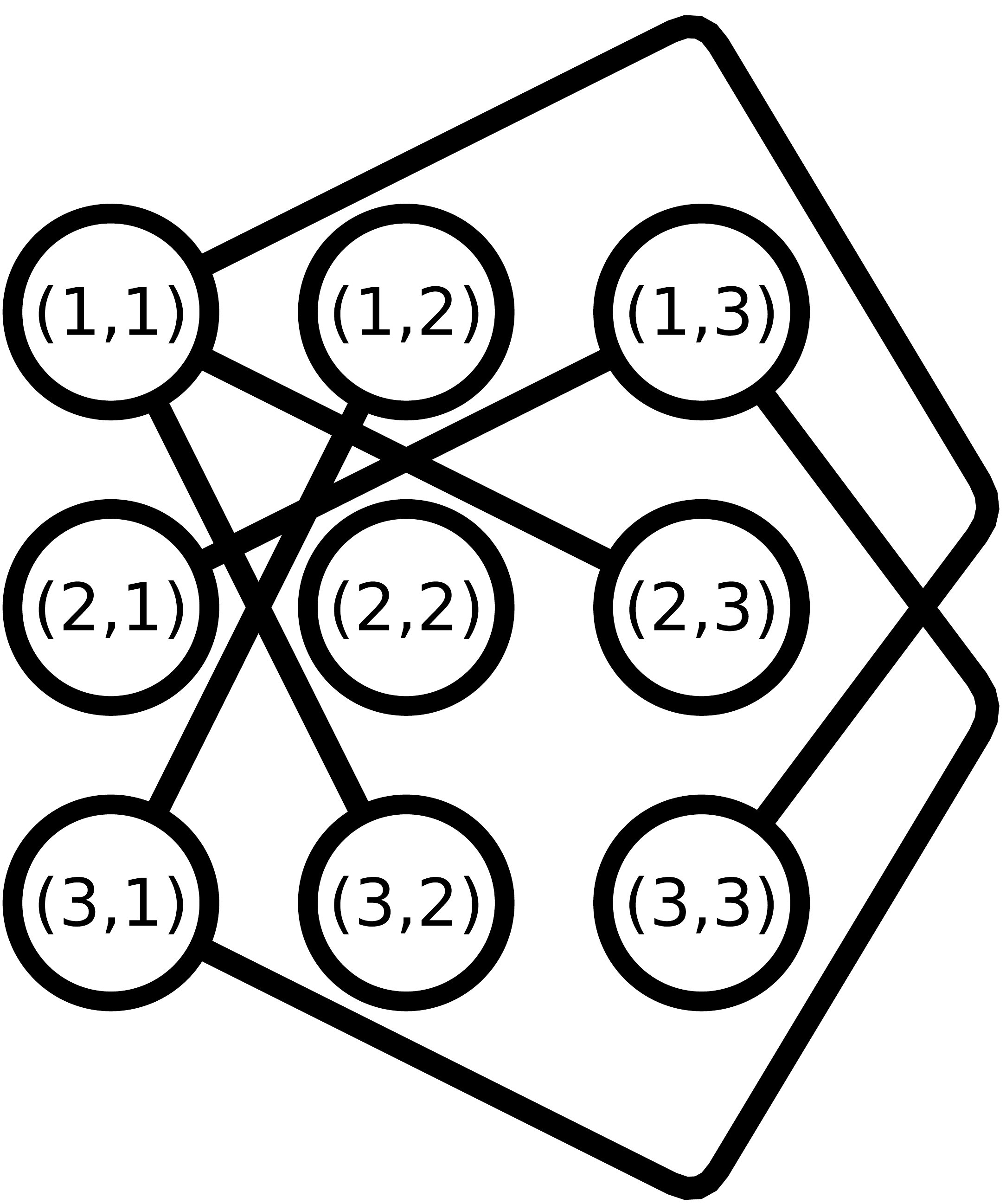}}}.
\end{align*}
It is impossible to choose $4$ edges from this grid-labelled graph without introducing a subgraph locally isomorphic to $(B_2)_l^{3,3}$. Any remaining edges added to this grid-labelled graph such that it still satisfies the degree criterion would need to be from a grid-labelled graph that itself satisfied it. By Lemma \ref{lemma:threediagonaledges}, the only three edge graphs that have this property are locally isomorphic to $(B_3)_l^{3,3}$. The lemma then holds.
\end{proof}
\begin{lemma}
	\label{lemma:fivediagonaledges}
	Let $G_l^{3,3}$ be a graph with $5$ diagonal edges. Then $G_l^{3,3}
	$ satisfies the degree criterion if and only if it is locally isomorphic to $(B_5)_l^{3,3}$, or has a decomposition into two graphs locally isomorphic to rotations of $(B_2)_l^{3,3}$ and $(B_3)_l^{3,3}$ respectively.
\end{lemma}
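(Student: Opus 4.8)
The plan is to follow the contribution-table method used for the $2$-, $3$- and $4$-edge cases, organising the argument by how the crosses cluster on the table. First I would apply Lemma \ref{lemma:matchedBoards}: a $5$-edge grid-labelled graph satisfying the degree criterion has a contribution table with exactly $10$ crosses. The fact to set up is that the number of crosses in a cell $(i,j)$ equals the degree $d((i,j))$, since for a degree-criterion graph each down-dash deposited at $(i,j)$ by an incident edge is matched there by an up-dash from a distinct ``transpose partner'' edge (Lemma \ref{lemma:zerocontribution}). An incident edge never places an up-dash at its own endpoint, so the incident and partner edges of a degree-$d$ vertex are disjoint and the graph must contain at least $2d$ edges. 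With only $5$ edges this forbids degree $\ge 3$, so every cell carries $0$, $1$ or $2$ crosses.

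Writing $t$ for the number of double-cross cells and $s$ for the number of single-cross cells, the identities $2t+s=10$ and $t+s\le 9$ force $1\le t\le 5$, reducing the classification to five degree profiles. The sufficiency direction is immediate in all of them: one checks directly that $(B_5)_l^{3,3}$ has vanishing contribution matrix, local isomorphism preserves the degree criterion, and since contribution matrices add, any edge-disjoint union of a copy of $(B_2)_l^{3,3}$ and a copy of $(B_3)_l^{3,3}$ again has vanishing contribution matrix. It remains to prove necessity profile by profile, and the governing dichotomy is whether the table has an empty cell.

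The profile $t=1$ is the only one with no empty cell (all nine cells occupied, one double cross). Here I would run the greedy reconstruction of Lemmas \ref{lemma:fourdiagonaledges} and \ref{lemma:fivediagonaledgestwocrosses}: clear the table by adding diagonal edges to the empty grid one at a time, discarding any partial choice that leaves a row or column with a single occupied cell by Lemma \ref{lemma:invalid}, and forbidding the premature completion of a degree-criterion subgraph by Lemma \ref{lemma:singleEdgeAddition}. The goal is to show that with no empty cell to constrain it, the central double cross is forced to be cleared by the skew-mesh pattern, so every surviving reconstruction is locally isomorphic to $(B_5)_l^{3,3}$. The case $t=2$ (one empty cell) is exactly Lemma \ref{lemma:fivediagonaledgestwocrosses}, which already gives $G_l^{3,3}\cong_{3,3}(B_2)_l^{3,3}\cup (B_3)_l^{3,3}$.

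For $t\ge 3$ there are two or more empty cells, and I would generalise the mechanism of Lemma \ref{lemma:fivediagonaledgestwocrosses}: either the empty cells so constrain the edges clearing a double cross that those edges assemble into a subgraph locally isomorphic to $(B_2)_l^{3,3}$, in which case splitting off this criss-cross leaves (by additivity of contribution matrices) a degree-criterion graph on $3$ edges, necessarily a rotation of $(B_3)_l^{3,3}$ by Lemma \ref{lemma:threediagonaledges}, giving the required decomposition; or else the table is invalid by Lemma \ref{lemma:invalid}. The hard part will be completeness rather than any single step: I must check that up to local isomorphism and rotation every valid $10$-cross table in each profile has been enumerated, and that the two alternatives above are genuinely exhaustive. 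I expect the most delicate sub-case to be $t=5$ (five double crosses, four empty cells) --- a putative all-diagonal $5$-cycle with no degree-$1$ vertex to seed a criss-cross --- which, since it is neither $(B_5)_l^{3,3}$ nor expressible as a $(B_2)$-plus-$(B_3)$ decomposition, must be excluded outright via the row/column constraint of Lemma \ref{lemma:invalid}.
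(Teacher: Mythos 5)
Your proposal follows essentially the same route as the paper's proof: Lemma \ref{lemma:matchedBoards} gives ten crosses, the analysis is organised by the number of double-cross cells, the two-double-cross profile is delegated to Lemma \ref{lemma:fivediagonaledgestwocrosses}, and the single-double-cross profile is resolved by the constrained reconstruction (no $(B_2)_l^{3,3}$ subgraph, hence four uphill or four downhill edges around the central double cross) that forces $(B_5)_l^{3,3}$. Your observation that the number of crosses in a cell equals the vertex degree --- so that five edges force degree at most $2$ and exactly five profiles $t=1,\dots,5$ --- is a clean sharpening of the paper's setup, and on the profiles with three or more double crosses you are in fact more careful than the paper, which dismisses them as ``obviously invalid'' where you correctly flag that an argument (via Lemma \ref{lemma:invalid}, most delicately for the five-double-cross ``$5$-cycle'' profile) still has to be carried out.
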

\begin{proof}
	By Lemma \ref{lemma:matchedBoards} we must consider all tables with $10$ crosses. There are a limited number of ways of placing $10$ crosses on a $3\times 3$ table that lead to the table being valid. In order to fit all the crosses on the table, there must be some cells of the table that have double crosses. 
	In Lemma \ref{lemma:fivediagonaledgestwocrosses} we have seen that we don't need to consider the tables with $2$ double crosses. It is obvious that tables with more than $2$ double crosses are invalid, so, we must only consider the tables with a single double cross.
	We can reason about these tables by considering the example
	\begin{align}
	\vcenter{\hbox{\includegraphics[scale=\boardscale]{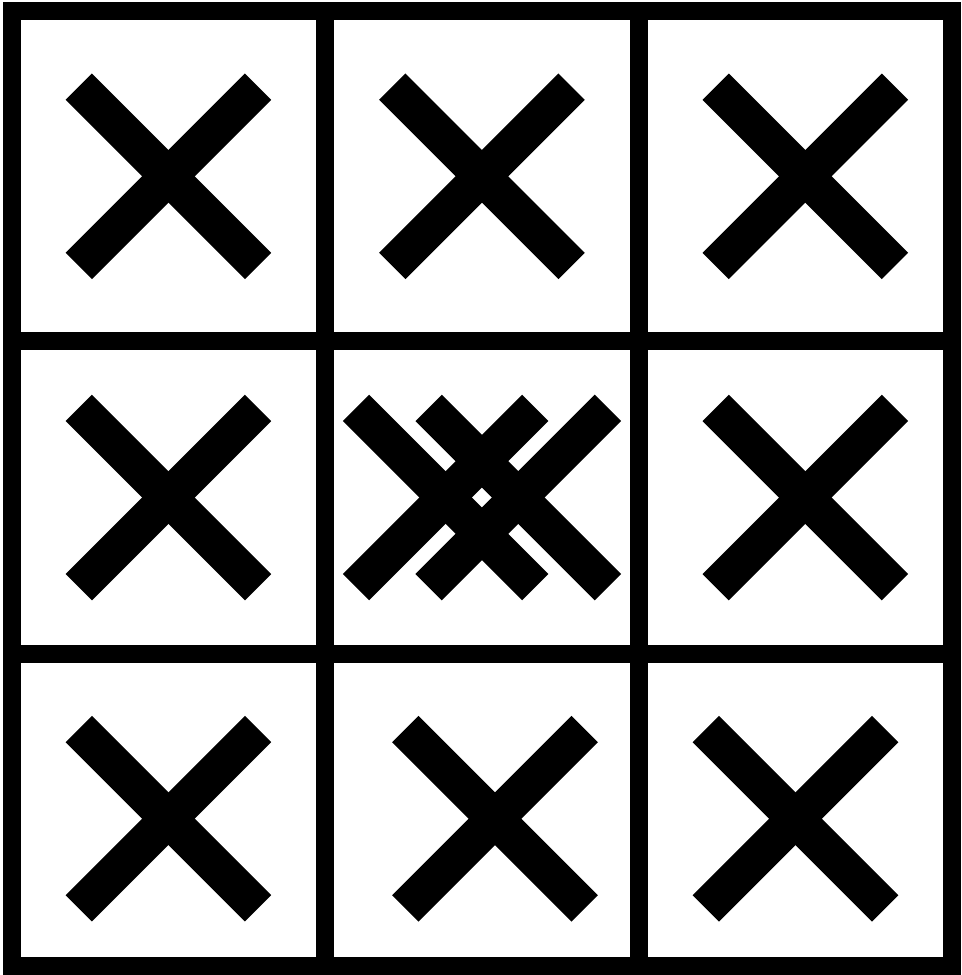}}}\label{eq:onedoublecross},
	\end{align}
	to which all others of this kind are locally isomorphic.
	To eliminate the double cross in the middle of the table, $4$ edges from the graph
	\begin{align}
	\vcenter{\hbox{\includegraphics[scale=0.15]{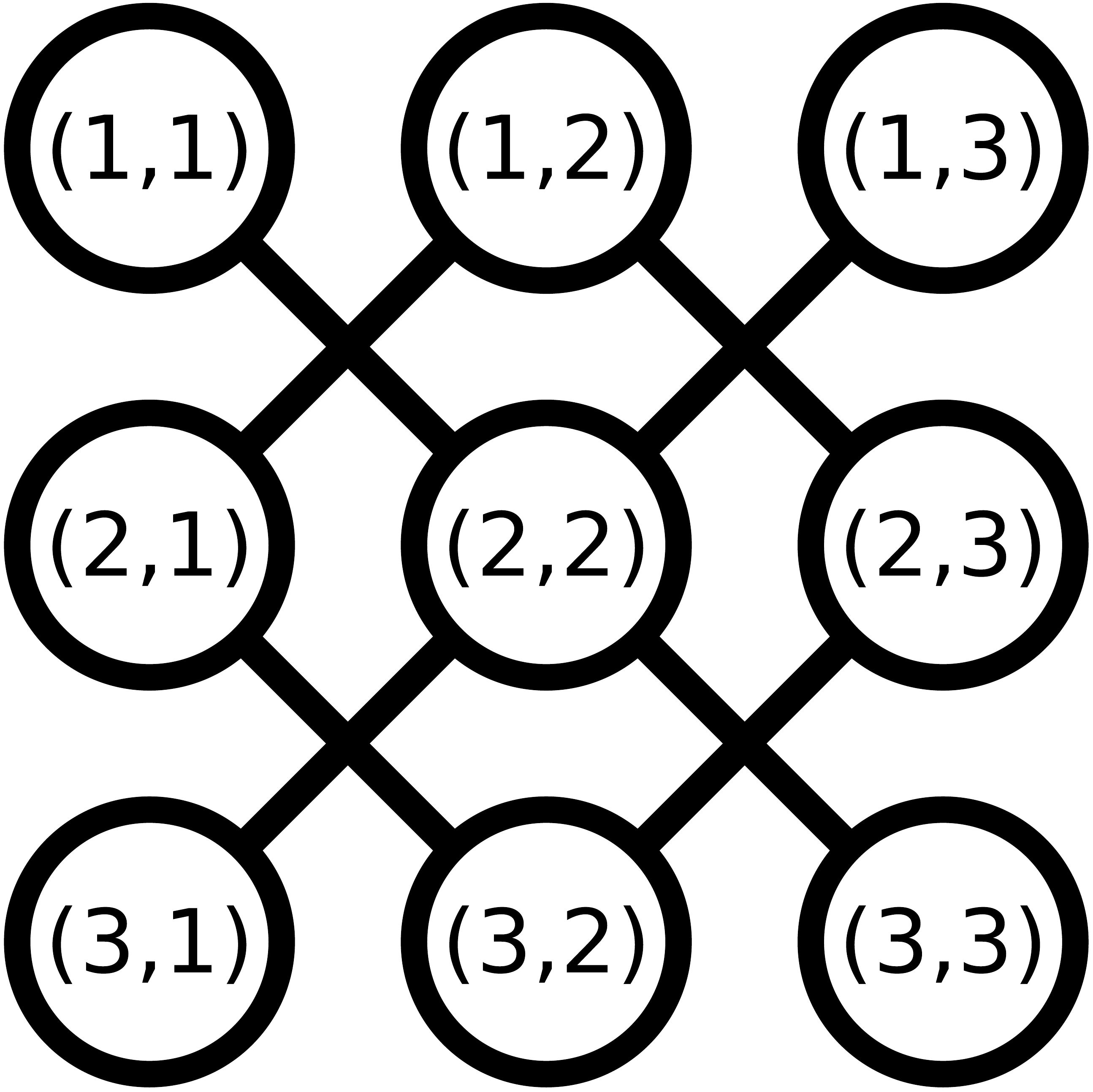}}}.\label{eq:onehop}
	\end{align}
	must be selected. 
	It is clear that the $4$ edges must be chosen such that there is no subgraph locally isomorphic to $(B_2)_l^{3,3}$. If there was a such a subgraph, then the remaining $3$ edges would need to form a graph locally isomorphic to $(B_3)_l^{3,3}$ in order for the degree criterion to be satisfied, and we have already considered such a case.
	Consider the removal of the downhill edge $\{(1,1),(2,2)\}$. This leads to the table
	\begin{align*}
	\vcenter{\hbox{\includegraphics[scale=\boardscale]{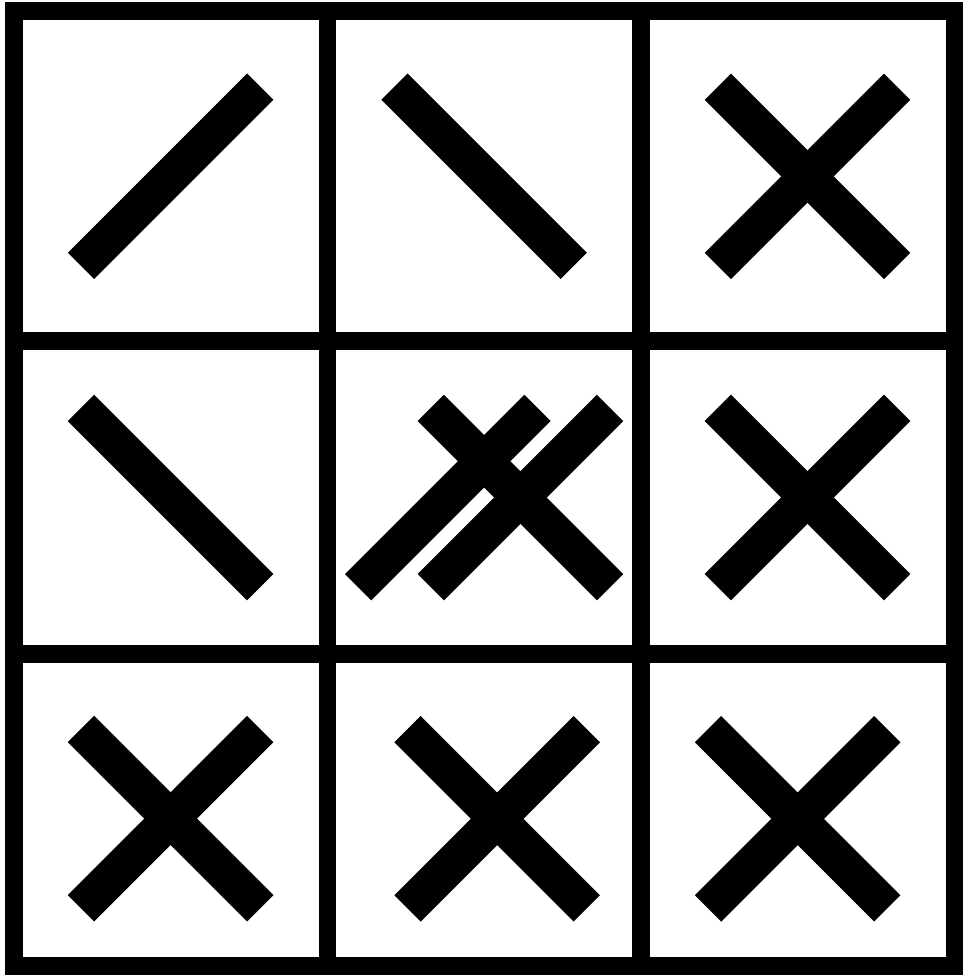}}}.
	\end{align*}
	Clearly the remaining three edges we must choose from (\ref{eq:onehop}) must be chosen to be downhill, as there is no other way of selecting edges without introducing isolated crosses on rows or columns. If instead we chose our first edge to be \emph{uphill}, by similar reasoning the remaining three must also be chosen to be uphill.
	
	We can then observe that removal of $4$ uphill or downhill edges from the board (\ref{eq:onedoublecross}) in the way described leaves a single edge in the direction that completes the grid-labelled graph $(G_5)_l^{3,3}$. Hence, the lemma is proved.
\end{proof}
\begin{lemma}
	\label{lemma:morethansix}
	Let $G_l^{3,3}$ be a grid-labelled graph with $e\ge 6$ diagonal edges. If $G_l^{3,3}$ satisfies the degree criterion, then it has a decomposition $(X_1)_l^{3,3},\dots
	(X_n)_l^{3,3}$ where for all $1\le i\le n$, $(X_i)_l^{3,3}$ is locally isomorphic to a rotation of a building-block.
\end{lemma}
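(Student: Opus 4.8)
The plan is to argue by strong induction on the number $e$ of diagonal edges. The base cases $2\le e\le 5$ are precisely the classifications already in hand: Lemmas \ref{lemma:twodiagonaledges}, \ref{lemma:threediagonaledges}, \ref{lemma:fourdiagonaledges} and \ref{lemma:fivediagonaledges} show that every grid-labelled graph satisfying the degree criterion with two through five diagonal edges is itself a union of rotations (up to local isomorphism) of building-blocks. The engine of the induction is the observation that the assignment of a contribution matrix is additive over edge-disjoint unions, so by Lemma \ref{lemma:zerocontribution}, if $G_l^{3,3}$ satisfies the degree criterion and a subgraph $X\subseteq G_l^{3,3}$ also satisfies it, then $C(G_l^{3,3}\setminus X)=C(G_l^{3,3})-C(X)=0$, whence $G_l^{3,3}\setminus X$ satisfies the degree criterion as well. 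Thus, if I can always peel off a building-block subgraph $X_1$ from a degree-criterion graph with $e\ge 6$ edges, the remainder $G_l^{3,3}\setminus X_1$ is again a degree-criterion graph with strictly fewer edges, to which the inductive hypothesis or a base case applies; adjoining $X_1$ to the resulting decomposition proves the lemma.

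The heart of the argument is therefore the claim that every degree-criterion grid-labelled graph with at least two diagonal edges contains a subgraph locally isomorphic to a rotation of a building-block. I would establish this by passing to a \emph{minimal} nonempty subgraph $X_1$ satisfying the degree criterion, which exists because the family of such subgraphs is finite and contains $G_l^{3,3}$ itself; minimality means $X_1$ has no proper nonempty subgraph satisfying the degree criterion. It then suffices to show that any such irreducible graph has at most five diagonal edges, for then $X_1$ falls under one of the base-case classifications. Note that $X_1$ cannot have a single edge, since a single diagonal edge never satisfies the degree criterion (as used in Lemma \ref{lemma:singleEdgeAddition}); and among the four- and five-edge cases the reducible configurations, which by Lemmas \ref{lemma:fourdiagonaledges} and \ref{lemma:fivediagonaledges} are exactly the unions of two smaller building-blocks, are precisely those excluded by irreducibility. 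This leaves $X_1$ locally isomorphic to a rotation of one of $(B_2)_l^{3,3}$, $(B_3)_l^{3,3}$, $(B_4)_l^{3,3}$ or $(B_5)_l^{3,3}$.

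The main obstacle is thus to prove that no irreducible degree-criterion graph on the $3\times 3$ grid has six or more diagonal edges; equivalently, that every such graph with $e\ge 6$ edges is reducible. I would attack this through the contribution table. By Lemma \ref{lemma:matchedBoards} such a table carries $2e\ge 12$ crosses distributed over nine cells, so by pigeonhole some cells must carry double or higher crosses, while Lemma \ref{lemma:invalid} forbids any row or column with a single occupied cell. These two constraints sharply limit the admissible cross patterns and reduce the problem to a finite inspection over $6\le e\le 18$ (the total number of diagonal edges available), mirroring at larger scale the edge-by-edge clearing arguments of Lemmas \ref{lemma:fourdiagonaledges} and \ref{lemma:fivediagonaledges}: in each admissible pattern one exhibits a sub-collection of crosses forming a building-block table, for instance a rectangle of four crosses witnessing a $(B_2)_l^{3,3}$, whose edges constitute a proper degree-criterion subgraph and so certify reducibility. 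Packaging this case analysis compactly—ideally by isolating a single structural feature that forces a removable building-block whenever $e\ge 6$—is the delicate part; the inductive scaffolding around it is routine.
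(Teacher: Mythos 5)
Your inductive scaffolding is sound and matches the paper's in spirit: the paper also proves this lemma by peeling off building-block subgraphs and inducting, and your explicit observation that the contribution matrix is additive over edge-disjoint unions (so that $C(G_l^{3,3}\setminus X)=C(G_l^{3,3})-C(X)=0$ whenever both $G_l^{3,3}$ and a subgraph $X$ satisfy the degree criterion) is a cleaner justification of why the remainder after peeling is again a degree-criterion graph than anything the paper states. The reformulation via a minimal nonempty degree-criterion subgraph $X_1$ is also a nice way to organise the argument.

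However, there is a genuine gap: the entire content of the lemma is concentrated in your claim that no irreducible degree-criterion graph on the $3\times 3$ grid has six or more diagonal edges, and you do not prove it --- you describe how you \emph{would} attack it (pigeonhole on the $2e\ge 12$ crosses, the constraints of Lemmas \ref{lemma:matchedBoards} and \ref{lemma:invalid}) and then concede that ``packaging this case analysis compactly \ldots is the delicate part.'' That delicate part is exactly what the paper supplies, and it supplies it by brute force: for $6\le e\le 9$ it invokes Lemma \ref{lemma:69edges}, proved in Appendix \ref{appendix:proof} by an exhaustive computer enumeration of all degree-criterion graphs with that many diagonal edges (Figures \ref{fig:6edgesall}--\ref{fig:9edgesall}), and for $e\ge 10$ it asserts that a subgraph locally isomorphic to $(B_2)_l^{3,3}$ must be present. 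Your hoped-for ``single structural feature that forces a removable building-block whenever $e\ge 6$'' is not exhibited, and absent either that or the explicit enumeration, the induction has no engine for $e\ge 6$. So the proposal correctly reduces the lemma to the right combinatorial claim but leaves that claim unverified; to complete it you would need to actually carry out the finite inspection (or cite the appendix enumeration), at which point your argument becomes essentially the paper's.
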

\begin{proof}
	A $3\times 3$ grid-labelled graph can have up to $18$ diagonal edges. If a $3\times 3$ grid-labelled graph has $10$ or more diagonal edges, then it is obvious that it must contain a subgraph locally isomorphic to $(B_2)_l^{3,3}$. 	
 The remainder of the cases, graphs with $6\le e\le9$ diagonal edges, are covered in Lemma \ref{lemma:69edges} which is proved in Appendix \ref{appendix:proof}. The lemma then follows by induction.
	
	\end{proof}

\begin{lemma}
	\label{lemma:69edges}
	Let $G_l^{3,3}$ be a grid-labelled graph with $6\le e\le9$ diagonal edges. If $G_l^{3,3}$ satisfies the degree criterion, then it has a decomposition $\{(X_1)_l^{3,3},\dots
	(X_n)_l^{3,3}\}$ where for all $1\le i\le n$, $(X_i)_l^{3,3}$ is locally isomorphic to a rotation of a building-block.
\end{lemma}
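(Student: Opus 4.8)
The plan is to argue by strong induction on the number $e$ of diagonal edges, taking as base cases $e\in\{2,3,4,5\}$, which are completely classified by Lemmas \ref{lemma:twodiagonaledges}, \ref{lemma:threediagonaledges}, \ref{lemma:fourdiagonaledges} and \ref{lemma:fivediagonaledges} (together with $e\le 1$, where a single diagonal edge never satisfies the degree criterion, so the only such graph is empty). The engine of the induction is a \emph{peeling principle}: since the contribution matrix is additive over edges by Equation (\ref{eq:contributionSum}), if $G_l^{3,3}$ satisfies the degree criterion and $X_l^{3,3}\subseteq G_l^{3,3}$ is a subgraph that is itself a building-block (hence $C(X_l^{3,3})=0$ by Lemma \ref{lemma:zerocontribution}), then the graph $G_l^{3,3}\setminus X_l^{3,3}$ obtained by deleting the edges of $X_l^{3,3}$ satisfies
\[
C(G_l^{3,3}\setminus X_l^{3,3})=C(G_l^{3,3})-C(X_l^{3,3})=0,
\]
so it again satisfies the degree criterion. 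It has strictly fewer edges, so by the induction hypothesis (or a base case) it decomposes into building-blocks; adjoining $X_l^{3,3}$ to that decomposition gives the required decomposition of $G_l^{3,3}$. Note the remainder can never have exactly one edge, since no single-diagonal-edge graph satisfies the degree criterion, so the induction bottoms out cleanly at the base cases.

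The whole lemma therefore reduces to a single existence claim: every $G_l^{3,3}$ with $6\le e\le 9$ diagonal edges satisfying the degree criterion contains a subgraph locally isomorphic to a rotation of a building-block. I would attack this through the contribution table. By Lemma \ref{lemma:matchedBoards} such a graph has a table with exactly $2e$ crosses, where $12\le 2e\le 18$, spread over the nine cells of the $3\times 3$ grid, and by Lemma \ref{lemma:invalid} no row or column may contain a single occupied cell. Since a vertex of the diagonal subgraph connects only to the four grid points off its own row and column, each cell carries between $0$ and $4$ crosses, and because $2e>9$ the pigeonhole principle forces some cells to be multiply occupied. I would organise the analysis by the multiplicity profile of the crosses — how many cells carry double, triple, or quadruple crosses, and how these sit relative to one another and to the empty cells — exactly as in the proofs of Lemmas \ref{lemma:fourdiagonaledges}, \ref{lemma:fivediagonaledgestwocrosses} and \ref{lemma:fivediagonaledges}. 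For each admissible profile the goal is to exhibit, among the edges realising it, either both diagonals of some $2\times 2$ subgrid (a criss-cross $(B_2)_l^{3,3}$) or one of the tally, cross-hatch or skew-mesh patterns $(B_3)_l^{3,3}$, $(B_4)_l^{3,3}$, $(B_5)_l^{3,3}$. Lemma \ref{lemma:invalid}, together with the observation used in Lemma \ref{lemma:singleEdgeAddition} that a degree-criterion graph cannot be extended by a lone diagonal edge, repeatedly prunes impossible partial tables and keeps the enumeration finite.

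The main obstacle is the bookkeeping: a single cross profile can be realised by several genuinely distinct edge sets, and the reductions via local isomorphism and rotation must be applied carefully to avoid both double-counting and omitted cases. I expect the delicate instances to be the dense tables near $e=8,9$ that contain no criss-cross, since there one cannot simply peel off a $(B_2)_l^{3,3}$ and must instead recognise a cross-hatch or skew-mesh directly; these are precisely the building-blocks that do not themselves split into smaller building-blocks. Controlling this is where the effort lies. Once the existence claim is secured for each of $e=6,7,8,9$, the peeling principle and the strong induction close the proof, and the result feeds directly into Lemma \ref{lemma:morethansix} to cover the remaining range $e\ge 10$.
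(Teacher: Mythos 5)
Your reduction of the lemma to an existence claim is sound, and structurally cleaner than what the paper does: additivity of the contribution matrix over edges (Equation (\ref{eq:contributionSum})) together with Lemma \ref{lemma:zerocontribution} does give the peeling principle; the strong induction is well-founded because the remainder after deleting a building-block subgraph has at most $e-2$ edges and can never have exactly one; and the base cases $e\le 5$ are covered by Lemmas \ref{lemma:twodiagonaledges}, \ref{lemma:threediagonaledges}, \ref{lemma:fourdiagonaledges} and \ref{lemma:fivediagonaledges}. The problem is that the entire content of the lemma now sits in the existence claim --- that every degree-criterion graph with $6\le e\le 9$ diagonal edges contains a subgraph locally isomorphic to a rotation of a building-block --- and you do not prove it. You describe how you would organise the case analysis by cross-multiplicity profiles, and you correctly identify where the danger lies (dense tables near $e=8,9$ containing no criss-cross, where a cross-hatch or skew-mesh must be recognised directly rather than peeled), but no profile is actually worked through. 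As written, the step ``for each admissible profile, exhibit a building-block among the edges realising it'' is the statement to be proved, restated; Lemmas \ref{lemma:matchedBoards}, \ref{lemma:invalid} and \ref{lemma:singleEdgeAddition} prune the search but do not by themselves force a building-block to appear.

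For comparison, the paper does not carry out this analysis either: its proof of Lemma \ref{lemma:69edges} is an exhaustive computer enumeration of all $3\times 3$ degree-criterion graphs with $6,7,8$ and $9$ diagonal edges (reduced modulo local isomorphism and with criss-cross subgraphs stripped out), followed by inspection of the resulting figures, and its Lemma \ref{lemma:morethansix} then says ``the lemma follows by induction'' without spelling out the peeling step that you make explicit. So your route is genuinely different and, if completed, would yield a human-checkable proof where the paper offers only a certificate by enumeration. But to actually establish the lemma you must either finish the profile-by-profile analysis for each of $e=6,7,8,9$ --- the criss-cross-free cases being the real work --- or fall back on the same exhaustive search the paper relies on.
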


We can finally classify the $3\times 3$ grid-labelled graphs that satisfy the degree criterion.
\begin{proposition}
	Let $G_l^{3,3}$ be a grid-labelled graph. Then $G_l^{3,3}$ satisfies the
	degree criterion if and only if it has a decomposition $\{(X_1)_l^{3,3},\dots
	(X_n)_l^{3,3}\}$ where for all $1\le i\le n$, $(X_i)_l^{3,3}$ is locally isomorphic to a rotation of a building-block.
\end{proposition}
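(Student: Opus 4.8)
The plan is to prove both implications by passing to the contribution-matrix characterisation of the degree criterion and then assembling the edge-count lemmas already established. By Lemma \ref{lemma:degreecriterioniff} the horizontal and vertical edges of $G_l^{3,3}$ are irrelevant to the degree criterion, so in keeping with the standing convention of this section I would work with a graph all of whose edges are diagonal; equivalently, I replace $G_l^{3,3}$ by the diagonal part $D_l^{3,3}$ of its HVD decomposition. Write $e$ for the number of (diagonal) edges. A minor point I would state explicitly rather than leave implicit is exactly this scope convention: with it the decomposition consists purely of building-block-isomorphic pieces, whereas without it one must additionally permit trivial single-edge horizontal or vertical components.

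For the forward implication, suppose $G_l^{3,3}$ satisfies the degree criterion and argue by cases on $e$. The cases $e=0$ and $e=1$ are immediate: the empty decomposition works, and no single diagonal edge can satisfy the criterion, as observed before the introduction of contribution matrices. For $e=2,3,4,5$ the required decomposition is precisely the content of Lemmas \ref{lemma:twodiagonaledges}, \ref{lemma:threediagonaledges}, \ref{lemma:fourdiagonaledges} and \ref{lemma:fivediagonaledges} respectively, each of which exhibits $G_l^{3,3}$ either as locally isomorphic to a rotation of a single building-block or as a union of building-block-isomorphic subgraphs. For $e\ge 6$ the decomposition is supplied directly by Lemma \ref{lemma:morethansix}. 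Collecting these cases yields a decomposition of $G_l^{3,3}$ all of whose parts are locally isomorphic to rotations of building-blocks.

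For the converse I would use Lemma \ref{lemma:zerocontribution}: the degree criterion holds for a graph exactly when its contribution matrix vanishes. First I would verify that each building-block $(B_2)_l^{3,3},\dots,(B_5)_l^{3,3}$ has zero contribution matrix — for the criss-cross this is the invariance under partial transpose already noted, and for the others it follows from the cross-counting used to construct them in the corresponding lemmas. Next I would observe that both applying a local isomorphism, which permutes rows and columns of the contribution table, and applying a rotation of the grid, which permutes and transposes the contribution matrix, send the zero matrix to the zero matrix, so every $(X_i)_l^{3,3}$ in the decomposition has zero contribution matrix. Since a local isomorphism and a rotation each map diagonal edges to diagonal edges, the pieces remain diagonal and the reduction above is consistent. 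Finally, because the contribution matrix is additive over an edge-disjoint decomposition by Equation (\ref{eq:contributionSum}), the contribution matrix of $G_l^{3,3}$ is the sum of these zero matrices and hence is zero; by Lemma \ref{lemma:zerocontribution} again, $G_l^{3,3}$ satisfies the degree criterion.

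The proposition is therefore essentially an assembly of the per-edge-count lemmas, so the substantive work has already been carried out in those lemmas. The only genuine obstacle is ensuring the case analysis is exhaustive, and in particular that the regime $e\ge 6$ is fully covered — this is exactly what Lemma \ref{lemma:morethansix}, via Lemma \ref{lemma:69edges} in the appendix, guarantees.
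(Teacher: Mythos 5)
Your proposal is correct and follows essentially the same route as the paper: the forward direction is the same case analysis on the number of diagonal edges, assembled from Lemmas \ref{lemma:twodiagonaledges}, \ref{lemma:threediagonaledges}, \ref{lemma:fourdiagonaledges}, \ref{lemma:fivediagonaledges} and \ref{lemma:morethansix}, while the converse is the observation that a union of building-block pieces satisfies the degree criterion. The paper dismisses the converse as obvious, whereas you usefully spell it out via additivity of contribution matrices under edge-disjoint unions and invariance under local isomorphism and rotation, and you also make explicit the diagonal-edges-only convention that the paper leaves implicit.
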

\begin{proof}
	It is obvious that a grid-labelled graph obtained as a union of building-block graphs will satisfy the degree criterion, because each building-block graph satisfies it.
	
	Let us now prove the other direction.
	Let $G_l^{3,3}$ be a grid-labelled graph that satisfies the degree criterion. From Lemmas \ref{lemma:twodiagonaledges}, \ref{lemma:threediagonaledges}, \ref{lemma:fourdiagonaledges} and \ref{lemma:fivediagonaledges} we know that if $G_l^{3,3}$ has $2\le m \le 5$ diagonal edges then it is locally isomorphic to a building-block $(B_m)^{3,3}_l$, or has a decomposition into grid-labelled graphs that are locally isomorphic to building-blocks, or rotations of building blocks.
	
	Finally, from Lemma \ref{lemma:morethansix}, we know that this is also true for graphs with $6$ or more diagonal edges. This concludes the proof.
\end{proof}

We have characterised all of the $3\times 3$ grid-labelled graphs that satisfy the degree criterion by means of a forbidden subgraph type argument. We have found that such grid-labelled graphs satisfy the degree criterion if and only if they are built out of a small set of ``building-block graphs''. It is reasonable to assume that the same is true for the grid-labelled graphs on a larger grid. A natural question is how the size of the building-block set grows as a function of grid dimension.

Note that the results in this section are applicable to the $(k,a,b)$-\textsc{GraphSeparability} problem for $m=n=3$ and all $k$. Along the way we have also found solutions to the $k=1,2,3$ case for any $a$ and $b$ (see Theorem \ref{theorem:2or3sep}). In arbitrary dimensions $a,b$ there are of course a finite number of graphs with fixed number of edges $k$ that satisfy the degree criterion. We do not consider $(4,a,b)$-\textsc{GraphSeparability} and $(5,a,b)$-\textsc{GraphSeparability}, but finding all of the building-blocks for these cases should not prove difficult. An fascinating question would be to count how many graphs satisfy the degree criterion as a function of $k$, when fixed grid dimensions are not taken into account. The reasoning to be used in this case is exactly equivalent to what we have done in this section, but on an infinite grid.

Not all of the $3\times 3$ building-blocks correspond to separable states. Indeed, the well known matrix realignment criterion \cite{MR} shows that both $(B_4)_l^{3,3}$ and $(B_5)_l^{3,3}$ correspond to entangled states. Entangled states that are positive under partial transpose correspond to states that have zero \emph{distillable entanglement} and are referred to as \emph{bound entangled}~\cite{Horodecki2009}. Hence, our characterisation can not be extended trivially to a necessary and sufficient criterion for separability for $3\times 3$ graphs. To do so will require analysis on whether the unions of these bound entangled building-blocks with other building-blocks causes the corresponding state to be separable. We highlight this as an open question.

\subsection{Counting graphs that satisfy the degree criterion}
\label{subsection:enumeration}
In the previous section we showed that type $(3,3)$ grid-labelled graphs satisfy the degree criterion if and only if they are constructed from a small number of building-block graphs. Now we will outline a more general framework for enumeration of grid-labelled graphs of type $(a, b)$, with $k$ edges, that satisfy the degree criterion.

In what follows, the quantity $P_k(a,b)$ is defined to be the number of graphs of type $(a, b)$ with $k$ edges that satisfy the degree criterion. Let $D_k(a,b)$ be the number of graphs of type $(a,b)$ with $k$ edges, all diagonal, that satisfy the degree criterion.

 \begin{definition}[Rook's graph]
The \emph{rook's graph} is the grid-labelled graph $R_l^{a,b}$ with an edge between every pair of vertices in the same row or column. It is well known that the $a\times b$ rook's graph has \begin{align*}r(a,b):= \frac{a\cdot b(a+b)}{2}-a\cdot b\end{align*} edges.
\end{definition}

\begin{lemma}
\label{lemma:genericcounting}
For any $a,b,k\in\mathbb{N}$,
\begin{align*}P_k(a,b)={r(a,b)\choose k}+\sum_{i=2}^{k}D_i(a,b)\cdot {r(a,b) \choose k-i}.\end{align*}
\end{lemma}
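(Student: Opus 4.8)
The plan is to reduce the count to a stratification by the number of diagonal edges, using the fact that horizontal and vertical edges are completely inert with respect to the degree criterion. The engine is Lemma \ref{lemma:degreecriterioniff}: a grid-labelled graph satisfies the degree criterion if and only if the diagonal part of its HVD decomposition (Lemma \ref{lemma:edgedecomposition}) does. Since horizontal and vertical edges are invariant under the partial transpose, they contribute nothing to the degree criterion and may be adjoined to any valid diagonal part freely.

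First I would record the two disjoint families of candidate edges in a type $(a,b)$ grid-labelled graph: the \emph{rook edges} (those lying in a common row or column, i.e.\ horizontal or vertical edges), of which there are exactly $r(a,b)$ by the definition of the rook's graph $R_l^{a,b}$, and the \emph{diagonal edges}. An edge cannot be both diagonal and lie in a single row or column, so by the uniqueness of the HVD decomposition every graph with $k$ edges is specified by an \emph{independent} choice of its diagonal part and its rook part.

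Next I would partition the graphs counted by $P_k(a,b)$ according to the number $i$ of diagonal edges they contain. For fixed $i$, Lemma \ref{lemma:degreecriterioniff} says the whole graph satisfies the degree criterion precisely when its $i$-edge diagonal part does, and by definition there are exactly $D_i(a,b)$ such all-diagonal configurations. The remaining $k-i$ edges are rook edges chosen arbitrarily among the $r(a,b)$ available ones, contributing $\binom{r(a,b)}{k-i}$ choices. Because the two choices are independent and overlap-free, multiplying and summing over $i$ gives
\begin{align*}
P_k(a,b)=\sum_{i=0}^{k}D_i(a,b)\binom{r(a,b)}{k-i}.
\end{align*}

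Finally I would peel off the two lowest terms. The empty diagonal part satisfies the degree criterion vacuously, so $D_0(a,b)=1$, producing the leading term $\binom{r(a,b)}{k}$; and, as already observed earlier in the text, no grid-labelled graph with a single diagonal edge satisfies the degree criterion, so $D_1(a,b)=0$ and the $i=1$ term drops out. Substituting these two values yields the stated identity. I do not expect a genuine obstacle: the only point demanding care is the claim that the diagonal and rook parts are chosen independently and with no double counting, so that the product $D_i(a,b)\binom{r(a,b)}{k-i}$ enumerates each graph exactly once — and this is exactly what the uniqueness of the HVD decomposition in Lemma \ref{lemma:edgedecomposition} guarantees.
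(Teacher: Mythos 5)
Your proof is correct and follows essentially the same route as the paper: stratify the graphs counted by $P_k(a,b)$ according to the number $i$ of diagonal edges, use the independence of the diagonal and rook parts together with Lemma \ref{lemma:degreecriterioniff} to factor the count as $D_i(a,b)\binom{r(a,b)}{k-i}$, and then peel off the low-order terms. If anything, your handling of the $i=0$ term is slightly cleaner than the paper's, which asserts $D_0(a,b)=0$ while implicitly relying on the empty diagonal part to supply the leading $\binom{r(a,b)}{k}$; your convention $D_0(a,b)=1$ makes that term come out of the general formula uniformly.
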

\begin{proof}
Let $G(a,b,d,h)$ be equal to the number of grid-labelled graphs of type $(a,b)$ with $d$ diagonal edges and $h$ non-diagonal (horizontal or vertical) edges that satisfy the degree criterion. Clearly,
\begin{align*}
P_k(a,b)=\sum_{i=0}^{k}G(a,b,i,k-i).
\end{align*}
Since the degree criterion is not affected by horizontal or vertical edges, and there are at most $r(a,b)$ horizontal and vertical edges in a grid-labelled graph of type $(a,b)$, it is clear that for any $i,j\ge0$,
\begin{align*}
G(a,b,i,j)=D_i(a,b)\cdot {r(a,b) \choose j}.
\end{align*}
Since for any $a,b$, $D_0(a,b)=D_1(a,b)=0$, the lemma holds.
\end{proof}

We obtain the first few values of $D_k(a,b)$ in the next statement.

\begin{proposition}
\label{proposition:dcounting}
For any $a,b\ge 2$, the following are true:
\begin{itemize}
\item $D_2(a,b)={a\choose 2}\cdot {b\choose 2}$;
\item $D_3(2,b)=2\cdot{b\choose 3}$;
\item $D_4(2,b)=3\cdot{b\choose 3}+9\cdot{b\choose 4}$.
\end{itemize}
\end{proposition}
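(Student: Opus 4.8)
The plan is to translate the degree criterion into a condition on vertex degrees and then count by degree profiles. Throughout I use Lemma \ref{lemma:zerocontribution}, so that a grid-labelled graph satisfies the degree criterion exactly when its contribution matrix vanishes, and for the type $(2,b)$ statements I use Corollary \ref{corollary:rowdegrees}, which rephrases the degree criterion for a type $(2,b)$ graph as the requirement that $d((1,j))=d((2,j))$ for every column $j$.

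For the first identity I would count criss-crosses directly. By Lemma \ref{lemma:twodiagonaledges} --- whose argument, as noted after Theorem \ref{theorem:2or3sep}, does not depend on the grid dimensions --- a grid-labelled graph with exactly two diagonal edges satisfies the degree criterion if and only if it is locally isomorphic to $(B_2)_l^{3,3}$; that is, if and only if its two edges are the two diagonals $\{(i_1,j_1),(i_2,j_2)\}$ and $\{(i_1,j_2),(i_2,j_1)\}$ of the rectangle determined by two rows $i_1<i_2$ and two columns $j_1<j_2$. (One can also see this from scratch: with two edges the contribution matrix carries four $+1$ and four $-1$ entries, and they cancel only when the negative positions of one edge coincide with the positive positions of the other, which forces exactly this pattern.) Each such configuration corresponds to a unique choice of an unordered pair of rows together with an unordered pair of columns, and conversely, so $D_2(a,b)=\binom{a}{2}\binom{b}{2}$.

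For the type $(2,b)$ identities I would first fix an encoding. Since there are only two rows, every diagonal edge joins a top vertex $(1,j)$ to a bottom vertex $(2,l)$ with $j\neq l$, so an all-diagonal type $(2,b)$ graph is the same as a loopless bipartite graph $H$ between a top and a bottom copy of $[b]$ in which the edge $t_j b_j$ is forbidden. Here $\deg_H(t_j)=d((1,j))$ and $\deg_H(b_j)=d((2,j))$, so by Corollary \ref{corollary:rowdegrees} the degree criterion holds if and only if $\deg_H(t_j)=\deg_H(b_j)$ for all $j$; write $c_j$ for this common column degree, so $\sum_j c_j=k$. I would then enumerate by the partition of $k$ given by the multiset of nonzero $c_j$. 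The governing constraint is that a top vertex of column degree $c_p$ needs $c_p$ neighbours among the bottom vertices $b_q$ with $q\neq p$ and $c_q>0$ (and symmetrically on the bottom), which rules out any partition with a part too large relative to the number of other active columns. For $k=3$ this eliminates $(3)$ and $(2,1)$, leaving $(1,1,1)$; for $k=4$ it eliminates $(4),(3,1),(2,2)$, leaving $(2,1,1)$ and $(1,1,1,1)$. An all-ones profile on $r$ columns makes $H$ a perfect matching with no permitted fixed point, i.e. a \emph{derangement} of the $r$ active columns, and the number of derangements of a set of size $3$ (resp. $4$) is $2$ (resp. $9$); these profiles therefore contribute $2\binom{b}{3}$ and $9\binom{b}{4}$. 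For $(2,1,1)$ on three columns, choosing which column $p$ has degree $2$ (three ways) forces $t_p$ to meet both other bottom vertices and $b_p$ to meet both other top vertices, and these four edges exhaust the graph uniquely, contributing $3\binom{b}{3}$. Adding the surviving profiles gives $D_3(2,b)=2\binom{b}{3}$ and $D_4(2,b)=3\binom{b}{3}+9\binom{b}{4}$.

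The main obstacle is the bookkeeping for $k=4$: one must confirm that the feasibility constraint really does eliminate $(4),(3,1),(2,2)$, that the $(2,1,1)$ graph is genuinely forced once its degree-$2$ column is chosen (so that there is no overcounting), and that the two surviving profiles live on disjoint column supports so that their contributions simply add rather than interact. None of these steps is deep, but this is where care is needed; the identification of the all-ones case with derangements is the one genuinely combinatorial ingredient.
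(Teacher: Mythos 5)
Your proof is correct, and for the first and third bullets it is essentially the paper's argument: $D_2(a,b)$ is obtained by counting criss-crosses via a choice of two rows and two columns, and $D_4(2,b)$ is obtained from Corollary \ref{corollary:rowdegrees} by splitting into the degree-$\{0,1,2\}$ configurations (giving $3\binom{b}{3}$) and the all-degree-$\{0,1\}$ configurations, which are exactly derangements of the active columns (giving $!4\cdot\binom{b}{4}=9\binom{b}{4}$). Where you genuinely diverge is $D_3(2,b)$: the paper derives it from Lemma \ref{lemma:threediagonaledges}, i.e.\ from the classification of three-edge degree-criterion graphs as local isomorphs of rotations of $(B_3)_l^{3,3}$, and then counts the two reflections relevant in a $(2,b)$ grid; you instead run the same column-degree-profile enumeration you use for $k=4$, checking that the only feasible partition of $3$ is $(1,1,1)$ and landing on the $!3=2$ derangements of three columns. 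Your route has the advantage of unifying the $k=3$ and $k=4$ bullets under one mechanism and of making explicit, via the feasibility constraint $\max_j c_j\le r-1$ on a profile with $r$ active columns, the case analysis that the paper compresses into ``we leave it to the reader to verify that there are only two types''; the paper's route has the advantage of reusing structural information (the building-block classification) already established in Section \ref{section:separabilityin3x3}. One small wording quibble: the reason the $(2,1,1)$ and $(1,1,1,1)$ contributions add is that the two profiles are distinct degree sequences (so no graph is counted twice), not that their column supports are disjoint.
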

\begin{proof}
By Lemma \ref{lemma:twodiagonaledges}, a grid-labelled graph $G_l^{3,3}$ with $2$ diagonal edges satisfies the degree criterion if and only if it is locally isomorphic to $(B_2)_l^{3,3}$. Indeed, it is obviously the case that any $G_l^{a,b}$ with two diagonal edges satisfies the degree criterion if and only if $G_l^{a,b}\cong_{3,3}(B_2)_l^{3,3}$. Therefore, we must count the number of graphs of type $G_l^{a,b}$ that are locally isomorphic to $(B_2)_l^{a,b}$. For $a=2$ this is of course ${b\choose 2}$. Increasing $a$ gives an additional ${a\choose 2}$-dimensional degree of freedom, so
\begin{align*}
D_2(a,b)&={a\choose 2}\cdot D_2(2,b)\\
&={a\choose 2}\cdot{b\choose 2}.
\end{align*}

\begin{figure}
\includegraphics[scale=0.2]{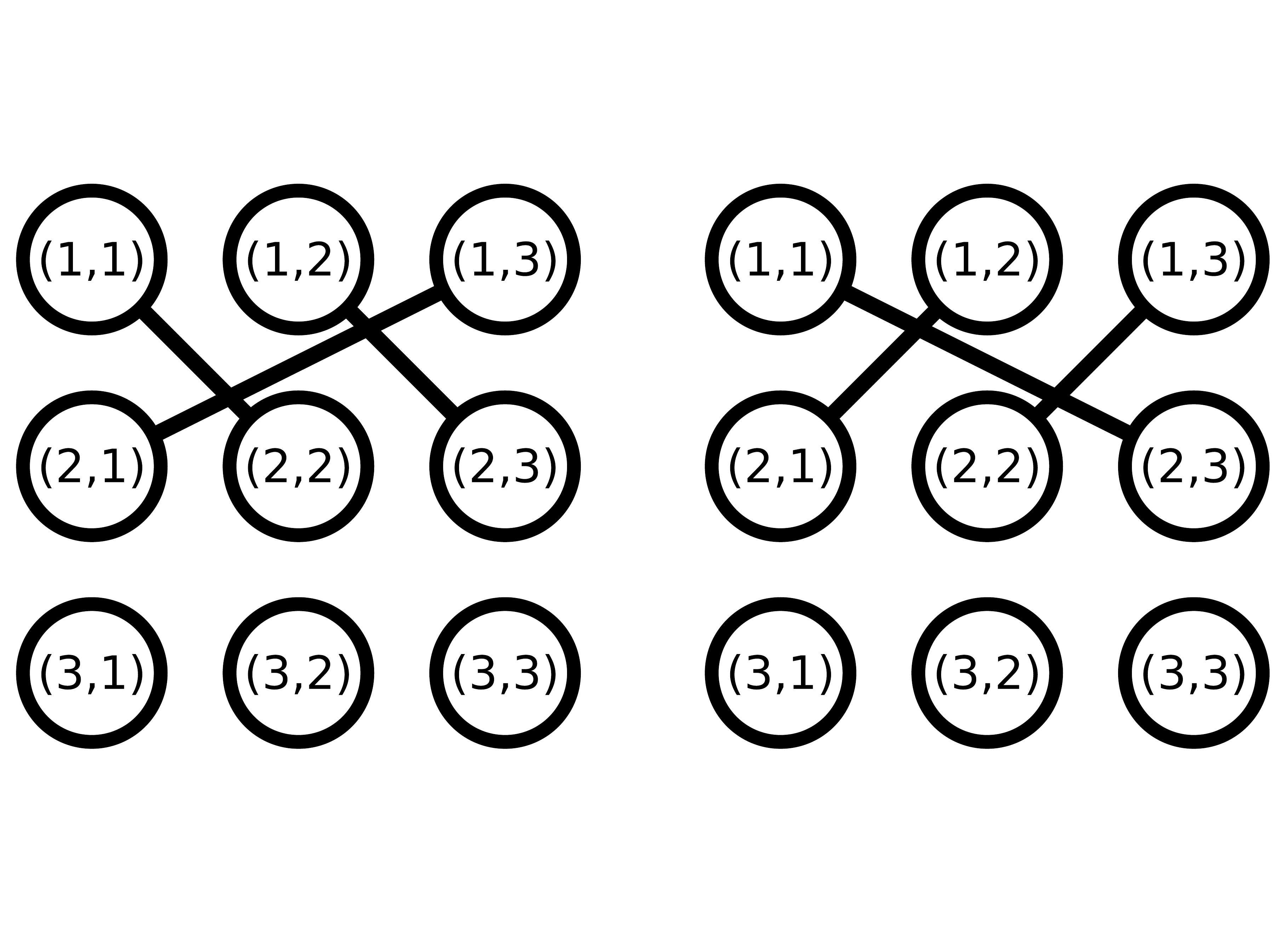}
\caption{The reflections of $(B_3)_l^{3,3}$ that are relevant for separability in grid-labelled graphs of type $(2,b)$.}
\label{fig:b3rotations}
\end{figure}

By Lemma \ref{lemma:threediagonaledges}, we know that any $G_l^{3,3}$ with $3$ diagonal edges satisfies the degree criterion if and only if $G_l^{3,3}$ is locally isomorphic to a rotation of $(B_3)_l^{3,3}$. Similarly to the $2$ edge case, this generalises to arbitrary grid-labelled graphs with $3$ edges. In other words, $G_l^{a,b}$ with three edges satisfies the degree criterion if and only if it is second order locally isomorphic to a rotation of $(B_3)_l^{3,3}$. Here, we only consider the cases $G_l^{2,b}$. Unlike $(B_2)_l^{3,3}$, the graph $(B_3)_l^{3,3}$ is not invariant under reflections. Indeed, there are two graphs that we must consider when counting the $3$ edge degree criterion in type $(2,b)$ grid-labelled graphs, which are illustrated in Figure \ref{fig:b3rotations}. By similar reasoning to the $2$ edge case,
\begin{align*}
D_3(2,b)=2\cdot{b\choose 3}.
\end{align*}
\begin{figure}[ht!]
\includegraphics[scale=0.13]{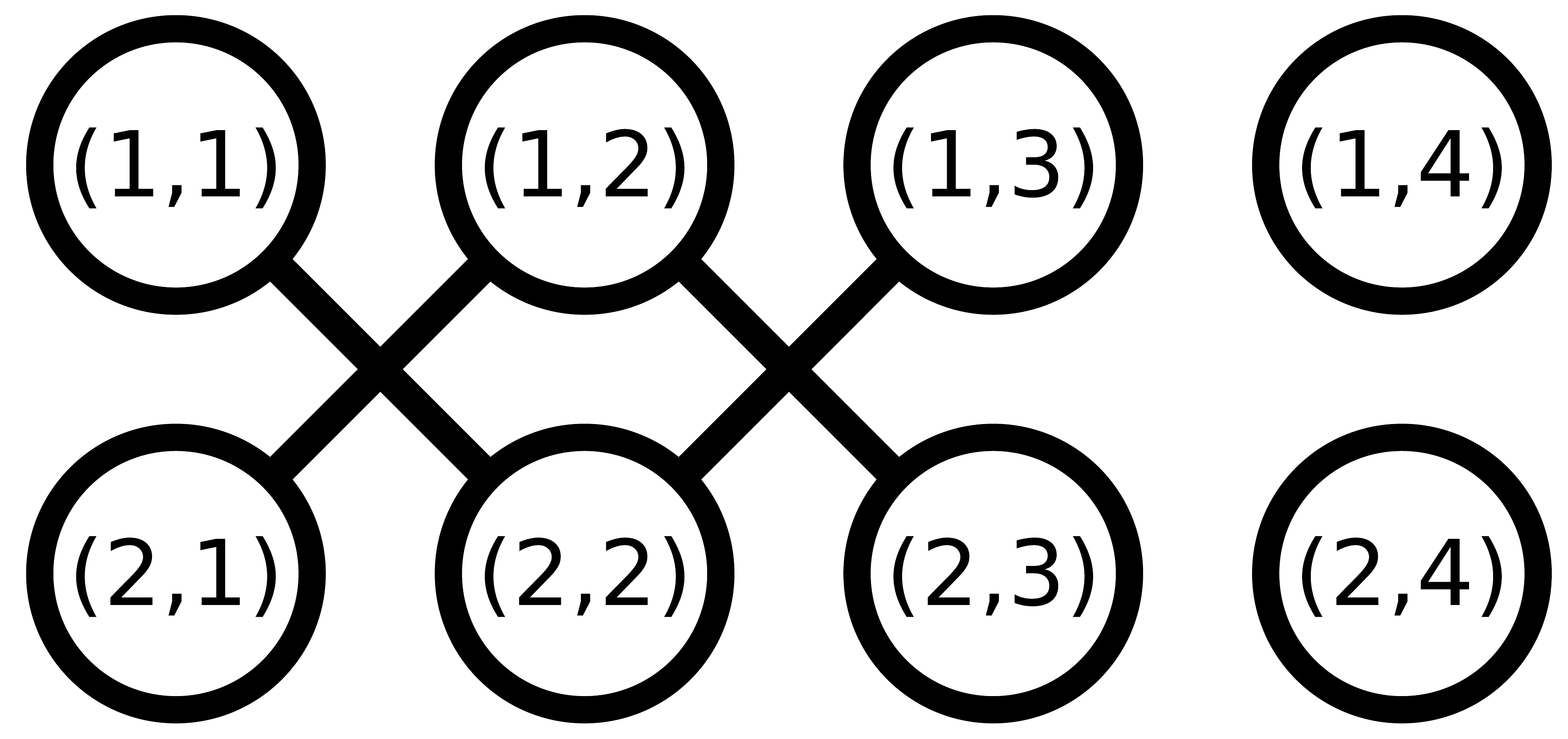}
\caption{An example of a $0$, $1$ and $2$ degree graph.}
\label{fig:threecase}
\end{figure}
\begin{figure}[ht!]
\includegraphics[scale=0.28]{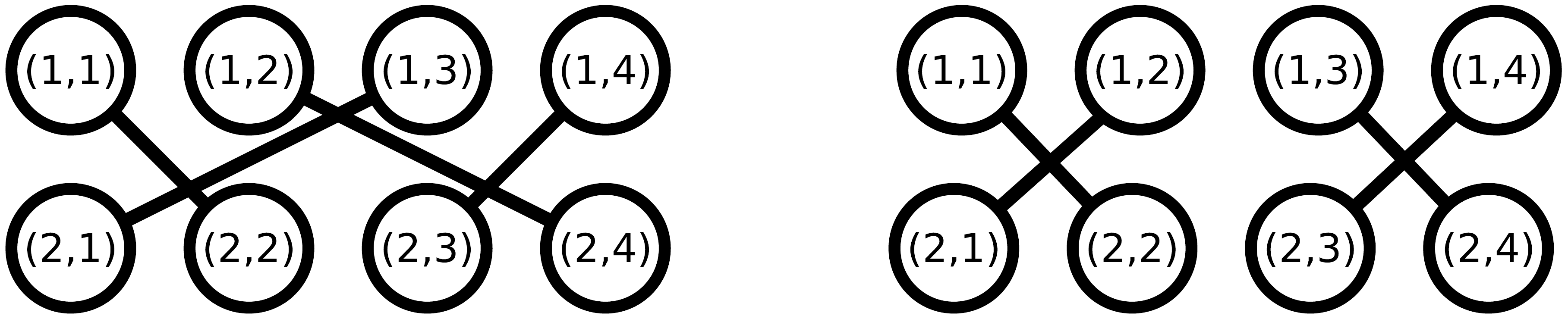}
\caption{Two examples of the $0$ and $1$ degree grid-labelled graphs. There is a bijection between such graphs and the derangements of a set of $4$ elements.}
\label{fig:derangements}
\end{figure}
In order to obtain $D_4(2,b)$, we will use a different approach. We know from Corollary \ref{corollary:rowdegrees} that a grid-labelled graph of type $(2,b)$ satisfies the degree criterion if and only if the degrees of the vertices in the top row are equal to the vertices in the bottom row. We leave it to the reader to verify that there are only two types of four edge grid-labelled graphs of type $(2,b)$ with this condition: those with all vertices of degree $0,1$ or $2$, and those with all vertices of degree $0$ or $1$. The former kind are all locally isomorphic to the grid-labelled graph illustrated in Figure \ref{fig:threecase}. Hence there are $3\cdot {b\choose 3}$ grid-labelled graphs of type $(2,b)$ with this pattern of degrees, because there are three columns of non-zero degree vertices: two with degree $1$ vertices, one with degree $2$. Therefore, there are ${b\choose 3}$
placements of these columns, and $3$ orderings within each placement.

We can enumerate the latter kind of grid-labelled graph by noticing that they are precisely the \emph{derangements} of a set of $4$ elements (consider the illustration in Figure \ref{fig:derangements}. Let us recall that a derangement of a finite set is a permutation without fixed points. The number of derangements on a set of $n$ elements is counted by the \emph{sub-factorial} function, defined
\begin{align*}
!n:=(n-1)(!(n-1)+!(n-2)).
\end{align*}
Hence, in this case, there are $!4\cdot{b\choose 4}=9\cdot{b\choose 4}$ grid-labelled graphs under consideration with this edge pattern. Therefore,
\begin{align*}
D_4(2,b)=3\cdot{b \choose 3}+9\cdot{b\choose 4}.
\end{align*}
\end{proof}

From what we have proved we are able to see the following.

\begin{theorem}[Entanglement of random grid-labelled graphs]
~
\begin{itemize}
\item Let $G_l^{a,b}$ be a random grid-labelled graph with $2$ edges. Then asymptotically almost surely (a.a.s), $G_l^{a,b}\not\in \mathcal{S}$.
\item Let $G_l^{2,b}$ be a random grid-labelled graph with $3$ (\emph{resp.} $4$) edges. Then a.a.s, $G_l^{2,b}\not\in \mathcal{S}$.
\end{itemize}
\end{theorem}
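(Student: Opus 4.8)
The plan is to bound, in each case, the probability of separability by a ratio of two polynomials (in $b$, and in $a$) whose numerator has strictly smaller degree, so that it vanishes in the appropriate limit. In keeping with the convention of this section I read a \emph{random grid-labelled graph with $k$ edges} as one drawn uniformly from the grid-labelled graphs of the stated type whose $k$ edges are all diagonal (horizontal and vertical edges play no role in the degree criterion and are separable on their own). Writing $T(a,b)$ for the number of available diagonal edges, the sample space then has size $\binom{T(a,b)}{k}$, and I would first record
\[
T(a,b)=2\binom{a}{2}\binom{b}{2}=\binom{ab}{2}-a\binom{b}{2}-b\binom{a}{2},
\]
obtained by choosing an unordered pair of rows, an unordered pair of columns, and one of the two diagonals of the resulting $2\times2$ sub-box; in particular $T(2,b)=b(b-1)$.

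The decisive simplification is that the \emph{entangled} direction needs only the necessity of the degree criterion, not a full separability characterisation. By Theorem \ref{theorem:degreecriterion} every separable grid-labelled graph satisfies the degree criterion, so the separable graphs among those with $k$ diagonal edges form a subset of the $D_k(a,b)$ graphs satisfying it. Hence
\[
\Pr\!\left[G_l^{a,b}\in\mathcal{S}\right]\le\frac{D_k(a,b)}{\binom{T(a,b)}{k}},
\]
and it suffices to show the right-hand side tends to $0$. Sufficiency of the degree criterion, through Theorems \ref{theorem:2or3sep} and \ref{theorem:degrees}, is not required here; it would only show that for these small $k$ the bound is in fact an equality.

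It then remains to insert the counts of Proposition \ref{proposition:dcounting}. For the first item ($k=2$, type $(a,b)$) we have $D_2(a,b)=\binom{a}{2}\binom{b}{2}=T/2$, so the bound becomes
\[
\frac{T/2}{\binom{T}{2}}=\frac{1}{T-1}=\frac{1}{2\binom{a}{2}\binom{b}{2}-1},
\]
which tends to $0$ as the grid grows. For the second item I set $a=2$, so $T(2,b)=b(b-1)=\Theta(b^2)$. Using $D_3(2,b)=2\binom{b}{3}=\Theta(b^3)$ against $\binom{T}{3}=\Theta(b^6)$ gives a bound of order $b^{-3}$, and $D_4(2,b)=3\binom{b}{3}+9\binom{b}{4}=\Theta(b^4)$ against $\binom{T}{4}=\Theta(b^8)$ gives a bound of order $b^{-4}$; both vanish as $b\to\infty$.

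I expect no genuine obstacle once Proposition \ref{proposition:dcounting} is available: after the reduction to the degree criterion the argument is a single comparison of polynomial degrees. The two points needing care are pinning down the probability model (diagonal edges only, as this section stipulates) and noticing that mere necessity of the degree criterion already forces the conclusion, so the harder sufficiency theorems are not invoked. As a bonus the estimates are explicit and show the separable fraction decaying polynomially in the grid size, making precise the claim that these very sparse states are asymptotically almost surely entangled.
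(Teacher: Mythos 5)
Your proposal is correct and follows essentially the same route as the paper: both restrict to diagonal edges, count the total sample space as $\binom{Q(a,b)}{k}$ with $Q(a,b)=2\binom{a}{2}\binom{b}{2}$, invoke only the necessity of the degree criterion to bound the separable fraction by $D_k(a,b)/\binom{Q(a,b)}{k}$ using Proposition \ref{proposition:dcounting}, and conclude by comparing polynomial degrees. Your version is slightly more explicit about the probability model and the exact $1/(Q-1)$ bound in the two-edge case, but the argument is the same.
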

\begin{proof}
We know from Lemma \ref{lemma:degreecriterioniff} that we only need to consider the diagonal edges of a grid-labelled graph to test if it satisfies the degree criterion. Let us compare the growth of $D_k(a,b)$ as a function of grid dimension with that of $G^D_k(a,b)$, the total number of grid-labelled graphs with $k$ edges, all diagonal.

Let us first obtain a general form for $G^D_k(a,b)$. Let $Q(a,b)$ be the number of diagonal edges in the complete grid-labelled graph $(K_{a\cdot b})_l^{a,b}$. Then
\begin{align*}
G^D_k(a,b)={Q(a,b)\choose k}.
\end{align*}

Clearly 
\begin{align*}
Q(a,b)&=|E(K_{a\cdot b})|-r(a,b)\\
&=\frac{a\cdot b(a\cdot b-1)}{2}-\frac{a\cdot b(a+b)}{2}+a\cdot b\\
&=\frac{a^2\cdot b^2 - a\cdot b - a^2\cdot b -a\cdot b^2}{2}+a\cdot b\\
&=\frac{(a^2-a)(b^2-b)}{2}\\
&=2{a\choose 2}\cdot{b \choose 2},
\end{align*}
Hence, 
\begin{align*}
G_k^D(a,b)={2\cdot{a\choose 2}\cdot{b \choose 2}\choose k},
\end{align*}
and so $G_2^D(a,b)\sim (a^4\cdot b^4)$.
We know from Proposition \ref{proposition:dcounting} that $D_2(a,b)={a\choose 2}\cdot{b\choose 2}\sim (a^2\cdot b^2)$.
Therefore, given a random grid-labelled graph with grid dimensions $a,b\ge 2$ and with $k=2$ edges, a.a.s. it will not satisfy the degree criterion and is therefore not separable.

Setting $a=2,$ we find that
\begin{align*}
G_3^D(2,b)&={2\cdot{b \choose 2}\choose 3}\\
&\sim (b^2)^3\\ 
&= b^6,
\end{align*}
and 
\begin{align*}
G_4^D(2,b)&={2{b \choose 2}\choose 4}\\
&\sim (b^2)^4 \\
&= b^8.
\end{align*}
In comparison,
\begin{align*}
D_3(2,b)&=2\cdot {b\choose 3}\sim b^3,
\end{align*}
and 
\begin{align*}
D_4(2,b)&=3\cdot {b\choose 3}+9\cdot{b\choose 4}\sim b^4.
\end{align*}
We can conclude that for a random grid-labelled graph with grid dimensions $a=2,b\ge2$ and $k=3,4$ edges, a.a.s. it will not satisfy the degree criterion and is therefore not separable.
\end{proof}
In order to find a general form for $D_k(a,b)$, more sophisticated techniques from enumerative combinatorics will need to be employed.
On the basis of what we have been able to prove so far however, it is safe to conjecture that for any grid-labelled graph $G_l^{a,b}$, a.a.s. $G_l^{a,b}\not\in\mathcal{S}$. In the next section we will apply the matrix realignment criterion to grid-labelled graphs, in order to find a graph theoretic interpretation of the criterion.

\section{The matrix realignment criterion}
\label{section:thematrixrealignmentcriterion}
\subsection{Definitions}
The matrix realignment criterion is defined in terms of the Ky Fan norm of the \emph{realigned} density matrix of the state under consideration. Let us recall these concepts.

\begin{definition}[Ky Fan Norm]
	The Ky Fan norm of a matrix $M$ is the quantity
	\begin{align*}
	\lVert M \rVert_K = \sum_i s_i(M), 
	\end{align*}
	where $s_i$ is the $i^{\text{th}}$  singular value of $M$.
\end{definition}
While in the definition we have denoted the Ky Fan norm by $\lVert \cdot \rVert_K$, when the context is clear we will denote it by $\lVert \cdot \rVert$. Let $M$ be an $m\times n$ matrix. Then its \emph{vectorization} is the $(m\cdot n)$-dimensional row vector
\begin{align*}
\text{vec}(M):=\begin{pmatrix}[M]_{11}&[M]_{12}&\dots&[M]_{1n}&[M]_{21}&\dots&[M]_{mn}\end{pmatrix}.
\end{align*}
The vectorization of the blocks of a matrix are used to realign it:
\begin{definition}[Realigned matrix]
	Let $M$ be an $m\times m$ block matrix with $n\times n$ blocks $M_{ij}$. The \emph{realignment} of $M$ with respect to $n$ is the $m^2\times n^2$ matrix
	\emph{
		\begin{align*}
		R_n(M):=
		\begin{pmatrix}
		\text{vec}(M_{11})\\
		\vdots\\
		\text{vec}(M_{m1})\\
		\vdots\\
		\text{vec}(M_{1m})\\
		\vdots\\
		\text{vec}(M_{mm})\\
		\end{pmatrix}.
		\end{align*}
	}
\end{definition}

The following theorem describes the entanglement criterion.

\begin{theorem}[Matrix realignment criterion \cite{MR}]
	Let $\rho\in\mathbb{C}^{m}\otimes \mathbb{C}^n$ be the density matrix of a bipartite quantum state. If $\lVert R_n(\rho)\rVert_K > 1$, then $\rho$ is entangled.
\end{theorem}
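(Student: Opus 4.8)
The plan is to establish the contrapositive: every separable $\rho$ satisfies $\lVert R_n(\rho)\rVert_K \le 1$, so that $\lVert R_n(\rho)\rVert_K > 1$ forces $\rho$ to be entangled. I would start by writing a separable $\rho$ in the form of Equation~(\ref{eq:separable}), $\rho = \sum_i p_i\,\rho_i\otimes\sigma_i$ with each $\rho_i$ an $m\times m$ and each $\sigma_i$ an $n\times n$ density matrix. Two general facts then do most of the work. First, realignment $R_n$ is a linear map on matrices, since each block depends linearly on the input and $\text{vec}$ is linear; hence $R_n(\rho)=\sum_i p_i\,R_n(\rho_i\otimes\sigma_i)$. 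Second, the Ky Fan norm $\lVert\cdot\rVert_K$ (the sum of all singular values, i.e.\ the trace norm) is a genuine norm and therefore obeys the triangle inequality, giving $\lVert R_n(\rho)\rVert_K \le \sum_i p_i\,\lVert R_n(\rho_i\otimes\sigma_i)\rVert_K$.

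The central computation is to evaluate $\lVert R_n(A\otimes B)\rVert_K$ for a product $A\otimes B$. The $(p,q)$ block of $A\otimes B$ is $[A]_{pq}B$, so its vectorization is $[A]_{pq}\,\text{vec}(B)$. Stacking these rows in the order prescribed by the definition of $R_n$ produces the outer product
\begin{align*}
R_n(A\otimes B)=u\,\text{vec}(B),
\end{align*}
where $u$ is the column vector collecting the scalars $[A]_{pq}$ in the matching block order. This matrix has rank one, so its only nonzero singular value equals $\lVert u\rVert_2\,\lVert\text{vec}(B)\rVert_2$. Regardless of whether the blocks are listed in row- or column-major order, $\lVert u\rVert_2=\lVert A\rVert_F$ and $\lVert\text{vec}(B)\rVert_2=\lVert B\rVert_F$, where $\lVert\cdot\rVert_F$ denotes the Frobenius norm; hence $\lVert R_n(A\otimes B)\rVert_K=\lVert A\rVert_F\,\lVert B\rVert_F$.

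It then remains to bound the Frobenius norms of the density matrices. For any density matrix $\tau$ one has $\lVert\tau\rVert_F^2=\text{tr}(\tau^2)\le(\text{tr}\,\tau)^2=1$ by positivity and unit trace, so $\lVert\rho_i\rVert_F\le 1$ and $\lVert\sigma_i\rVert_F\le 1$. Combining the previous steps,
\begin{align*}
\lVert R_n(\rho)\rVert_K \le \sum_i p_i\,\lVert\rho_i\rVert_F\,\lVert\sigma_i\rVert_F \le \sum_i p_i = 1,
\end{align*}
which is exactly the contrapositive we wanted.

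I expect the only delicate point to be the index bookkeeping in the central computation: one must check that the specific block ordering in the definition of $R_n$ really does assemble the scalars $[A]_{pq}$ into a single vector $u$ with $\lVert u\rVert_2=\lVert A\rVert_F$, and that the rank-one structure survives. This is routine but easy to get wrong, and it is where a careful proof spends most of its effort. The supporting facts---linearity of $R_n$, subadditivity of the trace norm, and the formula for the singular value of a rank-one matrix---are all standard. Finally, I would stress that the criterion is only sufficient, not necessary: there exist entangled states with $\lVert R_n(\rho)\rVert_K\le 1$, and exhibiting such states is in fact one of the aims of the sections that follow.
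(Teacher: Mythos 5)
Your proof is correct. Note that the paper does not prove this theorem at all — it is imported verbatim from Chen and Wu \cite{MR} — so there is no internal argument to compare against; your contrapositive argument (linearity of $R_n$, the rank-one identity $R_n(A\otimes B)=u\,\mathrm{vec}(B)$ with $\lVert u\rVert_2=\lVert A\rVert_F$, the bound $\mathrm{tr}(\tau^2)\le(\mathrm{tr}\,\tau)^2=1$ for density matrices, and the triangle inequality for the trace norm) is precisely the standard proof of the realignment/CCNR criterion found in the cited reference, and every step checks out against the paper's definition of $R_n$, including the column-major block ordering, which as you observe is immaterial to $\lVert u\rVert_2$.
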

\subsection{Realignment of combinatorial Laplacian matrices}
The density matrices we consider in this work are real matrices. It is well known that for any real matrix $M$, 
\begin{align*}
s_i(M)=\sqrt{\lambda_i(M^T\cdot M)}=\sqrt{\lambda_i(M\cdot M^T)},
\end{align*}
where $\lambda_i(M)$ denotes the $i^\text{th}$ eigenvalue of a matrix $M$. The fact that $M^T\cdot M$ (resp. $M\cdot M^T$) is real and symmetric implies that $s_i(M)\in\mathbb{R}$. Thus, to apply the matrix realignment criterion to the bipartite quantum state $\rho$ acting on $\mathbb{C}^{m}\otimes \mathbb{C}^n$ we will require the eigenvalues of the matrices $R_n(\rho)^T\cdot R_n(\rho)$ and $R_n(\rho)\cdot R_n(\rho)^T$. The non-zero eigenvalues of both matrices are identical. Therefore, without loss of generality, we need only consider the eigenvalues of $R_n(\rho)\cdot R_n(\rho)^T$. In particular, we will determine the form of $R_n(\rho)\cdot R_n(\rho)^T$ when $\rho$ is the density matrix of a grid-labelled graph.

Xie \emph{et al.} \cite{Xie2013} briefly study the matrix realignment criterion for combinatorial Laplacian matrices and present a structural entanglement criterion. In what follows we find a general form for the realigned Laplacian, and use this to find an algebraic entanglement criteria for grid-labelled graphs. Let $G_l^{a,b}$ be a grid-labelled graph. Then,
\begin{align*}
R_b(L(G_l^{a,b}))&=R_b\left(D(G_l^{a,b})-A(G_l^{a,b})\right)\\
&=R_b\left(D(G_l^{a,b})\right)-R_b\left(A(G_l^{a,b})\right).
\end{align*}
Clearly,
\begin{align*}
R_b(L(G_l^{a,b}))\cdot R_b(L(G_l^{a,b}))^T&=(R_b(D(G_l^{a,b}))-R_b(A(G_l^{a,b})))\cdot(R_b(D(G_l^{a,b}))-R_b(A(G_l^{a,b})))^T\\
&=R_b(D(G_l^{a,b}))\cdot R_b(D(G_l^{a,b}))^T\\&~~-R_b(D(G_l^{a,b}))\cdot R_b(A(G_l^{a,b}))^T\\&~~-R_b(A(G_l^{a,b}))\cdot R_b(D(G_l^{a,b}))^T\\&~~+R_b(A(G_l^{a,b}))\cdot R_b(A(G_l^{a,b}))^T.
\end{align*}
We will now work out each term in the sum. By definition,
\begin{align*}
D(G_l^{a,b})=
\text{diag}(d((1,1)),d((1,2)),\dots,d((a,b))),
\end{align*}
so
\begin{align*}
R_b(D(G_l^{a,b}))&=\begin{pmatrix}
d((1,1))&0&\dots&0&d((1,2))&0&\dots&\dots&0&d((1,b))\\
0&&&&0&&&&&0\\
\vdots&&&&\vdots&&&&&\vdots\\
0&&&&0&&&&&0\\
d((2,1))&0&\dots&0&d((2,2))&0&\dots&\dots&0&d((2,b))\\
0&&&&0&&&&&0\\
\vdots&&&&\vdots&&&&&\vdots\\
\vdots&&&&\vdots&&&&&\vdots\\
0&&&&0&&&&&0\\
d((a,1))&0&\dots&0&d((a,2))&0&\dots&\dots&0&d((a,b))\\
\end{pmatrix}.
\end{align*}
Hence,
	\begin{align*}
	&R_b(D(G_l^{a,b}))\cdot R_b(D(G_l^{a,b}))^T=\\
	\sum_{j=1}^n&\begin{pmatrix} d((1,j))^2&0&\dots&0&d((1,j))d((2,j))&0&\dots&\dots&0&d((1,j))d((b,j))\\
	0&&&&0&&&&&0\\
	\vdots&&&&\vdots&&&&&\vdots\\
	0&&&&0&&&&&0\\
	d((2,j))d((1,j))&0&\dots&0&d((2,j))^2&0&\dots&\dots&0&d((2,j))d((b,j))\\
	0&&&&0&&&&&0\\
	\vdots&&&&\vdots&&&&&\vdots\\
	\vdots&&&&\vdots&&&&&\vdots\\
	0&&&&0&&&&&0\\
	d((a,j))d((1,j))&0&\dots&0&d((a,j))d((2,j))&0&\dots&\dots&0&d((a,j))d((b,j))
	\end{pmatrix},
	\end{align*}
In order to find a general form for the realigned adjacency matrix, we will need some definitions.\begin{definition}[Row subgraph]
	Let $G_l^{a,b}$ be a grid-labelled graph. If a grid-labelled graph $H_c^{2,b}\subseteq G_l^{a,b}$ has vertex set
	\begin{align*}
	V(H_c^{2,n})=\{(p,q):(p,q)\in V(G_l^{a,b}),p=i \lor p=j\},
	\end{align*}
	edge set
	\begin{align*}
	E(H_c^{2,n})=\{\{(p,q),(r,s)\}:\{(p,q),(r,s)\}\in E(G_l^{a,b}),p=i\land r=j\},
	\end{align*}
	and labelling $c:[a]\times[b]\rightarrow [2]\times [b]$ defined such that 
	\begin{align*}
	c(l^{-1}((p,q)))\mapsto \begin{cases}(1,q)&\text{if } p=i;\\(2,q)&\text{if } p=j,\end{cases}
	\end{align*}
	then it is called a \emph{row subgraph} of $G_l^{a,b}$, denoted $R_{i,j}(G_l^{a,b})$, and simply $R_{i,j}$ when $G_l^{a,b}$ is clear from the context.
\end{definition}
Equivalently, $R_{i,j}$ is the graph with vertex set populated with all vertices from rows $i$ and $j$ of $G_l^{a,b}$, and edge set populated by all edges strictly between those rows, and endpoints in different rows. Note that for any grid-labelled graph, all edges of any row subgraph $R_{i,i}$ are horizontal, and also for any $R_{i,j}$,
\begin{align*}
R_{i,j}=\Gamma(R_{i,j}).
\end{align*}
In Figure \ref{fig:rowsubgraphs} we illustrate several row subgraphs.
\begin{figure}
	\centering
	\includegraphics[width=1.0\textwidth]{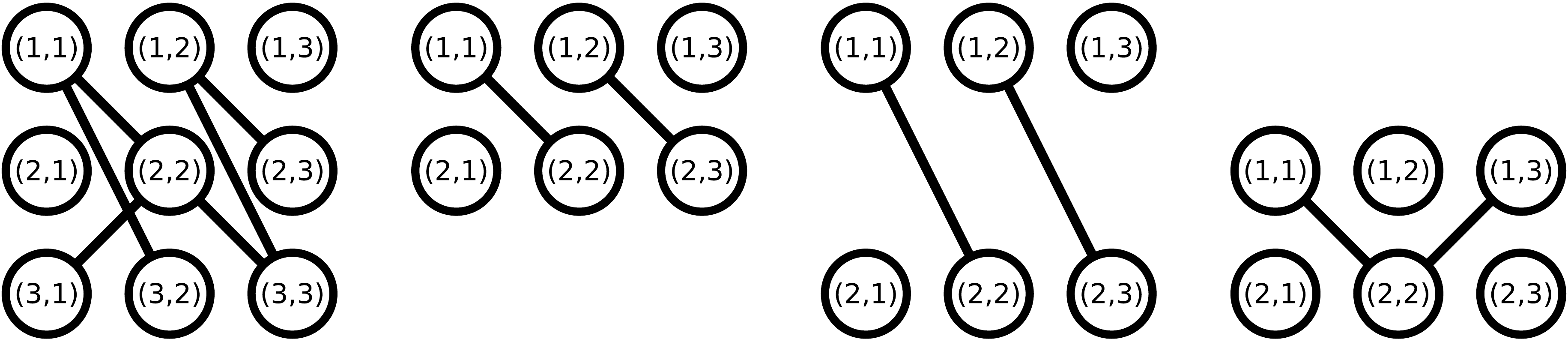}
	\caption{From left to right, $G_l^{3,3}$, and $R_{1,2}, R_{1,3}, R_{3,2}\subseteq G_l^{3,3}$.}
	\label{fig:rowsubgraphs}
\end{figure}
\begin{definition}[Intersection]
	Let $G_l^{a,b}$ and $H_l^{a,b}$ be grid-labelled graphs. Their \emph{intersection} is the grid-labelled graph $G_l^{a,b}\cap H_l^{a,b}$ with edge set $E(G_l^{a,b})\cap E(H_l^{a,b})$.
\end{definition}
The following lemma is useful.
\begin{lemma}
	Let $G_l^{a,b}$ be a grid-labelled graph with adjacency matrix $A$. Then,
	\emph{
		\begin{align*}
		\left[R_b(A)\cdot R_b(A)^T\right]_{ij}=r(i\text{ mod }a,\lceil i/a\rceil;j\text{ mod }a,\lceil j/a\rceil),
		\end{align*}
	}
	where 
	\begin{align*}
	r(i,j;k,l):=|E(R_{i,j}\cap R_{k,l})|.
	\end{align*}
\end{lemma}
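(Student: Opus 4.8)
The plan is to unwind the definition of the realignment $R_b(A)$, recognise its rows as vectorised blocks of the adjacency matrix, and then compute the Gram matrix $R_b(A)\cdot R_b(A)^T$ entrywise, reading each entry off as a count of common row-subgraph edges.

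First I would fix the block structure. Viewing $A=A(G_l^{a,b})$ as an $a\times a$ block matrix with $b\times b$ blocks $A_{p,q}$ (with respect to the factorisation $\mathbb{C}^a\otimes\mathbb{C}^b$), the block $A_{p,q}$ has entries $[A_{p,q}]_{s,t}=A_{(p,s),(q,t)}$, which equals $1$ exactly when $\{(p,s),(q,t)\}\in E(G_l^{a,b})$ and $0$ otherwise. By the definition of $R_b$, the rows of $R_b(A)$ are the vectorised blocks listed in block-column-major order, so row $i$ of $R_b(A)$ is $\mathrm{vec}(A_{p_i,q_i})$ with block-row $p_i=i\bmod a$ and block-column $q_i=\lceil i/a\rceil$ (reading $i\bmod a$ with representatives in $\{1,\dots,a\}$). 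This index bookkeeping — checking that the linear index $i\in\{1,\dots,a^2\}$ decomposes as $i=(q_i-1)a+p_i$ under the ordering used in the definition of $R_b$ — is the one genuinely fiddly step, and it is where I expect the main (though modest) obstacle to lie.

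Next I would compute the entry directly. The $(i,j)$-th entry of $R_b(A)\cdot R_b(A)^T$ is the inner product of rows $i$ and $j$ of $R_b(A)$, and the inner product of two vectorised matrices is the entrywise sum of their products, so $[R_b(A)\cdot R_b(A)^T]_{ij}=\sum_{s,t=1}^{b}A_{(p_i,s),(q_i,t)}A_{(p_j,s),(q_j,t)}$. Since all entries lie in $\{0,1\}$, each summand is $1$ precisely when both $\{(p_i,s),(q_i,t)\}$ and $\{(p_j,s),(q_j,t)\}$ are edges of $G_l^{a,b}$; thus the sum counts the column-pairs $(s,t)$ for which this simultaneously holds.

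Finally I would identify this count with $r(p_i,q_i;p_j,q_j)=|E(R_{p_i,q_i}\cap R_{p_j,q_j})|$. Under the type-$(2,b)$ relabelling defining a row subgraph, an edge between rows $p$ and $q$ with column-coordinates $(s,t)$ becomes the edge $\{(1,s),(2,t)\}$, so the pairs $(s,t)$ with $\{(p_i,s),(q_i,t)\}\in E$ are in bijection with the relabelled edges of $R_{p_i,q_i}$, and likewise for $R_{p_j,q_j}$; a pair $(s,t)$ contributes to the sum iff the corresponding edge lies in both edge sets, i.e. in $E(R_{p_i,q_i}\cap R_{p_j,q_j})$. Hence the sum equals $r(p_i,q_i;p_j,q_j)=r(i\bmod a,\lceil i/a\rceil;j\bmod a,\lceil j/a\rceil)$. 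I would take care over the degenerate case $p_i=q_i$, where the two rows coincide: here the relabelling of $R_{p_i,p_i}$ must be read consistently (its edges are horizontal), and since the cases of interest carry only diagonal edges the diagonal blocks $A_{p,p}$ vanish and both sides are $0$, so no ambiguity survives.
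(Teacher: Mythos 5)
Your proposal is correct and follows essentially the same route as the paper's proof: both expand $R_b(A)\cdot R_b(A)^T$ as the Gram matrix of the vectorised blocks, compute each entry as $\sum_{s,t}e(p_i,s;q_i,t)\,e(p_j,s;q_j,t)$, and identify this count with $|E(R_{p_i,q_i}\cap R_{p_j,q_j})|$. Your extra care with the block-index bookkeeping and the degenerate diagonal-block case $p_i=q_i$ is a welcome refinement of details the paper leaves implicit, but it does not change the argument.
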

\begin{proof}
	By definition,
	\begin{align*}
	R_b(A(G_l^{a,b}))\cdot R_b(A(G_l^{a,b}))^T=\begin{pmatrix}\text{vec}(A_{11})\\\text{vec}(A_{21})\\\vdots\\\text{vec}(A_{aa})\\\end{pmatrix}\cdot\begin{pmatrix}\text{vec}(A_{11})^T~\text{vec}(A_{21})^T~\dots~\text{vec}(A_{aa})^T\end{pmatrix},
	\end{align*}
	where the $n\times n$ blocks $A_{ij}$ of $A$ have the form
	\begin{align*}
	A_{ij}=
	\begin{pmatrix}
	e(i,1;j;1)&e(i,1;j,2)&\dots&e(i,1;j,b)\\
	e(i,2;j;1)&e(i,2;j,2)&\dots&e(i,2;j,b)\\
	\vdots&\vdots&\ddots&\vdots\\
	e(i,a;j;1)&e(i,a;j,2)&\dots&e(i,a;j,b)\\
	\end{pmatrix},
	\end{align*}
	where the function $e(p,q;r,s)$ is equal to $1$ if there is an edge $\{(p,q),(r,s)\}\in E(G_l^{a,b})$, and is equal to $0$ otherwise. Hence, the matrix $A_{ij}$ encodes the edges from row $i$ to row $j$. Clearly
	\begin{align*}
	\text{vec}(A_{ij})\cdot\text{vec}(A_{kl})^T
	&=\sum_{p,q=1}^{b}e(i,p;j,q)e(k,p;l,q)\\
	&=|E(R_{i,j}\cap R_{k,l})|=:r(i,j;k,l).
	\end{align*}
	Therefore,
	\begin{align*}
	\left[R_b(A(G_l^{a,b})\cdot R_b(A(G_l^{a,b})^T\right]_{ij}&=\text{vec}(A_{i\text{mod }a,\lceil i/a\rceil})\cdot\text{vec}(A_{j\text{mod }a,\lceil j/a\rceil})^T\\
	&=r(i\text{ mod }a,\lceil i/a\rceil;j\text{ mod }a,\lceil j/a\rceil).
	\end{align*}
\end{proof}
If we consider graphs with only diagonal edges then it will make the analysis simpler: if the graph has no horizontal or vertical edges we know that $r(i,i;k,l)=r(i,j;k,k)=0$. This means that the matrices $R_b(A(G_l^{a,b}))$ and $R_b(D(G_l^{a,b}))$ have no non-zero entries in common, that is, $[R_b(A(G_l^{a,b}))]_{ij}=0$ if and only if $[R_b(D(G_l^{a,b}))]_{ij}\neq0$. The products $R_b(A(G_l^{a,b}))\cdot R_b(D(G_l^{a,b}))^T$ and $R_b(D(G_l^{a,b}))\cdot R_b(A(G_l^{a,b}))^T$ are therefore equal to $0$.

Hence, \begin{align*}
R_b(L(G_l^{a,b}))\cdot R_b(L(G_l^{a,b}))^T=R_b(D(G_l^{a,b}))\cdot R_b(D(G_l^{a,b}))^T+R_b(A(G_l^{a,b}))\cdot R_b(A(G_l^{a,b}))^T.
\end{align*}
Since both terms in the sum have no non-zero elements in common, it is clear that the set of non-zero eigenvalues of $R_b(L(G))\cdot R_b(L(G))^T$ is identical to the set of non-zero eigenvalues of the matrix
$\mathcal{D}(G_l^{a,b})\oplus\mathcal{A}(G_l^{a,b}),
$
where $\mathcal{D}(G_l^{a,b})$ and $\mathcal{A}(G_l^{a,b})$ are the \emph{degree structure} and \emph{adjacency structure} matrices of $G_l^{a,b}$ respectively, which we now define.
\begin{definition}[Degree structure matrix]
	Let $G_l^{a,b}$ be a grid-labelled graph. Then the \emph{degree structure matrix} of $G_l^{a,b}$ is the $a\times a$ matrix with entries
	\begin{align*}
	\left[\mathcal{D}(G_l^{a,b})\right]_{i,j}:=\sum_{p=1}^{b}d((i,p))d((j,p)),
	\end{align*}
	where $d((i,j))$ is the degree of the vertex $(i,j)\in V(G_l^{a,b})$.
\end{definition}
\begin{definition}[Adjacency structure matrix]
	Let $G_l^{a,b}$ be a grid-labelled graph with all edges diagonal. Then the \emph{adjacency structure matrix} of $G_l^{a,b}$ is the $a(a-1)\times a(a-1)$ matrix
	\begin{align*}
	\mathcal{A}(G_l^{a,b}):=
	\begin{pmatrix}
	r(1,2;1,2)&r(1,2;1,3)&\dots&r(1,2;2,1)&\dots&r(1,2;a,a-1)\\
	r(1,3;1,2)&r(1,3;1,3)&\dots&r(1,3;2,1)&\dots&r(1,3;a,a-1)\\
	\vdots&\vdots&&\vdots&&\vdots\\
	r(2,1;1,2)&r(2,1;1,3)&\dots&r(2,1;2,1)&\dots&r(2,1;a,a-1)\\
	\vdots&\vdots&&\vdots&&\vdots\\
	r(a,a-1;1,2)&r(a,a-1;1,3)&\dots&r(a,a-1;2,1)&\dots&r(a,a-1;a,a-1)
	\end{pmatrix},
	\end{align*}
	where $r(i,j;k,l):=|E(R(i,j))\cap E(R(k,l))|$.
\end{definition}

It is obvious that the degree and adjacency structure matrices of any grid-labelled graph $G_l^{a,b}$ can be obtained from $R_b(D(G_l^{a,b}))\cdot R_b(D(G_l^{a,b}))^T$ and $R_b(A(G_l^{a,b}))\cdot R_b(A(G_l^{a,b}))^T$ respectively, by applying suitable permutations.

We have thus proved the following theorem, where the factor $1/2e$ is for normalisation.
\begin{theorem}
	Let $G_l^{a,b}$ be a grid-labelled graph with $e$ edges, all diagonal. Let \emph{ $\Lambda:=\text{sp}(\mathcal{A}(G_l^{a,b}))$} and \emph{$\Theta:=\text{sp}(\mathcal{D}(G_l^{a,b}))$} be the set of non-zero eigenvalues of the adjacency structure and degree structure matrices of $G_l^{a,b}$. Then
	\emph{
		\begin{align*}
		\lVert R_b(\rho(G_l^{a,b})) \rVert_K = \frac{1}{2e}\left( \sum_{\lambda\in\Lambda}\sqrt{\lambda}+\sum_{\theta\in\Theta}\sqrt{\theta}\right).
		\end{align*}
	}
\end{theorem}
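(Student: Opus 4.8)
The plan is to assemble the structural computations carried out above into the Ky Fan norm. First I would use the normalisation $\rho(G_l^{a,b})=L(G_l^{a,b})/2e$ together with the linearity of the realignment map $R_b$ (it merely rearranges entries) and the absolute homogeneity of the Ky Fan norm to write $\lVert R_b(\rho(G_l^{a,b}))\rVert_K=\frac{1}{2e}\lVert R_b(L(G_l^{a,b}))\rVert_K$. It then suffices to evaluate $\lVert R_b(L(G_l^{a,b}))\rVert_K$. Since $R_b(L(G_l^{a,b}))$ is a real matrix, its singular values satisfy $s_i=\sqrt{\lambda_i(R_b(L)R_b(L)^T)}$ by the identity recalled at the start of this section, and the zero singular values contribute nothing, so $\lVert R_b(L)\rVert_K=\sum_\mu\sqrt{\mu}$, where $\mu$ ranges, with multiplicity, over the non-zero eigenvalues of the symmetric positive semi-definite matrix $R_b(L)R_b(L)^T$. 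The theorem thus reduces to identifying this set of non-zero eigenvalues.

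The central step is the block-diagonalisation established just above. Because $G_l^{a,b}$ has only diagonal edges, $R_b(D)$ is supported exactly on the $a$ block-rows indexed by diagonal pairs $(i,i)$ and on the diagonal positions of each vectorised block, whereas $R_b(A)$ is supported exactly on the $a(a-1)$ block-rows indexed by off-diagonal pairs $(i,j)$ with $i\neq j$ and on the off-diagonal vectorisation positions. These supports are disjoint in both rows and columns, which is precisely why the cross terms vanish and $R_b(L)R_b(L)^T=R_b(D)R_b(D)^T+R_b(A)R_b(A)^T$. I would then observe that $R_b(D)R_b(D)^T$ vanishes outside the $a$ diagonal-pair indices, where its restriction has entries $\mathrm{vec}(D_{ii})\cdot\mathrm{vec}(D_{jj})=\sum_p d((i,p))d((j,p))$, i.e. exactly $\mathcal{D}(G_l^{a,b})$, while $R_b(A)R_b(A)^T$ vanishes outside the $a(a-1)$ off-diagonal-pair indices, where its restriction has entries $\mathrm{vec}(A_{ij})\cdot\mathrm{vec}(A_{kl})=r(i,j;k,l)$, i.e. exactly $\mathcal{A}(G_l^{a,b})$. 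Since $a+a(a-1)=a^2$, these two index sets partition all $a^2$ block indices with nothing left over, so a single permutation simultaneously conjugates the sum into the block-diagonal form $\mathcal{D}(G_l^{a,b})\oplus\mathcal{A}(G_l^{a,b})$, with no extra zero padding.

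Finally, the spectrum of a direct sum is the disjoint union, counted with multiplicity, of the spectra of the summands, so the non-zero eigenvalues of $R_b(L)R_b(L)^T$ are exactly those comprising $\Theta=\mathrm{sp}(\mathcal{D})$ together with those comprising $\Lambda=\mathrm{sp}(\mathcal{A})$. Substituting into $\lVert R_b(L)\rVert_K=\sum_\mu\sqrt{\mu}$ splits the sum as $\sum_{\theta\in\Theta}\sqrt{\theta}+\sum_{\lambda\in\Lambda}\sqrt{\lambda}$, and reinserting the factor $1/2e$ yields the claimed identity. The only genuine obstacle is the bookkeeping in the central step: one must verify that the row and column supports of the two Gram matrices are truly disjoint and that their non-zero cores coincide with $\mathcal{D}$ and $\mathcal{A}$ under the natural indexing, and one must track eigenvalue multiplicities so that $\Theta$ and $\Lambda$ are read as multisets rather than plain sets. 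Everything else follows immediately from the homogeneity of the norm and the real-matrix singular value identity.
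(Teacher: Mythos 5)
Your proposal is correct and follows essentially the same route as the paper: reduce to the Gram matrix $R_b(L)R_b(L)^T$, expand via $R_b(L)=R_b(D)-R_b(A)$, use the diagonal-edges-only hypothesis to kill the cross terms and to see that the two Gram matrices live on complementary index sets of sizes $a$ and $a(a-1)$, and read off the spectrum of the resulting direct sum $\mathcal{D}\oplus\mathcal{A}$. If anything, your bookkeeping is slightly more careful than the paper's (which asserts the supports of $R_b(A)$ and $R_b(D)$ are exactly complementary rather than merely disjoint), and your remark that $\Lambda$ and $\Theta$ must be read as multisets is a fair refinement of the stated theorem.
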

In the next section we will use this theorem to demonstrate an infinite family of entangled quantum states whose entanglement is not detected by the matrix realignment criterion. 
\subsection{Failure of the matrix realignment criterion}

In order to proceed, we require the following two definitions.
\begin{definition}[Row orthogonality]
	Let $G_l^{a,b}$ be a grid-labelled graph. We say that $G_l^{a,b}$ is \emph{row orthogonal} if for all $i,j\in[m]$ and $k\in[n]$,
	\begin{align*}
	d((i,k))d((j,k))\not=0
	\end{align*}
	if and only if $i=j$.
\end{definition}
Equivalently, two rows of vertices are orthogonal if each column (two vertices aligned vertically) has at least one isolated vertex. A grid-labelled graph is orthogonal if all of its vertex rows are pairwise orthogonal.
\begin{definition}[Singleton edge]
	Let $G_l^{a,b}$ be a grid-labelled graph. An edge $\{v_1,v_2\}\in E(G_l^{a,b})$ is described as a \emph{singleton} if 
	\begin{align*}
	d(v_1)=d(v_2)=1.
	\end{align*}
\end{definition}
We can now prove the following theorem.
\begin{theorem}
\label{theorem:roworthogsingletondiag}
	Let $G_l^{2,b}$ be a row orthogonal grid-labelled graph with $e$ edges, such that all edges are singleton and diagonal. If $e\ge 4$ then $\lVert R_b(\rho(G_l^{2,b}))\rVert_K\le 1$.
\end{theorem}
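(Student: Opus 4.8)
The plan is to compute the two structure matrices $\mathcal{D}(G_l^{2,b})$ and $\mathcal{A}(G_l^{2,b})$ explicitly and then feed their spectra into the theorem proved at the end of the previous subsection, which expresses $\lVert R_b(\rho(G_l^{2,b}))\rVert_K$ as $\frac{1}{2e}\bigl(\sum_{\lambda\in\Lambda}\sqrt{\lambda}+\sum_{\theta\in\Theta}\sqrt{\theta}\bigr)$. Since $a=2$, both $\mathcal{D}(G_l^{2,b})$ and $\mathcal{A}(G_l^{2,b})$ are merely $2\times 2$ matrices, so the entire computation reduces to evaluating four entries of each and reading off their eigenvalues (counted with multiplicity, as the Ky Fan norm sums singular values with multiplicity). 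The three hypotheses each do exactly one job: \emph{singleton} fixes all nonzero vertex degrees to be $1$, \emph{diagonal} guarantees that every edge joins row $1$ to row $2$, and \emph{row orthogonality} is what forces the off-diagonal entries of both matrices to vanish.

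First I would record that the singleton hypothesis forces every vertex degree to lie in $\{0,1\}$: a degree-$2$ vertex would sit on two edges, each of which is required to be singleton, contradicting that its degree is $2$. Combined with diagonality, each of the $e$ edges then contributes exactly one degree-$1$ vertex to row $1$ and one to row $2$. Hence for the degree structure matrix the diagonal entries are $[\mathcal{D}]_{1,1}=\sum_p d((1,p))^2=\sum_p d((1,p))=e$ and likewise $[\mathcal{D}]_{2,2}=e$, while the off-diagonal entry $[\mathcal{D}]_{1,2}=\sum_p d((1,p))d((2,p))$ vanishes term by term by row orthogonality. Thus $\mathcal{D}(G_l^{2,b})=e\,I_2$, so $\Theta=\{e,e\}$ and $\sum_{\theta\in\Theta}\sqrt{\theta}=2\sqrt{e}$.

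Next I would treat the adjacency structure matrix, which for $a=2$ is indexed by the ordered pairs $(1,2)$ and $(2,1)$. The diagonal entry $r(1,2;1,2)=\sum_{p,q}e(1,p;2,q)$ simply counts all $e$ edges, and similarly $r(2,1;2,1)=e$. The crucial point is the off-diagonal entry $r(1,2;2,1)=\sum_{p,q}e(1,p;2,q)\,e(2,p;1,q)$: a nonzero term would require both edges $\{(1,p),(2,q)\}$ and $\{(1,q),(2,p)\}$ to be present, which would force $d((1,p))\ge 1$ and $d((2,p))\ge 1$ simultaneously, contradicting row orthogonality. Hence $r(1,2;2,1)=r(2,1;1,2)=0$ and $\mathcal{A}(G_l^{2,b})=e\,I_2$, giving $\Lambda=\{e,e\}$ and $\sum_{\lambda\in\Lambda}\sqrt{\lambda}=2\sqrt{e}$. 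Substituting into the norm formula yields
\begin{align*}
\lVert R_b(\rho(G_l^{2,b}))\rVert_K=\frac{1}{2e}\bigl(2\sqrt{e}+2\sqrt{e}\bigr)=\frac{2}{\sqrt{e}},
\end{align*}
which is at most $1$ precisely when $e\ge 4$, establishing the claim.

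There is no serious obstacle here beyond verifying the vanishing of the off-diagonal entries; the one point demanding care is the bookkeeping that identifies an off-diagonal contribution to $\mathcal{A}$ with a pair of counterpart edges and recognises that such a pair is exactly what row orthogonality forbids. It is worth observing that this is also precisely where the threshold $e\ge 4$ enters: were the off-diagonal entries allowed to be nonzero, the spectra would be larger and the bound would no longer hold.
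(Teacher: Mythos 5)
Your proposal is correct and takes essentially the same route as the paper: both arguments use row orthogonality to show that $\mathcal{A}(G_l^{2,b})$ and $\mathcal{D}(G_l^{2,b})$ are diagonal $2\times 2$ matrices, use the singleton and diagonal hypotheses to evaluate the diagonal entries, and plug the resulting $4\sqrt{e}$ into the $\frac{1}{2e}$-normalised Ky Fan norm formula to get $2/\sqrt{e}\le 1$ for $e\ge 4$. Your explicit identification of a nonzero off-diagonal term of $\mathcal{A}$ with a counterpart edge pair is a slightly more detailed justification than the paper gives, but it is the same argument.
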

\begin{proof}
	Since $G_l^{2,b}$ has $a=2$ rows,
	\begin{align*}
	\mathcal{A}(G_l^{2,b})=\begin{pmatrix}r(1,2;1,2)&r(1,2;2,1)\\r(2,1;1,2)&r(2,1;2,1)\end{pmatrix},
	\end{align*}
	and
	\begin{align*}
	\mathcal{D}(G_l^{2,b})=
	\sum_{j=1}^{b}\begin{pmatrix}
	d((1,j))^2&d((1,j))d((2,j))\\
	d((2,j))d((1,j))&d((2,j))^2	
	\end{pmatrix}.
	\end{align*}
	Since $G_l^{2,b}$ is row orthogonal, 
	\begin{align*}
	r(1,2;2,1)=r(2,1;1,2)=0,
	\end{align*}
	and
	\begin{align*}
	\sum_{j=1}^b d((1,j))d((2,j))=\sum_{j=1}^b d((2,j))d((1,j))=0,
	\end{align*}
	so both $\mathcal{A}(G_l^{2,b})$ and $\mathcal{D}(G_l^{2,b})$ are diagonal. Hence,
	\begin{align*}
	\lVert R_b(\rho(G_l^{a,b}))\rVert_K&=\frac{\sqrt{r(1,2;1,2)}+\sqrt{r(2,1;2,1)}+\sqrt{\sum_{j=1}^b d((1,j))^2}+\sqrt{\sum_{j=1}^b d((2,j))^2}}{2e}\\
	&=\frac{2\sqrt{e}+\sqrt{\sum_{j=1}^b d((1,j))^2}+\sqrt{\sum_{j=1}^b d((2,j))^2}}{2e}.
	\end{align*}
	Since all edges are singleton, all degrees are either $0$ or $1$. This means that
	\begin{align*}
	\lVert\rho(G_l^{a,b})\rVert_K&=\frac{4\sqrt{e}}{2e}\\
	&=\frac{2}{\sqrt{e}},
	\end{align*}
	from which the theorem follows.
\end{proof}

The fact that such grid-labelled graphs are row orthogonal means that they do not satisfy the degree criterion. Since the degree criterion is necessary and sufficient for separability for grid-labelled graphs with $a=2$ (Theorem \ref{theorem:degrees}), these grid-labelled graphs describe quantum states that are entangled. An example of a family of grid-labelled graphs with the properties described in Theorem \ref{theorem:roworthogsingletondiag} are those of type $(2,2k)$ with edge sets
\begin{align*}
E_{k}:=\bigcup_{i=1}^{k}\Big\{\{(1,2i-1),(2,2i)\}\Big\}
\end{align*}
for $k\ge 4$. For example, the $k=4$ case corresponds to the quantum state with density matrix
\begin{align*}
\rho=&\frac{1}{4}\big(|1,1;2,2\rangle\langle1,1;2,2| + |1,3;2,4\rangle\langle1,3;2,4|\\ &+ |1,5;2,6\rangle\langle1,5;2,6| + |1,7;2,8\rangle\langle1,7;2,8|\big)\
\end{align*}
acting on $\mathbb{C}^2\otimes \mathbb{C}^8$.

In this section we have applied the matrix realignment criterion to the grid-labelled graphs. We showed that entanglement in grid-labelled graphs can be detected using the eigenvalues of their adjacency and degree structure matrices. We used this result to construct a family of entangled quantum states that are not detected as entangled by the matrix realignment criterion. Let us conclude by explicitly calculating the adjacency and degree structure matrices for an example grid-labelled graph, in the hope that it will be illuminating for the reader.
 
\begin{example}
\emph{Let us calculate the degree and adjacency structure matrices for an example graph.}
\begin{align*}
	\mathcal{D}\left(\vcenter{\hbox{\includegraphics[scale=0.4]{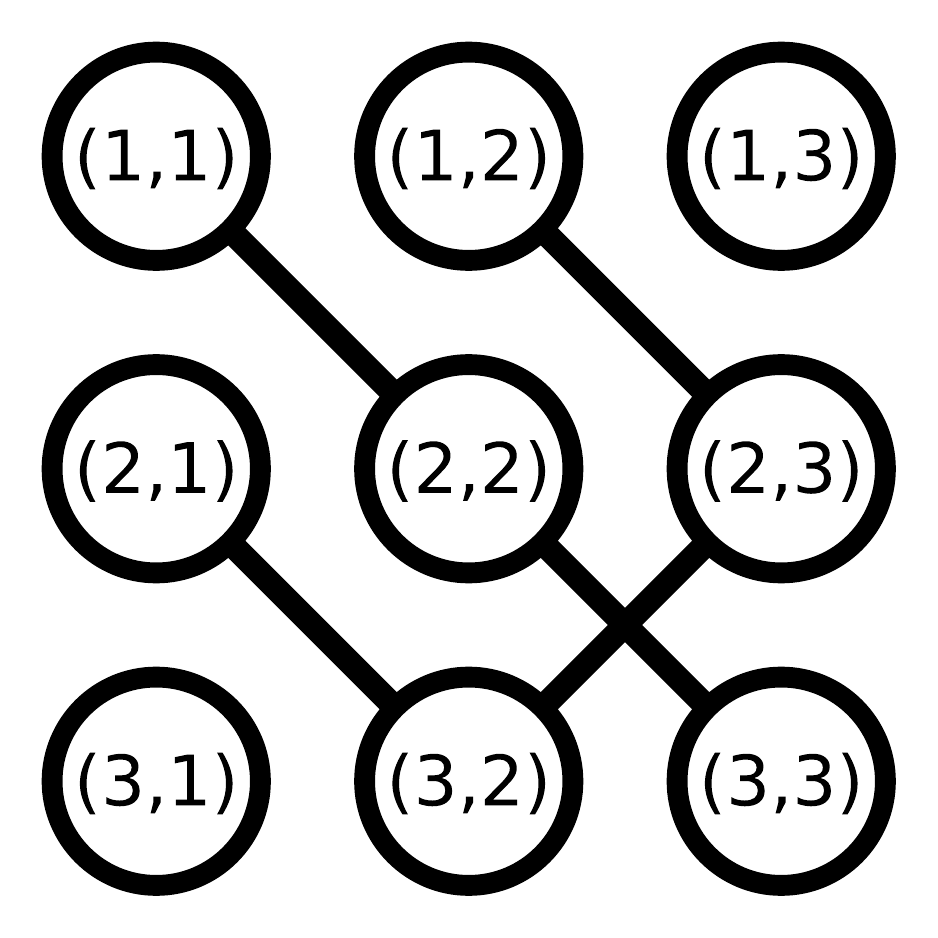}}}\right)&=\begin{pmatrix}\vcenter{\hbox{\includegraphics[scale=0.25]{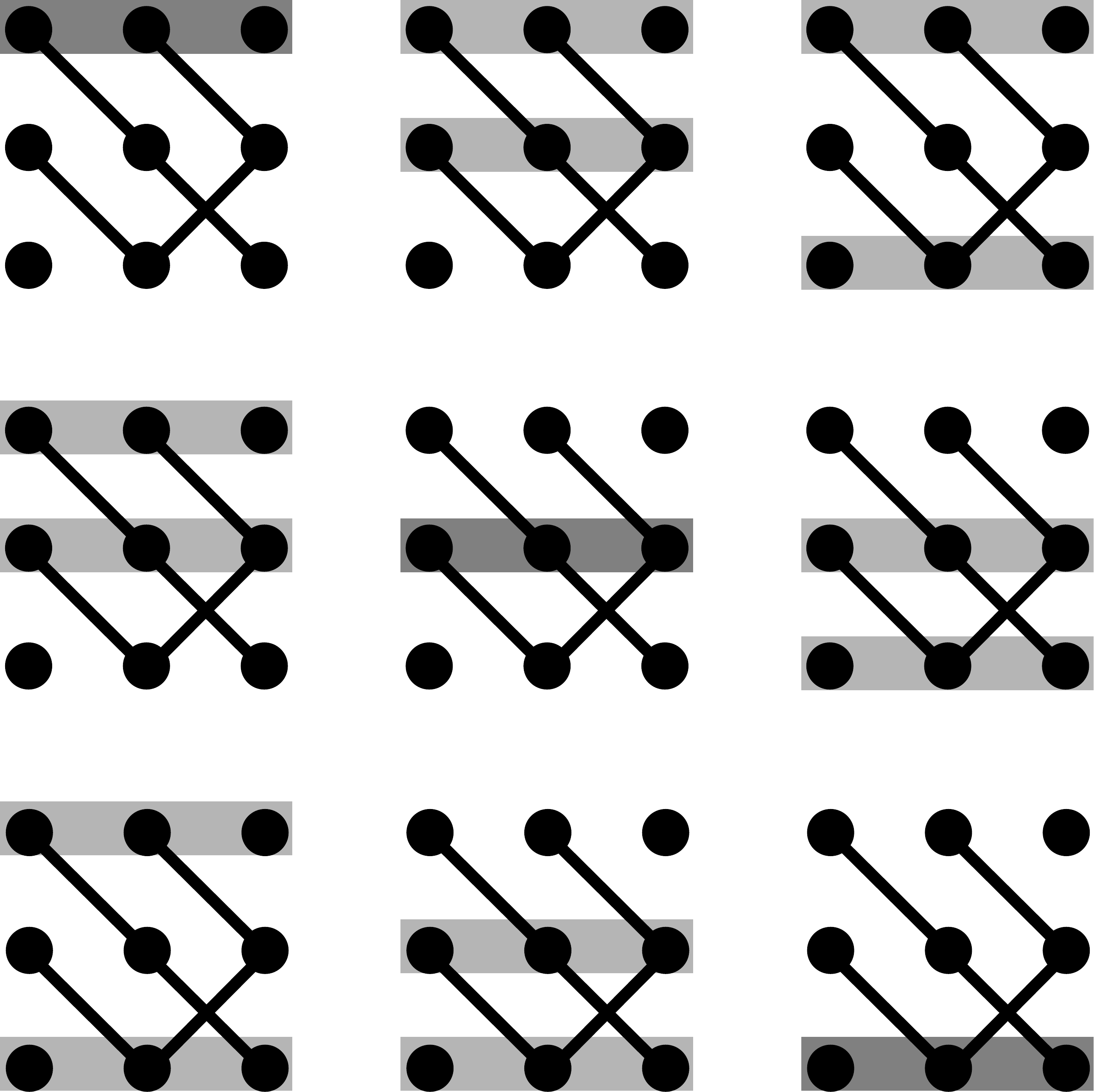}}}\end{pmatrix}\\&=\begin{pmatrix}2&3&2\\3&9&6\\2&6&5\end{pmatrix}
	\end{align*}
	\emph{As illustrated above, finding the entries of the degree structure matrix of a $m\times n$  grid-labelled graph amounts to taking the pairwise inner product of the $m$ $n$-dimensional vectors which have as their entries the degrees of the vertices in that row.}

	\emph{It can be seen that the entries of the adjacency structure matrix are in some sense equal to the ``overlap'' of each row subgraph, as we now illustrate pictorially:}
	\begin{align*}
	\mathcal{A}\left(\vcenter{\hbox{\includegraphics[scale=0.4]{mrexample}}}\right)&=\begin{pmatrix}\vcenter{\hbox{\includegraphics[scale=0.16]{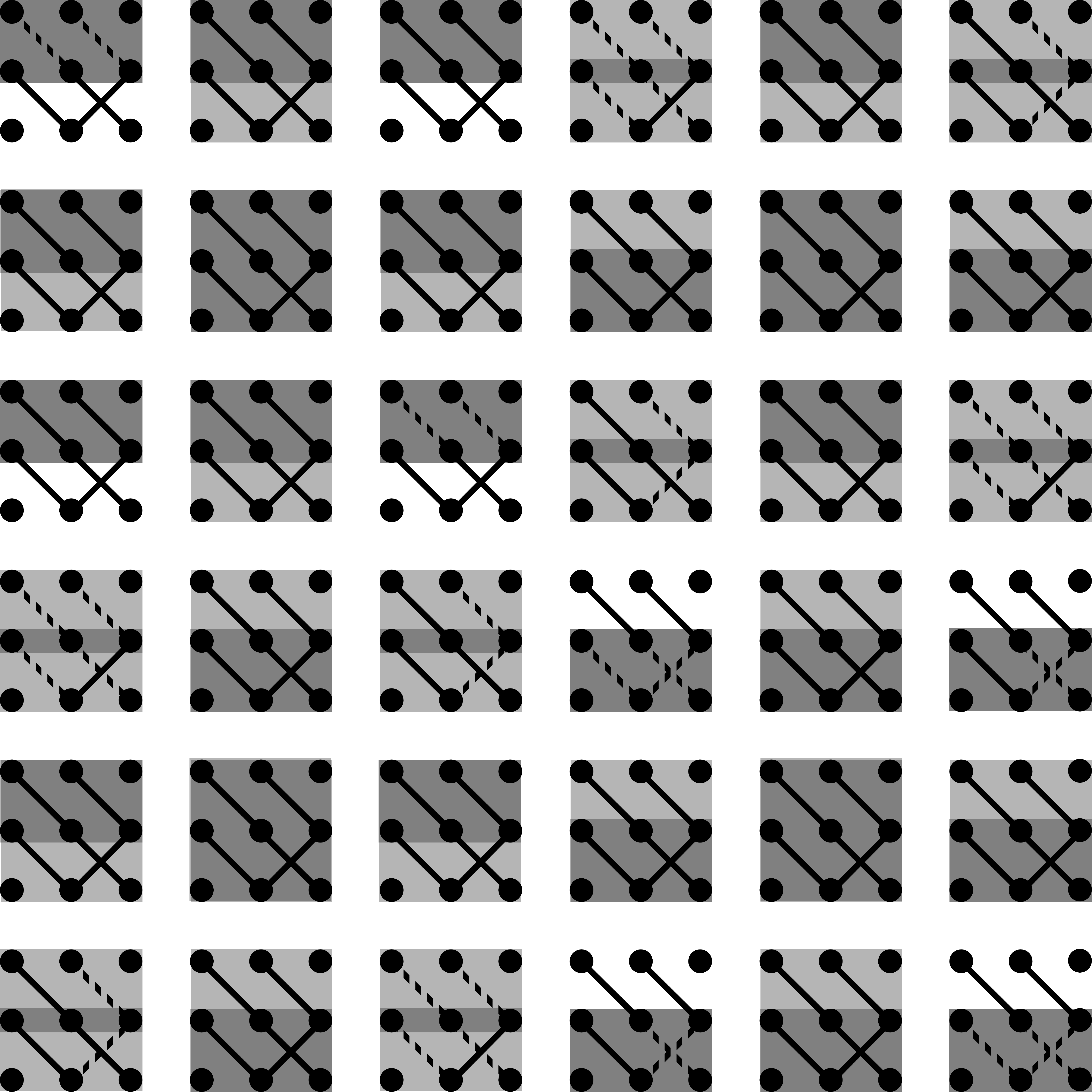}}}\end{pmatrix}\\&=\begin{pmatrix}2&0&0&2&0&1\\0&0&0&0&0&0\\0&0&2&1&0&2\\2&0&1&3&0&2\\0&0&0&0&0&0\\1&0&2&2&0&3\end{pmatrix},
\end{align*}
\emph{where shaded regions denote the row subgraphs being compared. For example, the entry at index $(6,3)$ compares the row subgraphs $R_{1,2}$ and $R_{2,3}$, which have $2$ edges in their intersection. The entry at index $(3,1)$ compares $R_{1,2}$ and $R_{2,1}$, which have $0$ edges in their intersection.}
\end{example}
\section{Summary and open problems}
\label{section:summaryandopenproblems}

In this work we have introduced a new combinatorial object, the \emph{grid-labelled graph}. We have shown that grid-labelled graphs can be used to represent a discrete family of quantum states, and that study of their structural properties helps in determining if these states are entangled. We have used this graphical ``handle'' on separability that grid-labelled graphs afford us to explore the limits of two well known entanglement criteria: Peres-Horodecki and matrix realignment.

In Section \ref{section:thedegreecriterion} we ported the Peres-Horodecki criterion to our framework in the form of the degree criterion of Braunstein \emph{et al.} \cite{Braunstein2006b}, and introduced the concept of a separable decomposition of a graph. We used the fact that the degree criterion is necessary and sufficient in $\mathbb{C}^2\otimes \mathbb{C}^n$ for arbitrary $n$ in combination with the idea of separable decompositions to build a family of quantum states, parameterised by the \emph{stratified} grid-labelled graphs, for which the Peres-Horodecki criterion is necessary and sufficient in all bipartite dimensions.

The matrix realignment criterion is explored in Section \ref{section:thematrixrealignmentcriterion}. We are able to simplify the criterion for grid-labelled graphs, showing that it can be defined in terms of the eigenvalues of the \emph{degree} and \emph{adjacency} structure matrices of the graph. These are matrices that we have introduced in this paper. At the end of Section \ref{section:thematrixrealignmentcriterion} we tried out this new tool on grid-labelled graphs that have diagonal degree and adjacency structure matrices. This was fruitful, and led us to discover a family of entangled quantum states that are not detected by the matrix realignment criterion. This highlights how grid-labelled graphs can be used as a tool for generating ``pathological'' density matrices which can be used to refine existing entanglement criteria.

In Section \ref{section:separabilityin3x3} we classified all $3\times 3$ grid-labelled graphs that satisfy the degree criterion. We showed that such graphs satisfy the degree criterion if and only if they are equal to the union of \emph{building-block} grid-labelled graphs. A natural extension to our work would be an exploration of how this scales to larger grid dimensions: what are the building-blocks for arbitrary grid dimension? As discussed in the conclusion of Section \ref{section:separabilityin3x3}, it would be interesting to explore how the size of the building-block set grows as a function of grid size. Indeed, in our attempt to classify the $3\times 3$ degree criterion graphs, we uncovered two bound entangled states, $(B_4)_l^{3,3}$ and $(B_5)_l^{3,3}$. Indeed, $(B_4)_l^{3,3}$ is the state presented by Hildebrand \emph{et al.} (\cite{Hildebrand+2008}, Section 2.4) as a counter-example to the so called ``degree conjecture'', that the degree criterion is necessary and sufficient for all density matrices that are the combinatorial Laplacian matrix of a graph. We are not aware of any reference in the literature to bound entangled states of the form described by $(B_5)_l^{3,3}$. It is possible that these graphs are respectively members of two families of bound entangled states, obtained by ``scaling up'' the cross-hatch like structure of these $3\times 3$ cases to arbitrary bipartite dimensions. The structure of the edges of these grid-labelled graphs make it easy to see that higher dimensional cases for arbitrary $a\times b$ grid dimensions can be generated, and that they satisfy the degree criterion. The question remains as to whether these higher dimensional states remain entangled. Furthermore, it could be that an attempt to classify the building-blocks for higher dimensional grid-labelled graphs that satisfy the degree criterion would lead to new examples of such states that in turn belong to their own new families of bound entangled states. This would be an interesting result, and it is an enticing reason to explore the higher dimensional building blocks. At the close of this section we were able to apply the results to directly solve $(k,a,b)$-\textsc{GraphSeparability} for $k=1,2,3$ and arbitrary $m,n$. It would be fascinating to completely disregard grid dimension in the study of graphs that satisfy the degree criterion, and attempt to count the number of such graphs with a fixed number of edges, and no restriction on the boundaries of their corresponding grid. In Section \ref{subsection:enumeration} we present a framework for enumerating the graphs with a fixed number of edges in higher dimensional grids that satisfy the degree criterion. We were able to make some progress on the $1\le e\le4$ edge cases. Extending this analysis for graphs with more edges could be an interesting exercise in enumerative combinatorics. 

In the introduction we said that the quantum state corresponding to a grid-labelled graph is extremely simple. Now that we are equipped with the necessary mathemetical tools, we can discuss this in more technical detail. A grid-labelled graph of type $m\times n$ with $e$ edges corresponds to a uniform mixture of $e$ bipartite pure states from $\mathbb{C}^m\otimes \mathbb{C}^n$ with Schmidt rank equal to $2$: the edge states, as defined in Definition \ref{def:edgeStates}. The fact that neither one of the two well known entanglement criteria we covered in this paper reliably detects entanglement in such simple density matrices is interesting. Our preliminary investigations \cite{otfried} into another entanglement criteria known as the \emph{range criterion} \cite{rangecriterion} have shown that this is not necessary and sufficient for separability for grid-labelled graphs either. As we have stated in the introduction, it would be very interesting if we could use the grid-labelled graphs to re-prove NP-hardness of separability testing for two reasons. The first reason is that a proof of NP-hardness would likely utilise a reduction from an NP-complete problem about graphs such as subgraph isomorphism, or that of finding a clique. This would be novel because it would probably be much simpler than the existing proofs of NP-hardness \cite{Gurvits}, and would effectively ``discretize'' the problem. Density matrices are by their very nature continuous objects, which means that reasoning about them in the form of decision problems becomes unwieldy. In contrast, the objects we consider in this paper are discrete. The second reason that a proof of NP-hardness for testing separability of grid-labelled graphs would be interesting is that it would provide a very concrete, easy to describe, family of density matrices that are hard to test separability for. This would provide interesting insight into the complexity ``landscape'' of testing separability.

\section*{Acknowledgements}
JL acknowledges funding from the EPSRC Center for Doctoral Training in Delivering Quantum Technologies at UCL. SS is supported by the Royal Society, the U.K. EPSRC, and SJTU. We thank Laura Man\v{c}inska for useful discussions on mixed state entanglement and bound entanglement, David Roberson for help with various combinatorial and matrix theoretic problems, Otfried G\"{u}hne for computational work on verifying entanglement in $(B_4)^{3,3}$ and $(B_5)^{3,3}$, Erik Demaine and Jayson Lynch for discussion on the computational complexity of Crosses and Lassoes, Bhaskar DasGupta for discussions related to stratified grid-labelled graphs, and Carlo Sparaciari for his keen eye for detail. Most graphs in the paper were drawn with the \emph{yEd} graph editor \cite{yEd}.

\appendix
\section{Contribution tables and a single player board game}
\label{appendix:contributiontables}
In Section \ref{section:separabilityin3x3} we used the contribution table framework to characterise the $3\times 3$ grid-labelled graphs that satisfy the degree criterion. In this appendix we consider a single player board game called ``Crosses and Lassoes'', which comes directly from the contribution table framework. We first describe the game and a simple variation, then prove that these games address questions about grid-labelled graphs that satisfy the degree criterion.

\subsection{Crosses and Lassoes}
The objective of Crosses and Lassoes is to remove all pieces from the board by either ``crossing out'' or ``lassoing'' them. The game is played on an $m\times n$ board. There are two types of pieces: \emph{left} pieces, denoted with a `$\backslash$', and \emph{right} pieces, denoted with a `/'. Each square on the board can contain any number of pieces. A board with pieces on it is called a \emph{setup}. If it is possible to remove all pieces from a setup by performing a sequence of cross and lasso operations, then we say that the setup is \emph{clearable}. If it is not possible to clear a setup in such a way, then we say that the setup is \emph{unclearable}. Lasso and cross operations act on four pieces at a time, which must be in a \emph{rectangular formation}, that is, the pieces involved must be situated on the board such that they make up the corners of a rectangle.

\begin{figure}[h!]
	\centering
	\includegraphics[scale=0.2]{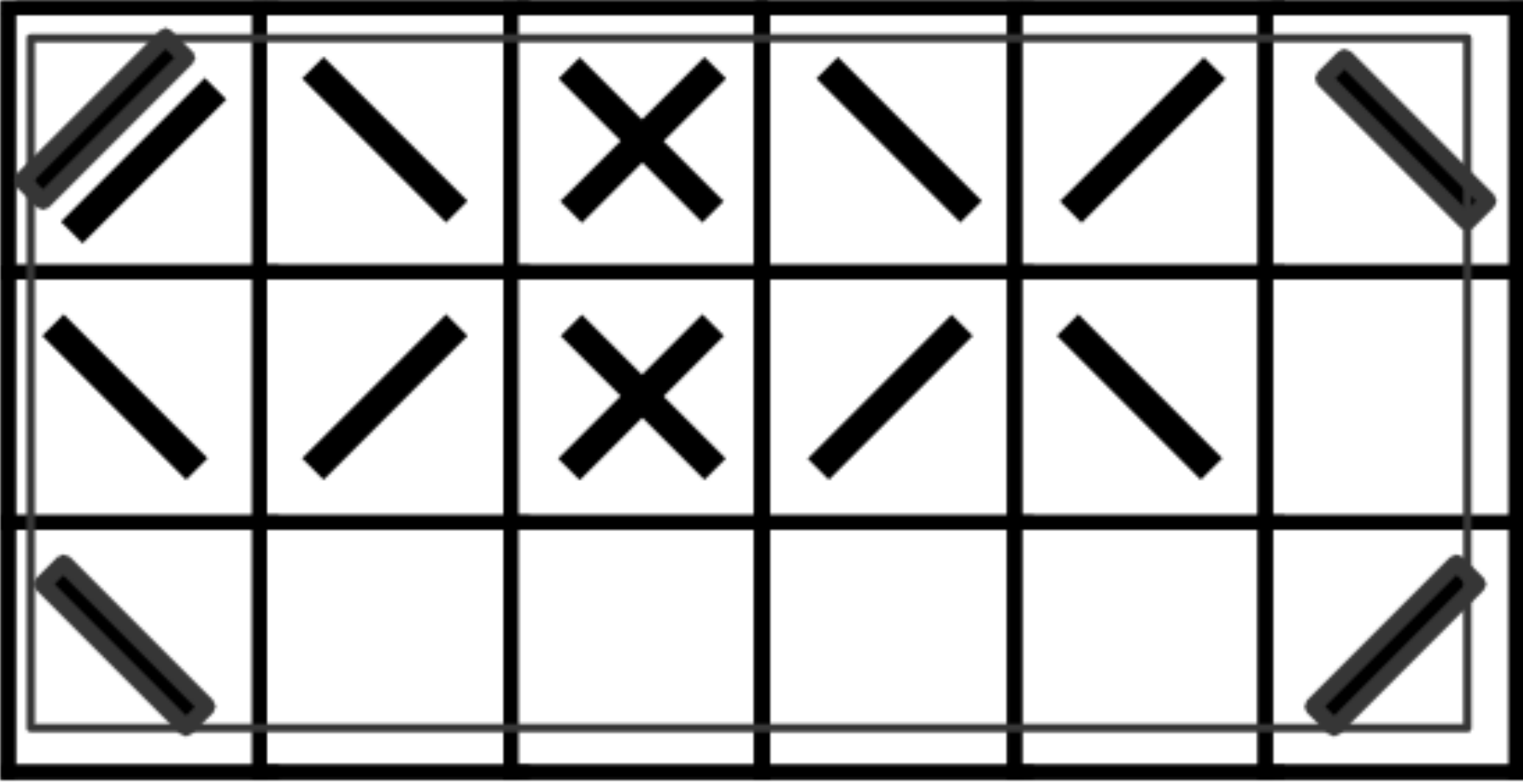}~
	\includegraphics[scale=0.2]{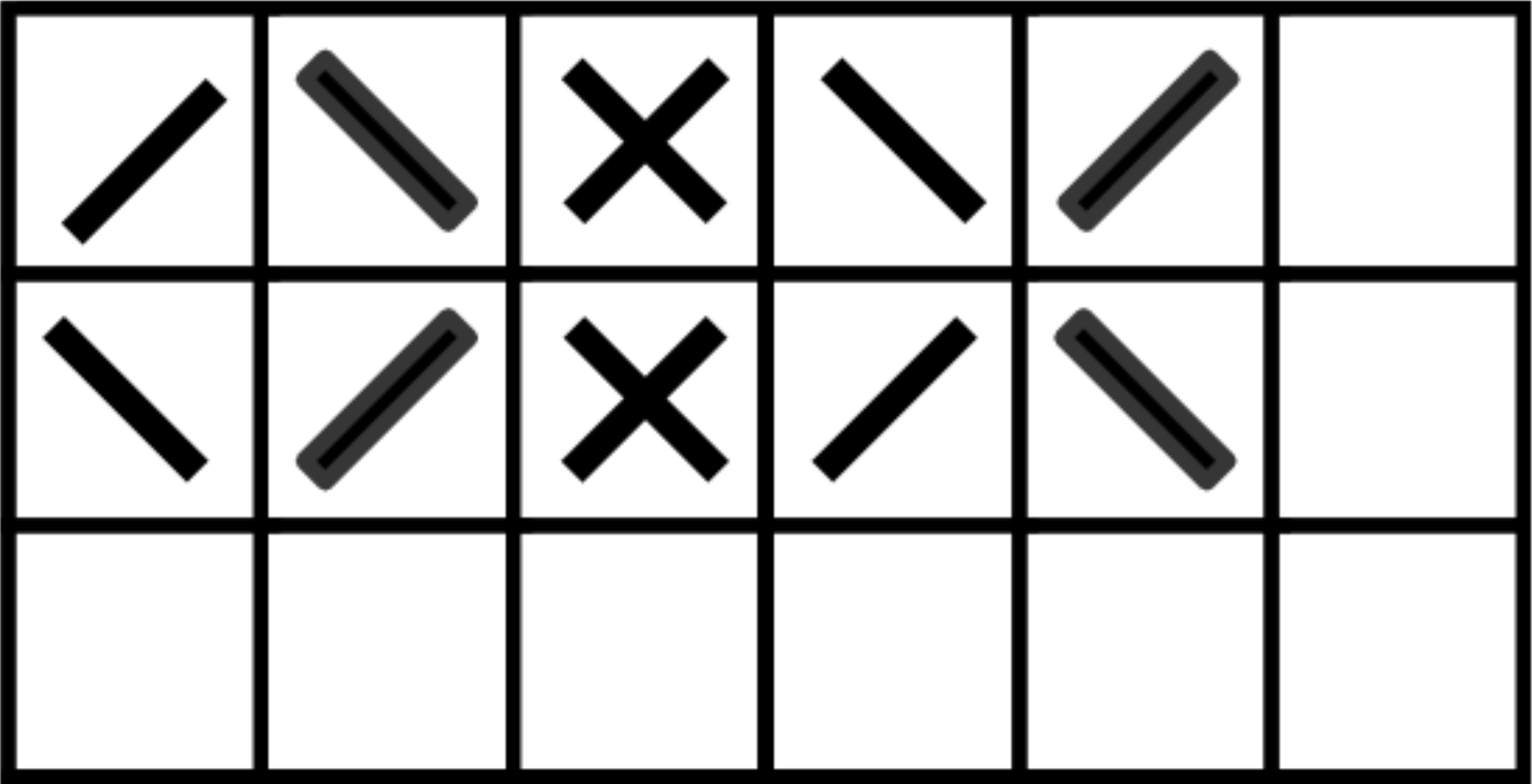}~
	\includegraphics[scale=0.2]{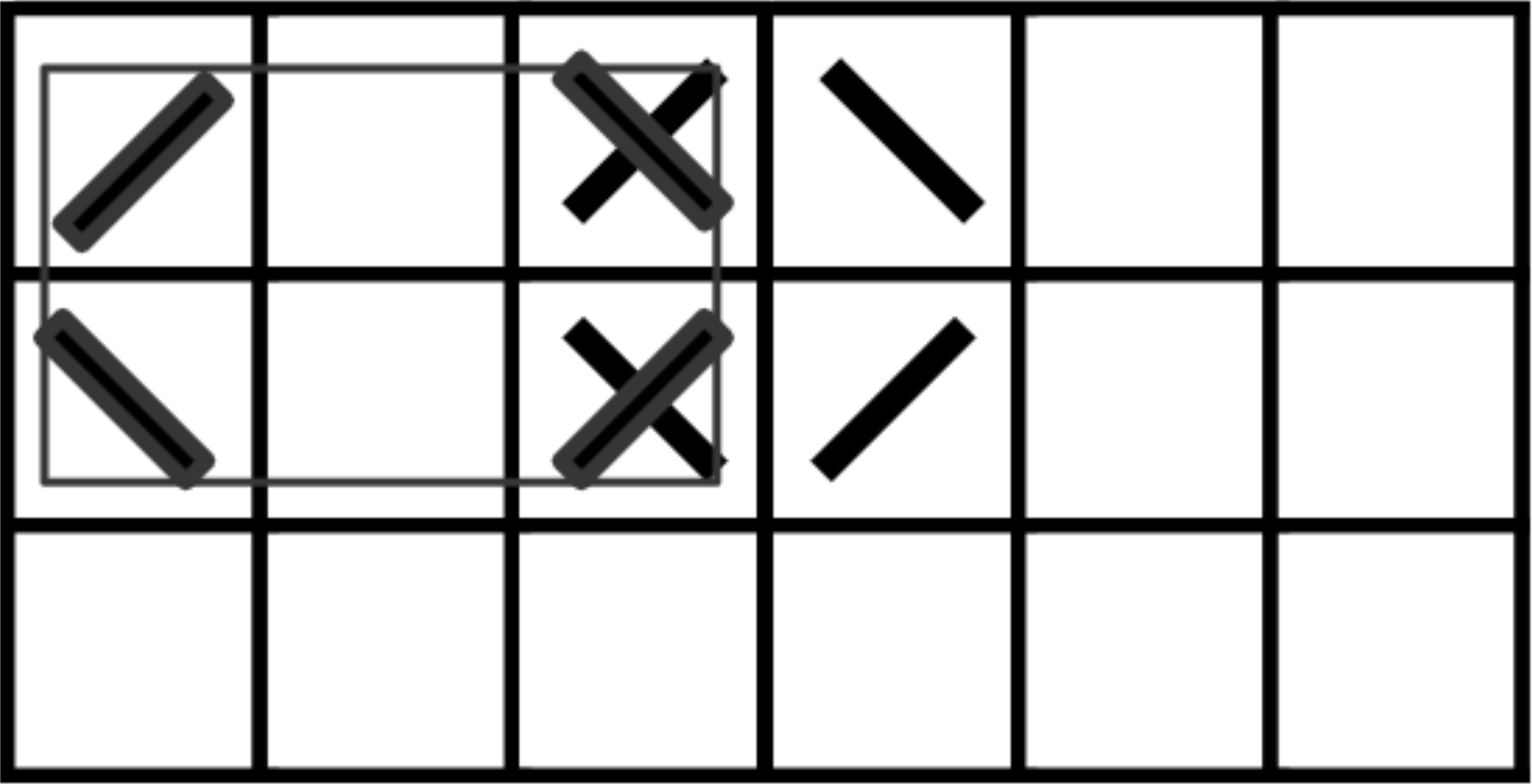}~
	\includegraphics[scale=0.2]{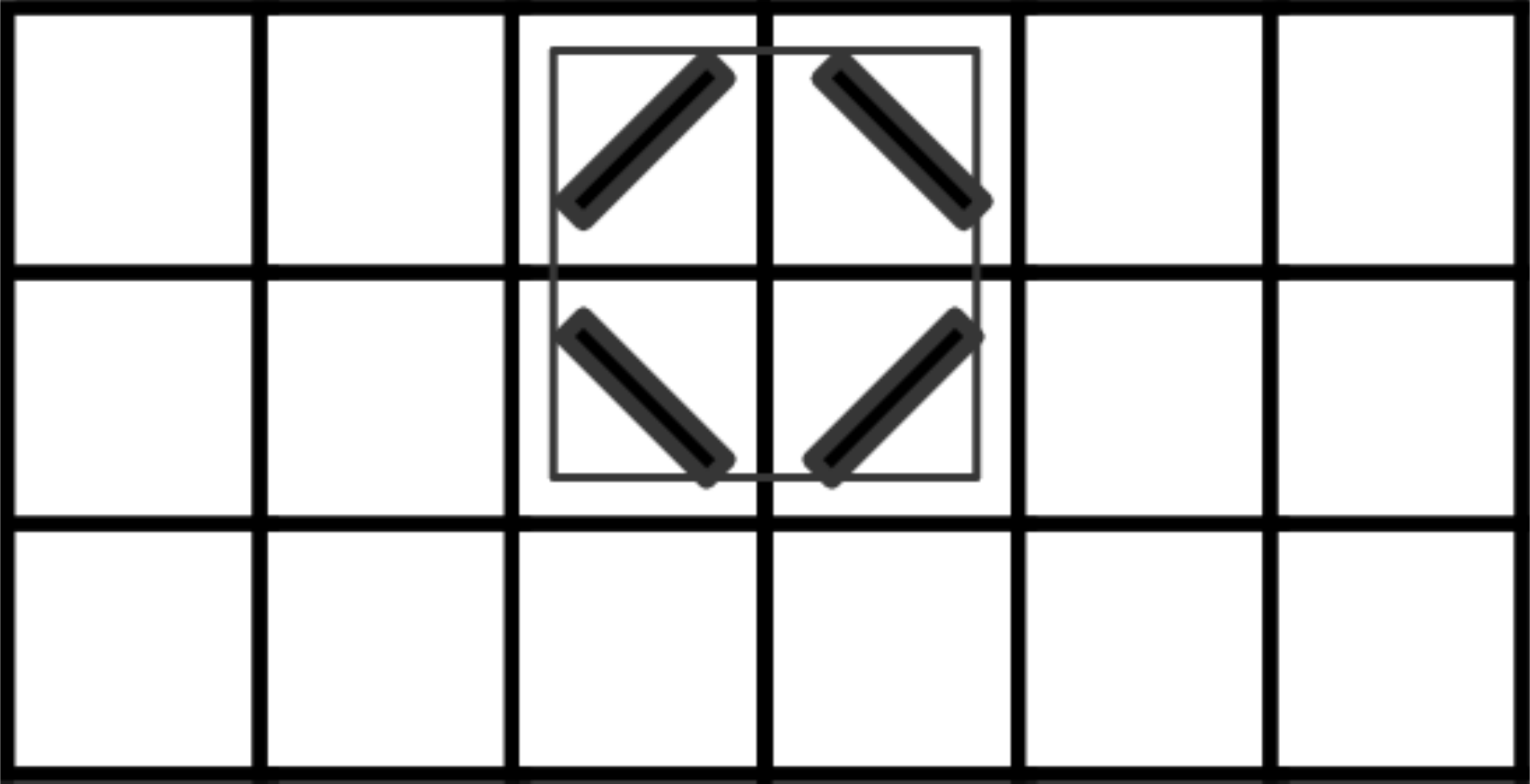}

	 \caption{A setup being cleared. From left to right, the operations performed are lasso, cross, lasso, lasso.}\label{fig:game}
\end{figure}

\subsubsection*{The lasso operation}
Four pieces in a rectangular formation can be removed by a lasso operation if, moving clockwise from the top left of the rectangle, the pieces are right, left, right and left pieces. In boards 1, 3, and 4 in Figure \ref{fig:game} 
 we illustrate four pieces being removed from the board by lasso operations.

\subsubsection*{The cross operation}
Four pieces in a rectangular formation can be removed by a lasso operation if, moving clockwise from the top left of the rectangle, the pieces are left, right, left and right pieces. In the setup second from the left in Figure \ref{fig:game}, we illustrate four pieces being removed from a board by the cross operation.

We can now define the following decision problems, which are clearly equivalent.

\begin{problem}
\textsc{Clearability}\\
\textit{Input: } \emph{An $m\times n$ setup.}\\
\textit{Question: } \emph{Is the setup clearable?}
\end{problem}
\begin{problem}
	\textsc{ContributionTableValidity}\\
	\textit{Input: } \emph{An $m\times n$ contribution table $T$.}\\
	\textit{Question: } \emph{Is there a grid-labelled graph $G_l^{m,n}$ that has contribution table $T$?}
\end{problem}

Instances of the \textsc{ContributionTableValidity} appeared several times in Section \ref{section:separabilityin3x3}, where we had to test if a particular contribution table had a corresponding graph. We leave the computational complexity of these decision problems as an open question.

\subsection{A different angle: combining setups}
Let us now describe an interesting variation on ``vanilla'' Crosses and Lassos. Let $A$ and $B$ be two setups, each on an $m\times n$ board. The setup $A+B$ is the setup with an $m\times n$ board with pieces from $A$ and $B$ combined.
For example,
\begin{align}
\label{eq:addition}
\vcenter{\hbox{\includegraphics[scale=\boardscale]{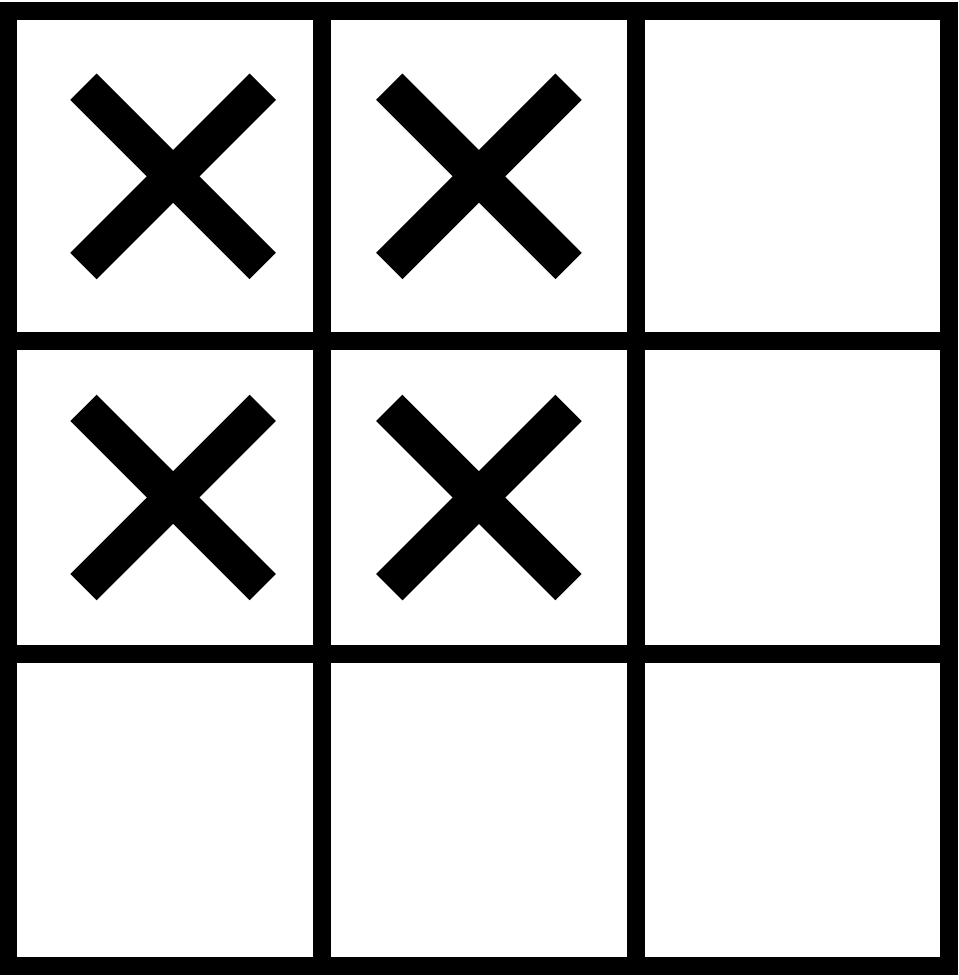}}}+\vcenter{\hbox{\includegraphics[scale=\boardscale]{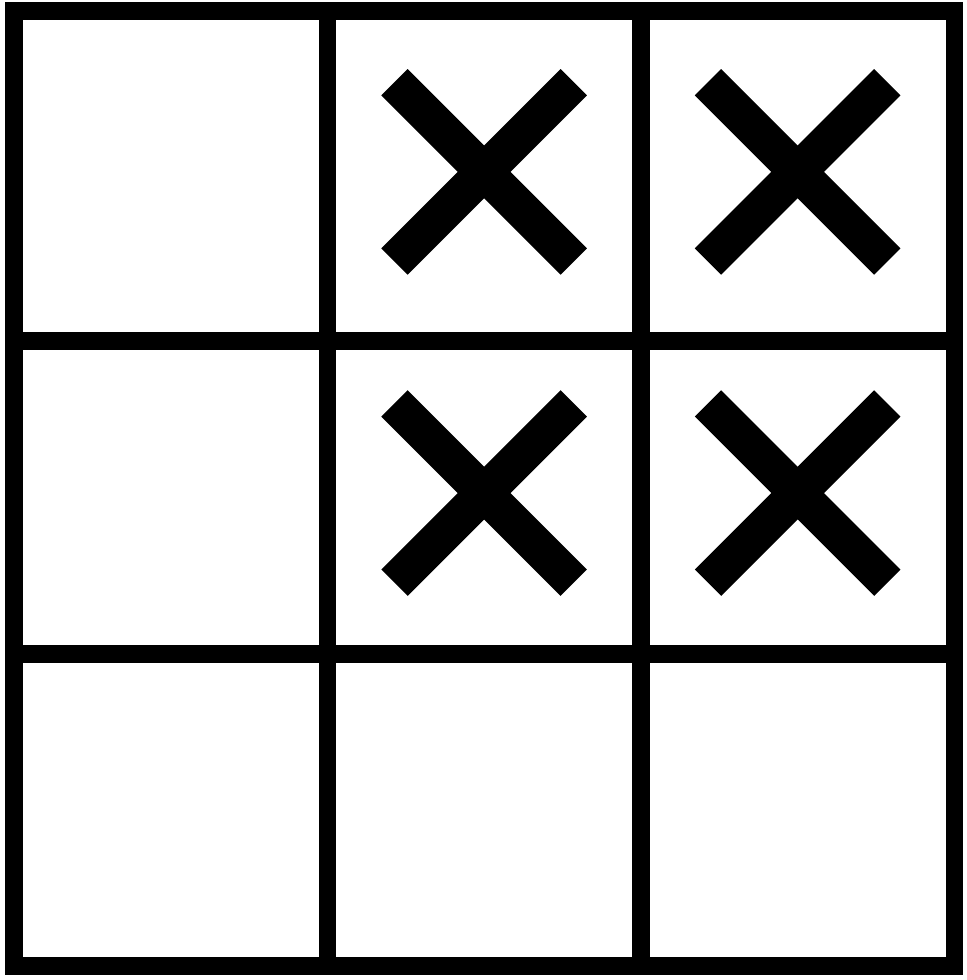}}}=\vcenter{\hbox{\includegraphics[scale=\boardscale]{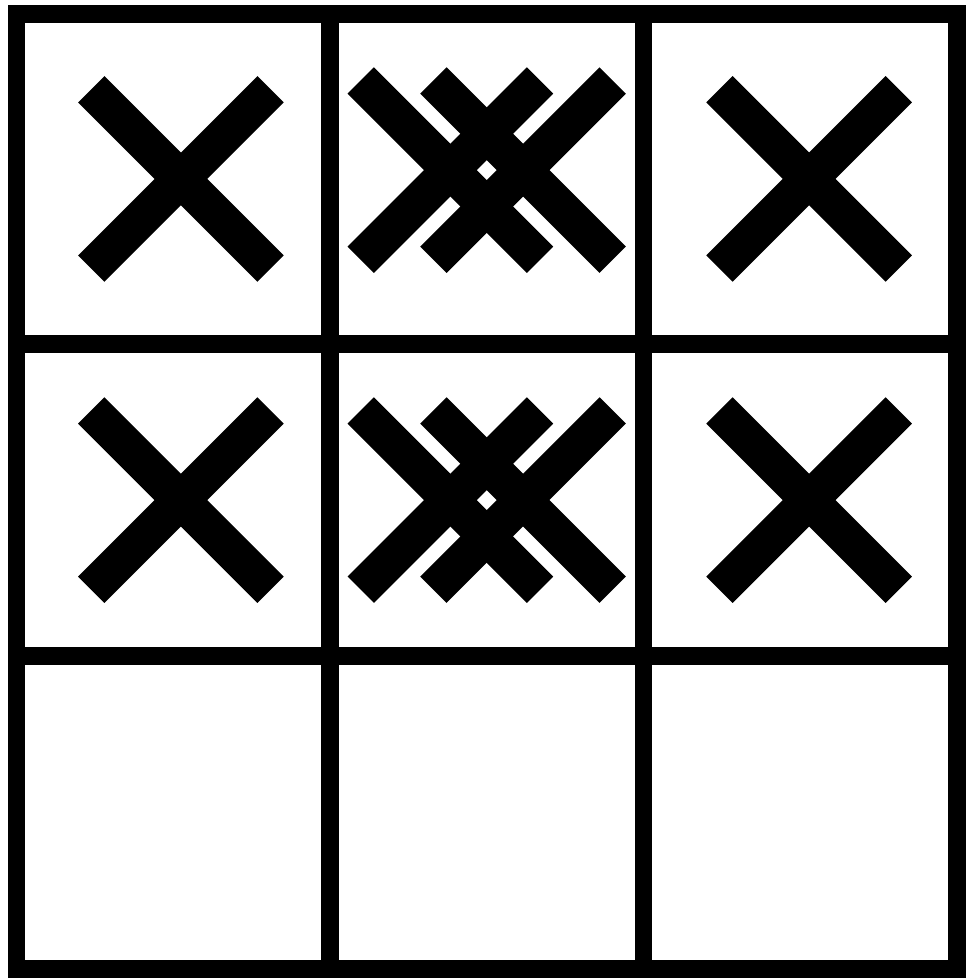}}}.
\end{align}

A cell on the board of a setup is called \emph{complete} if every right piece in that cell can be paired with a unique left piece, and vice versa. A setup is called complete if every cell on its board is complete. Each setup in the summation in Equation (\ref{eq:addition}) is complete, as is the result of the summation.

A setup is called a \emph{singleton} setup if it has a board with exactly four pieces, and those four pieces are arranged in a rectangular formation in the clockwise order right, left, right, left (lasso), or left, right, left, right (cross), starting from the top left piece.

The setups
\begin{align*}
\vcenter{\hbox{\includegraphics[scale=\boardscale]{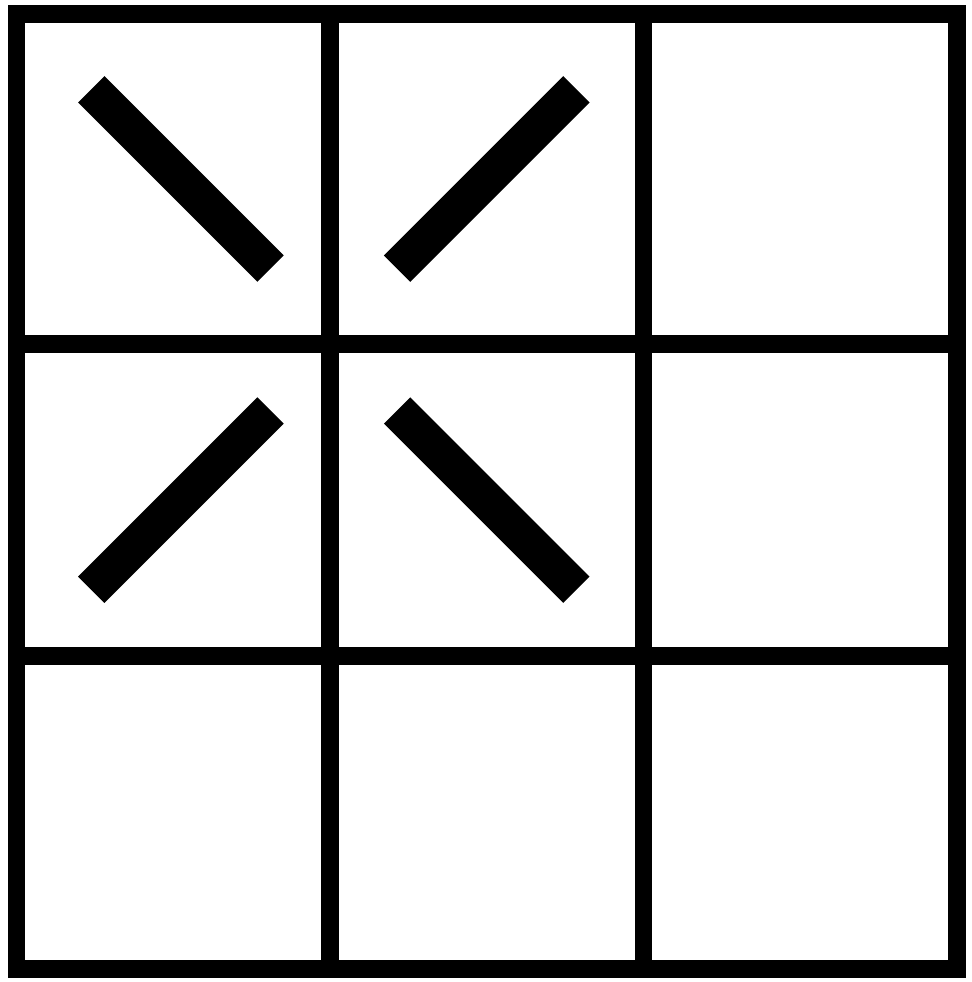}}},~\vcenter{\hbox{\includegraphics[scale=\boardscale]{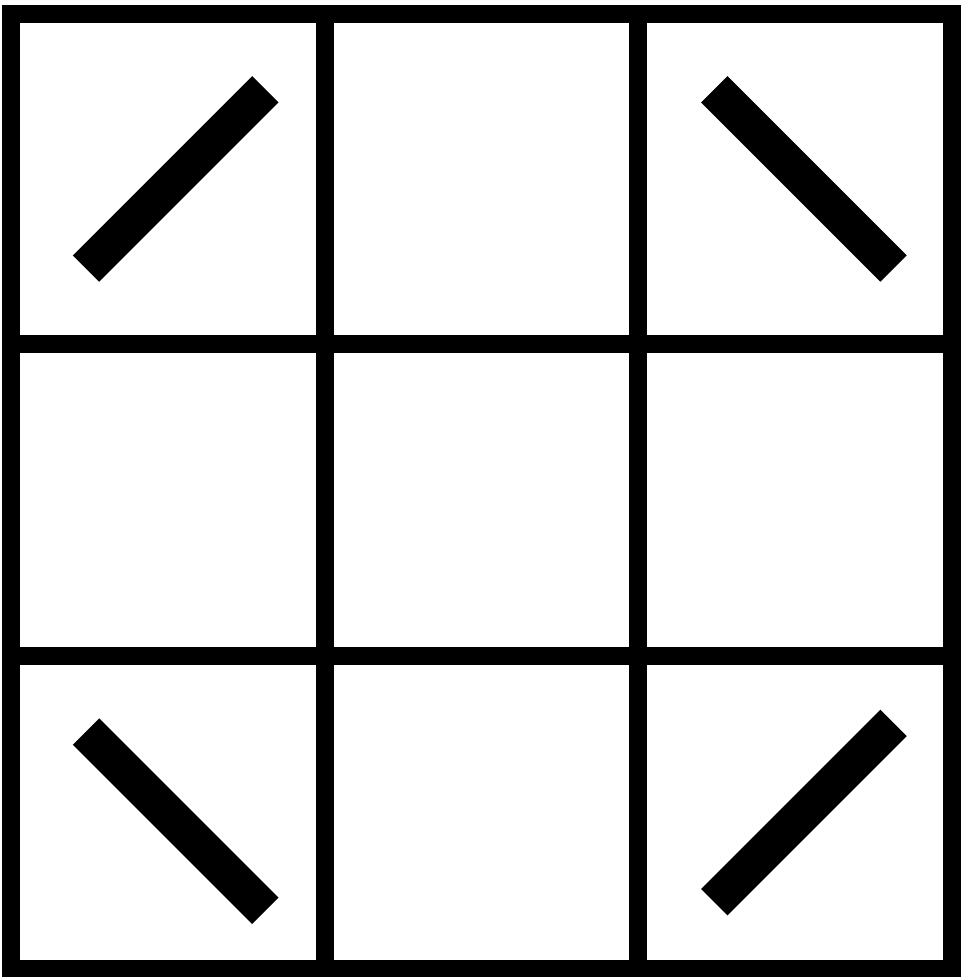}}}\text{, and }\vcenter{\hbox{\includegraphics[scale=\boardscale]{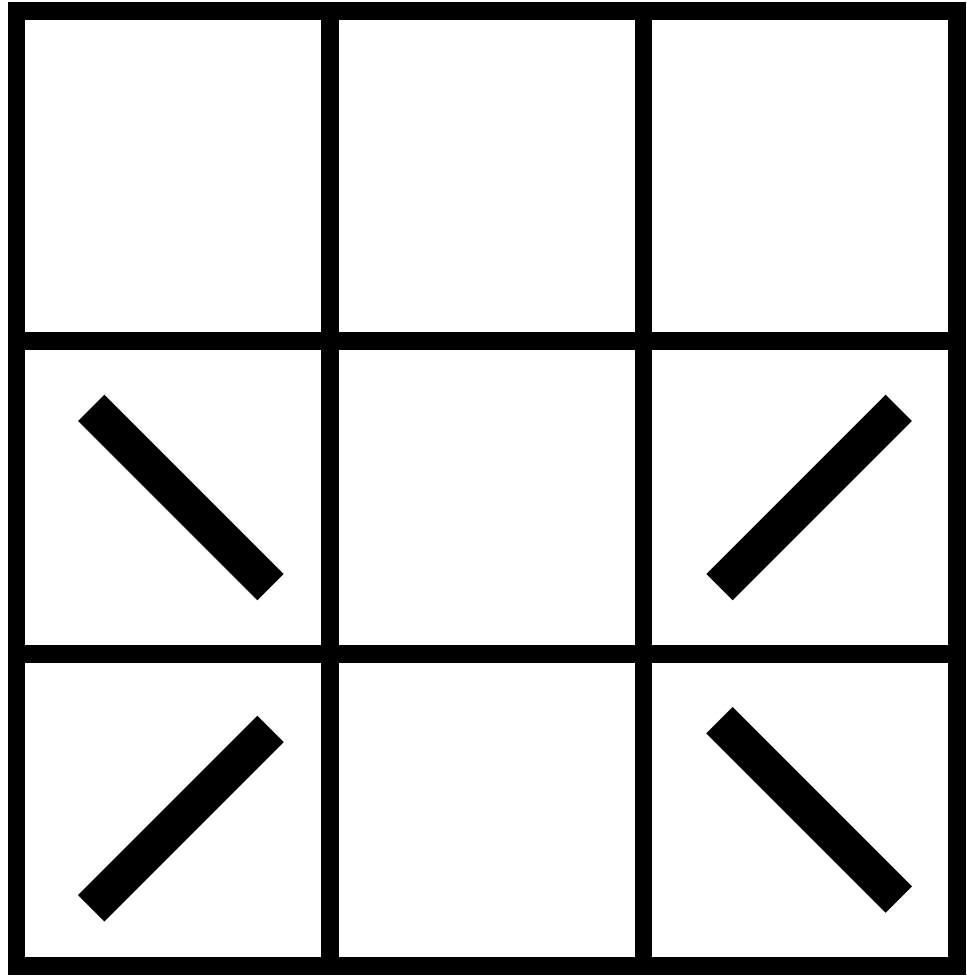}}}
\end{align*}
are singleton setups. We can now formulate another decision problem.

\begin{problem}
\textsc{SingletonSubsetSum}\\
\textit{Input:} \emph{A set of singleton setups, $S$.}\\
\textit{Question:} \emph{Does there exist subset $R\subseteq S$ such that the setup \begin{align*}\sum_{t\in R}t\end{align*} is complete?}
\end{problem}

With every setup $S$ on an $m\times n$ board, we know from Section \ref{section:separabilityin3x3} that we can associate an $m\times n$ matrix $M(S)$ with entries
\begin{align*}
[M(S)]_{ij}:=\begin{cases}
+1&\text{ if } S \text{ has a left piece in its } ij^{\text{th}} \text{ square;}\\
-1&\text{ if } S \text{ has a right piece in its } ij^{\text{th}} \text{ square;}\\
0&\text{ otherwise.}
\end{cases}
\end{align*}
For example, the matrix forms of the singleton setup examples from earlier are as follows,
\begin{align*}
M\left(~\vcenter{\hbox{\includegraphics[scale=\boardscale]{singletonsetup1-cropped}}}~\right)=\begin{pmatrix}
+1&-1&0\\
-1&+1&0\\
0&0&0\\
\end{pmatrix},\\
M\left(~\vcenter{\hbox{\includegraphics[scale=\boardscale]{singletonsetup2-cropped}}}~\right)=\begin{pmatrix}
-1&0&+1\\
0&0&0\\
+1&0&-1\\
\end{pmatrix},\\
M\left(~\vcenter{\hbox{\includegraphics[scale=\boardscale]{singletonsetup3-cropped}}}~\right)=\begin{pmatrix}
0&0&0\\
+1&0&-1\\
-1&0&+1\\
\end{pmatrix}.
\end{align*}
The following is equivalent to Lemma \ref{lemma:zerocontribution}. We denote by $M_0$ the matrix with all entries equal to $0$.
\begin{lemma}
	Let $S$ be a set of setups. Then their sum
	\begin{align*}
	\sum_{t\in S}t
	\end{align*} 
	is complete if and only if
	\begin{align*}
	\sum_{t\in S} M(t)=M_0.
	\end{align*} 
\end{lemma}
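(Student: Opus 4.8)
The plan is to prove both directions simultaneously through a single counting identity: for every cell $(i,j)$, the entry $\left[\sum_{t\in S}M(t)\right]_{ij}$ records exactly the number of left pieces minus the number of right pieces sitting in cell $(i,j)$ of the combined setup $\sum_{t\in S}t$. First I would unwind the definition of $M(t)$: each setup $t$ deposits $+1$ into entry $(i,j)$ for a left piece in its $ij$-th square and $-1$ for a right piece. Since matrix addition is entry-wise, $\left[\sum_{t\in S}M(t)\right]_{ij}=\sum_{t\in S}[M(t)]_{ij}$, and by the definition of the combined board $\sum_{t\in S}t$ this collects to (number of left pieces in cell $(i,j)$) minus (number of right pieces in cell $(i,j)$), where both counts are taken over the combined setup.

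Next I would translate completeness of a single cell into this signed count. By definition a cell is complete precisely when its left and right pieces admit a pairing in which each right piece is matched to a unique left piece and vice versa; for finite collections such a bijection exists if and only if the two counts coincide. Hence cell $(i,j)$ is complete if and only if the number of left pieces equals the number of right pieces, which by the identity above is equivalent to $\left[\sum_{t\in S}M(t)\right]_{ij}=0$.

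Finally, a setup is complete exactly when all of its cells are complete. Applying the cell-wise equivalence to every $(i,j)$, the combined setup $\sum_{t\in S}t$ is complete if and only if every entry of $\sum_{t\in S}M(t)$ vanishes, i.e.\ $\sum_{t\in S}M(t)=M_0$. This is precisely the board-game restatement of Lemma \ref{lemma:zerocontribution}, with left/right pieces playing the role of down/up dashes and completeness of a cell playing the role of a cell in which every dash is matched.

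The argument is almost entirely bookkeeping, so I do not expect a genuine obstacle; the one point demanding care is the well-definedness of $M(t)$, which presumes each setup places at most one piece (of a single type) per cell -- as singleton setups do -- so that $[M(t)]_{ij}\in\{+1,-1,0\}$ faithfully encodes that single piece. Provided the setups in $S$ are of this form, the net-count interpretation used in the first step is exactly correct, and the proof goes through verbatim.
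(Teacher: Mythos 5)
Your proof is correct and matches the paper's intent: the paper offers no explicit proof here, simply asserting the lemma is "equivalent to Lemma \ref{lemma:zerocontribution}," whose own proof is exactly your cell-by-cell matching of $+1$ entries against $-1$ entries. Your caveat that $M(t)$ is only a faithful encoding when each setup places at most one piece per cell (as singleton setups do) is a fair observation that the paper glosses over.
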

Hence, \textsc{SingletonSubsetSum} is equivalent to the following decision problem, where we say a matrix is a \emph{singleton matrix} if it is the matrix form of a singleton setup.
\begin{problem}
\textsc{SingletonMatrixSubsetSum}\\
\textit{Input:} \emph{A set of singleton matrices $S$.}\\
\textit{Question:} \emph{Does there exist a set of singleton matrices $R\subseteq S$ such that their sum}
\begin{align*}
\sum_{t\in R}t=M_0.
\end{align*}
\end{problem}

This problem is not the same as \textsc{SubsetSum} extended to matrices (which is of course trivially NP-complete), because of the restriction that the matrices must have only four non-zero elements (two equal to $+1$, two equal to $-1$) which must be arranged in either a lasso rectangular formation, or a cross rectangular formation. This ``lasso or cross'' restriction means that the problem is not even equivalent to \textsc{SubsetSum} over a finite field, it is something else entirely.

From what we have seen so far about grid-labelled graphs and edge contribution tables, it is clear that the following problem is equivalent to \textsc{SingletonSubsetSum}. 

\begin{problem}
\textsc{SubgraphDC}\\
\textit{Input: } \emph{A grid-labelled graph $G_l^{m,n}$.}\\
\textit{Question: } \emph{Does $G_l^{m,n}$ have a subgraph that satisfies the degree criterion?}\\
\end{problem}
Finally, let us note that the following problem is at least as hard as \textsc{SubgraphDC}.
\begin{problem}
	\textsc{SubsetPPT}\\
	\textit{Input: } \emph{A finite set of pure quantum states} $S=\{|\psi\rangle\in\mathcal{H}_A\otimes \mathcal{B}\cong\mathbb{C}^m\otimes \mathbb{C}^n\}$.\\
	\textit{Question: } \emph{Does there exist a set of states $R\subseteq S$ such that for the state}
	$
	\rho_R:=\frac{1}{|R|}\sum_{|\psi\rangle\in R}|\psi\rangle\langle\psi|,
	$
	\begin{align*}
	\rho_R^{\Gamma_A}\ge 0?
	\end{align*}
\end{problem}
Again, we leave the complexity classification of these decision problems open.
\section{Proof of Lemma \ref{lemma:69edges}}
\label{appendix:proof}
Let us now prove Lemma  \ref{lemma:69edges}.

\begin{proof}
	Let $G_l^{3,3}$ be a grid-labelled graph. We wish to prove that if $G_l^{3,3}$ has $6\le m\le 9$ diagonal edges, then it has a decomposition into building-block graphs. 
	
	That this is true is most easily seen by direct inspection. Through exhaustive computer search, all graphs with a set number of diagonal edges can be enumerated. There are a large number of graphs to be checked, especially in the $9$ edge case. To make the lists smaller and easier to parse, we remove subgraphs locally isomorphic to the cross graph. Then, we remove all duplicate graphs from the list. In Figures \ref{fig:6edgesall}, \ref{fig:7edgesall}, \ref{fig:8edgesall} and \ref{fig:9edgesall} we show all $3\times 3$ graphs that satisfy the degree criterion with $6,7,8$ and $9$ diagonal edges respectively. It can be verified by examining these figures that each of the grid-labelled graphs have decompositions into grid-labelled graphs locally isomorphic to building-blocks or rotations of building-blocks.
\end{proof}
\newpage

\begin{figure}[ht!]
	\centering
	\includegraphics[width=0.5\textwidth]{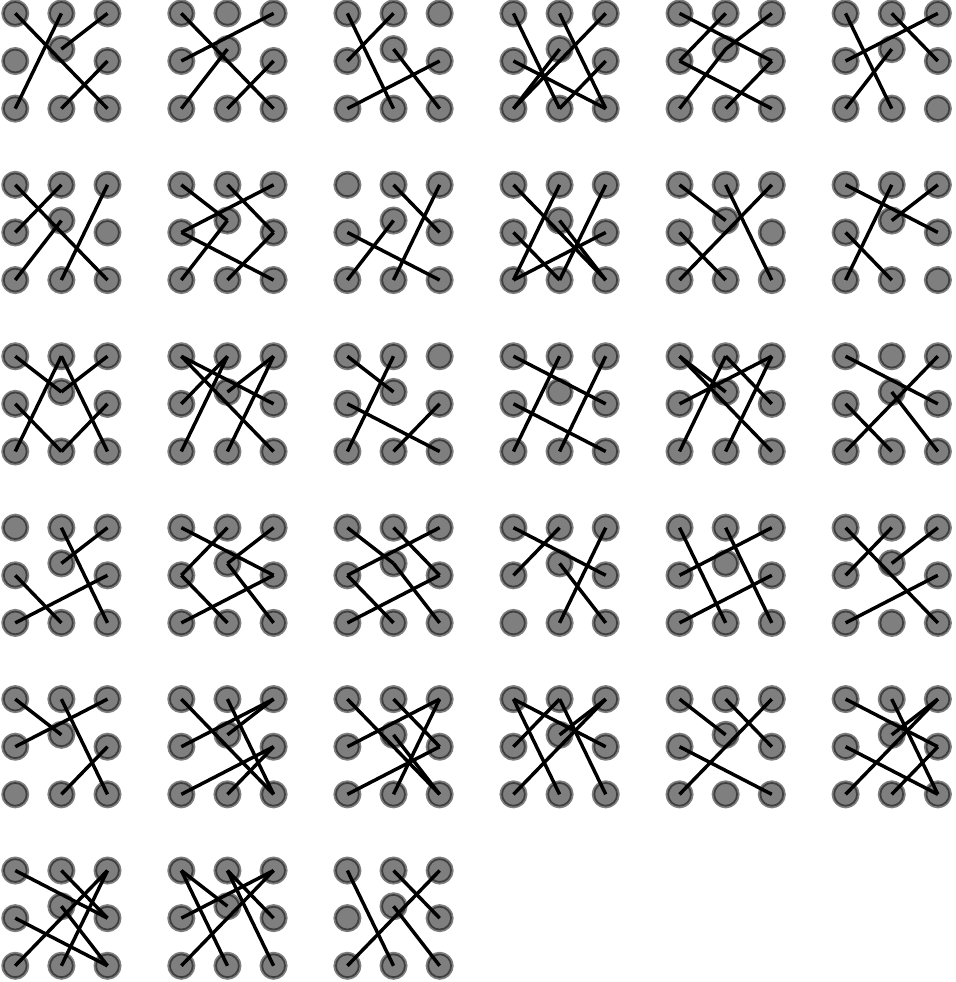}
	\caption{All $6$ edge grid-labelled graphs that satisfy the degree criterion up to local isomorphism, with cross subgraphs removed.}
	\label{fig:6edgesall}
\end{figure}
\begin{figure}[ht!]
	\centering
	\includegraphics[width=0.5\textwidth]{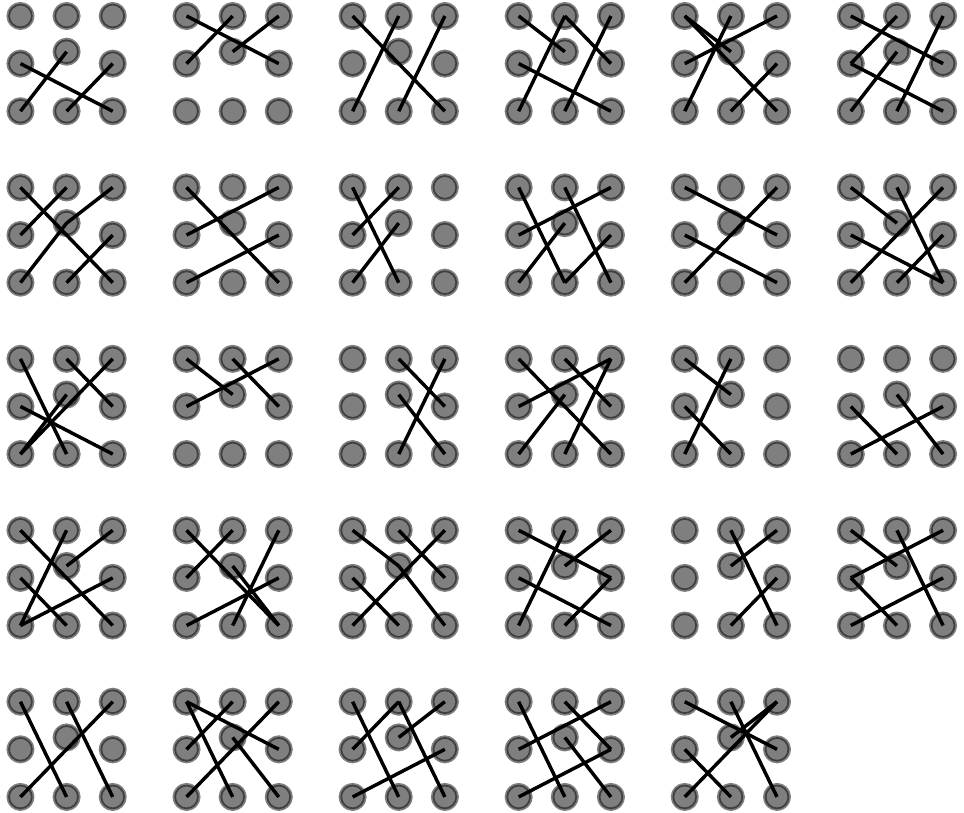}
	\caption{All $7$ edge grid-labelled graphs that satisfy the degree criterion up to local isomorphism, with cross subgraphs removed.}
	\label{fig:7edgesall}
\end{figure}
\begin{figure}[ht!]
	\centering
	\includegraphics[width=0.5\textwidth]{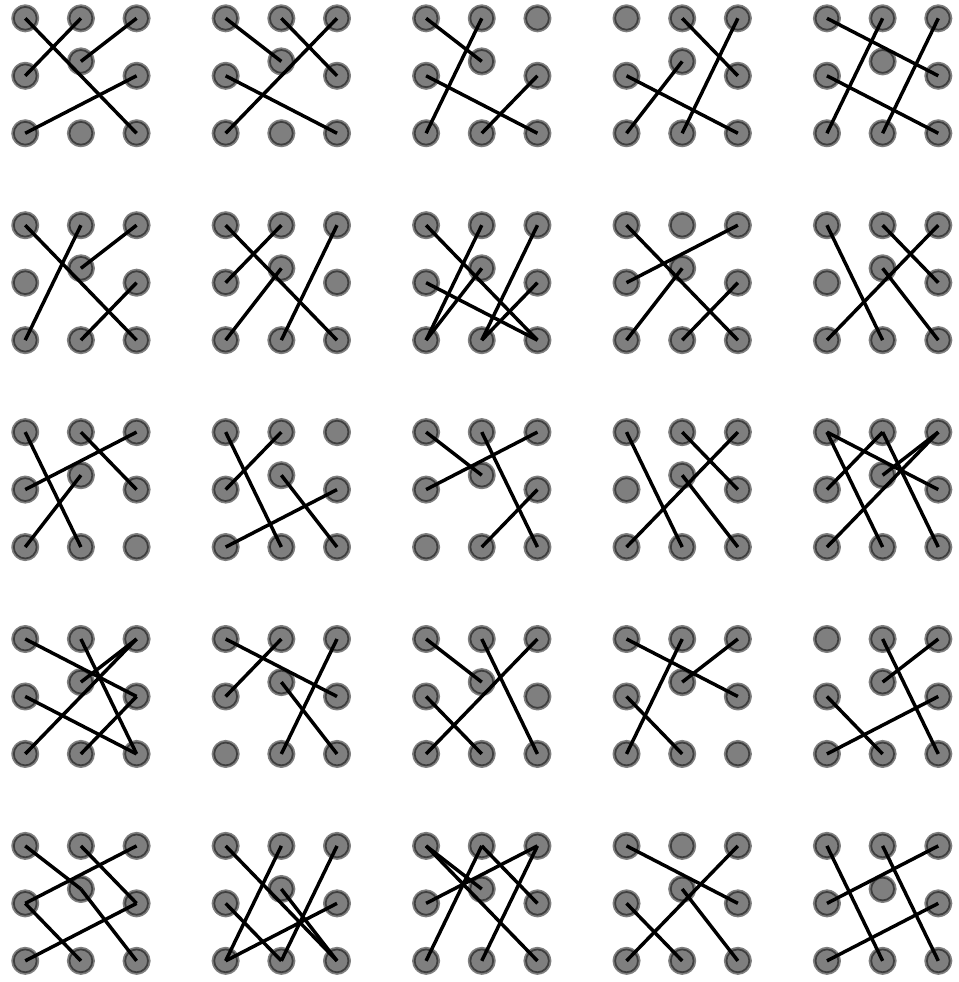}
	\caption{All $8$ edge grid-labelled graphs that satisfy the degree criterion up to local isomorphism, with cross subgraphs removed.}
	\label{fig:8edgesall}
\end{figure}
\begin{figure}[ht!]
	\centering
	\includegraphics[width=0.5\textwidth]{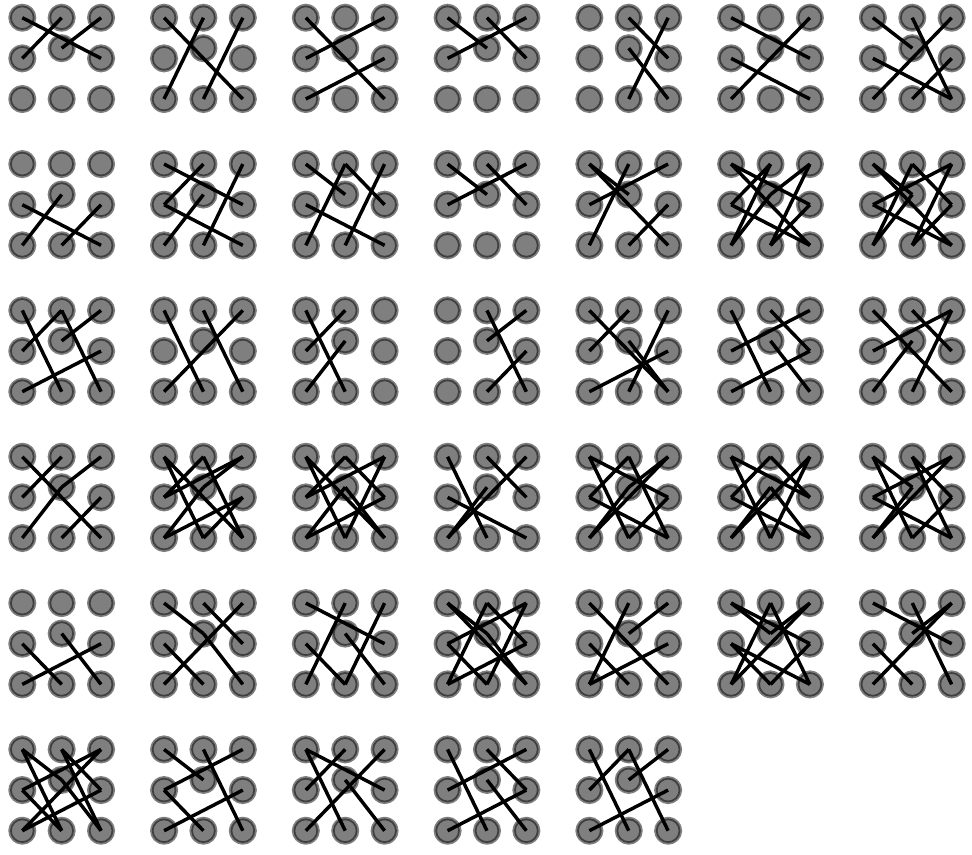}
	\caption{All $9$ edge grid-labelled graphs that satisfy the degree criterion up to local isomorphism, with cross subgraphs removed.}
	\label{fig:9edgesall}
\end{figure}
\newpage

\end{document}